\documentclass{amsart}

\usepackage[reset,a4paper,hmargin=3truecm]{geometry}
\usepackage{amsmath,amsthm,amssymb,amscd,epic}
\usepackage{graphicx,bbm,amsthm,graphicx}
\usepackage{mathtools}
\usepackage{leftidx}
\usepackage{hyperref}
\usepackage{color}
\usepackage{cool}
\usepackage{bm}
\usepackage{bbold}
\usepackage{subfig}
\usepackage{tikz}

\theoremstyle{plain}
\newtheorem{thm}{Theorem}[section]
\newtheorem{lem}[thm]{Lemma}
\newtheorem{prop}[thm]{Proposition}

\newtheorem{cor}[thm]{Corollary}

\newtheorem{conject}[thm]{Conjecture}

\theoremstyle{definition}
\newtheorem{dfn}{Definition}[section]
\newtheorem{ex}{Example}[section]
\newtheorem{prob}{Problem}[section]

\theoremstyle{remark}
\newtheorem{rem}{Remark}[section]

\DeclareMathOperator{\Diag}{Diag}

\newcommand{\Z}{\mathbb{Z}} 
\newcommand{\Q}{\mathbb{Q}} 
\newcommand{\R}{\mathbb{R}} 
\newcommand{\C}{\mathbb{C}} 

\newcommand{\bA}{\mathbf{A}}
\newcommand{\bJ}{\mathbf{J}}

\newcommand{\sh}{{\rm sh}}
\newcommand{\ch}{{\rm ch}}

\usepackage{ctable,tabularx}

\newcommand{\mat}[1]{\begin{bmatrix}#1\end{bmatrix}}

\newcommand{\mycomment}[1]{}

\DeclareMathOperator{\tridiag}{tridiag}
\newcommand{\Tridiag}[4]{\tridiag\mat{#1 & #2 \\ #3}_{#4}}

\newcommand{\e}{\varepsilon}

\title[Two photon AQRM - degeneracy and symmetry]{Two-photon quantum Rabi models -- \\ Spectral degeneracy  and symmetries}
\author{Cid Reyes Bustos and Masato Wakayama}

\date{\today}

\begin{document}

\begin{abstract}
    The quantum Rabi model occupies a distinguished role in the study of quantum light and matter, describing
    the most fundamental interactions. Elucidation of its spectral properties, along with those of its generalizations, is necessary for quantum optics and applications in areas such as quantum information technologies.

    In this paper, we explore a precise description of energy degeneracy of the two-photon asymmetric quantum Rabi model and the nature of the system's symmetries; as a result, we demonstrate that these have significant
    relationships with the structure of certain algebraic curves (hyperelliptic curves) and number theory.
    Concretely, we first prove the existence of degenerate eigenvalues using the monodromy data of the
    Fuchsian ODE, which draws the eigenvalue problem of the system, and the existence of associated symmetry
    operators, previously discovered and studied heuristically in physics.

    Subsequently, we give a complete characterization of spectral degeneracies and present a series of conjectures showing that the degeneracy may determine the entire energy spectral structure. 
    More precisely, for sufficiently strong interaction, we obtain an excellent approximation of the energy curves  by the zero locus of a two-variable polynomial associated to degenerate eigenstates. In addition, we postulate an explicit description of the relation between the missing symmetry at the Hamiltonian level (hidden symmetry) and the degeneracy of the two-photon quantum Rabi model.
\end{abstract}

\maketitle

\tableofcontents

\section{Introduction}

The quantum Rabi model (QRM), widely regarded as one of the fundamental models of quantum light and matter
interaction, describes the interaction between a two-level system and a bosonic mode \cite{B2011}.
Motivated by potential applications for quantum information technologies \cite{HR2008}, along with advances in experimental technology, the QRM and its generalizations are the subject of intense research in both experimental and theoretical physics (e.g., \cite{Yea2017, Yea2018}). The mathematical structures underlying QRM and related models have also been explored extensively in different areas: asymptotic theory \cite{BMZ2021,BMZ2021B,SZ2026}, analytical formula of heat kernel  and spectral zeta function theory \cite{HS2025,M2023,N2026,RW2020hk,RBW2024}, among others \cite{KR2024,MP2026,R2024}.

The Hamiltonian of the QRM is given by
\[
  H_{\rm{QRM}}= a^\dag a +\Delta \sigma_z +g(a+a^\dag)\sigma_x,
\]
where
\[
  \sigma_x:=\begin{bmatrix}
    0 & 1 \\
    1 & 0
  \end{bmatrix}, \quad \sigma_y:= \begin{bmatrix}
    0 & -i \\
    i & 0
  \end{bmatrix}, \quad \sigma_z:=\begin{bmatrix}
    1 & 0 \\
    0 & -1
  \end{bmatrix}, 
\]
are the usual Pauli matrices, $2\Delta>0$ is the difference between the two levels, $g>0$ is the coupling strength and  $a^\dag$ and $a$ the creation and annihilation operators for harmonic oscillators of frequency $\omega$ (we may set $\omega=1$), that is,  $[a,a^\dag]=1$. The Hamiltonian $H_{\rm{QRM}}$ is a self-adjoint operator acting on the Hilbert space $\mathcal{H} = \mathcal{H}_0 \otimes \C^2$, where $\mathcal{H}_0$ is either $L^2(\R)$ or the Fock space. 

An essential feature of the QRM is the existence of a parity (or $\Z_2$) symmetry, realized by the operator
\[
  \hat{\mathcal{P}}_{\rm{QRM}}:= e^{i\pi a^\dag a}\sigma_z.
\]
The existence of parity symmetry was crucial for the proof of the exact solvability of the QRM by Daniel Braak \cite{B2013MfI}, which revitalized research on the QRM and similar quantum interaction models \cite{bcbs2016}. The fact that the Hamiltonian $H_{\rm{QRM}}$ commutes with the involutive operator $\hat{\mathcal{P}}_{\rm{QRM}}$ gives a decomposition of the Hilbert space $\mathcal{H}$ into two invariant subspaces, labeling the spectrum into two parities. Spectral degeneration of the QRM occurs only for different parities \cite{K1985JMP} and the eigenfunctions are of a special type called Juddian, or quasi-exact \cite{KRW2017}.

The intimate relation between the symmetry and the spectral degeneracy is better observed in the {\em asymmetric quantum Rabi model} (AQRM), a model obtained by the introduction of a bias term $\e \sigma_x$ to the QRM Hamiltonian. In general, the effect of introducing the bias term is the elimination of both the parity symmetry  and the spectral degeneracies. However, an interesting phenomenon appears for half-integer values of the bias parameter. In this case, spectral degeneracies appear again \cite{LB2015JPA}, a along with a special type of commuting operator, known as a hidden symmetry operator since it does not immediately arise from the Hamiltonian (see, e.g., \cite{KRW2017,MBB2021, RBBW2021}). We note that hidden symmetry operators cannot be independent of the system parameters, as first demonstrated numerically in \cite{A2020}.
In addition, a geometric picture for the spectrum based on the hyperelliptic curve and surfaces, coming from polynomials related to the degeneracy, allows for a new perspective of the spectrum (see \cite{RBW2022} for a detailed discussion).

A motivation for the present study is to investigate the relation between symmetry and spectral degeneracies, including the possible existence algebro-geometric picture for spectral curves, in other quantum interaction models. 
The {\em two-photon quantum Rabi model} (2pQRM) is the quantum model describing the interaction of a two-level system and a bosonic mode where two photons are absorbed or emitted. 
In physics, research on the 2pQRM has been motivated due to its potential applications to quantum information technologies and superconducting circuits \cite{FRRSFD2018,TZY2025}.
The Hamiltonian  $H_{2p}$ of the 2pQRM is given by
\begin{align}
  \label{eq:tpHamilt0}
  H_{2p} = \omega (a^t a + \frac12) + g ( a^2 + (a^\dagger)^2 ) \sigma_z + \Delta \sigma_x, 
\end{align}
with parameters sharing the physical interpretation with those of the QRM. For an overview of the general properties of this model, including analogs for higher number of photons, we refer the reader to \cite{B2025}.
In this paper, we assume $g < 1/2$, a necessary and sufficient condition for $H_{2p}$ to have a pure point spectrum (see, e.g., \cite{HS2024,DXBC2016}). In this case, the asymptotic properties of the high energy spectrum are described in detail in \cite{BMZ2021B}. For other values of $g$, there is a more complicated spectral structure, including continuous spectrum (see, e.g., \cite{SZ2026}). Another feature of the 2pQRM Hamiltonian is that its Hamiltonian is compatible with the action of the generators of the oscillator representation of the Lie algebra $\mathfrak{sl}_2$, a fact that has been exploited in physics (see, e.g., \cite{D2009,X2025,Z2016}). 

The 2pQRM Hamiltonian commutes with a parity operator $\hat{\mathcal{P}}^2 = \exp(i \pi a^\dag a)$, inducing a parity decomposition
\begin{align}
  \label{eq:2pQRMpar}
    \mathcal{H}= \mathcal{H}_{\rm{even}} \oplus \mathcal{H}_{\rm{odd}}
\end{align}
of the  Hilbert space $\mathcal{H}$ into invariant subspaces. Since $\hat{\mathcal{P}}^2$ acts diagonally on $\mathcal{H}$, the parity decomposition corresponds exactly to even and odd functions in both components in the Hilbert space $\mathcal{H}$. Consequently, the eigenvalues can be labeled according to their parity. We
show in Figure \ref{fig:tpQRMeigencurves}(a) the spectral curves for the 2pQRM, the plot of the eigenvalues for different values of the coupling parameter $g$ for fixed $\Delta$.

\begin{figure}[ht]
  \centering
  \includegraphics[height=4.5cm]{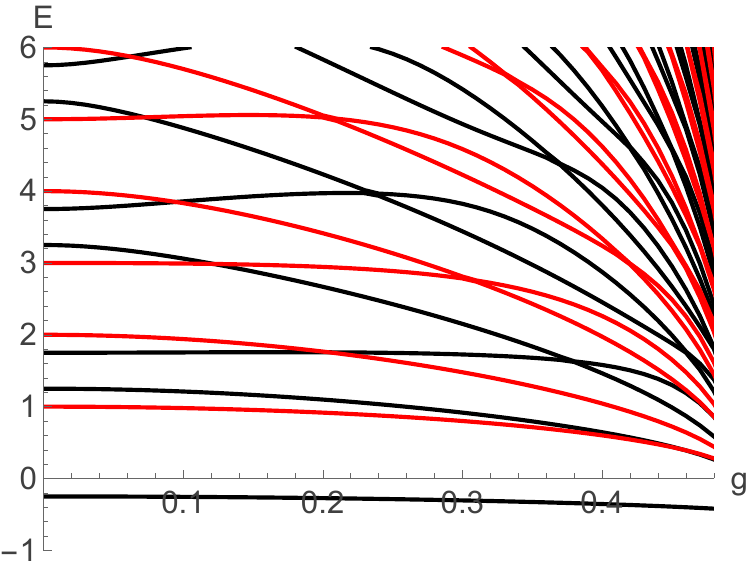}
  \caption{Spectral curves for the 2pQRM for $\Delta=0.75$. Black (resp. red) curves correspond to even (resp. odd) eigenvalue curves.}
  \label{fig:tpQRMeigencurves}
\end{figure}

Note that as the value of the coupling parameter reaches the critical value $g_c=\frac12$, the eigenvalue curves appear to converge, in a process sometimes called spectral collapse. Actually, the critical value $g_c$ is the point where the spectrum transitions from pure point spectrum to a continuous spectrum (see, e.g.,  \cite{DXBC2016,RMVRL2020} for a detailed discussion).

As shown in Figure  \ref{fig:tpQRMeigencurves}, there are crossings of the spectral curves both between spectral curves of \emph{shared-parity} (that is, even-even or odd-odd) and spectral curves of \emph{opposite-parity}. The opposite-parity degeneracies correspond to the $\Z_2$-symmetry, while the shared-parity crossings appear due to the presence of a $\Z_4$-symmetry. The Hamiltonian  $H_{2p}$ commutes with the operator
\[
  \hat{\mathcal{P}}_x = \exp(\tfrac{i \pi a^\dag a}2)\sigma_x,
\]
and each of the subspaces $\mathcal{H}_{\rm{even}}$ and $\mathcal{H}_{\rm{odd}}$  decomposes further into two invariant subspaces associated with the shared-parity spectral crossings. It turns out that the shared-parity degenerate eigenvalues behave similarly to the spectral degeneracies of the QRM. For instance, the eigenfunctions are of Juddian type and their existence is determined by a polynomial constraint condition. On the other hand, there is no analog in the QRM for the opposite-parity degenerate, for instance, it is known that their eigenvalues can be expressed as finite sums of special functions (see e.g. \cite{X2020} for a detailed description in the asymmetric case).

In this paper, we focus on the symmetry breaking generalization of the 2pQRM, the {\em asymmetric two-photon quantum Rabi model} (2pAQRM), with Hamiltonian $H_{\e}$ given by
\begin{align*}
  H_{\e}= (a^\dag a + \frac12) + g ( a^2 + (a^\dagger)^2 ) \sigma_z + \Delta \sigma_x + \tau \e \sigma_z,
\end{align*}
obtained by adding a bias term to the 2pQRM Hamiltonian and where $\tau = \sqrt{1-4g^2}$ is an auxiliary parameter. Here, we also assume $g < 1/2$ so that the spectrum to consist only of eigenvalues. It is immediate to verify that $H_{\e}$ commutes with the parity operator $\hat{\mathcal{P}}^2$ for any real value of $\e$, and that the opposite-parity degeneracies may be present for any bias, as shown in Figure \ref{fig:tpQRMeigencurves2}(a). 

\begin{figure}[ht]
  \centering
  \subfloat[$\e=0.2$]{
    \includegraphics[height=4.5cm]{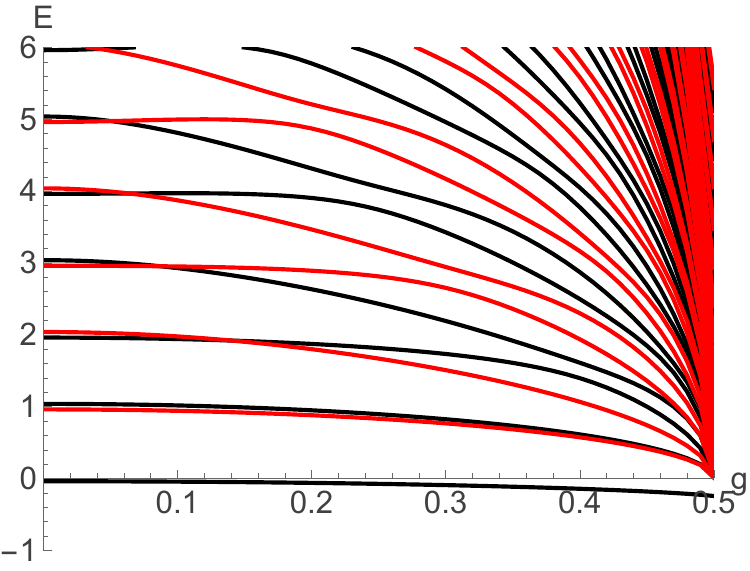}}
  ~ \qquad
  \subfloat[$\e=1$]{
    \includegraphics[height=4.5cm]{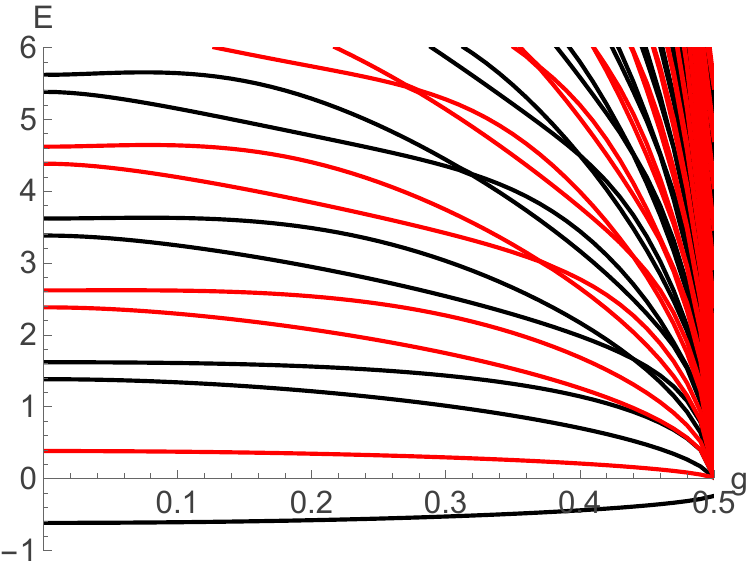}}
  \caption{Spectral curves of the 2pAQRM for $\Delta=0.5$ and $\e=0.2$ (left) and $\e=1$ (right). Black (resp. red) curves correspond to even (resp. odd) eigenvalue curves. The opposite-parity crossings remain for any value of $\e$ (left) while the shared-parity crossings appear only for integer values of $\e$ (right).}
  \label{fig:tpQRMeigencurves2}
\end{figure}

While the $\Z_2$-symmetry remains, the introduction of the bias parameter breaks the $\Z_4$-symmetry and eliminates the shared-parity crossings of the 2pQRM in general (see Figure \ref{fig:tpQRMeigencurves2}(a)). However, it was observed \cite{XC2021} that for integral values of the bias parameter $\e$, the shared-parity crossings reappear (see Figure \ref{fig:tpQRMeigencurves2}(b)) associated with the existence of a hidden symmetry operator.

Although there has been considerable research on Juddian solutions for the 2pAQRM using different approaches (see, e.g.,  \cite{XC2021} or \cite{BNZ2026} for a recent approach using Bethe ansatz), a complete analysis of the existence of the shared-parity degenerate eigenvalues was still incomplete. For example, there was no general proof of the existence of shared-parity degeneracies, that is, that the numerically observed crossings are actual degeneracies and not avoided crossings or numerical errors. Similarly, for the integer $\e$, a hidden symmetry operator was discovered in \cite{XC2022} where a heuristic method for its computation was also provided, but proofs of existence and other important properties were not provided.

In this paper, we provide the proofs for both the degeneracy and the hidden symmetry for the integer bias parameter $\e$. In particular, we prove the existence of the shared-parity degeneracies by studying the constraint polynomials associated with the Juddian solutions, and the properties of the hidden symmetry operator by a careful analysis of the algebraic properties of the operators in the algebra ${\rm Mat}_2[\C[a,a^\dag]]$ where $\C[a,a^\dag]$ is the usual Weyl algebra. An important tool used to describe the multiplicity of the eigenvalues is a Heun picture of the eigenvalue problem of the 2pAQRM. This approach allows us to use the general theory of Heun ODEs, including the monodromy theory of Fuchsian equations, and obtain the full characterization the degenerate eigenvalues of the 2pAQRM for all bias parameters.

The proofs for both symmetry and spectral degeneracy rely heavily on algebraic properties of the 2pAQRM and its eigenfunctions. In fact, the spectrum of the 2pAQRM itself seems to be determined in a large extent by algebro-geometric structures associated with polynomials appearing in the settings discussed in this paper. For instance, the \emph{inter-constraint polynomial}, the quotient of the two constraint polynomials associated with the eigenfunctions a shared-parity degeneracy, gives an approximation to the low energy eigenvalues of the 2pAQRM, as shown in Figure \ref{fig:EA0}. 
On the other hand, the joint eigenvalues of the 2pAQRM Hamiltonian and its hidden symmetry operator lie in certain hyperelliptic curves, which in turn are conjecturally related to the aforementioned inter-constraint polynomial. In this paper, we explore this geometric picture, which constitutes a promising line of research that may give further insights into the spectrum of quantum interaction models and provide new connections with geometry and arithmetics. 

\begin{figure}[ht]
  \centering
    \includegraphics[height=4.5cm]{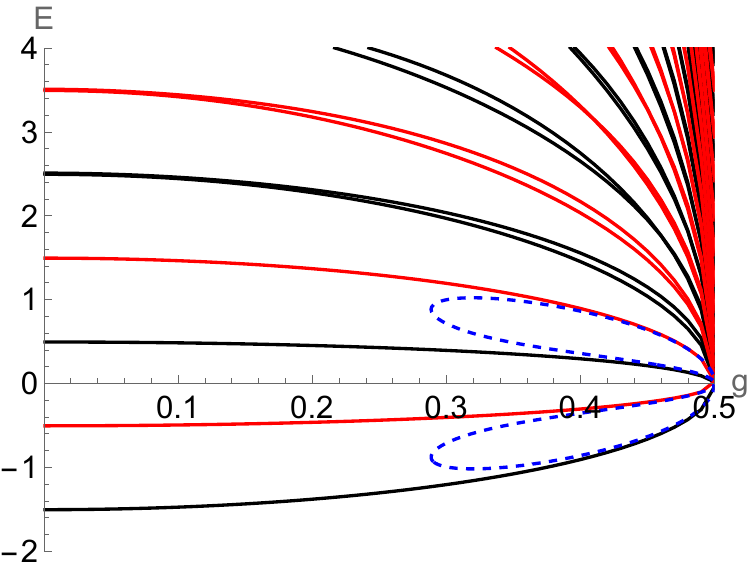}
    \caption{Excellent approximation of spectral curves by the inter-constraint polynomial for $\Delta=\frac16$ for $\e=2$.}
  \label{fig:EA0}
\end{figure}

In this paper, we make extensive use of the monodromy data of the Heun picture of the 2pAQRM to describe the
properties of the eigenvalues and eigenfunctions. The development of the Heun picture for the 2pAQRM is closely
related a similar for the non-commutative harmonic oscillator (NCHO). The NCHO is a mathematical model, introduced by Alberto Parmeggiani and the second author in \cite{PW2001, PW2002}, with  deep arithmetical properties appearing unexpectedly in its spectrum (see, e.g., \cite{KW2023}). Similarly to quantum interaction models, an asymmetric NCHO (aNCHO) version can be defined by adding an appropriate bias term (see \cite{RBW2023} and Definition \ref{dfn:aNCHO} below).

For a long time, no relation of the NCHO with any physical model was known; however, when describing the eigenvalue problem using the Heun ODE model, it was found that the QRM eigenvalue problem (given by a confluent Heun ODE) may be obtained from that of NCHO through the confluence of regular singularities, which means that NCHO can be considered a covering model of QRM \cite{W2015, RBW2023}. Subsequently, a more decisive fact emerged, namely,  in \cite{N2024} it was shown that the NCHO and the 2pQRM are related by a non-unitary equivalence (i.e. equivalence of eigenvalue problems). In particular, this discovery has made possible to draw parallels between the properties of the two models. We give a short overview of these relation between the models in Section \ref{sec:NCHOp} and remark that a detailed study of the relations between the two models is the subject of a forthcoming paper by the authors \cite{RBW2027}.

The paper is organized as follows. In Section \ref{sec:preliminaries} we introduce the two main models, the 2pAQRM and aNCHO, as well as the relationship between the two models. In Section \ref{sec:deg}, we set the stage for the analysis of the spectrum by describing the eigenvalue problem of the 2pAQRM in the Fock space and its equivalent Heun ODE picture. Next, in Section \ref{sec:juddian} we describe the shared-parity spectral degeneracies for the integral bias $\e$, showing that it is reduced to the proof of a divisibility and positivity property of certain \emph{constraint polynomials} associated with the solutions. 

Section \ref{sec:divpos} is concerned with the algebraic properties of constraint polynomials, a key factor in proving the existence of shared-parity degenerate eigenvalues. Specifically, it is shown that when a divisibility relationship exists between two constraint polynomials and that their quotient is also a polynomial. In Section \ref{sec:qtpol}, numerical analysis reveals that the curves on the energy $E$ and $g$ plane (for fixed $\Delta$) defined by the zero of these inter-constraint polynomials provide a remarkably accurate approximation of the energy curves for low-energy states—starting from the ground state—where there is no degeneracy; consequently, the section presents conjectures and open problems regarding this phenomenon from a mathematical perspective.
In Section \ref{sec:degstruct}, using these results and the monodromy data of the Heun ODE picture of the eigenvalue problem, we fully describe the structure of the solutions for the degenerate spectrum of the 2pAQRM. The existence of the hidden symmetry operators and their properties are discussed in Section \ref{sec:hiddenS} and as a conclusion, in Section \ref{sec:symdeg} we propose a conjecture that explicitly relates spectral degeneracy to hidden symmetry. This explicit relationship focuses on the hyperelliptic curves (see, e.g. \cite{CF2006}) that arise naturally in both settings, and it can be said to embody certain universal properties of the spectrum.

\section{Definitions and scope of models} 
\label{sec:preliminaries}

In this section, we introduce the two-photon asymmetric quantum Rabi model and some of the basic properties
used in this paper. We also introduce the asymmetric non-commutative harmonic oscillator (aNCHO),
a generalization of the quantum harmonic oscillator defined independently of physics, and give a short overview of the relations between these two families of models.

In addition to the standard $L^2(\R)$ realization of Hilbert spaces, we also consider the Fock (or Bargman) space 
\[
  \mathcal{F}(\C):= \Big\{f\in \mathcal{O}(\C)\,\big|\, \frac1\pi \int_\C |f(z)|^2e^{-|z|^2} dz < \infty\Big\},
\]
where $\mathcal{O}(\C)$ is the space of entire functions on $\C$. Then, $L^2(\R)$ and $\mathcal{F}(\C)$ are unitarily isomorphic by the
Bargman transform $\mathcal{B}$.
In the $L^2(\R)$ realization, the annihilation and creation operators acting on $L^2(\R)\otimes \C^2$
$$
a=\frac1{\sqrt2}(x+\partial_x), \qquad a^\dag=\frac1{\sqrt2}(x-\partial_x) \qquad \left(\partial_x:= \frac{d}{dx}\right),
$$
and we remark that $\mathcal{B}$ maps the annihilation and creation operators to $\partial_z$ and $z$, respectively, in the Fock space.

For certain parameters the models discussed in this paper have a $\Z_4$-symmetry operator. To unify the notation, let us define the operators
\[
  \hat{\mathcal{P}}_x = e^{\pi i a^\dag a/2} \sigma_x, \quad  \hat{\mathcal{P}}_z = e^{\pi i a^\dag a/2} \sigma_z,
\]
and we call them {\em reciprocity operators}. It is obvious from this definition that these are operators of 
fourth order. The second power of the reciprocity operators, that is,
\[
  \hat{\mathcal{P}}_x^2 = \hat{\mathcal{P}}_z^2 = \hat{\mathcal{P}}^2 := e^{\pi i a^\dag a} 
\]
is the {\em parity operator}. We use the notations above for both the $L^2(\R)$ and the Fock space $\mathcal{F}(\C)$ realizations without distinction as it should be obvious in the context.

\subsection{Two-photon quantum Rabi model}
\label{sec:tpQRM}

The main object of study of this paper is the two-photon asymmetric quantum Rabi model (2pAQRM), the model describing the interaction between a two level atom and a light field for the case where the interaction involves emission or absorption of two photons.

The Hamiltonian of the 2pAQRM $H_{\e}$ is given by
\begin{align}
  \label{eq:tpHamilt}
  H_{\e}= (a^\dag a + \frac12) + g ( a^2 + (a^\dagger)^2 ) \sigma_z + \Delta \sigma_x + \tau \e \sigma_z,
\end{align}
where $g>0$ is the coupling strength, $2\Delta$ the distance between the two levels and $\e \in \R$ is the bias parameter
and where we tacitly set the oscillator frequency $\omega = 1$ as before.  The auxiliary parameter $\tau = \sqrt{1-4g^2}$ is
fully determined by $g$ and since we consider the case $g < 1/2$  we have $0<\tau(< 1)$. In this case, the spectrum of
$H_{\e}$ consists only of eigenvalues. For other values of $g$ the spectrum also contains a continuous part, we refer to the reader to the analysis in  \cite{HS2024} for the details for the 2pQRM, which can be extended to the asymmetric version 2pAQRM.

We note that we made two nonstandard choices with respect to the usual definition of the Hamiltonian of the 2pAQRM. The first one is that we take $\tau \e$ as the coefficient of the bias parameter to simplify the derivation and definition of constraint polynomials for Juddian solutions in Section \ref{sec:deg} (see Remark \ref{rem:biascoeff}). The second one is a translation by $\frac{\omega}2\big(=\frac12\big)$ with respect to the usual definition to obtain a more natural expression of the polynomials appearing from the hidden symmetry operators in Section \ref{sec:symdeg}. Moreover, this type of translation appears naturally when the 2pAQRM is related to the asymmetric non-commutative harmonic oscillator, as we explain in Section~\ref{sec:NCHOp}. Since we are assuming $\tau < 1$ so that $H_{\e}$ has only a point spectrum, neither of the choices introduce any essential changes to the spectrum of the 2pAQRM.

To simplify the visualization of spectrum, it is convenient to normalize the spectral curves (or equivalently, the Hamiltonian) by $\tau=\sqrt{1-4g^2} \in \R$ as shown in Figure \ref{fig:tpAQRMeigencurves1}. In Section \ref{sec:symdeg}, there is further evidence that such a  normalization is natural according to the 
hidden symmetry and degeneracy picture of the asymmetric model.

\begin{figure}[ht]
  \centering
  \subfloat[$\e=0$]{
    \includegraphics[height=4.5cm]{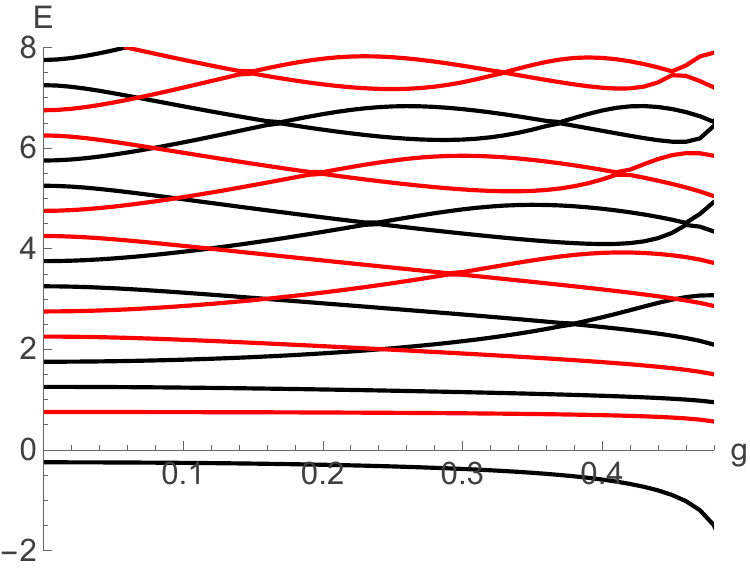}}
  ~ \qquad
  \subfloat[$\e=0.25$]{
    \includegraphics[height=4.5cm]{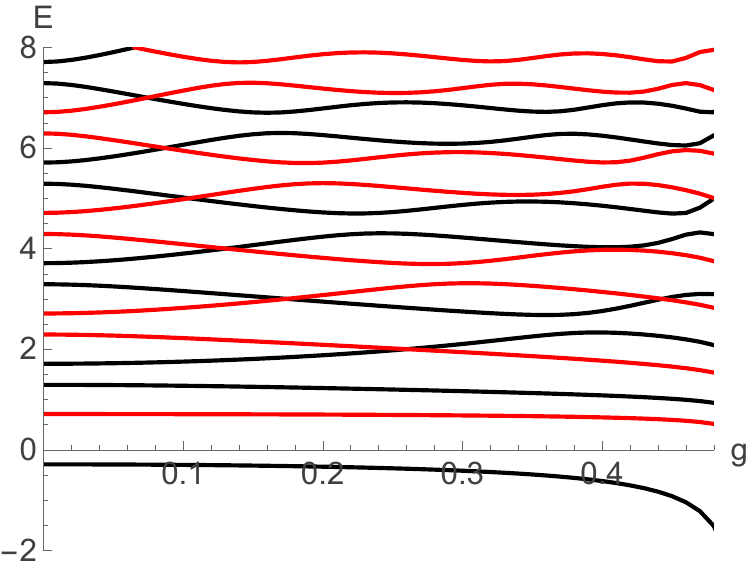}}
  \caption{Normalized spectral curves for the 2pQRM for $\Delta=3/4$ for $\e=0$ (2pQRM) and $\e= 1/4$. Black (resp. red) curves correspond to even (resp. odd) eigenvalue curves.}
  \label{fig:tpAQRMeigencurves1}
\end{figure}

The existence of symmetry operators, operators that commute with the 2pAQRM Hamiltonian, is of fundamental importance for the characterization of the spectrum and, as we see in this paper, they endows the model with a geometric structure based on algebraic curves.

For any $\e \in \R$, the Hamiltonian $H_{\e}$  commutes with the involutive operator $\hat{\mathcal{P}}_x^2$, thus we say the 2pAQRM has a $\Z_2$-symmetry . Since the a operator $\hat{\mathcal{P}}_x^2$
is a parity operator (in the usual sense) on the Hilbert space $\mathcal{H}$, it gives a classification of the spectrum of the 2pAQRM into even or odd according to the parity of the eigenfunctions.

For the symmetric version, i.e. $\e=0$, it is immediate to verify
that
\[
  [H_{0},  \hat{\mathcal{P}}_x]=0,
\]
thus in the 2pQRM the $\Z_2$-symmetry of $\hat{\mathcal{P}}_x^2$ is generated from the $\Z_4$-symmetry  $\hat{\mathcal{P}}_x$. The introduction of the bias term $\tau \e \sigma_z$ breaks the $\Z_4$-symmetry, that is, $H_\e$ no longer commutes with $\hat{\mathcal{P}}_x$. Such a symmetry only reappears in a nontrivial way for special values of the bias parameter but it no longer generates the $\Z_2$-symmetry (cf. Section \ref{sec:hiddenS}).

We recall the classification of degenerate eigenvalues based on the $\Z_2$-symmetry described in the introduction. We say that a degenerate eigenvalue is of shared-parity type if both of the eigenfunctions are either even or odd. On the other hand, we say that a degenerate eigenvalue is of opposite-type one eigenvalue is even and the other is odd.  We show shared-parity spectral crossings in Figure \ref{fig:tpAQRMeigencurves1}(a) and we note that the bias term has the effect of making these type of degeneracies disappear, as shown in Figure \ref{fig:tpAQRMeigencurves1}(b). Shared-parity spectral degeneracies appear once more for integer $\e$ as shown in Figure \ref{fig:tpAQRMeigencurves2}. On the other hand, we note that the opposite-parity spectral crossings may appear for any value of $\e \in \R$.

\begin{figure}[ht]
  \centering
    \includegraphics[height=4.5cm]{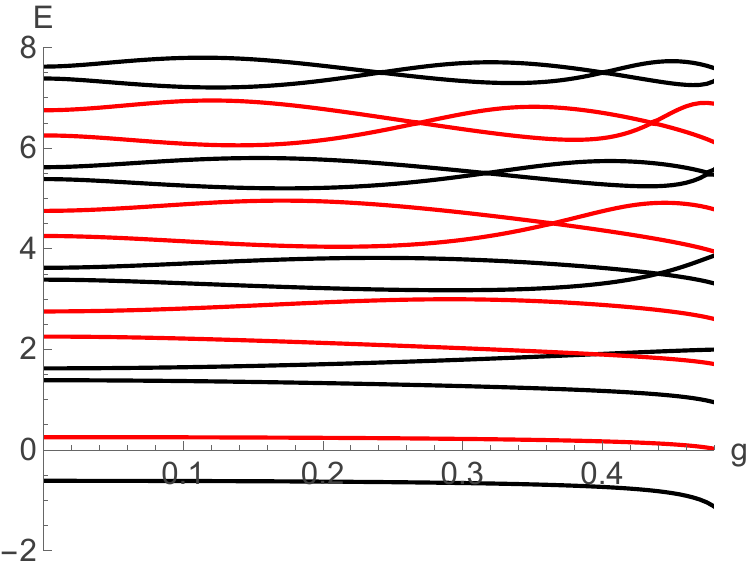}
  \caption{Normalized spectral curves for the 2pQRM for $\Delta=3/4$ for $\e=1$. Note the return of the shared-parity degeneracies.}
  \label{fig:tpAQRMeigencurves2}
\end{figure}

One of the objectives of this paper is the complete characterization of the degeneracies of the 2pAQRM in Section \ref{sec:degstruct}. To do this, we introduce a classification of the spectrum according to the form of the eigenvalues
and the type of solution.

\begin{dfn}
  An eigenvalue of the 2pAQRM is said to be
  \begin{itemize}
  \item \textbf{Exceptional} if it is of the form $\lambda = \tau (2 N + \frac12 \pm \e)$ for a non-negative integer $N$,
  \item \textbf{Regular} if it is not exceptional. 
  \end{itemize}

  Moreover, we say that an exceptional eigenvalue $\lambda$ is called
  \begin{itemize}
  \item \textbf{Juddian} if there is a realization of the eigenvalue problem where its eigenfunction can be written as a polynomial (i.e. it is quasi-exact),
  \item \textbf{Non-Juddian} if it is not Juddian.  
  \end{itemize}
\end{dfn}

The definition for Juddian eigenvalue follows the usual convention used for the AQRM in mathematical physics. However, we note that it depends on a particular realization of the Hamiltonian (i.e. the associated  ODE picture) and it may be difficult to verify in practice since in different realizations a quasi-exact solution may not be quasi-exact in other. In Section \ref{sec:degstruct}, we revisit this problem from the point of the view of the Heun ODE picture associated to the 2pAQRM.

\begin{rem}
  The 2pAQRM is not a generalization of the AQRM in the usual sense, that is, no choice of parameters for the 2pAQRM Hamiltonian results in the QRM Hamiltonian except for the trivial cases $g\to0$ or $\Delta\to0$. Nevertheless, there is a nontrivial relation, called covering, between the two models based on confluence of the ODE pictures associated to the eigenvalue problems of the models (see \cite{RBW2023} for the original definition and \cite{N2024} for the case of the 2pAQRM).

  In the setting of covering models, the $\Z_2$-symmetry for the AQRM may be considered as coming from the
  $\Z_4$-symmetry of the 2pAQRM via the covering relation, and other properties of the models may be studied in
  this way. 
\end{rem}

\subsection{Non-commutative harmonic oscillators and quantum Rabi models }
\label{sec:NCHOp}

The non-commutative harmonic oscillator (NCHO) was originally introduced in \cite{PW2001} as a non-commutative generalization of the quantum harmonic
oscillator
\[
  H_{\rm{HO}}:= a^\dag a+\frac12 = -\frac12 \frac{d^2}{d x^2}  + \frac12 x^2,
\]
for applications on number theory and differential equations. The NCHO and its asymmetric version were defined and originally studied with no regard to mathematical physics or quantum interaction models. 

Let us recall the definition of the Hamiltonian of the aNCHO, introduced as $\eta$-shifted NCHO in \cite{RBW2023}. For the arithmetical properties of the NCHO we refer the reader to \cite{KW2023}.

\begin{dfn}
  \label{dfn:aNCHO}
  The asymmetric non-commutative harmonic oscillator (aNCHO) is the model with Hamiltonian $Q^{(\eta)}=Q^{(\eta)}_{\alpha,\beta}(x,D)$ acting on $L^2(\R)\otimes \C^2$ given by
  \begin{align*}
      Q^{(\eta)}_{\alpha,\beta}(x, D) &:=  \left( -\frac12 \frac{d^2}{d x^2}  + \frac12 x^2 \right)\bA +  
                            \bJ  \left( x \frac{d}{d x} + \frac12 \right) + 2 \eta i \det(\bA \pm i\bJ)^{\frac12} \bJ \\
    &=\left(a^\dag a+\frac12 \right) \bA -  
                            \frac{i}2\left(a^2-(a^\dag)^2 \right)\sigma_y + 2 \eta \det(\bA \pm \sigma_y)^{\frac12} \sigma_y 
\end{align*}
 where $\eta \in \R$, and 
 \[
   \bA:=\begin{bmatrix}
     \alpha & 0 \\
     0 & \beta
   \end{bmatrix}, \qquad 
   \bJ:=  \begin{bmatrix}
     0 & -1 \\
     1 & 0
   \end{bmatrix}. 
 \]
\end{dfn}

In this paper, we assume $\alpha \beta>1$ in order to guarantee that $Q^{(\eta)}_{\alpha,\beta}(x,D)$ has only discrete spectrum and may assume $\alpha, \beta>0$ (see \cite{RBW2023}). When $\eta=0$, $Q^{(\eta)}_{\alpha,\beta}$ is just the NCHO, the original non-commutative harmonic oscillator introduced in \cite{PW2001}. When $\alpha=\beta$, it is known that $Q^{(0)}_{\alpha,\alpha}$ is unitarily equivalent to a couple of harmonic oscillators, whence the eigenvalues are easily calculated as $\{\sqrt{\alpha^2-1}(n+\frac12)\,|\, n\in \Z_{\geq0}\}$ with multiplicity $2$. 

As it is apparent, the Hamiltonian of the aNCHO resembles the ones used for quantum interactions and for a long time a relation between the two models was expected to exist. The first connection was the covering relation between the aNCHO and the AQRM \cite{RBW2023} (see also \cite{W2015}) via a confluence process on the corresponding ODE pictures. Recently, it was discovered in \cite{N2024} that the eigenvalue problem of the 2pAQRM is equivalent to the eigenvalue problem of the aNCHO (see also \cite{HS2024}). 
These discoveries suggests that theory developed for one of the models may be used in another, as we demonstrate in this paper with the Heun picture of the 2pAQRM.

Let us recall the main result of \cite{N2024}, giving a certain equivalence between the eigenvalue problems of the 2pAQRM and the aNCHO.

\begin{thm}[\cite{N2024}] \label{them:N2024}
  Let $\alpha,\beta >0$, $\eta \sqrt{\alpha \beta -1}$ and $\lambda \in \R$, then by setting
  \begin{gather*}
    g:=\frac{1}{2\sqrt{\alpha \beta}}, \qquad \e := - 2 \eta, \\ 
    \mu := \frac{\lambda}2 \left(\frac1{\alpha} +\frac1{\beta} \right), \qquad \Delta := \frac{\lambda}2 \left(\frac1{\alpha} - \frac1{\beta} \right), \\
    M :=
    \begin{bmatrix}
      \sqrt{\alpha} e^{-\pi i/4} & 0 \\
      0 & \sqrt{\beta} e^{\pi i/4}
    \end{bmatrix},
  \end{gather*}
  we see that for $f(x) \in  L^2(\R) \otimes \C^2$, the eigenvalue problem
  \[
    Q^{(\eta)}_{\alpha,\beta} f = \lambda f,
  \]
  is equivalent to
  \[
    H_{\e} e^{\frac{\pi i}4 H_{\rm{HO}}} M f = \mu e^{\frac{\pi i}4 H_{\rm{HO}}} M f.
  \]
\end{thm}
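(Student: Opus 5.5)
The proof is a direct conjugation. I would move $Q^{(\eta)}_{\alpha,\beta}-\lambda$ to the form $H_{\e}-\mu$ in three stages: a metaplectic rotation by $U:=e^{\frac{\pi i}4 H_{\rm HO}}$, a two-sided multiplication by a constant diagonal matrix, and a constant phase conjugation. Every operator involved is bounded and invertible on $L^2(\R)\otimes\C^2$, and $U$ is unitary. Hence $f$ is an $L^2$ solution of one problem exactly when the transformed function is an $L^2$ solution of the other, which gives the claimed equivalence.

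\emph{Step 1 (metaplectic rotation).} Since $H_{\rm HO}=a^\dag a+\frac12$, we have $U a U^{-1}=e^{-\pi i/4}a$ and $U a^\dag U^{-1}=e^{\pi i/4}a^\dag$. Therefore $Ua^2U^{-1}=-i a^2$ and $U(a^\dag)^2U^{-1}=i(a^\dag)^2$, while $H_{\rm HO}$ is fixed. Using the second expression in Definition \ref{dfn:aNCHO} and $\det(\bA\pm\sigma_y)=\alpha\beta-1$, this gives
\[
UQ^{(\eta)}_{\alpha,\beta}U^{-1}=H_{\rm HO}\bA-\tfrac12\bigl(a^2+(a^\dag)^2\bigr)\sigma_y+2\eta\sqrt{\alpha\beta-1}\,\sigma_y .
\]

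\emph{Step 2 (normalising $\bA$).} Put $K=\mathrm{diag}(\alpha^{-1/2},\beta^{-1/2})$. Then $K\bA K=I$ and $K\sigma_yK=(\alpha\beta)^{-1/2}\sigma_y$. Also $\lambda K^2=\mu I+\Delta\sigma_z$, with $\mu,\Delta$ exactly as in the statement. So $UQU^{-1}h=\lambda h$ with $h=Uf$ is equivalent to
\[
\Bigl[H_{\rm HO}-\tfrac{1}{2\sqrt{\alpha\beta}}\bigl(a^2+(a^\dag)^2\bigr)\sigma_y-\Delta\sigma_z+2\eta\tfrac{\sqrt{\alpha\beta-1}}{\sqrt{\alpha\beta}}\sigma_y\Bigr]K^{-1}h=\mu K^{-1}h .
\]
Here $\frac1{2\sqrt{\alpha\beta}}=g$, and $\frac{\sqrt{\alpha\beta-1}}{\sqrt{\alpha\beta}}=\sqrt{1-4g^2}=\tau$. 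This step explains why the bias in \eqref{eq:tpHamilt} is normalised as $\tau\e$.

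\emph{Step 3 (phases and Pauli matrices).} Conjugate by the phase matrix $\mathrm{diag}(e^{-\pi i/4},e^{\pi i/4})$. Together with $K^{-1}$ and $U$, this produces exactly the operator $e^{\frac{\pi i}4H_{\rm HO}}M$. A short check shows the phase conjugation sends $\sigma_y\mapsto-\sigma_x$ and fixes $\sigma_z$. The bias coefficient then becomes $-2\eta\tau=\tau\e$ with $\e=-2\eta$, as claimed.

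\emph{Main obstacle.} The delicate part is matching Pauli-matrix conventions. After Step 3 the coupling $g(a^2+(a^\dag)^2)$ and the bias multiply one Pauli matrix, and $\Delta$ multiplies another. This is the 2pAQRM written in a frame with the roles of $\sigma_x$ and $\sigma_z$ exchanged relative to \eqref{eq:tpHamilt}. I would settle this by tracking signs carefully and then applying the constant unitary (Hadamard-type) matrix exchanging $\sigma_x$ and $\sigma_z$. If necessary, I would also use the parity operator $\hat{\mathcal P}^2$ or $a\mapsto -a$ to absorb any leftover sign of $g$ or $\Delta$. Both operations are unitary and preserve the spectrum, so the equivalence holds with $M$ understood up to this fixed convention matrix. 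Beyond this, the only checks are that the scalar identities for $g$, $\tau$, $\mu$ and $\Delta$ hold exactly, and that no commutator terms arise, which they do not since $K$ and $M$ are constant matrices.
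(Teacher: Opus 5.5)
The paper does not prove this statement; it quotes it from \cite{N2024}, so your argument has to stand on its own, and it does. Set $U=e^{\frac{\pi i}{4}H_{\rm HO}}$, $K=\mathrm{diag}(\alpha^{-1/2},\beta^{-1/2})$ and $P=\mathrm{diag}(e^{-\pi i/4},e^{\pi i/4})$. Then $PK^{-1}U=e^{\frac{\pi i}{4}H_{\rm HO}}M$ and
\[
PKU\bigl(Q^{(\eta)}_{\alpha,\beta}-\lambda\bigr)U^{-1}KP^{-1}=H_\e'-\mu,\qquad H_\e':=H_{\rm HO}+\bigl(g(a^2+(a^\dag)^2)+\tau\e\bigr)\sigma_x-\Delta\sigma_z .
\]
Every intermediate identity you use checks out: $UQ^{(\eta)}_{\alpha,\beta}U^{-1}$, $K\bA K=I$, $K\sigma_yK=(\alpha\beta)^{-1/2}\sigma_y$, $\lambda K^2=\mu+\Delta\sigma_z$, $\sqrt{\alpha\beta-1}/\sqrt{\alpha\beta}=\tau$ and $P\sigma_yP^{-1}=-\sigma_x$. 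Since $K$ and $P$ act only on $\C^2$ and $U$ commutes with $H_{\rm HO}$, these maps also respect the operator domains, so the kernels correspond exactly.

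Your ``main obstacle'' is a genuine defect of the statement as transcribed in the paper, not of your proof. You should resolve it explicitly rather than leave $M$ ``understood up to'' an unspecified matrix.

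\emph{The fix.} Put $W=\frac{1}{\sqrt2}\begin{bmatrix}1&1\\-1&1\end{bmatrix}=\frac{1}{\sqrt2}(I+i\sigma_y)$. Then $W\sigma_xW^{-1}=\sigma_z$ and $W\sigma_zW^{-1}=-\sigma_x$, so $WH_\e'W^{-1}=H_\e$ exactly, with no leftover signs. Your fallbacks (the parity operator, or $a\mapsto-a$) are not needed. The correct conclusion is therefore $H_\e\,We^{\frac{\pi i}{4}H_{\rm HO}}Mf=\mu\,We^{\frac{\pi i}{4}H_{\rm HO}}Mf$.

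\emph{Why the literal claim fails.} Without $W$, the claim fails at every eigenvalue $\lambda$ with $\Delta\neq0$. Since $\ker(H_\e-\mu)=W\ker(H_\e'-\mu)$, the claim would force $\ker(H_\e'-\mu)$ to be $W$-invariant. It would then contain some $\psi$ with $\psi_2=\pm i\psi_1\neq0$. The two rows of $H_\e'\psi=\mu\psi$ then give $\bigl(g(a^2+(a^\dag)^2)+\tau\e\bigr)\psi_1=\mp i\Delta\psi_1$, which is impossible for a symmetric operator. In other words, the stated $M$ fits the convention with coupling and bias on $\sigma_x$ and the term $-\Delta\sigma_z$, not the convention of \eqref{eq:tpHamilt}.
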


The transformation in Theorem \ref{them:N2024} is a gauge transformation and is thus not an equivalence in the usual sense. Indeed, since the definition of $\Delta$ includes the eigenvalue $\lambda$, it is not possible to relate a single aNCHO Hamiltonian to a single 2pAQRM Hamiltonian. Because of this, in \cite{HS2024} this relation is called a ``fiber decomposition'' of the aNCHO into 2pAQRMs. In the same paper, the authors realize the transformation between the two models as a unitary transformation in an appropriate Hilbert, however since this Hilbert space is defined in terms of the parameters (including the eigenvalue) it does not give a unitary equivalence between the models.

\begin{rem}
As is well known, the oscillator representation of the Lie algebra $\mathfrak{sl}_2(\R)$ does not lift to the Lie group $SL_2(\R)$ but instead defines a representation of its double cover, the metaplectic group $Mp_2(\R)$. In the same context, when dealing with the AQRM or 2pAQRM with a general bias parameter $\e \in \R$, it is natural—from the perspective of group representation theory—to view the object as a character of the universal covering group $\widetilde{SL_2(\R)}$ of $SL_2(\R)$ (see, e.g., \cite{Puk1964, K2000} and also \cite{N2025} for a relevant discussion). We develop the details of this picture in a forthcoming paper by the authors  \cite{RBW2027}.
\end{rem}

As a consequence of Theorem \ref{them:N2024}, we see that the properties of the spectrum of the 2pAQRM can be elucidated directly from the properties of the aNCHO, and vise versa. For example, Theorem \ref{them:N2024} gives an immediate proof that almost all the eigenvalues of the 2pAQRM have multiplicity bounded above by $2$. The result for the complete spectrum does not follow directly from Theorem \ref{them:N2024} since the inequality $|\mu| > \Delta$ is implicitly imposed. Indeed, rewriting the relations between the parameters of the two models in \ref{them:N2024} as
\begin{align}
\label{para-correspondence2}
  \alpha = \frac{1}{2 g} \sqrt{\frac{\mu-\Delta}{\mu+\Delta}},\quad \beta = \frac{1}{2 g} \sqrt{\frac{\mu+\Delta}{\mu-\Delta}}, \quad  \lambda = \frac{1}{2 g} \sqrt{\mu^2 - \Delta^2}. 
\end{align}
Thus, the necessity of condition $|\mu| > \Delta$ is evident. We return to the question of the multiplicity in Section \ref{sec:HeunPicture}.

Another important property of the aNCHO is that its eigenvalue problem is equivalent to the existence of solutions of certain Heun (second order) differential equations. In addition of being a lower degree differential equation system than the one using a direct derivation in the Fock space, it allows us to use the theory of monodromy to study the properties of solutions based on the singularity data. In Section \ref{sec:HeunPicture} we develop an equivalent theory for the 2pAQRM to be later used to characterize the degenerate spectrum in Section \ref{sec:degstruct}.

Recall the classification of the spectrum of the aNCHO for comparison with that of 2pAQRM. Denoting the spectrum of the aNCHO by ${\rm Spec}(Q^{(\eta)}_{\alpha,\beta})$, we then have
\begin{align*}
  {\rm Spec}(Q^{(\eta)}_{\alpha,\beta}) &= \Sigma_{0} \cup \Sigma_{\infty} \\
                        &= \Sigma^{+}_{0} \cup \Sigma^{-}_{0} \cup \Sigma^{+}_{\infty} \cup \Sigma^{-}_{\infty},
\end{align*}
where
\begin{itemize}
\item $\Sigma_{0}$ is the set of eigenvalues of ``finite type'': eigenvalues with eigenfunctions consisting of finite sums with respect to the Hermite functions basis, that is, elements of $L^2(\R)_{{\rm fin}}\otimes \C^2$,
\item $\Sigma_{\infty}$ is the set of eigenvalues of ``infinite type'': eigenvalues that are not of finite type. 
\item $\Sigma_{i}^{+}$ (resp. $\Sigma_{i}^{-}$) for $i \in \{0,\infty\}$ are the eigenvalues having even (resp. odd) eigenfunctions.
\end{itemize}

We remark that the definition of the eigenvalues of finite type does not depend on the choice of basis.
Moreover, while they appear to be similar, the notion of finite type is different to that of Juddian eigenvalues
for the 2pAQRM. For instance, the shared--parity degenerated eigenvalues for the aNCHO correspond to eigenvalues  $\lambda \in \Sigma_0^{\pm} \cap \Sigma_\infty^{\pm}$, while the opposite-parity degeneracies correspond to eigenvalues $\lambda\in \Sigma_\infty^{+} \cap \Sigma_\infty^{-}$. In Section \ref{sec:degstruct2}, using the Heun picture, we clarify the relation between the classification of the spectrum
for the two models.

Although we mainly limit ourselves in this paper to the properties of the aNCHO that are helpful for the clarification of the symmetry and degeneracy aspects of the 2pAQRM, the relation between the two models can be explored in other directions. Applications to representation theory are the subject of a forthcoming sequel to this paper \cite{RBW2027}.

\subsection{A view of quantum interaction models and their relations}
\label{sec:relationsModels}

We conclude this section by remarking that the relation between the aNCHO and the 2pAQRM can be extended to other
models, and when combined with the concept of covering models \cite{RBW2023} we can begin to describe a panorama of the relation between models with the quantum Rabi model at the center.

For instance, the non-unitary equivalence between the 2pAQRM and the aNCHO is remarkably similar to the relation used in \cite{BCEJT2024} to reduce the analysis of the spectrum quantum Rabi-Stark model (QRSM) to that of a certain QRM. In particular, similar to the relation in Theorems \ref{them:N2024} and \eqref{para-correspondence2}, the parameters of the new QRM depend on the eigenvalues of the original QRSM.

Let us recall that the notion of covering model implies that one can obtain one model from the other by a confluence relation of the corresponding (Heun) ODE pictures \cite{RBW2023,N2024}. 

In Figure \ref{fig:paths2},  the horizontal arrows correspond to non-unitary equivalence (cf. \cite{N2024} or
\cite{HS2024} and \cite{BCEJT2024}) and vertical (and diagonal) arrows correspond to covering relations.  Dashed lines correspond to relations currently not described in the literature but that due to the structure of the models, are  likely to exist. In the diagram, 1pNCHO is the one photon NCHO defined by Hiroshima and Shirai \cite{HS2024}, 2pQRSM is a two photon generalization of the quantum Rabi-Stark model defined in a natural way. In all cases, asymmetric versions are obtained by the introduction of an appropriate bias term.

\begin{figure}[!ht]
  \centering
  \begin{tikzpicture}[domain=0:4]
    

    \node at (0,0) { {\small\begin{tabular}{c}  2pAQRM \end{tabular}}};
    
    \node at (0,0.5) { {\scriptsize\begin{tabular}{c} Symmetry-Degeneracy \end{tabular}}};

    \node at (5,0){ {\small\begin{tabular}{c}  {\em 2pAQRSM} \end{tabular}}};

    \node at (-5,0.9) { {\scriptsize\begin{tabular}{c} Symmetry-Degeneracy (partial) \end{tabular}}};   
    \node at (-5,0.5) { {\scriptsize\begin{tabular}{c} Spectral zeta function \end{tabular}}};

    \node at (-5,0) { {\small\begin{tabular}{c}  aNCHO \end{tabular}}};


    \node at (0,-3.5) { {\scriptsize\begin{tabular}{c} Symmetry-Degeneracy \end{tabular}}};
    \node at (0,-3.9) { {\scriptsize\begin{tabular}{c} Heat kernel \end{tabular}}};
    \node at (0,-4.3) { {\scriptsize\begin{tabular}{c} Spectral zeta function \end{tabular}}};

    \node at (0,-3) { {\small\begin{tabular}{c}  AQRM \end{tabular}}};

    \node at (5,-3.5) { {\scriptsize\begin{tabular}{c} Heat kernel \end{tabular}}};

    \node at (5,-3){ {\small\begin{tabular}{c}   AQRSM \end{tabular}}};


    \node at (-5,-3) { {\small\begin{tabular}{c}  1p aNCHO \end{tabular}}};
    

    \draw[<->,dashed] (1,0) --  (4,0) node[midway,below] {\scriptsize{non uni. equiv.}}; 

    \draw[<->,thick] (-1,0) --  (-4,0) node[midway,below] {\scriptsize{non uni. equiv.}}; 


    \draw[<->,thick] (1,-3) --  (4,-3) node[midway,above] {\scriptsize{non uni. equiv.}}; 

    \draw[<->,thick] (-1,-3) --  (-4,-3) node[midway,above] {\scriptsize{non uni. equiv.}}; 


    \draw[->,thick] (0,-0.3) --  (0,-2.6) node[midway,right] {\scriptsize{cov.}}; 

    \draw[->,dashed] (-5,-.3) --  (-5,-2.6) node[midway,right] {\scriptsize{cov.}} ;

   \draw[->,thick] (-4.5,-.3) --  (-1,-2.6) node[midway,above right] {\scriptsize{cov.}}; 
   
    \draw[->,dashed] (5,-.3) --  (5,-2.6) node[midway,right] {\scriptsize{cov.}}; 

  \end{tikzpicture}
  \caption{Schematic diagram of models related with the quantum Rabi model.}
  \label{fig:paths2}
\end{figure}

We expect that properties of the models in the scheme above should be reflected, with certain modifications, in the other models, for instance, the existence of symmetry operators for the asymmetric versions or Juddian solutions (see the case of the aNCHO and AQRM in \cite{RBW2023}). Additionally, relations between physical properties and arithmetical properties (i.e. special values at integer points, see \cite{KW2023} and references therein for the NCHO) are other potential research directions. We note that  hidden symmetry operators have already been studied heuristically in other quantum interaction models \cite{LLMB2022}.

It would also be an interesting problem to develop a similar picture for the Spin Boson model \cite{HHF2014}, which can be considered as a generalization of the QRM in quantum field theory, or the Kondo effect  \cite{K2012}. In addition, the explicit correlation function for the Kondo model about a magnetic impurity \cite{AYH1970} is remarkably similar to the analytical formula of the heat kernel developed for the quantum Rabi model \cite{RW2020hk,R2023}


\section{Eigenvalue problems of the 2pAQRM}
\label{sec:deg}

In this section, we consider the eigenvalue problem, the time independent Schr\"odinger equation, associated with the Hamiltonian of the 2pAQRM. By using an appropriate unitary transformation we consider a couple of equivalent Hamiltonians which allow a simpler analysis of the exceptional solutions. More importantly, from these Hamiltonians we can show that the eigenvalue problem is equivalent to a Heun ordinary differential equation (i.e. a Fuchsian ODE with four regular singular points) with certain regularity conditions around the singularities.

As a preparation, let us recall the definition of the Bogoliubov transforms, which is 
a standard technique in the study of models of quantum interaction (see, e.g. \cite{E2019}). 

We consider the Hamiltonian $H_{\e}$ as an element of ${\rm Mat}_2[\C[a,a^\dag]]$ where $\C[a,a^\dag]$ is the Weyl algebra generated by the operators $a$ and $a^\dag$. Then, a Bogoliubov transform on the Weyl algebra is the extension of the polynomial algebra $\C[a,a^\dag]$ transform given by
\begin{equation}
\begin{bmatrix}
    b \\ b^{\dag}
\end{bmatrix}
 = \bm{M} 
 \begin{bmatrix}
    a \\ a^{\dag}
\end{bmatrix}
\end{equation}
for $\bm{M} \in SL_2(\C)$. Under these conditions, the transformed variables $b,b^{\dag}$ satisfy $[b,b^\dag]=1$. The Bogoliubov  transforms defined in this way are actually automorphisms of the Weyl algebra $\C[a,a^\dag]$ (see \cite{D1968} Théor\`eme 8.10, and also \cite{ML1984}). 

For the unitary case, the Bogoliubov transform may also be realized on the Hilbert space in which $a,a^\dag$ act by the unitary operator 
\[
  I_\theta[\psi](x) = e^{\theta/2} \psi(e^\theta x) = \exp(\tfrac{\theta}2 (a^2 - (a^{\dag})^2))\psi(x)
\]
for a parameter $\theta \in \R$. It is easy to verify that 
\[
  I_\theta a I_{\theta}^{-1} = \ch(\theta) a + \sh(\theta) a^\dag,
\]
and
\[
  I_\theta a^\dag I_{\theta}^{-1} = \sh(\theta) a + \ch(\theta) a^\dag,
\]
showing the equivalence to the definition above. In this paper, we make use of both approaches depending of the situation. In the context of quantum interactions, the operator $I_\theta[\psi]$ is sometimes called a squeezing transformation (see, e.g., Braak \cite{B2023}).

\subsection{Approach to solutions for the 2pAQRM}
\label{sec:solutions}

The first step for the analysis of the eigenvalues of the 2pAQRM is to consider unitary equivalent Hamiltonians,
realized by  unitary Bogoliubov transformations. Here, we follow the approach of Emary and Bishop \cite{EB2002} in the form used by Braak \cite{B2023} and Xie and Chen \cite{XC2021} with some modifications.

Using the Bogoliubov transform $I_\theta$ with parameter $\theta$, we verify directly that
\[
  I_\theta a^\dag a I_{\theta}^{-1} = (2 \ch(\theta)^2 - 1) a^t a + (\ch(\theta)^2 - 1) + \ch(\theta) \sh(\theta)(a^2+ (a^\dag)^2)
\]
and
\[
  I_\theta (a^2 + (a^{\dag})^2)I_{\theta}^{-1}  = (2 \ch(\theta)^2 -1 ) (a^2+ (a^\dag)^2) + 2 \ch(\theta) \sh(\theta) (1 + 2 a^\dag a).
\]

By setting
\[
  \theta = \log\left(\frac{1-2g}{1+2g} \right)^{\tfrac14},
\]
the Hamiltonian $I_\theta  H_{\e} I_{\theta}^{-1}$ in matrix form is given in
\begin{equation}
  \label{eq:Hamiltonian1}
  H_{A}^{(\e)} =
  \begin{bmatrix}
    \tau (a^\dag a + \frac12) + \tau \e & \Delta \\
    \Delta & \frac{1}{\tau}( (1+4g^2)(a^\dag a + \frac12) - 2g (a^2+ (a^\dag)^2))- \tau \e 
 \end{bmatrix}, \tag{A}
\end{equation}
with $\tau = \sqrt{1-4g^2}$. On the other hand, with $\theta = -\log\left(\frac{1-2g}{1+2g} \right)^{\tfrac14}$, the 
Hamiltonian $I_\theta  H_{\e} I_{\theta}^{-1}$ is given by
\begin{align}
  \label{eq:Hamiltonian2}
  H_{B}^{(\e)} =
    \begin{bmatrix}
\frac{1}{\tau}( (1+4g^2)(a^\dag a + \frac12) + 2g (a^2+ (a^\dag)^2)) + \tau \e & \Delta \\
    \Delta &  \tau (a^t a + \frac12) - \tau  \e
  \end{bmatrix}. \tag{B}
\end{align}

Then, it follows that the eigenvalue problem for \eqref{eq:Hamiltonian1} with the eigenvalue $\lambda$ and eigenfunctions $\phi_1,\phi_2 \in \mathcal{F}(\C)$ is given by
\begin{equation}
\label{eq:ODE_A}
\begin{aligned}
  \tau (z \partial_z  + \tfrac12)\phi_1 + \tau ( \e - \tfrac{\lambda}{\tau}) \phi_1 + \Delta \phi_2 &= 0,  \\
  \tau \Delta \phi_1 -2g (\partial_z^2 + z^2) \phi_2 + (1+4g^2) (z \partial_z +\tfrac12)\phi_2 - \tau^2( \e+\tfrac{\lambda}{\tau})\phi_2 &=0.
\end{aligned}
\end{equation}
 The ODE system is clearly parity invariant, therefore we consider solutions of the type
\[
  \phi_1(z) = \sum_{n=0}^\infty a^{(\rho)}_n z^{2n+\rho}, \qquad \phi_2(z) = \sum_{n=0}^\infty b^{(\rho)}_n z^{2n+\rho}, 
\]
for $\rho=0,1$, we obtain the relations
\begin{align}
  \label{eq:relA1}
  (\tau(2n+ \rho + \frac12 + \e)-\lambda) a^{(\rho)}_n =  \Delta b^{(\rho)}_n,
\end{align}
and
\begin{align}
  \label{eq:recurr1}
  &(2(n+1)+\rho)(2(n+1)+\rho-1) b_{n+1}^{(\rho)} = \nonumber \\
  &\qquad \frac{1}{2g}\left[(1+4g^2)(2n+\rho+\tfrac12) - \tau^2\left(\frac{\lambda}{\tau} + \e \right) - \frac{\Delta^2}{(2n+\rho+\frac12 + \e) - \frac{\lambda}{\tau} } \right] b_{n}^{(\rho)}  -  b_{n-1}^{(\rho)}. 
\end{align}

Let us note here that the exceptional solutions, that is, those that arise as poles of the G-function (see \cite{XC2021}) correspond to eigenvalues of the form $\lambda = \tau(2 n + \rho + \frac12 + \e)$ for $n\geq 0$.

For the case of \eqref{eq:Hamiltonian2}, with solutions
\[
  \psi_1(z) = \sum_{n=0}^\infty (-1)^n f^{(\rho)}_n z^{2n+\rho}, \qquad \phi_2(z) = \sum_{n=0}^\infty (-1)^n e^{(\rho)}_n z^{2n+\rho}, 
\]
we obtain the relations
\begin{align}
  \label{eq:relB1}
  (\tau(2n+ \rho +\frac12 - \e)-\lambda)e^{(\rho)}_n =  \Delta f^{(\rho)}_n,
\end{align}
and
\begin{align}
  \label{eq:recurr2}
  &(2(n+1)+\rho)(2(n+1)+\rho-1) f_{n+1}^{(\rho)} = \nonumber\\
  & \qquad \frac{1}{2g}\left[(1+4g^2)(2n+\rho+\tfrac12) - \tau^2\left(\frac{\lambda}{\tau}- \e\right)- \frac{ \Delta^2}{ (2n+\rho+\frac12 - \e) - \frac{\lambda}{\tau}} \right] f_{n}^{(\rho)} -  f_{n-1}^{(\rho)}.
\end{align}
Notice that $\e \mapsto -\e$ in \eqref{eq:recurr2} gives \eqref{eq:recurr1}. In this case, the exceptional solutions correspond to eigenvalues of the form $\lambda = \tau (2 N + \rho + \frac12 - \e) $ with $N \in \Z_{\geq 0}$.

\begin{rem}
  It is worth remarking the relation between the two Hamiltonians $ H_{A}^{(\e)}$ and $H_{B}^{(\e)}$ in terms of the
  reciprocity operator $\mathcal{P}_x$. A simple computation shows that
  \begin{equation}
    \label{eq:HaHbrel}
    H_{A}^{(\e)} =  \mathcal{P}_x H_{B}^{(-\e)} \mathcal{P}_x^{-1}.
  \end{equation}
  For $\e=0$, this is just the expression of the $\Z_4$-symmetry and in the general case it does not give a unitary   equivalence between the two Hamiltonians. Indeed, if this were the case, then the 2pAQRM would have an obvious $\Z_4$-symmetry. 
\end{rem}

The case of degenerate solutions arising from Juddian solutions, that is, exceptional solutions corresponding to
polynomials (or quasi-exact) solutions, is treated in Section \ref{sec:deg}. 

\subsection{Heun picture of the 2pAQRM}
\label{sec:HeunPicture}

A basic result on the analysis of the aNCHO is the equivalence of its eigenvalue problem with a pair of Heun ODE according to the parity (see \cite{W2015} and \cite{RBW2023}). Similarly, it is well-known that the eigenvalue problem of the AQRM is equivalent to a pair of Heun ODE (originally considered in \cite{O2001CMP,O2004} for the odd parity, and completed in  \cite{W2017}). In light of the relation of the 2pAQRM with the aNCHO, it is reasonable to expect a similar result in this setting. This is indeed the case, and in this section we describe the Heun picture of the 2pAQRM.

To obtain the Heun ODE picture, we make use of a generalization of the oscillator representation of $\mathfrak{sl}_2$. We refer the reader to any standard reference for the background material, e.g. \cite{HT1992}.

We denote the standard basis of $\mathfrak{sl}_2$ by
\[
 E_+:=  \begin{bmatrix}
    0 & 1 \\
    0 & 0
  \end{bmatrix}
  \quad 
  H:=  \begin{bmatrix}
    1 & 0 \\
    0 & -1
  \end{bmatrix}
 \quad 
  E_-:=  \begin{bmatrix}
    0 & 0 \\
    1 & 0
  \end{bmatrix}.
\]

For $a \in \C$, define the representation $\pi'_a$ of $\mathfrak{sl}_2$ on the space $y^{a-1} \C[y]$ given on generators by
\begin{align}
\label{Rep_a'}
  \pi_a'(H) = y \frac{d}{d y} + \frac12, \quad \pi_a'(E_+) = \frac{y^2}{2}, \quad \pi'_a(E_-) = -\frac12 \frac{d^2}{d y^2} + \frac{(a-1)(a-2)}{2} \cdot \frac{1}{y^2}.
\end{align}
Clearly, for $a=1,2$, the representation $\pi_{a}'$ is isomorphic to the oscillator representation $\mathcal{F}(\C)$ of $\mathfrak{sl}_2$ for . We use the same symbol $a$ for the parameter of $\pi'_a$ and the operator $a$, but the difference should be clear from the context.

The Hilbert space $\overline{\C[y]}$, equipped with the Fisher inner product, is isometric to $L^2(\C)$ and $\mathcal{F}(\C)$ via  intertwiners between the corresponding oscillator representations, so the eigenvalue problems in these spaces is equivalent (see \cite{RBW2023} for details). Here, we remark that the representation $\pi_a'(H)$ is equivalent to the representation used in \cite{Puk1964} for the (second series of) representations of the simply-connected covering group of $SL(2,\R)$ and the representation $D_r$ appearing in \cite{K2000}. 

First, we consider elements of $\mathcal{U}(\mathfrak{sl}_2)$ that captures the eigenvalue problems associated with the 2pAQRM (namely, those determined by \eqref{eq:Hamiltonian1} and \eqref{eq:Hamiltonian2}). First, we consider elements $\mathcal{S}_X^{(\e)} = \mathcal{S}_X^{(\e)}(\lambda,g,\Delta) \in \mathcal{U}(\mathfrak{sl}_2)$ for $X \in \{A,B\}$ given by
\begin{align*}
  \label{eq:Rop}
  \mathcal{S}_A^{(\e)} &=  \left[ 4g (E_+ - E_-) - (1+4g^2) H  + \tau^2 \left( \frac{\lambda}{\tau} + \e \right) \right] \left(H - \frac{\lambda}{\tau} + \e \right) + \Delta^2, \\
  \mathcal{S}_B^{(\e)} &=  \left[ 4g (E_+ - E_-) + (1+4g^2) H  - \tau^2 \left(\frac{\lambda}{\tau}  - \e \right) \right] \left(H - \frac{\lambda}{\tau} - \e \right) - \Delta^2. \nonumber
\end{align*}
then the following proposition follows directly.

\begin{prop}
  \label{prop:EigensystemPi}
  For $a = 1,2 $ and $X \in \{A,B\}$, the eigenvalue problem for $H_X^{(\e)}$ (defined in equation \eqref{eq:Hamiltonian1} or \eqref{eq:Hamiltonian2}) for an eigenvalue $\lambda$ is equivalent to the solution of
  \[
    \pi'_a(\mathcal{S}_X^{(\e)}) \phi = 0.
  \]
\end{prop}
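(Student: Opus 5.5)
We eliminate one component of the eigenvalue system for $H_X^{(\e)}$ and identify the resulting scalar second-order operator with $\pi'_a(\mathcal{S}_X^{(\e)})$. We treat $X=A$; the case $X=B$ is identical after swapping the roles of the two components and replacing $\e$ by $-\e$ (cf. \eqref{eq:HaHbrel}).

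\textbf{Step 1: match the $\mathfrak{sl}_2$ generators with Fock-space operators.} For $a=1,2$ the term $\frac{(a-1)(a-2)}{2y^2}$ vanishes. Hence $\pi'_a(H)=y\partial_y+\tfrac12$, $\pi'_a(E_+)=y^2/2$ and $\pi'_a(E_-)=-\tfrac12\partial_y^2$. Under the identification $y\mapsto z$ (the intertwiner with $\mathcal{F}(\C)$ mentioned above) these become
\[
\pi'_a(H)=a^\dag a+\tfrac12, \qquad 4g\,\pi'_a(E_+-E_-)=2g\bigl(z^2+\partial_z^2\bigr)=2g\bigl(a^2+(a^\dag)^2\bigr).
\]
The values $a=1$ and $a=2$ correspond to the even and odd subspaces $y^{a-1}\C[y]$ respectively, and together they exhaust the parity decomposition.

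\textbf{Step 2: eliminate $\phi_2$.} In \eqref{eq:ODE_A} the first equation is $\tau\bigl(\pi'(H)+\e-\lambda/\tau\bigr)\phi_1=-\Delta\phi_2$. Since $\Delta>0$, this expresses $\phi_2$ in terms of $\phi_1$. Substituting into the second equation and multiplying by a suitable constant gives
\[
\Bigl[4g\,\pi'(E_+-E_-)-(1+4g^2)\pi'(H)+\tau^2\bigl(\tfrac{\lambda}{\tau}+\e\bigr)\Bigr]\bigl(\pi'(H)-\tfrac{\lambda}{\tau}+\e\bigr)\phi_1+\Delta^2\phi_1=0.
\]
This is exactly $\pi'_a(\mathcal{S}_A^{(\e)})\phi_1=0$, after tracking the sign and the factor $\tau$ (the factor $\tau$ multiplying $\Delta\phi_1$ in the second equation cancels the $1/\tau$ coming from the elimination).

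\textbf{Step 3: the converse.} Given a solution $\phi$ of $\pi'_a(\mathcal{S}_A^{(\e)})\phi=0$, set $\phi_1=\phi$ and define $\phi_2:=-\tfrac{\tau}{\Delta}\bigl(\pi'(H)+\e-\lambda/\tau\bigr)\phi$. Then the first equation of \eqref{eq:ODE_A} holds by construction, and the second holds by the identity in Step 2. Thus $(\phi_1,\phi_2)$ solves the eigenvalue system, and the two maps between solution sets are mutually inverse.

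\textbf{Main obstacle.} The delicate point is the meaning of ``equivalent'' at the level of function spaces. One must check that $\phi_2$ lies in the same space as $\phi_1$, namely $\mathcal{F}(\C)$ or the completion of $y^{a-1}\C[y]$. This holds because $\phi_2$ is obtained from $\phi_1$ by the operator $y\partial_y$ plus constants, and the eigenvector $(\phi_1,\phi_2)$ lies in the domain of the Hamiltonian. Since the statement asserts that the equivalence ``follows directly,'' I would handle this by working formally at the level of $\mathcal{U}(\mathfrak{sl}_2)$ and parity-graded power series. The membership condition would then be left to the normalizability analysis carried out later.
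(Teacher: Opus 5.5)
Your proposal is correct and is essentially the paper's argument: eliminate one component of the eigenvalue system for $H_X^{(\e)}$, then identify the resulting third-order scalar equation with $\pi'_a(\mathcal{S}_X^{(\e)})$, using that for $a=1,2$ the representation $\pi'_a$ is just the Fock realization $H\mapsto a^\dag a+\tfrac12$, $4g(E_+-E_-)\mapsto 2g(a^2+(a^\dag)^2)$; your explicit converse map adds detail the paper omits, and the paper likewise leaves the function-space membership at this formal level. Keeping $\phi_1$ for $X=A$ is the right choice, since keeping $\phi_2$ instead gives $\widetilde{\mathcal{S}_A^{(\e)}}$; for $X=B$, however, note that eliminating $\psi_1$ gives $\pi'_a(\mathcal{S}_B^{(\e)})\psi_2=0$ directly, and this differs from the $A$-case by $g\to-g$ and an overall sign in addition to $\e\to-\e$, so ``swap components and replace $\e$ by $-\e$'' alone is not quite literal.
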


\begin{proof}
  It is enough to consider the system of equations $(H_X^{(\e)} - \lambda )\psi = 0$ for $\psi = ^{t} (\psi_1,\psi_2)$ where $\psi_i$ are
  entire functions in $\mathcal{F}(\C)$. Then, we reduce the system to a single second order ODE for $\psi_2$ and
  the result follows by comparison with $\pi'_a(\mathcal{S}^{(\e)}) \phi = 0$ and the existence of the intertwining isometries
  between the two spaces.
\end{proof}

The relation between the elements $\mathcal{S}_A^{(\e)}$ and $\mathcal{S}_B^{(\e)}$ is exhibited by 
\[
   \pi'_a(\mathcal{S}_A^{(-\e)}(\lambda,-g,\Delta)) \phi =  -\pi'_a(\mathcal{S}_B^{(\e)}(\lambda,g,\Delta)) \phi.
\]
Therefore, in what follows we consider only the element  $\mathcal{S}_A^{(\e)}$ with no loss of generality.

\begin{rem}
  In the proof of Proposition \ref{prop:EigensystemPi}, it is possible to reduce the equation $(H_X^{(\e)} - \lambda )\psi = 0$
  in a different way. In this case, the same result holds with respect to the elements
  \begin{align*}
  \widetilde{\mathcal{S}_A^{(\e)}} &=  \left(H - \frac{\lambda}{\tau} + \e \right) \left[ 4g (E_+ - E_-) - (1+4g^2) H  + \tau^2 \left( \frac{\lambda}{\tau} + \e \right) \right]  + \Delta^2, \\
  \widetilde{\mathcal{S}_B^{(\e)}} &=  \left(H - \frac{\lambda}{\tau} - \e \right) \left[ 4g (E_+ - E_-) + (1+4g^2) H  - \tau^2 \left(\frac{\lambda}{\tau}  - \e \right) \right]  - \Delta^2
  \end{align*}
  of  $\mathcal{U}(\mathfrak{sl}_2)$. We do not consider the operators $\widetilde{\mathcal{S}_X^{(\e)}}$ in this paper, however, we
  remark that the analogous $\mathcal{U}(\mathfrak{sl}_2)$ elements have a role in the theory of the aNCHO \cite{RBW2023}.
\end{rem}

Note that $\pi'_a(\mathcal{S}_A^{(\e)}) \phi = 0$ is a differential operator of third order, so it does not give directly the Heun picture for the 2pAQRM. To overcome this problem, we follow the method of \cite{RBW2023} and use an intertwiner with a another representation of $\mathfrak{sl}_2$.

For $a \in \R$, a representation $\varpi_a$ on $\C[z,z^{-1}]$ is defined on generators by
\[
  \varpi_a(H) = z \partial_z + \frac12, \qquad \varpi_a(E_+) = \frac12 z^2 (z \partial_z + a), \qquad \varpi_a(E_-) = - \frac1{2 z^2} (z \partial_z + 1- a )
\]
For $a=1,2$, $\mathcal{L}_a$ is a quasi-intertwiner of $\pi'_a$ and $\varpi_a$ in the sense that
\begin{align*}
  \mathcal{L}_a \pi'_a (X) &= \varpi_a(X) \mathcal{L}_a , \qquad (\text{ for } X = H,E_+),  \\
  (\mathcal{L}_1 \pi' (E_-))(z) &= \varpi_1(E_-) (\mathcal{L}_1 u(z)) + u'(0)/(2z),   \\
  (\mathcal{L}_2 \pi' (E_-))(z) &= \varpi_2(E_-) (\mathcal{L}_2 u(z)) - u(0)/(2z^2).
\end{align*}

Then, by setting $\omega = \frac{z^2}{2g}$ and $t = \frac{1}{4g^2}$, by a direct computation we note that
\[
  z^{1-a} \varpi_a(\mathcal{S}_A^{(\e)}) z^{a-1} = 16 g^2 \omega (\omega-1)(\omega-t) \Lambda_a,
\]
where the Heun ODE operator $\Lambda_a$ is given by
  \begin{align*}
    \Lambda_a &= \frac{d^2}{d \omega^2}  
          +  \left( \frac{ \frac12( a - \frac{\lambda}{\tau} + \frac32 + \e)}{2 \omega}  + \frac{ \frac12(a - \frac{\lambda}{\tau} -\frac12 - \e) }{(\omega -1)} + \frac{ \frac12 (a - \frac12 + \frac{\lambda}{\tau} + \e) }{(\omega - t)}  \right) \frac{d}{d \omega}  \\
    & \quad \qquad \qquad + \frac{ \frac12 ( a - \frac{\lambda}{\tau} -\frac12 + \e)(a-\frac12) \omega + \bar{q}_a}{\omega (\omega -1)(\omega - t)},
  \end{align*}
  with regular singularities at $\omega \in \{ 0,1,t,\infty\}$ and accessory parameter given by
  \begin{align}
    \bar{q}_a = \frac14 \left[ - \left(a-\frac12 -  \frac{\lambda}{\tau}\right)^2 + \e^2 + \frac{\Delta^2}{\tau^2} \right] (t - 1) - \frac12 (a-\frac12)(a - \frac12 - \frac{\lambda}{\tau} + \e).
  \end{align}

The local properties of the Heun ODE are given by the Riemann scheme (see, e.g. \cite{R1995,SL2000})
\[
  \begin{bmatrix}
    0 & 1 & t & \infty &; \omega \\
    0 & 0 & 0 & a-\frac{1}{2}&; \bar{q}_a \\
    \frac12\left(\frac{\lambda}{\tau} + \frac12 - a - \e\right) &  \frac12 \left(\frac{\lambda}{\tau} + \frac52 - a + \e \right) & \frac12 \left( \frac52 -a -\frac{\lambda}{\tau} - \e \right) & \frac12 \left(a -\frac{\lambda}{\tau} - \frac12 + \e \right)
  \end{bmatrix}.
\]

It is important to note here that $g$ appears in the Heun ODE only to the second power. Therefore, using the operator $\mathcal{S}_B^{(\e)}$ only changes the  Heun ODE picture by $\e \to -\e$. In the next section, this observation is fundamental for the study of the constraint polynomials corresponding to the Juddian solutions.

Finally, we obtain the Heun picture for the 2pAQRM following the case of the aNCHO.

\begin{prop}
  \label{Prop:32}
  The even and odd solutions of the eigenvalue problem of the 2pAQRM for the eigenvalue $\lambda$
  are equivalent respectively to the existence of solutions of Heun equations
  \begin{align*}
     \{ \varphi \in \mathcal{F}(\C) \otimes \C^2 \, | \, H_B^{(\e)} \varphi = \lambda \varphi, \, \varphi(-x) = \varphi(x) \} & \simeq \{ f \in \mathcal{O}(\Omega) \, | \,  \Lambda_1 f = 0 \},\\
     \{ \varphi \in \mathcal{F}(\C) \otimes \C^2 \, | \, H_B^{(\e)} \varphi = \lambda \varphi, \, \varphi(-x) = -\varphi(x) \} & \simeq \{ f \in \mathcal{O}(\Omega) \, | \,  \Lambda_2 f = 0 \},
  \end{align*}
  where $\mathcal{O}(\Omega)$ is the set of holomorphic solutions in a (simply connected) domain $\Omega \subset \C$ with $1,0 \in \Omega$ but
  $\frac{1}{4g^2} \notin \Omega$.
\end{prop}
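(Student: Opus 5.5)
We follow the route already used for the aNCHO in \cite{RBW2023}, building a chain of equivalences. First, by Proposition \ref{prop:EigensystemPi} and the parity decomposition, an even (resp. odd) eigenvector $\varphi$ of $H_B^{(\e)}$ with eigenvalue $\lambda$ corresponds, after eliminating one component, to a solution $\phi$ of $\pi'_a(\mathcal{S}_X^{(\e)})\phi=0$ with $a=1$ (resp. $a=2$). We recall that only $\mathcal{S}_A^{(\e)}$ needs to be treated: passing from $A$ to $B$ amounts to $\e\to-\e$ together with $g\to -g$, and $g$ enters $\Lambda_a$ only through $g^2$. Concretely, $\phi$ lies in $y^{a-1}\C[[y]]$ and has the growth dictated by $\mathcal{F}(\C)$. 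The elimination is invertible, because the other component is recovered algebraically from \eqref{eq:relA1}/\eqref{eq:relB1}. This step is routine.

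Next we apply the quasi-intertwiner $\mathcal{L}_a$. Since $\mathcal{L}_a$ intertwines $H$ and $E_+$ exactly, and $E_-$ only up to the boundary terms $u'(0)/(2z)$ (for $a=1$) or $-u(0)/(2z^2)$ (for $a=2$), we obtain that $\varpi_a(\mathcal{S}_A^{(\e)})\mathcal{L}_a\phi$ equals a finite Laurent correction concentrated in negative powers of $z$. The key point is to show that, for a genuine eigenfunction, this correction vanishes or is absorbed. We plan to argue from parity: for $a=1$ we have $\phi$ even, so $u'(0)=0$. For $a=2$, $\phi\in y\C[[y]]$, so $u(0)=0$. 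Hence $\mathcal{L}_a\phi$ solves $\varpi_a(\mathcal{S}_A^{(\e)})f=0$.

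Conjugating by $z^{a-1}$ and substituting $\omega=z^2/(2g)$, the displayed identity
\[
z^{1-a}\varpi_a(\mathcal{S}_A^{(\e)})z^{a-1}=16g^2\omega(\omega-1)(\omega-t)\Lambda_a
\]
turns this into $\Lambda_a f=0$ for a function $f$ of $\omega$. We take $\mathcal{L}_a$ to be an integral (Laplace-type) transform, as in \cite{RBW2023}. The Fock-space growth condition then translates into holomorphy of $f$ on a simply connected domain containing $\omega=0$ and $\omega=1$. Here $\omega=1$ corresponds to the saddle/boundary point of the transform. By contrast, $t=1/(4g^2)>1$ is the singularity that must be excluded.

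For the converse, we start from $f\in\mathcal{O}(\Omega)$ with $\Lambda_a f=0$ and invert the transform, writing $\phi$ as a contour integral of $f$ over a path in $\Omega$ joining $0$ and $1$. We then check that $\phi$ is entire, has the correct parity, and lies in $\mathcal{F}(\C)$. Holomorphy at both endpoints kills the boundary terms in the integration by parts, so $\pi'_a(\mathcal{S}_A^{(\e)})\phi=0$. The linear maps in the two directions are mutually inverse, which gives the claimed isomorphisms.

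The main obstacle will be this analytic step: showing that holomorphy of $f$ at both $\omega=0$ and $\omega=1$ is exactly equivalent to square-integrability in $\mathcal{F}(\C)$. This is where the Riemann scheme exponents enter. At $\omega=0$ and $\omega=1$, the non-holomorphic exponent produces either a non-entire $\phi$ or one with growth $e^{c|z|^2}$ with $c\ge 1/2$. We would try to control this by the same exponent bookkeeping used for the aNCHO, transported through Theorem \ref{them:N2024} wherever $|\lambda|>\Delta$.
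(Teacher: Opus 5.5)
Your chain (Proposition \ref{prop:EigensystemPi}, then the quasi-intertwiner $\mathcal{L}_a$, then $z^{1-a}\varpi_a(\mathcal{S}_A^{(\e)})z^{a-1}=16g^2\omega(\omega-1)(\omega-t)\Lambda_a$) is the route the paper intends. The algebraic part is sound. The defect terms of $\mathcal{L}_a$ are evaluated at $u=\pi'_a(H-\tfrac{\lambda}{\tau}+\e)\phi$, which has the parity of $\phi$, so they vanish as you say.

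A minor point on the $A\to B$ step: the statement pairs $H_B^{(\e)}$ with the $\Lambda_a$ built from $\mathcal{S}_A^{(\e)}$, whereas your remark yields $\Lambda_a$ with $\e\to-\e$. To match the statement, add that $H_A^{(\e)}$ and $H_B^{(\e)}$ are both parity-preserving squeezings $I_{\pm\theta}H_{\e}I_{\pm\theta}^{-1}$ of $H_{\e}$.

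The genuine gap is the analytic step that carries the content of the proposition: why $\phi\in\mathcal{F}(\C)$ is equivalent to holomorphy of $f$ at $\omega=0$ and $\omega=1$, with no condition at $t$. You assert this, and the mechanisms you offer do not deliver it.
\begin{itemize}
\item $\omega=1$ is not a saddle or boundary point of any transform.
\item The inverse of $\mathcal{L}_a$ is a Borel-type map acting termwise on Taylor coefficients at $\omega=0$ (equivalently a Cauchy integral over a small loop around $0$). It is not an integral along a path from $0$ to $1$. Moreover, holomorphy of $f$ at the endpoints of such a path would make boundary terms finite, not zero.
\item Transporting through Theorem \ref{them:N2024} only reaches eigenvalues with $|\lambda|>\Delta$, so it cannot cover the whole spectrum.
\end{itemize}

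The missing idea is that the relations $\mathcal{L}_a\pi'_a(H)=\varpi_a(H)\mathcal{L}_a$ and $\mathcal{L}_a\pi'_a(E_+)=\varpi_a(E_+)\mathcal{L}_a$ already force $\mathcal{L}_ay^m=k_mz^m$ with $k_{m+2}=(m+a)k_m$. Normalizing $k_{a-1}=1$, this gives $k_{2n+a-1}=2^n(a-\tfrac12)_n$. So for $\phi=\sum_nc_ny^{2n+a-1}$ you get $f(\omega)=\sum_n\tilde c_n(2g\omega)^n$ with $\tilde c_n=k_{2n+a-1}c_n$, and the duplication formula gives
\[
\|\phi\|_{\mathcal{F}}^2=\sum_n|c_n|^2(2n+a-1)!=\sum_n|\tilde c_n|^2\,\frac{n!}{(a-\frac12)_n},\qquad \frac{n!}{(a-\frac12)_n}\asymp n^{\frac32-a}.
\]
Hence $\phi\in\mathcal{F}(\C)$ forces the radius of convergence of $f$ at $\omega=0$ to be at least $1/(2g)$. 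Conversely, any radius strictly larger than $1/(2g)$ makes $\phi$ entire and in $\mathcal{F}(\C)$.

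For a power-series solution of the Fuchsian equation $\Lambda_af=0$, this radius is either exactly $1$ (if $f$ is singular at $\omega=1$) or at least $t$. Indeed, a solution holomorphic at $0$, and at $1$ after continuation along $[0,1]$, has trivial local monodromy at both points, so it extends holomorphically to $|\omega|<t$. Since $0<g<\tfrac12$ gives $1<\tfrac1{2g}<t=\tfrac1{4g^2}$, the Fock condition is exactly holomorphy at $0$ and $1$. The point $t$ lies beyond the critical circle $|\omega|=1/(2g)$, which is why it imposes nothing. (The growth rate you anticipate from a singularity at $\omega=1$ is $c=\tfrac1{4g}>\tfrac12$.)

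For the converse, define $\phi$ termwise from the Taylor coefficients of $f$. The intertwining then holds termwise, with the defects vanishing by parity. Injectivity of $\mathcal{L}_a$ on power series gives $\pi'_a(\mathcal{S}_A^{(\e)})\phi=0$, with no integration by parts needed.
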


Interestingly, the Heun picture for the 2pAQRM is obtained in a rather direct way compared to the aNCHO. Indeed, it does not require multiple intertwinning between different representations as in the latter.

The Heun picture of the 2pAQRM allows us to prove certain results in a way analogous to the aNCHO since the Heun picture is analogous. For instance, we can show that the eigenvalues of the 2pAQRM are bounded above by $2$ (cf. Thm 6.1 in \cite{W2015}).

\begin{thm}
  The multiplicity of any eigenvalue of the 2pAQRM is bounded above by $2$.
\end{thm}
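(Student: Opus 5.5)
By the parity decomposition \eqref{eq:2pQRMpar}, it suffices to show that each parity sector carries at most a one-dimensional eigenspace for a given $\lambda$. Indeed, $H_\e$ commutes with $\hat{\mathcal{P}}^2$, so the $\lambda$-eigenspace splits as the direct sum of its even and odd parts. Since $H_\e$ is unitarily equivalent to $H_B^{(\e)}$ via $I_\theta$, and $I_\theta$ preserves parity (the squeezing $\psi(x)\mapsto e^{\theta/2}\psi(e^\theta x)$ commutes with $x\mapsto -x$), we may work with $H_B^{(\e)}$. By Proposition \ref{Prop:32}, the even (resp. odd) eigenspace for $\lambda$ is isomorphic to a subspace of the holomorphic solutions of $\Lambda_1 f=0$ (resp. $\Lambda_2 f=0$) on a domain $\Omega$ containing $0$ and $1$ but not $t=1/(4g^2)$.

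The key step is to show that, for $a=1,2$, the space of solutions of $\Lambda_a f = 0$ that are holomorphic on $\Omega$ has dimension at most $1$. Holomorphy at both $0$ and $1$ is what the correspondence demands, but holomorphy at the single singular point $\omega=0$ already suffices. Indeed, the space of all solutions near $0$ is two-dimensional, since $\Lambda_a$ is second order. If every local solution at $0$ were holomorphic, then for every pair of independent solutions $f_1,f_2$ the Wronskian $W=f_1f_2'-f_1'f_2$ would be holomorphic at $0$. By Abel's formula, however, $W = C\,\omega^{-p_0}(\omega-1)^{-p_1}(\omega-t)^{-p_t}$ with $C\neq 0$, where $p_0,p_1,p_t$ are the residues of the first-order coefficient. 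Hence $W$ behaves like $\omega^{\rho-1}$ near $0$, where $\rho$ is the non-zero exponent at $0$ read off from the Riemann scheme.

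So there are two cases.
\begin{itemize}
\item If $\rho-1\notin\Z_{\ge 0}$, then $W$ is not holomorphic at $0$. At most a one-dimensional subspace of solutions is then holomorphic at $0$, and we are done.
\item If $\rho\in\Z_{\geq 1}$, exponent considerations alone do not exclude a two-dimensional space of holomorphic solutions at $0$. In this case we use the second singular point $\omega=1$, whose non-zero exponent is $\rho' = \tfrac12(\tfrac{\lambda}{\tau}+\tfrac52-a+\e)$. We compare it with $\rho=\tfrac12(\tfrac{\lambda}{\tau}+\tfrac12-a-\e)$, note that $\rho'-\rho = 1+\e$, and repeat the Wronskian argument with $W$ considered as a function on all of $\Omega\ni 0,1$. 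If both solutions were holomorphic on $\Omega$, $W$ would have to be holomorphic and nonvanishing except possibly for zeros at $0$ and $1$, which forces $\rho-1\ge 0$ and $\rho'-1\geq 0$ simultaneously as integers.
\end{itemize}

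The main obstacle is this residual case where both local exponents are non-negative integers. Here we invoke the standard fact (as in Thm~6.1 of \cite{W2015} for the aNCHO) that a Heun equation cannot have two independent solutions holomorphic at two finite singular points unless both are apparent singularities. We would exclude this via the relation between the exponents and the accessory parameter $\bar q_a$, or more simply by returning to the Fock-space recurrence \eqref{eq:recurr2}. The recurrence is a three-term relation in which $f^{(\rho)}_0$ determines all the $f_n$ whenever the coefficient $(2(n+1)+\rho)(2(n+1)+\rho-1)$ is nonzero, and this holds for all $n\ge 0$. The only danger is the pole term at $\lambda=\tau(2N+\rho+\frac12-\e)$. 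There, \eqref{eq:relB1} forces $f_N=0$ and leaves $e_N$ free, giving at most one extra free parameter, while the consistency condition at the pole removes one. A careful count in this exceptional case then gives dimension $\le 1$ per parity, hence total multiplicity at most $2$.
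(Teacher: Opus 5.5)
\textbf{There is a genuine gap: the lemma you reduce to is false.} Your plan is to show that each parity sector of the $\lambda$-eigenspace is at most one-dimensional. For integral bias this fails, because the 2pAQRM has \emph{shared-parity} degeneracies, i.e.\ two independent eigenfunctions of the same parity.

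\begin{itemize}
\item Already for $\e=0$, the operator $\hat{\mathcal{P}}_x$ commutes with $H_0$ and splits each parity sector into two invariant pieces whose eigenvalue curves cross.
\item For $\e=\ell\in\Z$ the paper proves (Theorem~\ref{thm:mult2}) the following. At $\lambda=\tau(2N+\ell+\rho+\frac12)$ with $P_{N+\ell}^{(N+\ell,\rho,-\ell)}((2g)^2,\Delta^2)=0$, the space of solutions of $\Lambda_{1+\rho}f=0$ holomorphic on $\Omega$ is two-dimensional.
\item A concrete instance: take $\e=0$, $\rho=0$, $N=1$. Then $P_1^{(1,0,0)}(x,y)=y+6x-4$ vanishes at $(2g)^2=\frac12$, $\Delta^2=1$, and $\lambda=\frac52\tau$ has two independent even eigenfunctions.
\end{itemize}

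So the residual case of your second bullet cannot be excluded through $\bar q_a$. That case has non-zero exponents $N$ at $0$ and $N+\e+1$ at $1$, with both singularities apparent, and it is realized whenever the constraint polynomial vanishes. Your recurrence count breaks at the same point. Since $f_N^{(\rho)}$ equals $f_0^{(\rho)}$ times the constraint coefficient of Section~\ref{sec:juddian}, the condition $f_N^{(\rho)}=0$ removes a parameter only when that coefficient is non-zero. In the Juddian case both $f_0^{(\rho)}$ and $e_N^{(\rho)}$ stay free, so the count gives $\le 2$ per parity, not $\le 1$.

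\textbf{What your argument does give, and what is missing.} Your Wronskian analysis is correct as far as it goes. It shows that the sector with $a=1+\rho$ can be two-dimensional only if $\e\in\Z$ and $\lambda=\tau(2N+\rho+\frac12+\e)$; otherwise that sector is at most one-dimensional. The missing idea is a cross-parity statement: in that exceptional situation the opposite parity must contribute nothing.

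Local exponents alone do not give this. For $a'=2-\rho$ the non-zero exponents at $0$ and $1$ become the half-integers $N+\rho-\frac12$ and $N+\rho+\e+\frac12$. This leaves exactly one candidate, the exponent-$0$ solution at both points, and an unresolved connection problem.

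\textbf{One way to close the gap} is to bound the total multiplicity directly.
\begin{itemize}
\item Eliminating $\phi_2$ from \eqref{eq:ODE_A} gives the single third-order equation $\pi_1'(\mathcal{S}_A^{(\e)})\phi_1=0$ in the Bargmann variable. Since $\pi_1'=\pi_2'$ as differential operators, this one equation covers both parities.
\item $\phi_1$ determines the eigenvector, because $\Delta\neq0$.
\item At the irregular singular point $\infty$, the formal solutions behave like $y^{\sigma}e^{\kappa y^2}$ with $\kappa\in\{0,g,\frac{1}{4g}\}$.
\item Every $f\in\mathcal{F}(\C)$ satisfies $|f(y)|\le\|f\|e^{|y|^2/2}$, while $\frac1{4g}>\frac12$ for $g<\frac12$. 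Hence the $e^{y^2/(4g)}$-component of any admissible $\phi_1$ must vanish in a sector around $\R_{>0}$.
\item Therefore the full $\lambda$-eigenspace embeds into a two-dimensional space of solutions, which gives the bound $2$.
\end{itemize}
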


We also use the Heun picture in Section \ref{sec:linindep} to give a detailed description of the multiplicity of the degenerate solutions and to fully describe the degeneracy picture of the 2pAQRM

\begin{rem}
  We can also obtain the Heun picture of the 2pAQRM from that of the aNCHO (see \cite{RBW2023} and Section \ref{sec:NCHOp}) using the relation between the two models discovered by Nakahama \cite{N2024}, at least for eigenvalues $\lambda$ satisfying $|\lambda| > \Delta$.
\end{rem}

\section{Spectral degeneracies and Juddian solutions for 2pAQRM}
\label{sec:juddian}

For the 2pAQRM, it is known \cite{XC2021} that degeneracies corresponding to Juddian solutions occur only for $\e \in \Z$ and are always of shared-parity type.  This is in contrast with the case of the QRM where exceptional degenerate (Juddian) solutions are of opposite-parity type, and the same is true for the AQRM with half-integer bias parity  (see \cite{RBW2022} for the full discussion including the discussion of how parity is defined for that case). 

In this section, for an integral bias $\e \in \Z$ we consider the shared-parity degenerate eigenvalues of the 2pAQRM, given by Juddian solutions, and give the associated constraint conditions . As mentioned above, Juddian solutions are exactly the quasi-exact (i.e., polynomial) solutions of either of the systems \eqref{eq:Hamiltonian1} and \eqref{eq:Hamiltonian2}. 

Based on the considerations of Section \ref{sec:deg}, to deal with both cases at once we consider the recurrence equation defined by 
\begin{align}
  \label{eq:coeffK}
  & (2(n+1)+\rho)(2(n+1)+\rho-1) K_{n+1}^{(\rho,\e)} = \nonumber \\
  & \qquad \frac{1}{2g} \left[(1+4g^2)(2n+\rho+\tfrac12) - \tau^2\left(\frac{\lambda}{\tau} + \e \right) - \frac{\Delta^2}{(2n+\rho+\frac12 + \e)  - \frac{\lambda}{\tau}} \right] K_{n}^{(\rho,\e)} - K_{n-1}^{(\rho,\e)}.
\end{align}
with initial conditions $K_{-1}^{(\rho,\e)}=0$ and $K_{0}^{(\rho,\e)}=1$. 

Let us now consider a polynomial solution of degree $2N+\rho$ for the exceptional eigenvalue $\lambda = \tau (2 N + \rho + \frac12 + \e)$.
Depending on the sign of $\e$, the solution corresponds to \eqref{eq:Hamiltonian1}  or \eqref{eq:Hamiltonian2}. 
Concretely, for $\e \geq 0$, from \eqref{eq:Hamiltonian1} we obtain
\[
  \phi_1(z) = \sum_{n=0}^N a^{(\rho,\e)}_n z^{2n+\rho}, \qquad \phi_2(z) = \sum_{n=0}^N b^{(\rho,\e)}_n z^{2n+\rho}, 
\]
while for $\e\leq0$, from \eqref{eq:Hamiltonian2} we have 
\[
  \phi_1(z) = \sum_{n=0}^N (-1)^n b^{(\rho,\e)}_n z^{2n+\rho}, \qquad \phi_2(z) = \sum_{n=0}^N (-1)^n a^{(\rho,\e)}_n z^{2n+\rho}.
\]
In both cases, the coefficients $\{b^{(\rho,\e)}_n\}$ satisfy the recurrence relation \eqref{eq:coeffK}. In particular, for the eigenvalue $\lambda = \tau (2 N + \rho + \frac12 + \e)$ the recurrence relation is
\begin{align}
  \label{eq:coeffKjudd}
  & (2(n+1)+\rho)(2(n+1)+\rho-1) K_{n+1}^{(\rho,\e)} = \nonumber \\
  & \qquad \frac{1}{2g} \left[(1+4g^2)(2n+\rho+\tfrac12) - \tau^2\left(2 N + \rho + \frac12 + 2\e\right) - \frac{\Delta^2}{2(n-N)} \right] K_{n}^{(\rho,\e)} - K_{n-1}^{(\rho,\e)}.
\end{align}

In addition, equations \eqref{eq:relA1} or \eqref{eq:relB1} impose the condition
\begin{align}
  \label{eq:relA2}
  (\tau(2N+ \rho +\frac12 + \e)-\lambda) a^{(\rho,\e)}_N = \Delta b^{(\rho,\e)}_N,
\end{align}
and since $(\tau(2N+ \rho +\frac12 + \e)-\lambda)=0$, we have $b^{(\rho,\e)}_N = 0$ and $a^{(\rho,\e)}_N \neq  0$. It follows that for $N$, from the recurrence relation we have
\[
  \tau \Delta a^{(\rho,\e)}_N \mp 2 g b_{N-2}^{(\rho,\e)} = 0,
\]
and the value of $a^{(\rho,\e)}_N$ is completely determined. The condition imposed by the existence of the polynomial solution is 
\begin{equation}
  \label{eq:constCondK}
  K_{N}^{(\rho,\e)} = 0,
\end{equation}
which is given as the determinant of  a $N \times N$ tridiagonal matrix. Concretely, we have
\[
   K_{N}^{(\rho,\e)}  = \det \Tridiag{\frac{1}{2g(2(n+1)+\rho)(2(n+1)+\rho-1)} \alpha_i^{(\rho,\e)}}{\frac1{(2(n+1)+\rho)}}{\frac{1}{(2(n+1)+\rho-1)}}{1\le i\le k},
 \]
 with
 \[
  \alpha_i^{(\rho,\e)} =  \left[(1+4g^2)(2n+\rho+\tfrac12) - \tau^2\left(2 N + \rho + \frac12 + 2\e\right) - \frac{\Delta^2}{2(n-N)} \right].
 \]
Here, we used the notation \begin{equation*}
  \Tridiag{a_i}{b_i}{c_i}{1\le i\le n}
  :=\begin{bmatrix}
    a_1 & b_1 & 0 &  \cdots & 0    \\
    c_1 & a_2 & b_2 &  \cdots  & 0\\
    \vdots & \ddots   & \ddots &  \ddots & \vdots   \\
    0 &  \cdots &  0  & a_{n-1} & b_{n-1} \\
    0 & \cdots  & 0  & c_{n-1} & a_n
  \end{bmatrix}
\end{equation*}
for a tridiagonal matrix. 

It is straightforward to verify from the derivation of the recurrence relations \eqref{eq:recurr1} and \eqref{eq:recurr2} that \eqref{eq:constCondK} is actually a constraint condition for the Juddian solution $\lambda = \tau (2 N + \rho + \frac12 + \e)$, that is, that it is a necessary and sufficient condition for the existence of Juddian solution and that the said is necessarily of degree $2N+\rho$.

\subsection{Constraint polynomials for 2pAQRM}
\label{sec:constraint}

In general, the coefficient $K_{N}^{(\rho,\e)}$ is a rational function on the parameters $g,\Delta,\e$. Following the idea developed in \cite{LB2015JPA} and \cite{KRW2017} for the AQRM, we give an equivalent description for the existence for Juddian solutions by a polynomial equation. In addition to being simpler, 
the polynomial description is fundamental for describing the underlying structures behind the 2pAQRM spectrum, as we see in Section \ref{sec:qtpol} and  Section \ref{sec:symdeg}.

\begin{dfn}
  \label{dfn:copo}
  The family of polynomials $P_{n}^{(N,\rho,\e)}(x,y)$ is given by
\begin{align}
    \label{eq:defConsP}
  P_{0}^{(N,\rho,\e)}(x,y) &= 1 \nonumber \\
  P_{1}^{(N,\rho,\e)}(x,y) &= y + 2 x ( 4 N + 2\rho + 2\e - 1) - 4(1+\e) \nonumber \\
  P_{k}^{(N,\rho,\e)}(x,y) &=  (y + 2 x k (4 N + 2\rho- 2k + 2\e + 1) - 4 k (k+\e))P_{k-1}^{(N,\rho,\e)}(x,y) \nonumber \\
                        & - 4 k (k-1) (2(N-k+1)+\rho)(2(N-k+1)+\rho-1) x P_{k-2}^{(N,\rho,\e)}(x,y).
\end{align}
for $k \geq 2$. The polynomial $P_{N}^{(N,\rho,\pm\e)}(x,y)$ is called constraint polynomial.
\end{dfn}

\begin{rem}
  \label{rem:biascoeff}
  If we have the $\e$ as the coefficient of the bias in  the definition of $H_{\e}$, that is, if we use 
  \begin{align*}
      H_{\e}= (a^\dag a + \frac12) + g ( a^2 + (a^\dagger)^2 ) \sigma_z + \Delta \sigma_x + \e \sigma_z,  
  \end{align*}
  as the definition, then it not difficult to see that we cannot obtain a polynomial equation on $(2g)^2$ (resp. $\tau^2$) unless we assume $\e$ it is of the form $\e=\tau \gamma$ for some $\gamma \in \R$.
\end{rem}

The basic result for constraint polynomials is that its zeros define a constraint relation for Juddian solutions 
equivalent to the natural one associated with the coefficients $K_{N}^{(\rho,\e)}$.

\begin{prop}
  \label{prop:constraint}
  Let $g,\Delta>0$ with $\tau = \sqrt{1-(2g)^2} \in \R$ (that is, $(2g)^2 \le 1$), then for $N>0$
  \begin{align*}
    P_{N}^{(N,\rho,\e)}((2g)^2,\Delta^2)=0
  \end{align*}
  if and only if $\lambda = \tau(2 N + \rho + \frac12 + \e)$ is a Juddian eigenvalue of the 2pAQRM Hamiltonian \eqref{eq:tpHamilt}.
\end{prop}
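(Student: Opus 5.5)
The plan is to show that $K_{N}^{(\rho,\e)}$, evaluated at $\lambda=\tau(2N+\rho+\tfrac12+\e)$, equals $P_{N}^{(N,\rho,\e)}((2g)^2,\Delta^2)$ times an explicit factor that never vanishes. The statement then follows from the fact, recorded just before Definition \ref{dfn:copo}, that \eqref{eq:constCondK} is necessary and sufficient for a Juddian solution. This holds for either sign of $\e$, since in both realizations \eqref{eq:Hamiltonian1} and \eqref{eq:Hamiltonian2} the $b$-coefficients obey \eqref{eq:coeffKjudd}.

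\textbf{Step 1: rewrite the bracket.} I will simplify the bracket in \eqref{eq:coeffKjudd} using $\tau^2=1-4g^2$. Setting $x=(2g)^2$, it becomes
\[
(2(n-N)-2\e)+x\,(2n+2N+2\rho+1+2\e)-\frac{\Delta^2}{2(n-N)} .
\]
The key observation is that this matches the recursion \eqref{eq:defConsP} with the index \emph{reversed}, $k=N-n$. Multiplying by $-2(n-N)=2k$ and putting $y=\Delta^2$, the bracket becomes exactly
\[
y+2xk(4N+2\rho-2k+2\e+1)-4k(k+\e),
\]
which is the diagonal coefficient in the definition of $P_k^{(N,\rho,\e)}$.

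\textbf{Step 2: express $K_N$ as a determinant and expand from the other end.} Since the recursion \eqref{eq:coeffKjudd} is three-term with $K_{-1}=0$ and $K_0=1$, $K_N$ is the $N\times N$ tridiagonal determinant displayed after \eqref{eq:constCondK}. Its diagonal entries carry the indices $n=0,\dots,N-1$. I will clear denominators row by row. The row with index $n$ is multiplied by $2g\,(2(n+1)+\rho)(2(n+1)+\rho-1)\cdot 2(N-n)$. After this scaling, the diagonal entries are the Step 1 expressions in $x,y$ with $k=N-n$. The product of each symmetric pair of off-diagonal entries becomes $4k(k-1)(2(N-k+1)+\rho)(2(N-k+1)+\rho-1)\,x$, up to the sign conventions of the three-term recursion.

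I then expand this determinant from the bottom-right corner, i.e.\ starting at $n=N-1$, $k=1$. The leading principal minors taken from that end satisfy exactly the recursion \eqref{eq:defConsP} with $P_0=1$ and the stated $P_1$. Hence the full determinant equals $P_N^{(N,\rho,\e)}(x,y)$, and
\[
K_N^{(\rho,\e)}=C\cdot P_N^{(N,\rho,\e)}\big((2g)^2,\Delta^2\big),
\]
where $C^{-1}=\prod_{n=0}^{N-1}2g\,(2(n+1)+\rho)(2(n+1)+\rho-1)\,2(N-n)$.

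\textbf{Step 3: nonvanishing of $C$.} This is the step where care is needed. Each factor in $C^{-1}$ is nonzero: $g>0$; $N-n\ge1$ for $0\le n\le N-1$; and $(2(n+1)+\rho)(2(n+1)+\rho-1)\ge2$ for $\rho\in\{0,1\}$. Hence $K_N=0$ if and only if $P_N^{(N,\rho,\e)}((2g)^2,\Delta^2)=0$.

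\textbf{Main obstacle.} I expect the delicate point to be the bookkeeping of the index reversal. One must check that the recursion relating the minors in the reversed order reproduces exactly the coefficient $4k(k-1)(2(N-k+1)+\rho)(2(N-k+1)+\rho-1)x$ of $P_{k-2}$, including its sign, and that the initial term $P_1$ matches the $n=N-1$ diagonal entry. I would verify this first for small $N$ (say $N=1,2$) before writing the general induction.
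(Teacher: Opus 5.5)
Your proposal is correct and is essentially the paper's argument. The paper simply asserts, following Prop.~2.1 of \cite{KRW2017}, the identity $(2g)^N 2^N (2N+\rho)!\,N!\,K_{N}^{(\rho,\e)} = P_{N}^{(N,\rho,\e)}((2g)^2,\Delta^2)$, and your row scaling followed by a reversed-index continuant expansion proves exactly this: your $C^{-1}=\prod_{n=0}^{N-1}2g\,(2(n+1)+\rho)(2(n+1)+\rho-1)\,2(N-n)$ equals $(2g)^N 2^N N!\,(2N+\rho)!$ for $\rho\in\{0,1\}$, and the index-reversal bookkeeping you flag does work out (the coupling between reversed indices $k-1$ and $k$ is $+4x\,k(k-1)(2(N-k+1)+\rho)(2(N-k+1)+\rho-1)$, entering with the right sign, and the $k=1$ diagonal entry is $P_1^{(N,\rho,\e)}$).
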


\begin{proof}
  It is enough to verify that $P_{N}^{(N,\rho,\e)}((2g)^2,\Delta^2)$ and $K_{N}^{(\rho,\e)}$ have the same positive roots. In fact,
  we verify as in Prop. 2.1 of \cite{KRW2017}\footnote{In the proof of Prop. 2.1 of  \cite{KRW2017} the matrix in the last equation of page 14 must be $
    \begin{bmatrix}
      i(2g)^2 + \Delta^2-i^2-2i \e & 2ig \\
      2(i+1)(N-i)g & 
    \end{bmatrix}
    $. Aside from this correction, the proof remains unchanged.} that
  \[
    (2g)^N 2^N (2N + \rho)! N! K_{N}^{(\rho,\e)}(g,\Delta) = P_{N}^{(N,\rho,\e)}((2g)^2,\Delta^2),
  \]
  and the result follows.
\end{proof}

An important consequence of Proposition \ref{prop:constraint} is that it
clearly shows that the existence of Juddian solutions does not depend on the sign of the system parameters $g$ and $\Delta$ (see also the comments before Proposition \ref{Prop:32}). Moreover, it also shows that only the zeros of the constraint polynomials in the region $x,y>0$ are relevant to the 2pAQRM spectrum. 

\begin{ex}
  We have
  \begin{align*}
    P_{1}^{(N,\rho,\e)}(x,y&) = y + 2 x ( 4 N + 2\rho + 2\e - 1) - 4(1+\e), \\
    P_{2}^{(N,\rho,\e)}(x,y&) = y^2  + 8  x^2 (4 N+2 \rho +2 \e -3) (4 N+2 \rho +2 \e -1) - 2 x y (12 N+6 \rho +6 \e -7) \\
       & \qquad -8 x \left(4 N^2+4 N \rho +16 N \e +14 N+\rho ^2+7 \rho +8 \e ^2+8 \rho  \e +4 \e -4\right)  - 4 y(3 \e +5) \\
       & \qquad + 32 (\e +1) (\e +2) 
  \end{align*}
\end{ex}

Example computations and numerical experiments in \cite{XC2021}  suggest that the shared-parity degeneracies for the 2pAQRM correspond to the Juddian eigenvalues, mirroring the situation of the AQRM. In Figure \ref{fig:tpAQRMeigencurves3} (see also Figure \ref{fig:tpQRMeigencurves}(b) above) we illustrate the situation for $\e=1,2$ showing the apparent shared-parity degenerate eigenvalues.

\begin{figure}[ht]
  \centering
  \subfloat[$\e=1$]{
    \includegraphics[height=4.5cm]{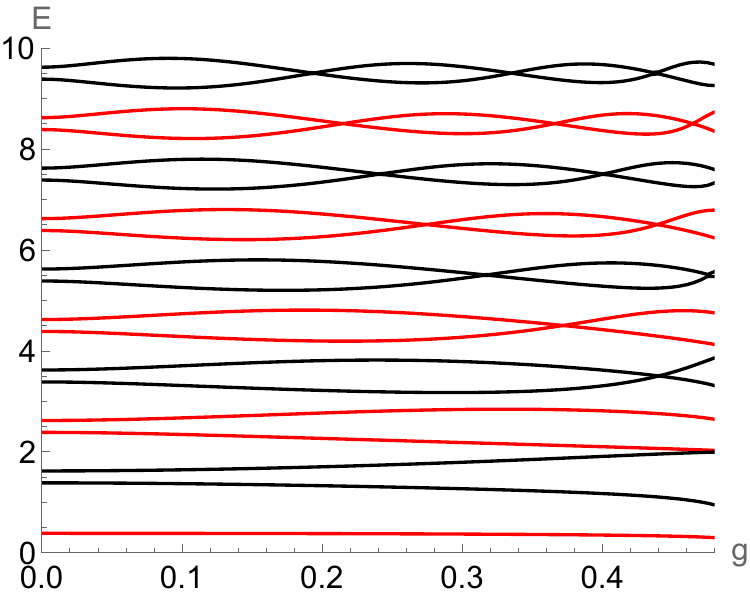}}
  ~ \qquad
  \subfloat[$\e=2$]{
    \includegraphics[height=4.5cm]{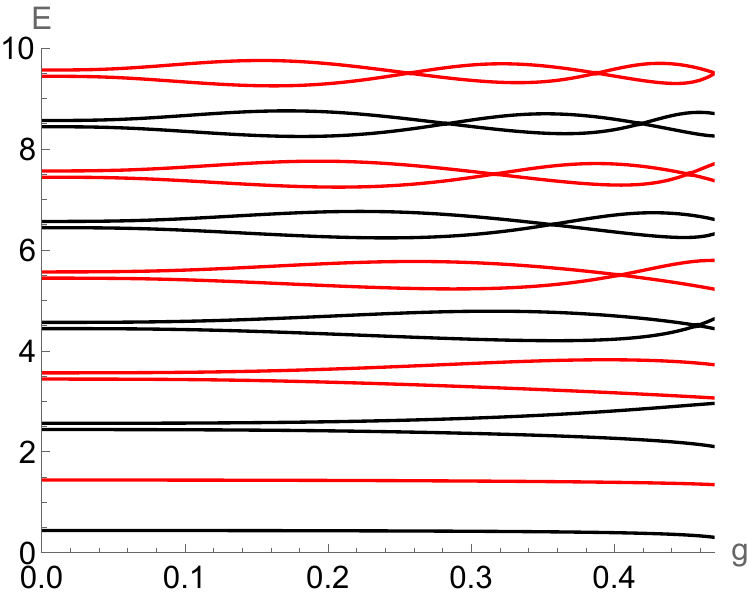}}
  \caption{Normalized spectral curves of the 2pAQRM for $\Delta=0.5$. Black (resp. red) curves correspond to even
    (resp. odd) eigenvalue curves.}
  \label{fig:tpAQRMeigencurves3}
\end{figure}

For a positive integer $\ell \geq 1$, shared-parity degeneracies may appear between two Juddian eigenvalues of the form
\[
  \lambda_1 = \tau (2 N + \rho + \e + \frac12  ) \quad \text{ and } \quad  \lambda_2 = \tau (2 (N+M) + \rho - \e  + \frac12  )
\]
for $\e = \ell$, provided that the corresponding eigenfunctions are linearly independent.

For the Juddian case, in \cite{XC2021} the question of the simultaneous vanishing of  $K_{N+\ell}^{(\rho,-\ell)}$ and $K_{N}^{(\rho,\ell)}$ for $N,\ell\geq 0$ is posed. In terms of constraint polynomials this corresponds to a divisibility relation of the form
\begin{align}
  \label{eq:div1}
  P_{N+\ell}^{(N+\ell,\rho,-\ell)}(x,y)= A_{N}^{(\ell,\rho)}(x,y)P_{N}^{(N,\rho,\ell)}(x,y).
\end{align}
for polynomials $A_{N}^{(\ell,\rho)}(x,y)$ with $A_{N}^{(\ell,\rho)}(x,y) >0$ for $x,y>0$, at least for the region $x \leq 1$, corresponding to $g \leq \frac12$.

For small values of $N$ and $\ell$, the divisibility condition in \eqref{eq:div1} can be verified directly, for instance, we have
\begin{align*}
  P_{1+1}^{(1+1,0,-1)}(x,y)/ P_{1}^{(1,0,1)}(x,y) &= 12 x+y \\
  P_{2+2}^{(2+2,0,-2)}(x,y)/ P_{2}^{(2,0,2)}(x,y) &= 1680 x^2+82 x y+y^2+4 y. 
\end{align*}
In addition, note that in the examples the polynomial $A_{N}^{(\ell,\rho)}(x,y)$ has positive coefficients, thus the positivity holds automatically.

The remainder of this section is devoted to the proof of the divisibility \eqref{eq:div1} between constraint polynomials, which is the analog of the study done in \cite{KRW2017} for the AQRM and answers positively the question posed in \cite{XC2021} about the vanishing of the higher degree coefficients of degenerate Juddian solutions for the 2pAQRM. In addition to the proof of \eqref{eq:div1}, another objective of the present paper is to show that the polynomial $A_{N}^{(\ell,\rho)}(x,y)$, which we call {\bf inter-constraint polynomial}, occupies a significant role in the spectrum of the 2pAQRM.

We leave the issue of linear independence of solutions to Section \ref{sec:linindep} where it is settled using the properties of the Heun picture.

\begin{rem} 
  The divisibility relation \eqref{eq:div1} does not hold for general $\rho \neq 0,1$ even for small $N$. For instance, we have
  \[
    P_{1+1}^{(1+1,\rho,-1)}(x,y) = P_{1}^{(1,\rho,1)}(x,y) a(x,y,\rho) + b(x,y,\rho)
  \]
  with
  \[
    a(x,y,\rho) = 4 (2 \rho +3) x+y
  \]
  and
  \[
    b(x,y,\rho) = -8 (\rho -1) \rho  x
  \]
  Note that $a(x,y,\rho)$ and $b(x,y,\rho)$ are rational functions in $\rho$ and $b(x,y,\rho)=0$ for $\rho=0,1$.
\end{rem}

\begin{rem}
    \label{rem:figuEllip}
  To get another perspective for the divisibility of constraint polynomials, it is convenient to visualize the curves  described in the $(g,\Delta)$-plane. Concretely, the curves determined by
\begin{align}
  \label{eq:constcurves}
  P_{N}^{(N,\rho,\e)}((2g)^2,\Delta^2)=0,
\end{align}
for different values of $N \geq 0, \rho \in \{0,1\}$ and $\e \in \R$, are shown in Figure \ref{fig:curveCPoly1}.

The curves determined by \eqref{eq:constcurves} appear as $N$ concentric ellipses, a feature shared with those of the AQRM \cite{RBW2022}. In addition, we observe that the curves are contained in the strip $|g|\leq 1/2$, reflecting the condition for the Hamiltonian $H_{\e}$ to have a point spectrum structure. 

\begin{figure}[ht]
  \centering
  \subfloat[$\rho=0, \e=0$ and $N=1,2,3$ (blue, green,red)]{
    \includegraphics[height=4.5cm]{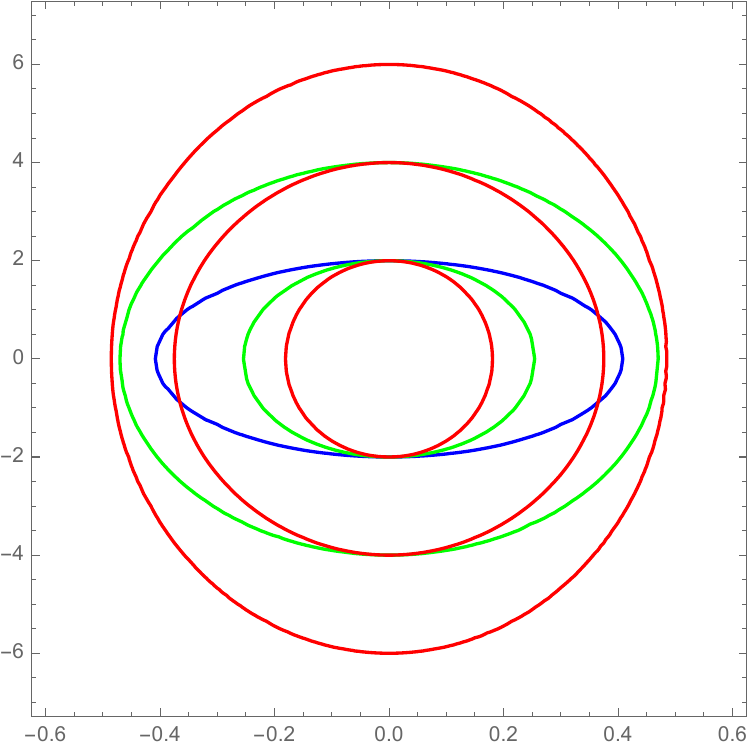}}
  ~ \qquad
  \subfloat[$\rho=1, \e=2$ and $N=5,7,9$ (black,red,blue)]{
    \includegraphics[height=4.5cm]{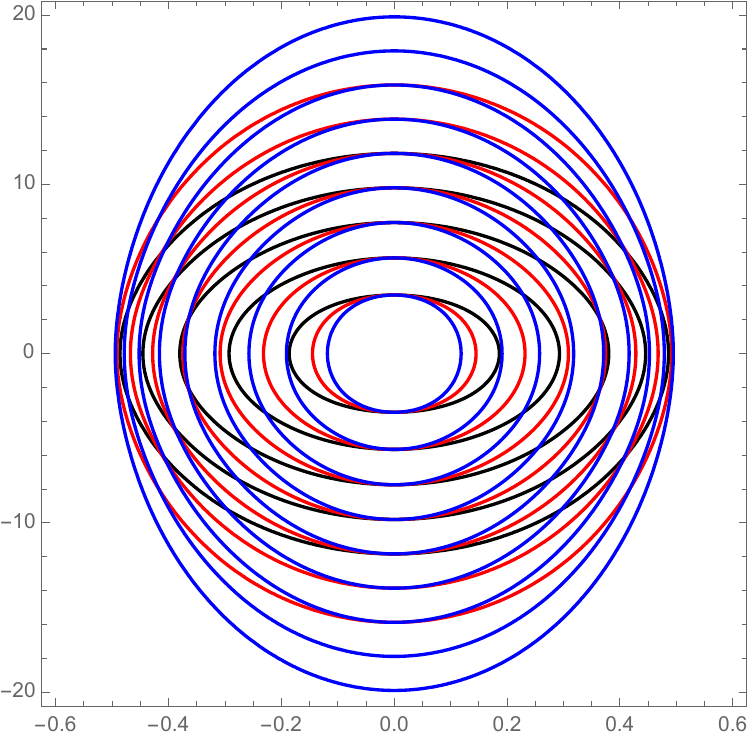}}
  \caption{Curves $P_{N}^{(N,\rho,\e)}((2g)^2,\Delta^2)=0$}
  \label{fig:curveCPoly1}
\end{figure}

Then, by fixing the parity $\rho$ and the parameter $N$ and varying the bias parameter $\e$ we can visualize the simultaneous vanishing of the constraint polynomials. For instance, let $\rho=0$,$N=2$ and the curves of $P_{N+\e}^{(N+\e,\rho,-1)}((2g)^2,\Delta^2)=0$ and $P_{N+2}^{(N+2,\rho,-\e)}((2g)^2,\Delta^2)=0$ for $\e=1.5,1.9,2$ in Figure \ref{fig:curveDeg}. As $\e \to 2$, the two curves completely coincide,  showing that the two constraint polynomials have the same real zeros, or equivalence that divisibility holds and 
$A_{N}^{(\ell,\rho)}(x,y) \geq 0$ for $x,y\geq 0$.

\begin{figure}[ht]
  \centering
  \subfloat[$\e=1.5$]{
    \includegraphics[height=4cm]{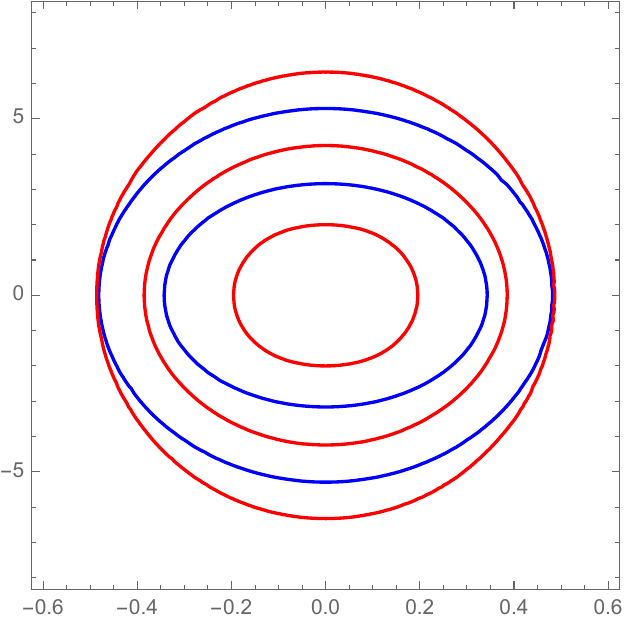}}
  ~ \quad
  \subfloat[$\e=1.9$]{
    \includegraphics[height=4cm]{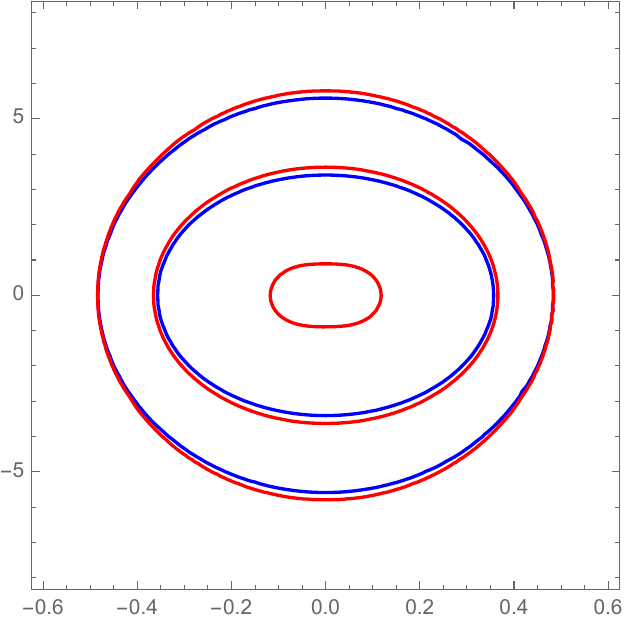}}
  ~ \quad
  \subfloat[$\e=2$]{
    \includegraphics[height=4cm]{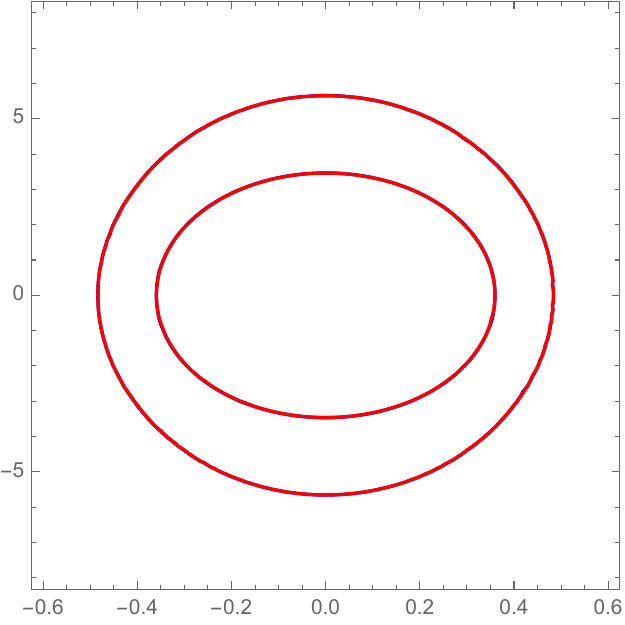}}
  \caption{Visualization of divisibility by the curves of $P_{N+\e}^{(N+\e,\rho,-1)}((2g)^2,\Delta^2)=0$ and $P_{N+2}^{(N+2,\rho,-\e)}((2g)^2,\Delta^2)=0$}
  \label{fig:curveDeg}
\end{figure}  
\end{rem}

\subsection{Questions on the number and distribution of Juddian solutions}
\label{sec:openconstp}

In the remainder of this section, we discuss certain aspects of constraint polynomials that are 
not needed for this paper but that are nevertheless of mathematical or physical interest.

For a given $N$ and general $\e$, not necessarily integer, the constraint equation \eqref{eq:thmconst} in Theorem \ref{thm:judddeg} (Section \ref{sec:degstruct2}) is satisfied when the parameters correspond to a 2pAQRM with spectrum containing a Juddian solution. It is natural to consider a type of inverse problem: for given parameters $g,\Delta,\e$, how many Juddian solutions are in the spectrum of the corresponding 2pAQRM?

\begin{prob}
  \label{prob:count}
  For fixed $g,\Delta,\e$ and $\rho \in \{0,1\}$. Compute the number of non-negative integers $N$ such that the equation
  \[
    P_{N}^{(N,\rho,\e)}(x,y) = 0
  \]
  is satisfied.
\end{prob}

The recent paper \cite{R2024} focus on this problem for the QRM. For general parameters, it is generally believed that the number of Juddian solutions is finite. In fact, since  the constraint conditions 
are polynomial equations with integer coefficients, it is easy to construct parameter families $g,\Delta,\e$ such that the spectrum do not constraint any Juddian solutions. 

\begin{conject}
  For fixed $g,\Delta,\e$, the number of Juddian solutions of the 2pAQRM is finite. Moreover, it is uniformly bounded with respect to parameters $g,\Delta,\e$.
\end{conject}

It is also natural to also consider whether the family of curves \eqref{eq:constcurves} for $N\geq 1$ are dense in the strip $|g|\leq 1/2$. As an illustration, in Figure \ref{fig:curvesDensity} we illustrate the curves for $N=1,2,\cdots,8$ for $\rho=1$, $\e=1$, note that as more curves are added, the region strip gets covered by the curves.

\begin{figure}[ht]
  \centering
  \includegraphics[height=6cm]{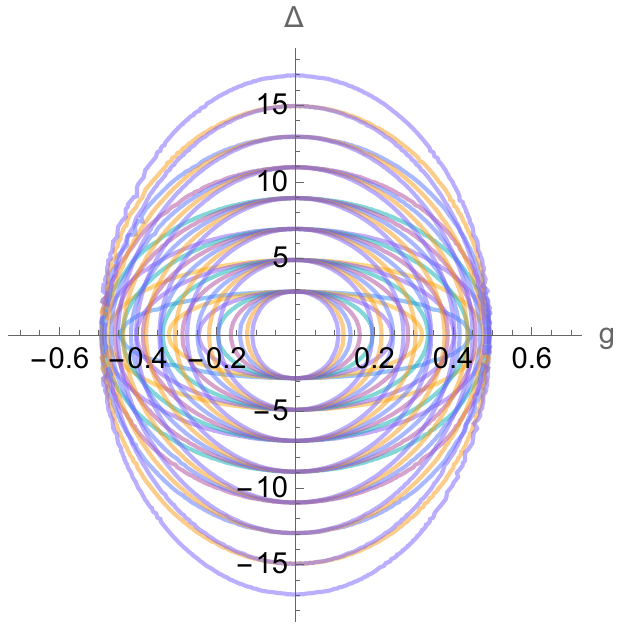}
  \caption{Constraint curves for $\rho=1$,$\e=1$ and $N=1,2,\cdots,8$}
  \label{fig:curvesDensity}
\end{figure}

\begin{conject}
  For fixed $\rho \in \{0,1\}$, the family of $(g,\Delta)$-plane curves 
  \[
    P_{N}^{(N,\rho,\e)}((2g)^2,\Delta^2)=0,
  \]
  is dense in the strip $|g|<1/2$.
\end{conject}

The analogous problem for the AQRM is discussed in length in \cite{RBW2022}. We remark that this density conjecture for the (symmetric) QRM was recently affirmatively proved in \cite{KR2024}, where also advances to the analog of Problem \ref{prob:count} were presented. The proofs make heavy use of the fact that for the QRM, in the case $y=0$, the constraint polynomials are multiples of generalized Laguerre polynomials. 

Another interesting feature of the constraint polynomials is that they appear to have information not only of the Juddian solutions, but also can describe partially features of the full spectrum, as we see in Sections \ref{sec:qtpol} and \ref{sec:symdeg}. Here, we just give a simple basic example related to the number of Juddian solutions.

Recall that to guarantee the spectrum of the 2pAQRM Hamiltonian contains only eigenvalues, we make the assumption $|g| < 1/2$ and that in the critical value $g= 1/2$ (corresponding to $x=1$), the model contains continuous spectrum. In principle, constraint polynomials only concern the case of pure point spectrum, that is, $|g| < 1/2$, but it is interesting to consider their behavior at the critical point.

The following result is obtained immediately from the three-term recurrence relation \eqref{eq:defConsP} of the
constraint polynomials.

\begin{prop}
  For $N\geq 0$, for $x=1$, the constraint polynomial $P_{N}^{(N,\rho,\pm\e)}(1,y)$ does not depend on $\e$. \qed
\end{prop}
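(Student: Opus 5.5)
We prove by induction on $k$ a slightly stronger statement: for fixed $N$ and $\rho$, every polynomial $P_{k}^{(N,\rho,\e)}(1,y)$ in the family of Definition \ref{dfn:copo} is independent of $\e$, for $0\le k\le N$. The case $k=N$ gives the claim. Since $\e$ enters the recurrence \eqref{eq:defConsP} only as a real parameter, the same argument with $\e$ replaced by $-\e$ covers both signs in $P_{N}^{(N,\rho,\pm\e)}$. So it suffices to show that, after the specialization $x=1$, the coefficients of the three-term recurrence \eqref{eq:defConsP} do not involve $\e$.

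First we check the coefficient multiplying $P_{k-1}^{(N,\rho,\e)}$. At $x=1$ it is
\[
  y + 2k(4N+2\rho-2k+2\e+1) - 4k(k+\e).
\]
Expanding, the $\e$-terms are $4k\e - 4k\e = 0$, so the coefficient reduces to
\[
  y + 2k(4N+2\rho-2k+1) - 4k^2,
\]
which is independent of $\e$.

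Next, the coefficient multiplying $P_{k-2}^{(N,\rho,\e)}$ is
\[
  -4k(k-1)\bigl(2(N-k+1)+\rho\bigr)\bigl(2(N-k+1)+\rho-1\bigr)\,x .
\]
This never involves $\e$ at all.

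For the base cases, $P_0=1$ is trivially independent of $\e$. For $P_1$ at $x=1$ we get
\[
  y + 2(4N+2\rho+2\e-1) - 4(1+\e) = y + 8N + 4\rho - 6,
\]
again independent of $\e$. This value also agrees with the $k=1$ instance of the general coefficient above.

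Induction then concludes. If $P_{k-1}(1,y)$ and $P_{k-2}(1,y)$ do not depend on $\e$, then neither does $P_k(1,y)$, being an $\e$-free combination of them. There is no real obstacle here; the only point requiring care is the cancellation of the $4k\e$ terms in the first coefficient.
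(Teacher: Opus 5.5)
Your proof is correct and is the same argument the paper has in mind: the paper says the result follows immediately from the three-term recurrence \eqref{eq:defConsP}. Your induction makes this explicit, using that at $x=1$ the terms $4k\e$ and $-4k\e$ cancel in the first coefficient, that the coefficient of $P_{k-2}^{(N,\rho,\e)}$ is $\e$-free, and that $P_1^{(N,\rho,\e)}(1,y)=y+8N+4\rho-6$.
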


\begin{ex}
  We have
  \begin{align*}
    P_{1}^{(1,0,\e)}(1,y) &= y + 2, \qquad P_{1}^{(1,1,\e)}(1,y) = y+6 \\
    P_{2}^{(2,0,\e)}(1,y) &= (y+2) (y+12),\\
    P_{2}^{(2,1,\e)}(1,y) &=  (y+6) (y+20), \\
    P_{5}^{(5,1,\e)}(1,y) &= (y+2) (y+12) (y+30) (y+56) (y+90).
  \end{align*}
\end{ex}

The constraint appears to have a simple form at the critical point. Moreover, past the critical point the polynomials do not have zeros (see also Remark \ref{rem:figuEllip}). We leave a complete description of this as an open problem.

\begin{conject}
  We have
  \[
    P_{N}^{(N,\rho,\e)}(1,y) = \prod_{n=1}^{N} (y + 2n(2n + 2\rho - 1 )),
  \]
  Moreover, for $x>1$, the polynomial $P_{N}^{(N,\rho,\e)}(x,y)$ has positive coefficients, and therefore, no positive roots for $y$.
\end{conject}

The behavior of the zeros of the constraint polynomials past the critical point seems consistent with recent results \cite{SZ2026} on the spectrum of the 2pQRM in that parameter regime. In particular, we see that the polynomial $P_{N}^{(N,\rho,\e)}(x,y)$ is having no zeros for $x>1,y>0$ coincides with the fact that for 
$g > 1/2$, there is no embedded point spectrum in the continuous spectrum of the 2pQRM. 

\subsection{Orthogonal polynomial families associated  to constraint polynomials}
\label{sec:orthopoly}

For the AQRM, when $\Delta = 0$ (i.e. $y=0$), sometimes called ``degenerate atomic limit'', the constraint polynomials actually determine a family of orthogonal polynomials equal to generalized Laguerre polynomials, up to constant multiples (see \cite{KRW2017}). 

The Laguerre polynomials are considered to appear from the fact that for $\Delta = 0$, the Hamiltonian of the AQRM is that of a displaced quantum harmonic oscillator, in which the Laguerre polynomials appear in a natural way \cite{Sc1967AP}. 

Let us consider the analogous situation for the two-photon case. For $\rho=0,1$, by factoring out $i ( 2 N +\e - i  +\rho +\tfrac12)$ from each row in the determinant,  we rewrite \eqref{eq:threetermwR} as
\begin{align}
    P_{N}^{(N,\rho,\e)}(x,y) =  N! (N+\e + \rho+ \tfrac12)_N \det \left( \hat{\bm{D}} y + \bm{I}_N (4 x-2) + \bm{V}_N \right) ,
\end{align}
with
\[
     \hat{\bm{D}}_N  = \Diag\left\{ \frac{1}{i ( 2 N +\e - i  +\rho +\tfrac12)} \right\}_{1\leq i \leq N},
\]
and 
\begin{align*}
    \bm{V}_N = \Tridiag{ \frac{ (1-4\e^2)(2(N-i)+\e+\rho+\frac12)^2}{ 2 (2(N-i)+\e + \rho-\frac12)_2(2(N-i)+\e+\rho+\frac12)_2 } }{ \frac{(2(N+\e-i) + \rho -1)_2}{( 2(N-i) +\e + \rho - \frac{1}2)_2} }{ \frac{(2(N-i)+\rho-1)_2 }{( 2(N-i) +\e + \rho - \frac{3}2)_2} }{1\le i\le N}.
\end{align*}

Setting $z= 4x-2$, defining
\[
  R_N^{(\e,\rho)}(z) = \frac{P_{N}^{(N,\rho,\e)}(x,0)}{N! (N+\e + \rho+ \tfrac12)_N},
\]
and expanding the continuant, we see that $R_N^{(\e,\rho)}$ satisfy the recurrence relations
\begin{align*}
  R_{-1}^{(\e,\rho)}(z) &= 0, \qquad R_{0}^{(\e,\rho)}(z) = 1 \\
  R_{k+1}^{(\e,\rho)}(z) &=  (y - \alpha_{k}) R_{k}^{(\e,\rho)}(z) - \beta_{k} R_{k-2}^{(\e,\rho)}(z),.
\end{align*}
for $k\geq 1$. Here, 
\[
  \alpha_{k} := \alpha_k(\rho,\e) = \frac{ 2(\e^2-\frac{1}{4})(2 k +\e+\rho+\frac12)^2}{(2 k +\e + \rho-\frac12)_2(2 k +\e+\rho+\frac12)_2 }
\]
and
\[
  \beta_{k} := \beta_{k}(\rho,\e) = \frac{(2k +2\e + \rho -1)_2 (2 k +\rho-1)_2  }{( 2 k  +\e + \rho - \frac{1}2)_2 ( 2 k +\e + \rho - \frac{3}2)_2}.
\]
By Favard's theorem (see e.g. \cite{C1978}), for $\e>-\frac12$, the polynomials $R_{k}^{(\e,\rho)}(z)$ determine a family of orthogonal polynomials. 

\begin{ex}
  For the case $\e=\frac12$, we have $R_{k}^{(\e,\rho)}(z) \in \Z[z]$ for all $k\geq 0$. Indeed, note that
  \[
    \beta_{k}(\rho,\tfrac12) = \frac{(2k+\rho)_2 (2k+\rho-1)_2}{(2k+\rho)_2 (2k+\rho-1)_2} = 1
  \]
  and clearly $\alpha_k=0$ for $k>0$ and
  \begin{align*}
     \alpha_{0}(\rho,\e) = \frac{2(\e-\tfrac12)(\e+\tfrac12)}{(\e + \rho - \tfrac12)(\e + \rho + \tfrac32)},
  \end{align*}
  so that $\alpha_{0}(\rho,\tfrac12) = \delta_{0,\rho}$.

  In particular, by comparing the three-term recurrence relations, we see that $R_k^{(\frac12,1)}(z) = U_k(\frac{z}2)$, where $\{U_n\}$ are the Chebyshev polynomials of the second kind.
\end{ex}

Another family of orthogonal polynomials can be obtained from the constraint polynomials.
For fixed $N \geq 1$, the three-term recurrence relation \eqref{eq:defConsP} gives
\[
  P_{N+1}^{(N,\rho,\e)}(x,y) = (y + 4 x (N+1) (N + \rho + \e - \frac12) - 4 (N+1) (N+1+\e))P_{N}^{(N,\rho,\e)}(x,y),
\]
in particular, $P_{N}^{(N,\rho,\e)}(x,y)$ divides $P_{N+k}^{(N,\rho,\e)}(x,y)$ for $k\geq 1$. Then, by defining
\[
  \mathcal{Q}_{k}^{(N,\rho,\e)}(x,y) = \frac{P_{N+k}^{(N,\rho,\e)}(x,y)}{P_{N}^{(N,\rho,\e)}(x,y)},
\]
for $k\geq 0$ and $\mathcal{Q}_{-1}^{(N,\rho,\e)}(x,y) =0$. Then, the polynomials $\mathcal{Q}_{k}^{(N,\rho,\e)}(x,y)$ satisfy the three-term recurrence relation
\begin{align*}
   \mathcal{Q}_{k}^{(N,\rho,\e)}(x,y) &=  (y + 2 x (N+k) (2 N + 2\rho + 2\e + 1-k) - 4 (N+k) (N+k+\e))\mathcal{Q}_{k-1}^{(N,\rho,\e)}(x,y)\\
                         & - 4 (N+k) (N+k-1) (2k-\rho-2)(2k -\rho -1) x  \mathcal{Q}_{k-2}^{(N,\rho,\e)}(x,y).
\end{align*}
Therefore, for $x<0$ the family of polynomials  $\{\mathcal{Q}_{k}^{(N,\rho,\e)}(x,y) \}_{k\geq 0}$ form a family of orthogonal polynomials by Favard's theorem. However, since the case $x<0$ corresponds to imaginary $g$, it is not straightforward how to relate it to the spectral properties of the 2pAQRM spectrum.

It would be interesting to investigate the possible relation of these families of orthogonal polynomials with the representation theoretical structures underlying the quantum interaction models, for instance, with a reasonable $SL_2(\R)$-structure of the 2pAQRM Hamiltonian. 

For example, in order to obtain the Heun picture of the eigenvalue problem of the AQRM (resp. 2pAQRM) Hamiltonian (see Section \ref{sec:HeunPicture}, \cite{W2015,W2017} and the forthcoming \cite{RBW2027}) we need to employ the Lie algebra representations $\pi_a'$ (and its equivalent representation $\varpi_a$) of  $\mathfrak{sl}_2(\R)$. 

Therefore,  the presence of Laguerre  polynomials is not surprising since these polynomials have a distinguished role, they span the representation space of the group representation corresponding to (an equivalent representation of) $\pi_a'$, in the theory of unitary representations of the simply-connected covering group of $SL(2,\R)$ (see \cite{K2000}), as a counterpart of the holomorphic discrete series (i.e., square integrable representations) of $SL(2,\R)$ (c.f. \cite{S1975}).

\section{Divisibility and positivity of constraint polynomials}
\label{sec:divpos}

In this section, we prove the algebraic properties of the constraint polynomials required to characterize the shared-parity degeneracies of the 2pAQRM. We follow the general scheme of proof of \cite{KRW2017,CRB2020} for the AQRM and, whenever possible we consider $\rho$ to be a real variable to simplify the proofs.

\subsection{Divisibility relations for constraint polynomials}
\label{sec:div}

The divisibility relation \eqref{eq:div1} between constraint polynomials follows from a special
determinant expression that is different from the usual one arising from the three-term recurrence 
relation in Definition \ref{dfn:copo}.  The method in this section is based on the one given in \cite{KRW2017,CRB2020} for the AQRM.

We start by considering the natural determinant expression for the polynomials $P_{k}^{(N,\rho,\e)}$, 
that is,
\begin{align}
  \label{eq:threeterm}
  P_{k}^{(N,\rho,\e)} = \det \left( \bm{I}_k y + 4 \bm{A}_k^{(N,\rho,\e)} x +    \bm{U}_k^{(N,\e)} \right)
\end{align}
with
\[
  \bm{A}_k^{(N,\rho,\e)} = \Tridiag{i(2 N - i + \rho  + \e +\tfrac{1}2 )}{0}{i (i+1) (2(N-i)+\rho-1) (2(N-i)+\rho)}{1\le i\le k},
\]
and
\[
  \bm{U}_k^{(N,\e)} = \Tridiag{- 4 k (k+\e)}{1}{0}{1\le i\le k}.
\]

The special determinant expression for the case $k=N$, corresponding to constraint polynomials, is obtained  by conjugation of the matrix in the determinant. Indeed, conjugation of the matrix inside the determinant by an appropriate matrix leaves the value of the determinant invariant. It turns out that the appropriate matrix for this purpose is the matrix that diagonalizes $\bm{A}_k^{(N,\rho,\e)}$. 

\begin{prop}
  \label{prop:matrixE}
  The eigenvalues of $\bm{A}_k^{(N,\rho,\e)}$ are
  \[
    \{ i ( 2N - i + \rho + \e  +\tfrac{1}2 )\}
  \]
  for $i=1,\cdots,k$, and the eigenvectors are given by the columns of the lower triangular
  matrix $\bm{E}_k^{(N,\rho,\e)}$ given by
  \[
    (\bm{E}_k^{(N,\rho,\e)})_{i,j} = \binom{i}{j} \frac{(i-1)!}{(j-1)!} \frac{(2 j-2 N -\rho)_{2(i-j)}}{(2j - 2 N - \rho - \e + \tfrac12)_{i-j}},
  \]
  where $(a)_n = a(a+1)\cdots(a+n-1)$  is the Pochhammer symbol.
\end{prop}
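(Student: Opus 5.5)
Since the matrix $\bm{A}_k^{(N,\rho,\e)}$ is lower bidiagonal (its superdiagonal is zero in the $\tridiag$ notation), the proof is a direct verification by induction down each column. The eigenvalues are read off immediately: a lower triangular matrix has its diagonal entries
\[
d_i = i(2N - i + \rho + \e + \tfrac12), \qquad 1 \le i \le k,
\]
as eigenvalues. For the eigenvectors I fix a column index $j$ and look for a vector $v=(v_1,\dots,v_k)^t$ with $v_i = 0$ for $i<j$ and $v_j = 1$, matching $(\bm{E}_k^{(N,\rho,\e)})_{j,j} = 1$ (indeed $\binom{j}{j}\frac{(j-1)!}{(j-1)!}\cdot\frac{(\cdot)_0}{(\cdot)_0}=1$). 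Writing $c_i = i(i+1)(2(N-i)+\rho-1)(2(N-i)+\rho)$ for the subdiagonal entry in position $(i+1,i)$, the $i$-th row of $\bm{A}_k v = d_j v$ for $i>j$ reads $c_{i-1}v_{i-1} + d_i v_i = d_j v_i$. Hence
\[
v_i = \frac{c_{i-1}}{d_j - d_i}\, v_{i-1}, \qquad i > j,
\]
and rows $i<j$ are trivially satisfied. Since $d_j - d_i = (i-j)(i+j-2N-\rho-\e-\tfrac12)$, this requires $d_i \neq d_j$. I would assume this generically, treating $\rho,\e$ as real variables as announced at the start of the section, and note that the identity then extends to special values wherever the denominators of $\bm{E}_k^{(N,\rho,\e)}$ do not vanish.

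It then suffices to check that the proposed entries satisfy the same recursion, that is, to compute the ratio $(\bm{E}_k)_{i,j}/(\bm{E}_k)_{i-1,j}$ factor by factor:
\begin{itemize}
\item the binomial factor contributes $\binom{i}{j}/\binom{i-1}{j} = i/(i-j)$;
\item the factorial factor contributes $(i-1)!/(i-2)! = i-1$;
\item the numerator Pochhammer contributes its last two factors, $(2i-2N-\rho-2)(2i-2N-\rho-1)$;
\item the denominator Pochhammer contributes $1/(i+j-2N-\rho-\e-\tfrac12)$.
\end{itemize}
On the other side,
\[
\frac{c_{i-1}}{d_j-d_i} = \frac{(i-1)\,i\,(2N-2i+\rho+1)(2N-2i+\rho+2)}{(i-j)(i+j-2N-\rho-\e-\tfrac12)}.
\]
Since $(2N-2i+\rho+1)(2N-2i+\rho+2) = (2i-2N-\rho-1)(2i-2N-\rho-2)$ (two sign flips), the two expressions coincide. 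Induction on $i$ starting at $v_j = 1$ then gives $v_i = (\bm{E}_k^{(N,\rho,\e)})_{i,j}$ for all $i \ge j$, so each column of $\bm{E}_k^{(N,\rho,\e)}$ is an eigenvector for $d_j$.

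The only delicate point is bookkeeping: the Pochhammer increments, and the index shift between $c_{i-1}$ and the row index. A degenerate issue arises if some $d_i = d_j$ for $i \neq j$ (for example, when $i+j = 2N+\rho+\e+\tfrac12$); there the formula has a vanishing Pochhammer in the denominator. I would handle this by stating the result for generic $\e$ (or noting that for $\rho\in\{0,1\}$, integral $\e$, and the relevant ranges of $i,j$ the denominator is nonzero), and by continuity in $\e$ where needed.
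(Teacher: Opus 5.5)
Your proposal is correct and is essentially the paper's proof: the paper also verifies $\bm{A}_k^{(N,\rho,\e)}\bm{E}_k^{(N,\rho,\e)}=\bm{E}_k^{(N,\rho,\e)}\bm{D}_k$ entrywise, reducing it to $c_{i-1}(\bm{E}_k^{(N,\rho,\e)})_{i-1,j}=(d_j-d_i)(\bm{E}_k^{(N,\rho,\e)})_{i,j}$ together with $d_j-d_i=(i-j)(i+j-2N-\rho-\e-\tfrac12)$. Your factor-by-factor bookkeeping and genericity caveat make explicit what the paper calls ``verified immediately''. The caveat is vacuous for integer $\rho,\e$, since $i+j-2N-\rho-\e-\tfrac12$ is then a half-integer. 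Read your ratio computation as that multiplicative identity, because for integer $\rho$ and $k>N$ some entries of $\bm{E}_k^{(N,\rho,\e)}$ vanish.
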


\begin{proof}
  Let us write  $f(i) = i ( 2N - i + \rho + \e  +\tfrac{1}2 )$, $g(i) = i (i+1) (2(N-i)+\rho-1) (2(N-i)+\rho)$ and
  $\bm{D}_k = \Diag\{f(i)\}_{1\leq i \leq k}$. We need to verify the identity
  \[
    \bm{A}_k^{(N,\rho,\e)} \bm{E}_k^{(N,\rho,\e)}=  \bm{E}_k^{(N,\rho,\e)} \bm{D}_k.
  \]
  Considering the $i,j$-th component of the matrix, this is reduced to
  \[
    g(i-1) (\bm{E}_k^{(N,\rho,\e)})_{i-1,j} = (f(j)-f(i))  (\bm{E}_k^{(N,\rho,\e)})_{i,j},
  \]
  and by the definition, this reduces to
  \[
    f(j)-f(i) = (i-j)(i+j-2 N - \rho - \e - \tfrac12),
  \]
  which is verified immediately.
\end{proof}

Then, we proceed to obtain the special determinant expression for the constraint polynomials.

\begin{prop}
  \label{prop:detexp2}
  For $N \geq 0$, we have
  \begin{align}
    \label{eq:threetermwR}
     P_{N}^{(N,\rho,\e)}(x,y) = \det \left( \bm{I}_N y + \bm{D}_N (4 x-2) + \bm{C}_N \right) + \rho (1-\rho) R_N^{(\rho,\e)}(x,y),
  \end{align}
  for a polynomial $R_N^{(N,\rho,\e)}(x,y) \in \Q[\rho,\e]$, 
  \[
    \bm{D}_N = \bm{D}_N(\e,\rho) = \Diag\left\{ i ( 2 N +\e - i  +\rho +\tfrac12) \right\}_{1\leq i \leq N},
  \]
  and  $\bm{C}_N=\bm{C}_N(\e,\rho)$ is given by
  \begin{align*}
    \bm{C}_N &=
     \Tridiag{ i(2N+\e-i+\rho+\tfrac12) \hat{a}_i^{(N,\rho,\e)} }{\frac{i (2N+\e - i + \rho +\frac{1}2) (2(N+\e-i) + \rho -1)_2}{( 2(N-i) +\e + \rho - \frac{1}2)_2} }{ \frac{(i+1) (2 N+\e - i + \rho -\tfrac{1}2) (2(N-i)+\rho-1)_2 }{( 2(N-i) +\e + \rho - \frac{3}2)_2} }{1\le i\le N},
  \end{align*}
  with
  \[
    \hat{a}_i^{(N,\rho,\e)}  = \frac{1-4\e^2}{2(2(N-i)+\e+\rho+\tfrac{3}2)(2(N-i)+\e + \rho-\tfrac{1}{2})}.
  \]
\end{prop}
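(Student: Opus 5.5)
Following the scheme announced before Proposition \ref{prop:matrixE}, I would begin from the determinant expression \eqref{eq:threeterm} with $k=N$,
\[
  P_{N}^{(N,\rho,\e)} = \det\bigl( \bm{I}_N y + 4 \bm{A}_N^{(N,\rho,\e)} x + \bm{U}_N^{(N,\e)} \bigr),
\]
and conjugate the matrix inside by $\bm{E}_N := \bm{E}_N^{(N,\rho,\e)}$. This is lower unitriangular, since $(\bm{E}_N)_{i,i}=1$, so the determinant is unchanged. By Proposition \ref{prop:matrixE},
\[
  \bm{E}_N^{-1}\bm{A}_N\bm{E}_N=\bm{D}_N ,
\]
so the $x$-dependence becomes the diagonal term $4x\bm{D}_N$, and $\bm{I}_N y$ is untouched. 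It therefore remains to show that
\[
  \bm{E}_N^{-1}\bm{U}_N\bm{E}_N = -2\bm{D}_N + \bm{C}_N ,
\]
up to a correction that vanishes when $\rho(1-\rho)=0$. Once this holds, the difference of the two determinants is a polynomial in $x,y$ with coefficients rational in $\rho,\e$ that vanishes at $\rho=0,1$. It is therefore divisible by $\rho(1-\rho)$, which defines $R_N$.

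The key computation is the conjugation $\bm{E}_N^{-1}\bm{U}_N\bm{E}_N$. I would first obtain a closed formula for $\bm{E}_N^{-1}$. Its entries should again be a ratio of Pochhammer symbols, of the form
\[
  (\bm{E}_N^{-1})_{i,j} = (-1)^{i-j}\binom{i}{j}\frac{(i-1)!}{(j-1)!}\frac{(2j-2N-\rho)_{2(i-j)}}{(\cdots)_{i-j}} ,
\]
with a shifted denominator. I would guess this form and verify it by checking $\bm{E}_N\bm{E}_N^{-1}=\bm{I}_N$, which reduces to a Chu--Vandermonde or Saalschütz-type summation.

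Next, $\bm{U}_N$ splits as $\bm{U}_N = \bm{U}^{\mathrm{diag}} + \bm{S}$. Here $\bm{U}^{\mathrm{diag}}$ is the diagonal with entries $-4i(i+\e)$, reading the diagonal entry of $\bm{U}_k$ as indexed by $i$, and $\bm{S}$ is the superdiagonal shift. The diagonal part is handled by writing
\[
  -4i(i+\e) = -2f(i) + \bigl(\text{linear in } i\bigr),
\]
where $f(i)=i(2N-i+\rho+\e+\tfrac12)$, and it is this identity that produces the $-2\bm{D}_N$. I would then compute the conjugate of the remainder by entrywise sums. Since $\bm{E}_N$ is lower triangular and $\bm{S}$ is an upper shift, the product $\bm{E}_N^{-1}\bm{S}\bm{E}_N$ has at most one nonzero superdiagonal. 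The main task is to show that all entries below the subdiagonal vanish, and to identify the diagonal, super- and subdiagonal entries with those of $\bm{C}_N$, including $\hat a_i^{(N,\rho,\e)}$.

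The main obstacle is this vanishing of the entries $(i,j)$ with $i-j\ge2$. These are finite hypergeometric sums, and I expect them to telescope only when $(2j-2N-\rho)_{2(i-j)}$ interacts well with the integrality of $\rho$. This is the source of the $\rho(1-\rho)$ error term, consistent with the Remark following \eqref{eq:div1}. To prove the vanishing, I would first try to realize $\bm{E}_N$ as a product of bidiagonal matrices, or equivalently as a generating-function operator such as a hypergeometric differential operator acting on monomials. Conjugation of a shift by such a matrix can then be computed structurally rather than entrywise. If that fails, I would fall back to WZ/Zeilberger certificates for the sums, treating $\rho$ and $\e$ as formal variables.

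Finally, I would confirm that the residual contributions are polynomial in $x,y$ and carry the factor $\rho(1-\rho)$. As a sanity check, I would compare the result with the examples $N=1,2$.
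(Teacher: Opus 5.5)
Your overall plan matches the paper's: conjugate by $\bm{E}_N$ so that $4x\bm{A}_N^{(N,\rho,\e)}$ becomes $4x\bm{D}_N$, then show the conjugate of $\bm{U}_N$ is tridiagonal up to a $\rho(1-\rho)$ correction. However, the identity you set as the goal, $\bm{E}_N^{-1}\bm{U}_N\bm{E}_N=-2\bm{D}_N+\bm{C}_N$ modulo $\rho(1-\rho)$, is false, and checking it would fail.

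\begin{itemize}
\item Since $\bm{E}_N$ is unit lower triangular and the off-diagonal part of $\bm{U}_N$ is the superdiagonal shift, the $(i,i+1)$ entry of the conjugate is $(\bm{E}_N^{-1})_{i,i}(\bm{E}_N)_{i+1,i+1}=1$, identically in $\rho,\e$. The superdiagonal of $\bm{C}_N$ is not $1$: for $N=2$, $i=1$, $\rho=\e=0$ it equals $28/15$.
\item The correct target is the paper's $\bm{M}_N$. Its diagonal is $a_i=f(i)\bigl(\hat a_i^{(N,\rho,\e)}-2\bigr)$, its superdiagonal is $1$, and its subdiagonal entry $c_i$ is the product of the $i$-th super- and subdiagonal entries of $\bm{C}_N$.
\item You then need an extra step to pass from $\bm{M}_N$ to $-2\bm{D}_N+\bm{C}_N$: a tridiagonal determinant depends only on the diagonal and on these products (a diagonal similarity).
\end{itemize}

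Your mechanism for producing $-2\bm{D}_N$ is also wrong. The difference $-4i(i+\e)+2f(i)=-6i^2+(4N+2\rho-2\e+1)i$ is quadratic, not linear. More importantly, conjugating a diagonal matrix by $\bm{E}_N$ does not give a diagonal matrix, so no splitting of the diagonal of $\bm{U}_N$ can isolate $-2\bm{D}_N$. For $i<N$ the diagonal of the conjugate is $-4i(i+\e)+d_{i+1,i}-d_{i,i-1}$, with $d_{i,j}=(\bm{E}_N)_{i,j}$. It is this whole combination that simplifies to $f(i)\bigl(\hat a_i^{(N,\rho,\e)}-2\bigr)$.

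For the entries off the tridiagonal band, the paper needs neither $\bm{E}_N^{-1}$ nor any summation. Having guessed $\bm{M}_N$, it verifies
\[
\bm{U}_N\bm{E}_N=\bm{E}_N\bm{M}_N+\bm{e}_N\,{}^{T}\bm{u}_N
\]
entrywise. Because $\bm{U}_N$ is bidiagonal and $\bm{M}_N$ is tridiagonal, the $(i,j)$ entry is the local three-term relation
\[
(d_{i,j-1}-d_{i+1,j})+(4i(i+\e)+a_j)d_{i,j}+c_jd_{i,j+1}=0 .
\]
After dividing by $d_{i,j}$ this is a rational identity in ratios of Pochhammer symbols. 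It holds exactly, for every $\rho$, in all rows $i\le N-1$. So your expectation that these entries vanish only through the integrality of $\rho$ misplaces the source of the error term.

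The factor $\rho(1-\rho)$ comes purely from truncation:
\begin{itemize}
\item In row $N$ the term $d_{N+1,j}$ is missing, because that row of the full eigenvector matrix is cut off. The discrepancy is therefore $-d_{N+1,j}$.
\item This entry carries the factor $\rho(\rho-1)$, since $(2j-2N-\rho)_{2(N+1-j)}=\rho(\rho-1)(\rho+1)_{2(N-j)}$.
\item Since $\bm{E}_N^{-1}\bm{e}_N=\bm{e}_N$, the correction to the conjugate is rank one and supported in the last row.
\item The determinant is linear in that row (equivalently, apply the matrix determinant lemma), which produces $\rho(1-\rho)R_N^{(\rho,\e)}$.
\end{itemize}

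Your route through a closed form for $\bm{E}_N^{-1}$ and WZ certificates could in principle confirm the same identities. It is much heavier than the paper's check, and as stated it is aimed at the wrong matrix.
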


\begin{proof}
  First, we show the equality
  \begin{equation}
    \label{eq:threetermMod}
    P_{N}^{(N,\rho,\e)}(x,y) = \det \left( \bm{I}_N y + 4 \bm{D}_N x + \bm{M}_N +  \rho (1-\rho) \bm{e}_N \,\,^{T}\bm{u}_N \right),
  \end{equation}
  where $\bm{e}_N  \in \R^N$ is the $N$-th basis vector, $\bm{u}_N = \bm{u}_N(\e,\rho)$ is the vector given entry-wise by
  \begin{align*}
    (\bm{u}_N)_i = (-1)^{N-i+1} \frac{N!}{(i-1)!} \binom{N+1}{i} \frac{(\rho+1)_{2(N-i)}}{(N-i+\e+\rho-\tfrac12)_{N-i+1}},
  \end{align*}
  and
  \begin{align*}
    \bm{M}_N &=  \Tridiag{a_i^{(N,\rho,\e)}}{1}{c_i^{(N,\rho,\e)}}{1\le i\le N}, 
  \end{align*}
  with
  \begin{align*}
    a_i^{(N,\rho,\e)} =& - 2i(2N+\e-i+\rho+\tfrac12)\left( 1 - \frac{1-4\e^2}{4(2N+\e-2i+\rho+\tfrac{3}2)(2N+\e-2i+\rho-\tfrac{1}{2})} \right)  \\
    c_i^{(N,\rho,\e)} =& i(i+1)(2(N-i)+\rho)(2(N-i)+\rho-1)(2(N+\e-i) + \rho)(2(N+\e-i) + \rho -1) \\
     &\qquad \times  \frac{(2N+\e - i + \rho +\tfrac{1}2)(2 N+\e - i + \rho -\tfrac{1}2)}{( 2(N-i) +\e + \rho - \tfrac{3}2)( 2(N-i) +\e + \rho - \tfrac{1}2)^2( 2(N-i) +\e+  \rho + \tfrac{1}2)}.
  \end{align*}
  
   Using Proposition \ref{prop:matrixE}, it is enough to verify that
   \begin{align}
     \label{eq:proofDetexp2}
     \bm{U}_k^{(N,\e)} \bm{E}_N^{(N,\rho,\e)}= \bm{E}_N^{(N,\rho,\e)} \bm{M}_N + \bm{E}_N^{(N,\rho,\e)} \bm{e}_N \,\,^{T}\bm{u}_N,
   \end{align}
   and since $(\bm{E}_N^{(N,\rho,\e)})_{N,N}=1$ we note that $\bm{E}_N^{(N,\rho,\e)} \bm{e}_N \,\,^{T}\bm{u}_N = \bm{e}_N \,\,^{T}\bm{u}_N$.

   Define
   \[
     d_{i,j} = \binom{i}{j} \frac{(i-1)!}{(j-1)!} \frac{(2 j-2 N -\rho)_{2(i-j)}}{(2j - 2 N - \rho - \e + \tfrac12)_{i-j}},
   \]
   and note that $d_{i,j}=0$ if $j>i$. It is straightforward to verify the identities
   \begin{align*}
     d_{i+1,j} - d_{i,j-1} &= \frac{1}{(i-j+1)}\bigg( \frac{i (i+1)(2 i-2 N-\rho +1) (2 i-2 N-\rho )}{ \left(i+j-2 N-\rho -\e +\frac{1}{2}\right)} \\
                           &\qquad -\frac{j (j-1) (2 j-2 N-\rho -2) (2 j-2 N-\rho -1) \left(i+j-2 N-\rho -\e +\frac{1}{2}-1\right)}{\left(2 j-2 N-\rho -\e +\frac{1}{2}-1\right)\left(2 j-2 N-\rho -\e +\frac{1}{2}-2\right) } \bigg) d_{i,j} \\
     c_j^{(N,\rho,\e)} d_{i,j+1} &= (i-j) (2 (-j+n+\e )+\rho ) (2 (-j+n+\e )+\rho -1) \\
                           &\qquad \times \frac{((-j+2 n+\e )+\rho +\frac{1}{2}) ((-j+2 n+\e )+\rho -\frac{1}{2})}{\left(2 (n-j)+(\rho +\e )-\frac{1}{2}\right) \left(2 (n-j)+(\rho +\e )+\frac{1}{2}\right) \left(i+j-2 n-\rho -\e +\frac{1}{2}\right)}d_{i,j}.     
   \end{align*}

   Then we verify the identity
   \begin{align}
     \label{eq:proofDetexp3}
     (d_{i,j-1} - d_{i+1,j}) + (4i(i+\e) + a_j^{(N,\rho,\e)})d_{i,j} + c_j^{(N,\rho,\e)} d_{i,j+1} = 0,
   \end{align}
   after writing everything in terms of $d_{i,j}$ and a long but elementary computation. For $1\leq i,j \leq N$, we have
   $(\bm{E}_k^{(N,\rho,\e)})_{i,j}=d_{i,j}$ and thus \eqref{eq:proofDetexp3} gives \eqref{eq:proofDetexp2} for the $i,j$-th
   component for $1\leq i\leq N-1$ and $1\leq j\leq N$.

   For $i=N$, from \eqref{eq:proofDetexp3}  we get 
   \[
     \left(\bm{U}_k^{(N,\e)} \bm{E}_N^{(N,\rho,\e)} - \bm{E}_N^{(N,\rho,\e)} \bm{M}_N\right)_{N,j} = - d_{N+1,j},
   \]
   and using the identities
   \begin{align*}
     (2 j - 2N - \rho)_{2(N+1-j)} &= \rho (\rho-1) (\rho+1)_{2(N-j)}, \\
     (2j - 2 N - \rho - \e + \tfrac12) &= (-1)^{N+1-j} (N-j + \rho + \e -\tfrac12)_{N+1-j},
   \end{align*}
   which proves \eqref{eq:threetermMod} by comparing the values of $d_{N,N}$ and $(u_N)_N$. Finally, by elementary linear algebra, we have
  \[
    \det\left( \bm{A} + \bm{v} ^{T}\bm{u}_k \right) =  \det(\bm{A} ) +  \bm{v} {\rm adj}(\bm{A}) \,^{T}\bm{u}_k,
   \]
   where ${\rm adj}$ is the adjugate matrix of $\bm{A}$, that is, the transpose of the matrix of the cofactors of
   $\bm{A}$.
   In particular, we have
   \begin{align}
     \label{eq:polyR}
     R_N^{(\rho,\e)}(x,y) = \bm{e}_N  {\rm adj}( \bm{I}_N y + 4 \bm{D}_N x + \bm{C}_N ) ^{T}\bm{u}_k,
   \end{align}
   and then, since it is the determinant of a tridiagonal matrix, replacing $\bm{M}_N$ with $\bm{C}_N$ does not change
   the value of the determinant.
\end{proof}

\begin{ex}
  For $N=1,2$ the polynomials $R_N^{(\rho,\e)}(x,y)$ are given by
  \begin{align*}
    R_1^{(\rho,\e)}(x,y) &=  \frac{-4 \rho (1-\rho) }{2 \rho +2 \e -1}, \\
    R_2^{(\rho,\e)}(x,y) &= -12 \rho (1-\rho)   \frac{ 2 x (2 \rho +2 \e +1) (2 \rho +2 \e +7)+
                         y (2 \rho +2 \e +1)-2 (\rho +2 \e )^2-2 (7 \rho +6 \e +4)}{(2 \rho +2 \e -1) (2 \rho +2 \e +1)}.
  \end{align*}
  In principle, it is possible to obtain an explicit expression $R_N^{(\rho,\e)}(x,y)$ from \eqref{eq:polyR}, but since we do not use it in the rest of the discussion we omit the details.
\end{ex}

We are now in the position to prove the divisibility relation \eqref{eq:div1} between constraint polynomials.

\begin{thm} \label{thm:div}
  Let $N\geq0$ and $\ell \geq 1$ and $\rho=0,1$, then we have
\begin{align*}
  P_{N+\ell}^{(N+\ell,\rho,-\ell)}(x,y)= A_{N}^{(\ell,\rho)}(x,y)P_{N}^{(N,\rho,\ell)}(x,y),
\end{align*}
where the polynomial $A_{N}^{(\ell,\rho)}(x,y)$ is given by
\[
  A_{N}^{(\ell,\rho)}(x,y) = (N+1)_\ell (N+\rho+\tfrac12)_\ell \det \left(2 \bm{I}_\ell (2x-1) + \bm{D}_\ell^{(N,\rho)} y + \bm{V}_\ell^{(\rho)} \right),
\]
where $\bm{D}_\ell^{(N,\rho)}  = \Diag\left\{\frac{1}{(N+i) ( N + \ell - i  +\rho +\tfrac12)}\right\}_{1\leq i \leq \ell}$ and
\[
  \bm{V}_\ell^{(\rho)} =  \Tridiag{\frac{1-4 \ell^2}{2 \left(-2 i+\ell+\rho -\frac{1}{2}\right)\left(-2 i+\ell+\rho +\frac{3}{2}\right)}}{\frac{(-2 i + \rho )(-2 i + \rho-1)}{\left(\ell-2 i+\rho -\frac{1}{2}\right)\left(\ell-2 i+\rho +\frac{1}{2}\right)  }}{\frac{ (2 (\ell-i)+\rho ) (2 (\ell-i)+\rho-1 )}{\left(\ell -2 i+ \rho -\frac{1}{2}\right) \left(\ell -2 i+\rho -\frac{3}{2}\right) }}{1\le i\le \ell}.
\]
\end{thm}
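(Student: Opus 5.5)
We apply Proposition \ref{prop:detexp2} to $P_{N+\ell}^{(N+\ell,\rho,-\ell)}$. Since $\rho\in\{0,1\}$, the correction term $\rho(1-\rho)R$ vanishes, and we get
\[
  P_{N+\ell}^{(N+\ell,\rho,-\ell)}(x,y) = \det\left(\bm{I}_{N+\ell}\, y + \bm{D}_{N+\ell}(-\ell,\rho)(4x-2) + \bm{C}_{N+\ell}(-\ell,\rho)\right).
\]
The key observation is that this tridiagonal matrix decouples into blocks. With $\e=-\ell$, the superdiagonal entry in row $i$ of $\bm{C}_{N+\ell}$ contains the factor $(2(N+\ell-\ell-i)+\rho-1)_2=(2(N-i)+\rho-1)(2(N-i)+\rho)$. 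For $\rho=0$ this vanishes at $i=N$, and for $\rho=1$ the factor $(2(N-i))$ vanishes at $i=N$. So the superdiagonal entry at position $(N,N+1)$ is zero. We should check that the denominators there, $(2(N-i)-\ell+\rho-\tfrac12)_2$, never vanish; this holds because they are half-integers.

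Once that entry is zero, the matrix is block lower triangular with blocks of sizes $N$ and $\ell$. Hence the determinant factors as the product of the determinant of the leading $N\times N$ block and that of the trailing $\ell\times\ell$ block.

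Next we identify the leading $N\times N$ block with the matrix $\bm{I}_N y+\bm{D}_N(\ell,\rho)(4x-2)+\bm{C}_N(\ell,\rho)$ computed from Proposition \ref{prop:detexp2} for $P_N^{(N,\rho,\ell)}$. For the diagonal, $i(2(N+\ell)-\ell-i+\rho+\tfrac12)=i(2N+\ell-i+\rho+\tfrac12)$, which matches $\bm{D}_N(\ell,\rho)$ exactly. The factor $1-4\e^2$ is even in $\e$. The remaining denominators take the form $2(N+\ell-i)-\ell+\dots=2(N-i)+\ell+\dots$, which again agrees with $\e=\ell$. The off-diagonal entries need not agree individually; since the determinant of a tridiagonal matrix depends only on the diagonal entries and the products $b_ic_i$, it suffices to compare these products. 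That reduces to the Pochhammer identity $(2(N-\ell-i)+\rho-1)_2\,(2(N+\ell-i)+\rho-1)_2$ versus $(2(N+\ell-i)+\rho-1)_2\,(2(N-i)+\rho-1)_2$ after the shift, together with the corresponding denominators. I would carry this check out carefully. If it turns out that entries match only up to a diagonal similarity, that is still enough. The leading block's determinant is then $P_N^{(N,\rho,\ell)}(x,y)$.

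For the trailing $\ell\times\ell$ block, reindex by $i=N+j$ with $1\le j\le\ell$. The diagonal entries become $(N+j)(N+\ell-j+\rho+\tfrac12)$. Factoring these out of each row gives the prefactor $\prod_j (N+j)(N+\ell-j+\rho+\tfrac12)=(N+1)_\ell\,(N+\rho+\tfrac12)_\ell$, because the second factor runs over $N+\rho+\tfrac12,\dots,N+\ell-1+\rho+\tfrac12$. What remains is $2\bm{I}_\ell(2x-1)+\bm{D}^{(N,\rho)}_\ell y+\bm{V}$. We then check entrywise that $\bm{V}$ agrees with $\bm{V}_\ell^{(\rho)}$, again up to a similarity that preserves the products $b_jc_j$. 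The diagonal entry is immediate: $\hat a_{N+j}$ at $\e=-\ell$ has denominator $2(-2j-\ell+\rho+\tfrac32)(-2j-\ell+\rho-\tfrac12)$, and after the substitution $j\mapsto \ell+1-j$, or a sign rewrite, it should take the stated form. The main obstacle is the bookkeeping in these index and sign manipulations of the Pochhammer symbols, particularly reconciling the stated $\bm{V}^{(\rho)}_\ell$, which is independent of $N$, with the entries of $\bm{C}_{N+\ell}$ after row scaling. If a direct match fails, I would compare the continuants via the products $b_jc_j$ rather than the individual entries.
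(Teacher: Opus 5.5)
Your argument is correct and is essentially the paper's proof. The paper works with the equivalent continuant \eqref{eq:threetermMod}, which has superdiagonal $1$ and subdiagonal $c_i$; there $c_N^{(N+\ell,\rho,-\ell)}=0$ splits the matrix into an $N\times N$ and an $\ell\times\ell$ block, the leading block is matched via $a_i^{(N+\ell,\rho,-\ell)}=a_i^{(N,\rho,\ell)}$ and $c_i^{(N+\ell,\rho,-\ell)}=c_i^{(N,\rho,\ell)}$ (exactly your comparison of the products $b_ic_i$), and $A_N^{(\ell,\rho)}$ is read off from the trailing block by row scaling.

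Some of your intermediate expressions have sign slips. With $N\to N+\ell$ and $\e=-\ell$ the denominators are $2(N-i)+\ell+\cdots$, the relevant product is $(2(N-i)+\rho-1)_2(2(N+\ell-i)+\rho-1)_2$, and $\hat a_{N+j}^{(N+\ell,\rho,-\ell)}$ has denominator $2(\ell-2j+\rho+\tfrac32)(\ell-2j+\rho-\tfrac12)$. Once these are corrected, dividing row $j$ of the trailing block by $(N+j)(N+\ell-j+\rho+\tfrac12)$ gives $2\bm{I}_\ell(2x-1)+\bm{D}_\ell^{(N,\rho)}y+\bm{V}_\ell^{(\rho)}$ entrywise, with no reindexing or similarity needed.
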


\begin{proof}
  For simplicity of notation, we use the expression \eqref{eq:threetermMod} for $P_{N}^{(N,\rho,\e)}(x,y)$. Let  $\rho=0,1$,  note
  that $c_i^{(N+\ell,\rho,-\ell)}$ vanishes for $1 \leq i \leq N+\ell-1$ only for $i=N$, then is straightforward to verify that
  \[
    P_{N+\ell}^{(N+\ell,\rho,-\ell)}(x,y) = q_1(x,y)q_2(x,y)
  \]
  where  $q_1(x,y)$ is of degree $N$ and $q_2$ of degree $\ell$ with respect to $x,y$. Now, since
  $2(N + \ell) - \ell  = 2 N + \ell$, we verify directly that
  \[
    a_i^{(N+\ell,\rho,-\ell)} = a_i^{(N,\rho,\ell)}   \qquad c_i^{(N+\ell,\rho,-\ell)} = c_i^{(N,\rho,\ell)}
  \]
  and similarly for the diagonal entries of $\bm{D}_N$. This gives $q_1(x,y)= P_{N}^{(N,\rho,\ell)}(x,y)$ and
  $q_2(x,y)=A_{N}^{(\ell,\rho)}(x,y)$. The desired form for $A_{N}^{(\ell,\rho)}(x,y)$ is obtained using elementary continuant transformations (i.e. determinant operations for tridiagonal matrices).
\end{proof}

Note that once $\rho \in \{0,1\}$ is fixed, $A_{N}^{(\ell,\rho)}(x,y)$ is a polynomial in the variables $x,y$ and $N$ with rational coefficients. In general $A_{N}^{(\ell,\rho)}(x,y)$ is a rational function on $x,y,\rho,N$,
for instance,
\[
   A_{N}^{(1,\rho)}(x,y) = 2(N+1)(2 N + 2 \rho +1)x +y -\frac{ (\rho -1) \rho (N+1)  (2 N+2 \rho +1)}{( \rho -\tfrac32) (\rho + \tfrac{1}2)}.
\]
Note that $A_{N}^{(\ell,\rho)}(x,y)$ is not defined as certain values of $\rho$, for instance, in the example above it has a pole at $\rho=\tfrac32$.

\begin{rem}
    Recall that the values of $\rho$ corresponding to the even and odd solutions of the 2pAQRM are $\rho \in \{0,1\}$, and that due to the presence  of the residual term $ R_N^{(\rho,\e)}(x,y)$, the polynomial $P_{N+\ell}^{(N+\ell,\rho,-\ell)}(x,y)$ does not factor for general $\rho$. However, there are other values of $\rho$ that give  factorizations or partial factorizations of these polynomials.

  For instance, for $\rho \in \{-1,-2\}$ (resp.  $\rho \in \{-3,-4\}$), the expression \eqref{eq:threetermMod} in Proposition \ref{prop:detexp2} is actually the determinant of a tridiagonal matrix. This lead to a factorization of $P_{N+\ell}^{(N+\ell,\rho,-\ell)}(x,y)$ into two polynomials of degree $N-1$ and $\ell+1$ (resp. $N-2$ and $\ell+2$).

  In addition, other values of $\rho$ give factorizations of the polynomial
  \[
    \bar{P}_{N+\ell}^{(N+\ell,\rho,-\ell)}(x,y) := \det \left( \bm{I}_{N+\ell} y + 4 \bm{D}_{N+\ell} x + \bm{C}_{N+\ell} \right).
  \]
  It is easy to verify that for $\rho=2 i + \delta$  with $-(N-1) \leq i \leq \ell$ and $\rho \in \{0,1\}$, the equation
  \[
    c_i^{(N+\ell,\rho,-\ell)} = 0,
  \]
  is valid for $2 \leq i \leq N+\ell$. Therefore, the polynomial $\bar{P}_{N+\ell}^{(N+\ell,\rho,-\ell)}(x,y)$ factors into two polynomials of degree $N+[\frac{\rho}2]$ and $\ell - [\frac{\rho}2]$. 

  Although only the cases $\rho \in \{0,1\}$ can be associated in a natural way to the 2pAQRM and its spectrum, using representation theory it is possible to define a generalization of the 2pAQRM that allows a general value of $\rho$, corresponding to a parameter of a certain $\mathfrak{sl}_2$-module. It may be an interesting problem to interpret the factorizations of these polynomials in the setting of the generalized models. The definition of these models and their properties will be treated in a forthcoming paper \cite{RBW2027}.
\end{rem}

\subsection{Positivity of the inter-constraint polynomial}
\label{sec:pos}

In this section, we prove that the polynomial  $A_{N}^{(\ell,\rho)}(x,y)$ is positive for $x,y>0$. To do this, we show that for fixed $x>0$, it is not possible for the polynomial $A_{N}^{(\ell,\rho)}(x,y)$ to have roots as a polynomial in $y$. The proof follows the method of \cite{KRW2017} for the case of AQRM, and is based on interpreting the zeros of the polynomial as eigenvalues of the matrices involved in the associated determinant expressions (cf. Theorem \ref{thm:div}). 

We start by obtaining  the value of $A_{N}^{(\ell,\bar{\rho})}(x,y)$ when $y=0$ for $\bar{\rho}=0,1$. 

\begin{prop}
  For $\rho=0,1$, we have
  \[
    A_{N}^{(\ell,\rho)}(x,0)  = 4^{\ell} (N+1)_\ell (N + \bar{\rho} +1/2)_\ell x^\ell.
  \]
  In particular, $A_{N}^{(\ell,\rho)}(x,0)>0$ for $x>0$.
\end{prop}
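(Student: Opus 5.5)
Setting $y=0$ in the determinant expression of Theorem \ref{thm:div} removes the diagonal term $\bm{D}_\ell^{(N,\rho)}y$. What remains is
\[
  A_{N}^{(\ell,\rho)}(x,0) = (N+1)_\ell (N+\rho+\tfrac12)_\ell \det\left(2\bm{I}_\ell(2x-1) + \bm{V}_\ell^{(\rho)}\right).
\]
The prefactor already matches the claimed constant. So it suffices to prove the purely matrix identity
\[
  \det\left(z\bm{I}_\ell + \bm{V}_\ell^{(\rho)}\right) = (z+2)^\ell, \qquad z = 4x-2,
\]
for $\rho \in \{0,1\}$. Equivalently, $\bm{V}_\ell^{(\rho)}$ has characteristic polynomial $(t+2)^\ell$, i.e. $-2$ is its only eigenvalue, with algebraic multiplicity $\ell$.

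I would avoid expanding the continuant by hand and instead go back one step, to the polynomials themselves. By the factorization of Theorem \ref{thm:div},
\[
  A_N^{(\ell,\rho)}(x,0) = \frac{P_{N+\ell}^{(N+\ell,\rho,-\ell)}(x,0)}{P_N^{(N,\rho,\ell)}(x,0)},
\]
and both numerator and denominator can be read off from the three-term recurrence \eqref{eq:defConsP} at $y=0$.

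The key step is to show that $P_k^{(N,\rho,\e)}(x,0)$ has a closed form. Looking at $P_1$ and $P_2$ at $y=0$, I would conjecture an explicit product or hypergeometric form in $x$: at $y=0$ the underlying Hamiltonian is diagonal after the Bogoliubov transform, so the recurrence should be a shifted Laguerre-type recurrence (cf. the orthogonal polynomials $R_N^{(\e,\rho)}$ of Section \ref{sec:orthopoly}). I would then verify the conjectured form by induction on $k$. Plugging it in, the quotient should telescope. The powers of $(x-1)$ and related factors coming from the $\e=\ell$ and $\e=-\ell$ versions cancel, leaving exactly $4^\ell x^\ell$ times the Pochhammer factors.

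As a fallback, I would work directly with the tridiagonal matrix $2\bm{I}_\ell(2x-1)+\bm{V}_\ell^{(\rho)}$ and the continuant recurrence for its leading principal minors $m_k$. The plan is to show by induction that
\[
  m_k = (4x)^k \cdot c_k,
\]
where the ratio $c_k$ is an explicit product of rational functions of $i,\ell,\rho$. At the final step $k=\ell$, $c_\ell$ should equal $1$. This amounts to a Pochhammer identity involving the diagonal entries $\frac{1-4\ell^2}{2(\cdots)(\cdots)}$ and the off-diagonal products. That identity is plausible because the product of the off-diagonal entries $(2(\ell-i)+\rho)_2(-2i+\rho-1)_2$ factors nicely.

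The main obstacle is guessing and verifying this closed form, i.e. showing that the intermediate minors are proportional to powers of $4x$ only at the final index and otherwise carry lower-order terms that cancel. I would first compute $\ell=1,2,3$ symbolically to pin down the pattern before attempting the induction. Once the identity is established, positivity for $x>0$ is immediate, since all Pochhammer factors $(N+1)_\ell$ and $(N+\rho+\tfrac12)_\ell$ are positive.
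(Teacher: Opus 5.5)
Your reduction is correct: at $y=0$ the prefactor already matches, and the statement is equivalent to $\det\bigl((4x-2)\bm{I}_\ell+\bm{V}_\ell^{(\rho)}\bigr)=(4x)^\ell$. One slip: this means the only eigenvalue of $\bm{V}_\ell^{(\rho)}$ is $+2$, not $-2$ (e.g.\ $\bm{V}_1^{(\rho)}=[2]$). That identity is the whole content of the proposition, and the proposal never proves it.

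Your fallback is exactly the paper's route, namely the continuant recurrence for the leading principal minors. However, its inductive ansatz $m_k=(4x)^k c_k$ is false for $k<\ell$: already for $\ell=2$, $\rho=0$ one has $m_1=4x+8$. The missing ingredient is the correct closed form of the minors, which the paper checks by induction on $k$:
\[
J_k(x)=4^k\sum_{i=0}^{k}(-1)^{k-i}\binom{k}{i}\frac{(\ell-k)_{k-i}}{(\ell-2k+\rho-\frac12)_{k-i}}\,x^i .
\]
The factor $(\ell-k)_{k-i}$ does all the work. For $k<\ell$ it is nonzero, so the intermediate minors genuinely contain all lower powers of $x$. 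At $k=\ell$ it equals $(0)_{k-i}$, which kills every term with $i<\ell$ and leaves $J_\ell(x)=4^\ell x^\ell$. Without such a formula, ``$c_\ell$ should equal $1$'' is just a restatement of the claim.

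Your main route can be completed, and would then be a genuinely different proof, but not as you describe it. At $y=0$ one has
\[
P_N^{(N,\rho,\e)}(x,0)=(-4)^N N!\sum_{j=0}^{N}(-1)^j\binom{N}{j}(1+\e+j)_{N-j}\,(N+\e+\rho+\tfrac12)_j\,x^j ,
\]
a regularized multiple of ${}_2F_1(-N,N+\e+\rho+\tfrac12;1+\e;x)$. You cannot reach this by induction on $k$ in \eqref{eq:defConsP}, because there $N$ is fixed and the intermediate $P_k^{(N,\rho,\e)}(x,0)$, $k<N$, are not of this form. For instance, the coefficient of $x$ in $P_2^{(3,0,\e)}(x,0)$ is $-16(4\e^2+26\e+37)$, which is irreducible over $\Q$.

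Instead, go through Proposition \ref{prop:constraint}. At $\Delta=0$ the normalized coefficients $\tilde K_n=(2g)^n(2n+\rho)!\,K_n^{(\rho,\e)}$ equal $(-2)^n\sum_{j}(-1)^j\binom{n}{j}(N+\e-n+1+j)_{n-j}(N+\e+\rho+\frac12)_j x^j$ for every $n$. This follows by a direct induction on the forward recurrence, using $\rho(\rho-1)=0$, and then $P_N^{(N,\rho,\e)}(x,0)=2^N N!\,\tilde K_N$.

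The mechanism is also not a cancellation of powers of $(x-1)$. For $\e=-\ell$ and $N\to N+\ell$, the factor $(1-\ell+j)_{N+\ell-j}$ vanishes for $j<\ell$. So the numerator is $x^\ell$ times exactly the sum for $P_N^{(N,\rho,\ell)}(x,0)$, with leftover constant $4^\ell(N+1)_\ell(N+\rho+\frac12)_\ell$. This explains the monomial as the truncation of a hypergeometric series with non-positive integer lower parameter, whereas the paper works directly with the tridiagonal matrix $\bm{V}_\ell^{(\rho)}$.
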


Note that, for general $\rho \neq 0,1$, $A_{N}^{(\ell,\rho)}(x,0)$ does not need to be a monomial in $x$.

\begin{proof}
  Let us define the polynomial $J_{k}(x)$ by
  \begin{align*}
    J_{k}(x) =& \left(4 x - 2 + \frac{(1-4\ell^2)}{2 \left(-2 k+\ell+\rho -\frac{1}{2}\right)\left(-2 k+\ell+\rho +\frac{3}{2}\right)} \right)J_{k-1}(x) \\
             & \qquad - \frac{(-2 (k-1) + \rho-1)_2 (2 (\ell-k+1)+\rho-1 )_2}{\left(\ell-2 (k-1)+\rho -\frac{1}{2}\right)_2 \left(\ell -2 (k-1)+\rho -\frac{3}{2}\right)_2 } J_{k-2}(x) 
  \end{align*}
  with initial conditions $J_{0}(x)=1$ and $J_{-1}(x)=0$, then
  $A_{N}^{(\ell,\rho)}(x,0) = (N+1)_\ell (N + \bar{\rho} +1/2)_\ell J_{\ell}(x)$. It is not difficult to verify 
  by induction that
  \[
    J_{k}(x) = 4^k \sum_{i=0}^k (-1)^{k-i} \binom{k}{k-i} \frac{(\ell-k)_{k-i}}{(\ell - \tfrac{4k+1-2\rho}{2})_{k-i}} x^i,
  \]
  thus $J_{\ell}(x) = 4^\ell x^\ell$, as desired.
\end{proof}

On the other hand, we have the following result for $x=0$.

\begin{prop}
  For $\rho=0,1$, we have
  \[
    A_{N}^{(\ell,\rho)}(0,y)  = \prod_{i=0}^{\ell-1} (y + 4i(\ell-i)).
  \]
  In particular, $y=0$ is a root of $A_{N}^{(\ell,\rho)}(0,y)$.
\end{prop}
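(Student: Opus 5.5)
The plan is to avoid the determinant expression for $A_{N}^{(\ell,\rho)}$ in Theorem \ref{thm:div} altogether. Instead, I would use the divisibility identity
\[
  P_{N+\ell}^{(N+\ell,\rho,-\ell)}(x,y)= A_{N}^{(\ell,\rho)}(x,y)P_{N}^{(N,\rho,\ell)}(x,y),
\]
specialized at $x=0$, together with the three-term recurrence \eqref{eq:defConsP} of Definition \ref{dfn:copo}.

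\textbf{Step 1: the recurrence at $x=0$.} Every term involving $x$ drops out of \eqref{eq:defConsP}, including the entire $P_{k-2}$ term. This leaves
\[
  P_{1}^{(N,\rho,\e)}(0,y)=y-4(1+\e), \qquad P_{k}^{(N,\rho,\e)}(0,y)=\bigl(y-4k(k+\e)\bigr)P_{k-1}^{(N,\rho,\e)}(0,y).
\]
Hence, for every $N$, $\rho$ and $\e$,
\[
  P_{k}^{(N,\rho,\e)}(0,y)=\prod_{j=1}^{k}\bigl(y-4j(j+\e)\bigr),
\]
and this does not depend on $N$ or $\rho$.

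\textbf{Step 2: the two constraint polynomials.} Applying Step 1,
\[
  P_{N+\ell}^{(N+\ell,\rho,-\ell)}(0,y)=\prod_{k=1}^{N+\ell}\bigl(y-4k(k-\ell)\bigr), \qquad P_{N}^{(N,\rho,\ell)}(0,y)=\prod_{j=1}^{N}\bigl(y-4j(j+\ell)\bigr).
\]
In the first product I would split the range into $1\le k\le \ell$ and $k=\ell+j$ with $1\le j\le N$. For $k=\ell+j$ the factor is $y-4(\ell+j)j$, so that block reproduces exactly the second product.

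\textbf{Step 3: cancellation.} The factorization of Theorem \ref{thm:div} holds in $\Q[x,y]$, so it survives the substitution $x=0$. The polynomial $P_{N}^{(N,\rho,\ell)}(0,y)$ is monic in $y$, hence nonzero, so it can be cancelled in the integral domain $\Q[y]$. This gives
\[
  A_{N}^{(\ell,\rho)}(0,y)=\prod_{k=1}^{\ell}\bigl(y-4k(k-\ell)\bigr)=\prod_{k=1}^{\ell}\bigl(y+4k(\ell-k)\bigr).
\]
Reindexing by $i=\ell-k$ turns this into $\prod_{i=0}^{\ell-1}(y+4i(\ell-i))$. The $i=0$ factor is $y$, which gives the root at $y=0$.

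There is no real obstacle here. The only point needing care is that Step 3 relies on Theorem \ref{thm:div} for $\rho\in\{0,1\}$, which is exactly where the stated identity is claimed. As a sanity check, $\ell=1$ gives $A_N^{(1,\rho)}(0,y)=y$, consistent with the displayed formula for $A_{N}^{(1,\rho)}$ at $\rho\in\{0,1\}$, $x=0$.
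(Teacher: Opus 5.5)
Your proposal is correct and follows essentially the same route as the paper. You evaluate the recurrence at $x=0$ to get $P_k^{(N,\rho,\e)}(0,y)=\prod_{j=1}^{k}(y-4j(j+\e))$, split the product for $P_{N+\ell}^{(N+\ell,\rho,-\ell)}(0,y)$ so that its tail block is $P_N^{(N,\rho,\ell)}(0,y)$, and compare with the factorization of Theorem~\ref{thm:div}. Your remark that $P_N^{(N,\rho,\ell)}(0,y)$ is monic, and hence cancellable in $\Q[y]$, simply makes the paper's ``compare'' step precise.
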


\begin{proof}
  From the definition, we see that for $k\geq 1$ we have
  \[
    P_{k}^{(N,\rho,\e)}(0,y) = \prod_{i=1}^k \left( y - 4 i (i+\e) \right).
  \]
  It follows that
  \begin{align*}
    P_{N+\ell}^{(N+\ell,\rho,-\ell)}(0,y) &= 
     \prod_{i=1}^{\ell} \left( y - 4 i (i-\ell) \right) \prod_{i=\ell+1}^{N+\ell} \left( y - 4 i (i-\ell) \right) \\
       &= \prod_{i=1}^{\ell} \left( y - 4 i (i-\ell) \right) \prod_{i=1}^{N} \left( y - 4 i (i+\ell) \right) =
                                \prod_{i=0}^{\ell-1} \left( y + 4 i (\ell-i) \right) P_{N}^{(N,\rho,\ell)}(0,y). 
  \end{align*}
 The result is then compared with the factorization of $P_{N+\ell}^{(N+\ell,\rho,-\ell)}(0,y)$.
\end{proof}

Summarizing, for $x=0$ the roots of $A_{N}^{(\ell,\rho)}(0,y)$ are non-negative, and for $x>0$, $y=0$ is not a root of $A_{N}^{(\ell,\rho)}(x,0)$. We also note that for $x>0$, transforming the determinant expression into a determinant of a symmetric matrix, we see that the polynomials $P_{N+\ell}^{N+\ell,\rho,-\ell)}(x,y)$, $P_{N}^{N,\rho,\ell)}(x,y)$, and thus $A_{N}^{(\ell,\rho)}(x,y)$ have real roots with respect to $y$ (cf. Cor. 3.5 of \cite{KRW2017}). Therefore,  we can complete the proof if we show that for a fixed positive $x>1$ all the roots of $A_{N}^{(\ell,\rho)}(x,y)$ with 
respect to $y$ are positive.

From the expression given in Theorem \ref{thm:div}, we see that for fixed $x$  the zeros of $A_{N}^{(\ell,\rho)}(x,y)$ are given by the negative of the eigenvalues of the matrix
\[
  M(x;\ell,\rho) = \Tridiag{2 (2x-1) + \frac{1-4 \ell^2}{2 \left(-2 i+\ell+\rho -\frac{1}{2}\right)\left(-2 i+\ell+\rho +\frac{3}{2}\right)}}{\frac{(-2 i + \rho )(-2 i + \rho-1)}{\left(\ell-2 i+\rho -\frac{1}{2}\right)\left(\ell-2 i+\rho +\frac{1}{2}\right)  }}{\frac{ (2 (\ell-i)+\rho ) (2 (\ell-i)+\rho-1 )}{\left(\ell -2 i+ \rho -\frac{1}{2}\right) \left(\ell -2 i+\rho -\frac{3}{2}\right) }}{1\le i\le \ell}.
\]

\begin{thm}
  For $\rho=0,1$, we have $A_{N}^{(\ell,\rho)}(x,y)$ for $x,y>0$.
\end{thm}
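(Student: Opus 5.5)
Fix $x>0$ and regard $A_{N}^{(\ell,\rho)}(x,y)$ as a polynomial in $y$ of degree $\ell$ with leading coefficient $(N+1)_\ell(N+\rho+\tfrac12)_\ell\det \bm{D}_\ell^{(N,\rho)}>0$. It then suffices to show that $A_{N}^{(\ell,\rho)}(x,\cdot)$ has no roots $y>0$; since its leading coefficient is positive, it is positive for large $y$, and so it is positive on all of $y>0$. Following \cite{KRW2017}, I plan a continuity and root-tracking argument in the variable $x$, starting from the two boundary computations already established.

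\textbf{Real-rootedness.} First, the roots in $y$ are real for every $x\ge 0$. Writing $A_{N}^{(\ell,\rho)}(x,y)=c\,\det\bigl(\bm{D}_\ell^{(N,\rho)}\bigr)\det\bigl(y+(\bm{D}_\ell^{(N,\rho)})^{-1}(2(2x-1)\bm{I}_\ell+\bm{V}_\ell^{(\rho)})\bigr)$, the roots are minus the eigenvalues of the tridiagonal matrix $(\bm{D}_\ell^{(N,\rho)})^{-1}M(x;\ell,\rho)$. For $\rho\in\{0,1\}$ the product of each pair of off-diagonal entries is
\[
\frac{(2i-\rho)(2i-\rho+1)(2(\ell-i)+\rho)(2(\ell-i)+\rho-1)}{(\ell-2i+\rho-\frac12)^2(\ell-2i+\rho+\frac12)(\ell-2i+\rho-\frac32)}
\]
times positive diagonal factors. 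I will check that this is nonnegative for $1\le i\le \ell-1$; the denominators pair up to give a product of the form $(m^2-\tfrac14)(m-1)$-type factors, and these need a sign check. Wherever the product is positive, a diagonal similarity symmetrizes the matrix, so its spectrum is real. Wherever it vanishes, the determinant factors into smaller blocks, and each block is treated the same way.

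\textbf{Locating the roots.} With all roots real, I track them as $x$ increases from $0$. At $x=0$ the preceding proposition gives the roots $y=-4i(\ell-i)\le 0$ for $i=0,\dots,\ell-1$, so none is positive, but one sits at $y=0$. At $y=0$ and $x>0$ the other proposition gives $A_N^{(\ell,\rho)}(x,0)=4^\ell(\cdots)x^\ell\ne0$, so no root crosses $y=0$ for $x>0$. Since the roots depend continuously on $x$, the only danger is the root that starts at $y=0$ when $x=0$. I will show it moves into $y<0$ for small $x>0$. To do this, expand $A(x,y)$ near the origin: $A(x,y)\approx \partial_yA(0,0)\,y+\partial_xA(0,0)\,x$. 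Here $\partial_yA(0,0)=\prod_{i=1}^{\ell-1}4i(\ell-i)>0$. If instead $\partial_xA(0,0)=0$, which is the case when $\ell\ge2$ because $A(x,0)\propto x^\ell$, I use $A(x,0)=c\,x^\ell>0$ together with $\partial_yA>0$ near the origin; the implicit function theorem then gives $y(x)\approx -c x^\ell/\partial_yA(0,0)<0$. Combining these, all roots lie in $y\le0$ for every $x>0$, which proves positivity. As a sanity check, this agrees with the case $A_N^{(1,\rho)}=2(N+1)(2N+2\rho+1)x+y$ and with the displayed examples having positive coefficients.

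\textbf{Main obstacle.} The main obstacle is the sign check of the off-diagonal products in the real-rootedness step. The half-integer denominators $\ell-2i+\rho\pm\tfrac12,\ \ell-2i+\rho-\tfrac32$ may change sign as $i$ passes $\ell/2$, so the products could become negative for some $i$, and symmetrization would fail. If that happens, I would fall back on showing real-rootedness through the factorization $P_{N+\ell}^{(N+\ell,\rho,-\ell)}=A\cdot P_N^{(N,\rho,\ell)}$ of Theorem \ref{thm:div}. For $x>0$, the original three-term recurrence \eqref{eq:defConsP} for $P_{N+\ell}^{(N+\ell,\rho,-\ell)}$ has off-diagonal products $4k(k-1)(\cdots)x$, which are nonnegative with the zero at $k=\ell+1$ giving the block split. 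So $P_{N+\ell}^{(N+\ell,\rho,-\ell)}$ is a determinant of a symmetrizable matrix and is real-rooted in $y$, and $A$, as a factor of it, inherits real-rootedness.
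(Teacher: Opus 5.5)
Your argument goes through, but only via your fallback: the symmetrization in your first step fails for every $\ell\ge 2$. For $1\le i\le \ell-1$ and $\rho\in\{0,1\}$ the numerator of the off-diagonal product is positive. Writing $m=\ell-2i+\rho$, the denominator is $(m-\tfrac12)^2(m+\tfrac12)(m-\tfrac32)$, which is negative exactly when $m\in\{0,1\}$, and there is always exactly one such middle index. For example, $M(x;2,0)$ has diagonal $4x+8,\,4x-8$ and off-diagonal entries $-24$ and $\tfrac83$. No off-diagonal entry vanishes, so there is no block splitting to rescue this plan.

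The fallback through $P_{N+\ell}^{(N+\ell,\rho,-\ell)}=A_N^{(\ell,\rho)}P_N^{(N,\rho,\ell)}$ is the same real-rootedness argument the paper invokes, and it works. However, there is no zero at $k=\ell+1$. In \eqref{eq:defConsP} with $N+\ell$ in place of $N$, the products $4k(k-1)(2(N+\ell-k+1)+\rho)(2(N+\ell-k+1)+\rho-1)x$, $2\le k\le N+\ell$, are all strictly positive for $x>0$. So the continuant is symmetrizable directly, with no splitting. The vanishing at index $N$ exploited in Theorem \ref{thm:div} lives in the conjugated form \eqref{eq:threetermMod}, not in the recurrence.

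Once real-rootedness is in place, you and the paper share the same continuity framework. $A_N^{(\ell,\rho)}$ is monic in $y$, and $A_N^{(\ell,\rho)}(x,0)=4^\ell(N+1)_\ell(N+\rho+\tfrac12)_\ell x^\ell\neq0$ for $x>0$. Hence the ordered roots $r_1(x)\le\dots\le r_\ell(x)$ each have constant sign on $x>0$. The difference is where the sign is pinned down.

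The paper anchors at a large $\bar x$. There $2(2\bar x-1)$ dominates the diagonal, and Gershgorin (after the harmless positive row scaling by $\bm{D}_\ell^{(N,\rho)}$) places every root in $y<0$.

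You anchor at $x=0$, using $A_N^{(\ell,\rho)}(0,y)=\prod_{i=0}^{\ell-1}(y+4i(\ell-i))$ and an implicit-function argument for the simple root at $y=0$. This trades the large-$x$ estimate for the second boundary formula, which the paper's proof does not need. It is valid and can be simplified further. The $\ell-1$ roots starting at $-4i(\ell-i)<0$ stay negative for all $x>0$. Then $A_N^{(\ell,\rho)}(x,0)=\prod_j(-r_j(x))>0$ forces $r_\ell(x)<0$ as well, with no implicit function theorem needed.
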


\begin{proof}
  We proceed as in Lemma 3.10 of \cite{KRW2017}. As mentioned above, what it is left to verify that for a large enough $\bar{x}$ the polynomial $A_{N}^{(\ell,\rho)}(\bar{x},y)$ has all negative roots with respect to $y$. This is equivalent to checking that all the eigenvalues of the matrix
\[
  M(x;\ell,\rho) = \Tridiag{2 (2x-1) + \frac{1-4 \ell^2}{2 \left(-2 i+\ell+\rho -\frac{1}{2}\right)\left(-2 i+\ell+\rho +\frac{3}{2}\right)}}{\frac{(-2 i + \rho )(-2 i + \rho-1)}{\left(\ell-2 i+\rho -\frac{1}{2}\right)\left(\ell-2 i+\rho +\frac{1}{2}\right)  }}{\frac{ (2 (\ell-i)+\rho ) (2 (\ell-i)+\rho-1 )}{\left(\ell -2 i+ \rho -\frac{1}{2}\right) \left(\ell -2 i+\rho -\frac{3}{2}\right) }}{1\le i\le \ell}.
\]
are positive for $\bar{x}$. In this case, we verify this using the Gershgorin circle theorem with $\bar{x}$ such
that
\begin{align*}
  2(2\bar{x}-1) >& 1+ \frac{4 \ell^2-1}{2 \left(-2 i+\ell+\rho -\frac{1}{2}\right)\left(-2 i+\ell+\rho +\frac{3}{2}\right)} \\
                 &+ \frac{(-2 i + \rho )(-2 i + \rho-1) (2 (\ell-i)+\rho ) (2 (\ell-i)+\rho-1 )}{\left(\ell-2 i+\rho -\frac{1}{2}\right)\left(\ell-2 i+\rho +\frac{1}{2}\right)\left(\ell -2 i+ \rho -\frac{1}{2}\right) \left(\ell -2 i+\rho -\frac{3}{2}\right)  },
\end{align*}
for all $i=1,2,\cdots,\ell$, completing the proof. 
\end{proof}

These two results of this section, that is, the divisibility of the constraint polynomials and the positivity of the inter-constraint polynomial $A_{N}^{(\ell,\rho)}(x,y)$, along with the proof of linear independence of the solutions are sufficient to give a characterization of degeneracies and describe the spectral structure of the 2pAQRM in Section \ref{sec:degstruct}.

\section{Another significance of constraint polynomials}
\label{sec:qtpol}

The natural setting for the constraint polynomials is to describe Juddian solutions and shared-parity degenerate eigenvalues. Indeed, as we have seen in the previous sections, the existence of these solutions are in correspondence with algebraic properties of the constraint polynomials. It is more surprising that their role seems to extend beyond Juddian solutions and the fact that they appear to contain, at least approximately, information about the complete spectrum of these models.

There is a precedent for this in the generalized adiabatic approximation (GAA) for the AQRM proposed in \cite{LB2021b}. The authors obtain an improved agreement with all eigenvalues for a wide range of parameters $g,\Delta>0$ by replacing the Laguerre polynomials, used in the usual adiabatic approximation and coming from the case $\Delta = 0$ (see e.g. \S3 in \cite{KRW2017}), with constraint polynomials, in particular, there is agreement on the degenerate point. To the best knowledge of the authors, such a GAA has not been developed for 2pAQRM, but we expect a similar agreement with the spectral curves, mainly because of the similarities between the polynomials in the two models and the existence of a covering relation \cite{RBW2023, N2024}.

The GAA applies only for eigenvalue curves containing crossings. Therefore, it does not apply to a number of lower
energy eigenvalue curves. For the AQRM, in \cite{RBW2022} the authors observed that the inter-constraint polynomial,
employing the naming in this paper, gives an excellent approximation of these eigenvalue curves for large coupling. 

In this section, we develop a similar point of view for the 2pAQRM based on the spectral approximation properties
of the constraint polynomials. In particular, we attempt to give a more concrete approach to the phenomenon based
on the algebraic and geometric properties of the curves.

We remark here that there appears to be an explicit connection between the inter-constraint polynomial and the
symmetric operators (cf. Section \ref{sec:symdeg}), giving more evidence to significant position of the constraint polynomials in the theory of the QRM and its generalizations.

\subsection{Properties of the inter-constraint polynomial}
\label{sec:propicp}

A consequence of the explicit expression for the inter-constraint polynomial in Theorem \ref{thm:div} is that
the parameter $N$ of the degenerate Juddian eigenvalue $\lambda = \tau (2N + \rho + \tfrac12 + \ell)$ may be treated as another variable. This idea is the key to unveil various mathematical structures in the 2pAQRM. In this section, we start by describing the properties of the polynomial in the new variable.

Let us write $A^{(\ell,\rho)}(N,x,y) := A_{N}^{(\ell,\rho)}(x,y)$. We first see that by applying a shift on this variable, we can consider both the even and odd cases (i.e. $\bar{\rho}=0,1$) in a unified way.

\begin{lem}
  \label{lem:A01}
  For $\bar{\rho}=0,1$, we have
  \begin{align}
    \label{eq:commonPolyA}
        A^{(\ell,0)}(\tfrac{\lambda}{2\tau} - \tfrac14 -\tfrac{\ell}2,(2g)^2,\Delta^2) = A^{(\ell,1)}(\tfrac{\lambda}{2\tau} - \tfrac{1}2- \tfrac14 -\tfrac{\ell}2,(2g)^2,\Delta^2),
  \end{align}
  that is, the polynomial $A^{(\ell,\bar{\rho})}(\tfrac{\lambda}{2\tau} - \tfrac{\bar{\rho}}2- \tfrac14 -\tfrac{\ell}2,x,y)$ in $\lambda,x,y$ does not depend on $\bar{\rho}$.
\end{lem}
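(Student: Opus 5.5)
The plan is to prove \eqref{eq:commonPolyA} directly from the explicit determinant expression for $A_{N}^{(\ell,\rho)}(x,y)$ in Theorem \ref{thm:div}, with $N$ treated as a formal variable. Put $s := \tfrac{\lambda}{2\tau} - \tfrac14 - \tfrac{\ell}{2}$. The claim is then $A^{(\ell,0)}(s,x,y) = A^{(\ell,1)}(s-\tfrac12,x,y)$ as polynomials in $s,x,y$. It suffices to prove this as an identity of rational functions in $s$, because both sides are polynomial once $\rho\in\{0,1\}$ is fixed. Concretely, I will show that the two determinants coincide after the index reversal $i \mapsto \ell+1-i$, that is, after conjugating by the antidiagonal permutation matrix $\bm{J}_\ell$. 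This conjugation preserves the determinant and the identity part $2\bm{I}_\ell(2x-1)$.

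First, the scalar prefactor. For $\rho=0$ and $N=s$ it is $(s+1)_\ell(s+\tfrac12)_\ell$. For $\rho=1$ and $N=s-\tfrac12$ it is $(s+\tfrac12)_\ell(s+1)_\ell$. These agree.

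Second, the diagonal matrix $\bm{D}_\ell^{(N,\rho)}$. For $\rho=0$ its $i$-th entry is $1/\bigl((s+i)(s+\ell-i+\tfrac12)\bigr)$. For $\rho=1$ it is $1/\bigl((s+i-\tfrac12)(s+\ell-i+1)\bigr)$. Replacing $i$ by $\ell+1-i$ in the latter gives $1/\bigl((s+\ell-i+\tfrac12)(s+i)\bigr)$, which is the $\rho=0$ entry. Hence $\bm{J}_\ell \bm{D}_\ell^{(s-1/2,1)}\bm{J}_\ell = \bm{D}_\ell^{(s,0)}$.

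Third, the matrix $\bm{V}_\ell^{(\rho)}$, which does not involve $N$. For the diagonal, substituting $i\mapsto \ell+1-i$ in the $\rho=1$ denominator gives $-2(\ell+1-i)+\ell+1 = 2i-\ell-1$. The factors therefore become $(2i-\ell-\tfrac32)(2i-\ell+\tfrac12) = (-2i+\ell+\tfrac32)(-2i+\ell-\tfrac12)$, which is exactly the $\rho=0$ denominator. For the off-diagonal part, conjugation by $\bm{J}_\ell$ reverses the diagonals and swaps the super- and subdiagonals. The determinant of a tridiagonal matrix depends only on the diagonal entries and the products $b_ic_i$ of opposite off-diagonal entries. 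So it remains to check that $b_i^{(1)}c_i^{(1)}$ at position $\ell-i$ equals $b_i^{(0)}c_i^{(0)}$ at position $i$, for $1\le i\le \ell-1$.

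This last verification is the one step that needs genuine care, and it is where I expect the main obstacle. The product at position $i$ involves factors such as $(-2i+\rho)(-2i+\rho-1)(2(\ell-i)+\rho)(2(\ell-i)+\rho-1)$ divided by four half-integer-shifted linear factors. Under $i\mapsto \ell-i$ and $\rho: 1\mapsto 0$, the numerator factors should pair up as $\{2(\ell-i)+1,\,2(\ell-i)\}\leftrightarrow\{2(\ell-i)-1,\,2(\ell-i)\}$ and so on, up to sign. The denominators should match in the same way as the diagonal computation. If the naive pairing fails by a factor, the fallback is to compare the three-term recurrences (continuants) of the two determinants directly, in the manner of the proof of the formula for $A_{N}^{(\ell,\rho)}(x,0)$. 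An alternative fallback is to use the factorization $P_{N+\ell}^{(N+\ell,\rho,-\ell)} = A_N^{(\ell,\rho)}P_N^{(N,\rho,\ell)}$ together with the analogous shift symmetry of the constraint polynomials in Definition \ref{dfn:copo}, where $N$ and $\rho$ enter only through $2N+\rho$ apart from the factor $(2(N-k+1)+\rho)(2(N-k+1)+\rho-1)$.
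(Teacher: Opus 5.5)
Your proposal is essentially the paper's proof: it matches the prefactors $(N+1)_\ell(N+\rho+\tfrac12)_\ell$, then shows that the $\rho=1$ tridiagonal matrix is carried to the $\rho=0$ one by the index reversal $i\mapsto \ell+1-i$ (the paper phrases this as expanding the continuant from opposite corners). The step you flag as the main obstacle is routine and even holds entrywise, so no fallback is needed. The superdiagonal entry at position $\ell-i$ for $\rho=1$ equals the subdiagonal entry at position $i$ for $\rho=0$, and vice versa. Up to sign, the numerator factors match as $\{2i,2i+1\}\leftrightarrow\{2i,2i+1\}$ and $\{2(\ell-i),2(\ell-i)-1\}\leftrightarrow\{2(\ell-i),2(\ell-i)-1\}$, not via the pairing you wrote, and the denominators agree as in your diagonal computation. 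Hence $\bm{J}_\ell$ conjugates one matrix exactly onto the other.
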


\begin{proof}
  We verify the result from the determinant expressions for both $\rho=0$ and $\rho=1$ and the respective 
  change of variables of $N$.

  From the determinant expression of Theorem \ref{thm:div}, since the factor
  \[
    (N+1)_\ell(N+\rho+\tfrac12)_\ell
  \]
  is equal to
  \[
    (\tfrac{\lambda}{2\tau} + \tfrac{1}4 - \tfrac{\ell}{2})_\ell (\tfrac{\lambda}{2\tau} + \tfrac{3}4 - \tfrac{\ell}{2})_\ell
  \]
  in both cases. Therefore, it is enough to verify that in both cases the determinant
  \begin{align}
    \label{eq:detexp5}
    \det \left(2 \bm{I}_\ell (2x-1) + \bm{D}_\ell^{(N,\rho)} y + \bm{V}_\ell^{(\rho)} \right)
  \end{align}
  is equal.
  Define
  \[
    d_i^{(\rho)} := (\tfrac{\lambda}{2\tau} - \tfrac{\rho}2- \tfrac14 -\tfrac{\ell}2+i)(\tfrac{\lambda}{2\tau} - \tfrac{\rho}2 + \tfrac{\ell}2-i+\rho+\tfrac14),
  \]
  and notice that $d_i^{(\rho)}$ corresponds to the denominators in $\bm{D}_\ell^{(N,\rho)}$.
  Then we verify directly that
  \[
    d_{\ell-i+1}^{(1)} = d_{i}^{(0)}.
  \]
  
  Similarly, defining
  \begin{align*}
    c_i^{(\rho)} &:= ((-2i + \ell) \rho - \tfrac{1}2)(-2 i + \ell + \rho + \tfrac32), \\
    u_i^{(\rho)} &:= \frac{(-2 i + \rho)(-2 i + \rho - 1)}{(\ell-2 i + \rho - \tfrac12)(\ell - 2 i + \rho + \tfrac12)}, \\
    l_{i}^{(\rho)} &:= \frac{(2 (\ell-i) + \rho)(2 (\ell- i) + \rho - 1)}{(\ell-2 i + \rho - \tfrac12)(\ell - 2 i + \rho - \tfrac32)}
  \end{align*}
  corresponding to the diagonal, upper diagonal and lower diagonal entries of $\bm{V}_\ell^{(\rho)}$.
  We verify directly then that
  \[
    c_{\ell-i+1}^{(1)} = c_{i}^{(0)},
  \]
  and
  \[
    u_{\ell-i}^{(1)} = l_{i}^{(0)}, \qquad l_{\ell-i}^{(1)} = u_{i}^{(0)}.
  \]
  Therefore, when we expand the determinant as a continuant, the expansion of \eqref{eq:detexp5} for $\rho=0$ starting at the top-left corner is equal to the expansion of \eqref{eq:detexp5} for $\rho=1$ starting at the bottom-right corner, proving the result.
\end{proof}

Another interpretation of the lemma above is that the matrices in the determinant expression
of both sides of \eqref{eq:commonPolyA} (that is, for $\bar{\rho}=0$ and $\bar{\rho}=1$) differ
only by a permutation of rows and columns, and therefore the determinants are equal.

\begin{dfn}
  We define $A^{(\ell)}(z)$ as the common value of \eqref{eq:commonPolyA} for $x=(2g)^2,y=\Delta^2$, that is,
  \[
    A^{(\ell)}(z) := A^{(\ell,0)}(\tfrac{\lambda}{2\tau} - \tfrac14 -\tfrac{\ell}2,(2g)^2,\Delta^2)
  \]
  where $z = \tfrac{\lambda}{\tau}$. In the following, for instance, when $\Delta$ is fixed and $A^{(\ell)}(z)$ is regarded as a
  function of $z$ and $g$, we will denote it as $A^{(\ell)}(z;g)$.
\end{dfn}

From the determinant expression it is easy to verify the following property.

\begin{prop}
  \label{prop:coefA}
  The polynomial $A^{(\ell)}(z)$ is a polynomial with integer coefficients, that is,  $A^{(\ell)}(z) \in \Z[z,g,\Delta]$. \qed
\end{prop}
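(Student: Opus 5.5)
The plan is to show first that $A^{(\ell)}(z)$ is a polynomial in $z$ with rational coefficients, then that it takes integral values at infinitely many arithmetic points, and finally to upgrade this to integral coefficients by clearing denominators directly in the determinant expression of Theorem \ref{thm:div}. I would take $\rho=0$ throughout, which suffices by Lemma \ref{lem:A01}.

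\emph{Polynomiality in $z$.} Expand the determinant in Theorem \ref{thm:div} as a continuant. The only $N$-dependence sits in $\bm{D}_\ell^{(N,0)}$, whose $i$-th diagonal entry is $1/\big((N+i)(N+\ell-i+\tfrac12)\big)$. The prefactor is
\[
(N+1)_\ell(N+\tfrac12)_\ell=\prod_{i=1}^{\ell}(N+i)(N+\ell-i+\tfrac12),
\]
so it is exactly the product of these denominators. Hence every term of the expansion becomes a polynomial in $N$: each $y$ chosen at position $i$ absorbs the corresponding factor of the prefactor, and the remaining factors stay as polynomial multipliers. The entries of $\bm{V}_\ell^{(0)}$ and $2(2x-1)$ do not depend on $N$, so the coefficient of each monomial $x^ay^b$ lies in $\Q[N]$. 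Substituting $N=\tfrac z2-\tfrac14-\tfrac\ell2$, $x=4g^2$ and $y=\Delta^2$ gives $A^{(\ell)}\in\Q[z,g,\Delta]$.

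\emph{Integrality at special points.} For every integer $N\ge0$, Theorem \ref{thm:div} gives $A_N^{(\ell,0)}=P_{N+\ell}^{(N+\ell,0,-\ell)}/P_N^{(N,0,\ell)}$. By the recurrence \eqref{eq:defConsP}, both constraint polynomials lie in $\Z[x][y]$ and are monic in $y$. Division by a monic polynomial in $y$ stays inside $\Z[x][y]$, so $A_N^{(\ell,0)}\in\Z[x,y]$, and its leading coefficient in $y$ equals $1$. Consequently every coefficient polynomial $c_{a,b}(z)$ of $x^ay^b$ takes integer values at all $z\in 2\Z_{\ge0}+\ell+\tfrac12$. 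Since $x=4g^2$, the coefficient of $g^{2a}\Delta^{2b}$ in $A^{(\ell)}$ is $4^a c_{a,b}(z)$, which supplies an additional factor $4^a$.

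\emph{Upgrading to $\Z[z]$ (main obstacle).} Integer-valuedness on a progression does not by itself force integer coefficients, so I expect this step to be the hardest. First I would rewrite the expression in terms of $w=2z$. Then $(N+i)(N+\ell-i+\tfrac12)=\tfrac1{16}(w-1-2\ell+4i)(w+1+2\ell-4i)$, whose numerator is a product of two odd-shifted linear forms with integer coefficients. Next I would scale row $i$ of the matrix by $16(N+i)(N+\ell-i+\tfrac12)$, which uses up the prefactor apart from the power $16^{-\ell}$.

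After this scaling I would check, term by term in the continuant expansion, two things:
\begin{itemize}
\item the factors $2(2x-1)=2(8g^2-1)$, together with the odd denominators of $\bm{V}_\ell^{(0)}$ (which are products $(\ell-2i\pm\tfrac12)(\ldots)$ of half-odd integers), combine with the scaled diagonal so that all odd primes in the denominators cancel;
\item the remaining powers of $2$ are compensated by the factor $16$ in $16g^2\cdot(\cdots)$ and by the $w$-substitution.
\end{itemize}
For the odd part, the input I would rely on is that the integer-valuedness from the previous step holds for all $N\ge0$ together with the explicit form of the odd denominators, giving $p$-integrality for every odd prime $p$. For the $2$-adic part, I would first try an induction on the continuant recurrence (the analogue of $J_k$ in the proof of $A_N^{(\ell,\rho)}(x,0)$), showing that the $k$-th leading principal minor, multiplied by $\prod_{i\le k}16(N+i)(N+\ell-i+\tfrac12)$, lies in $\Z[w,g,\Delta]$. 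If that bookkeeping fails, the fallback is to combine the $p$-integrality for odd $p$ with the $4^a$ gain at $w\in 4\Z+2\ell+1$ and apply the standard binomial-basis (Newton interpolation) criterion for integer-valued polynomials, to bound the denominators that can occur.
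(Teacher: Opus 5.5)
Your Steps 1 and 2 are correct. The prefactor $(N+1)_\ell(N+\tfrac12)_\ell$ equals $\prod_i d_i$ with $d_i=(N+i)(N+\ell-i+\tfrac12)$, so $A^{(\ell)}\in\Q[z,g,\Delta]$. The monic-division argument then shows that every coefficient $c_{a,b}(z)$ of $x^ay^b$ is an integer at $z\in 2\Z_{\ge0}+\ell+\tfrac12$. (The paper gives no argument beyond saying integrality can be read off from the determinant of Theorem \ref{thm:div}.)

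The gap is Step 3, which carries the whole content of the statement. Values do not control denominators here, not even integer values on all of $\tfrac12+\Z$:
\begin{itemize}
\item $\tfrac13(4z^2-1)(4z^2-9)$ is integer-valued on $\tfrac12+\Z$, and the prime $3$ genuinely occurs in $\bm{V}_3^{(0)}$ (its middle diagonal entry is $\tfrac{70}{3}$). So ``integer values plus the explicit odd denominators'' does not give $p$-integrality for odd $p$.
\item The Newton-basis fallback only characterizes integer-valued polynomials. For example, $\tfrac1{384}(4z^2-1)(4z^2-9)$ is integer-valued on $\tfrac12+\Z$ but has $z^4$-coefficient $\tfrac1{24}$.
\item For the pure $\Delta^{2b}$ coefficients ($a=0$) there is no $4^a$ gain at all.
\end{itemize}

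Your inductive invariant is also false. Take $\ell=3$, $w=2z$, and let $T_k$ be the $k$-th leading principal minor of the matrix in Theorem \ref{thm:div}. Then
\[
  16^2 d_1 d_2\, T_2\big|_{g=\Delta=0}=\tfrac{128}{3}\,(w^2-1)(w^2-9),
\]
so the odd denominators cancel only in the full $3\times3$ determinant, not minor by minor. Even a valid invariant of that type would only give $16^\ell A^{(\ell)}\in\Z[2z,g,\Delta]$, which still allows powers of $2$ in the denominators.

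What is missing is a coefficient-level mechanism for this global cancellation, for instance a matrix or recurrence for $A^{(\ell)}(z)$ whose entries already lie in $\Z[z,g,\Delta]$. Two structural facts you have not used:
\begin{itemize}
\item $\bm{V}_\ell^{(0)}-2\bm{I}_\ell$ is nilpotent; this is exactly the identity $A_N^{(\ell,0)}(x,0)=4^\ell(N+1)_\ell(N+\tfrac12)_\ell x^\ell$.
\item $z$ enters only through the scalar part of $\Diag(d_i)=\tfrac{z^2}{4}\bm{I}_\ell-\tfrac1{16}\Diag\big((2\ell-4i+1)^2\big)$.
\end{itemize}
For $\ell=2$, write $A^{(2)}=\det\big(\Delta^2\bm{I}_2+\Diag(d_i)(16g^2\bm{I}_2+\bm{V}_2^{(0)}-2\bm{I}_2)\big)$. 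Passing to a Jordan basis of the nilpotent part and rescaling already gives a $2\times2$ matrix with entries in $\Z[z,g,\Delta]$. Whether this generalizes to all $\ell$ would still have to be proved.
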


\begin{ex}
  \label{ex:Apoly1}
  We give the polynomial $A^{(\ell)}(z)$ for small values of $\ell$, 
  \begin{align*}
    A^{(1)}(z) &=  4 g^2 z^2 + \Delta ^2  -g^2, \\
    A^{(2)}(z) &= 16 g^4 z^4 + 4(2 g^2 \Delta ^2 -10 g^4) z^2 +  \Delta ^4 +4 \Delta ^2 +9 g^4 -10 \Delta ^2 g^2, \\
    A^{(3)}(z) &= 64 g^6 z^6 + (48 \Delta ^2 g^4-560 g^6) z^4 + (1036 g^6-280 \Delta ^2 g^4+12 \Delta ^4 g^2+64 \Delta ^2 g^2) z^2 \\
               & \qquad + \Delta ^6+16 \Delta ^4+64 \Delta ^2-225 g^6+259 \Delta ^2 g^4-35 \Delta ^4 g^2-272 \Delta ^2 g^2,
  \end{align*}
  where we recall that $z = \lambda/\tau$.
\end{ex}

Note that in the examples, the variable $z$ always appear in even powers. 

\begin{prop}
  \label{prop:evenpow}
  Let $\ell\geq 0$. Then $A^{(\ell)}(z)$ is a polynomial on $z^2$.
\end{prop}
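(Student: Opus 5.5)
The plan is to show that $A^{(\ell)}(-z)=A^{(\ell)}(z)$. We will get this from the determinant expression of Theorem \ref{thm:div}, combined with the row/column reversal already used in the proof of Lemma \ref{lem:A01}.

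Take $\rho=0$ and write $N=\tfrac{z}{2}-\tfrac14-\tfrac{\ell}{2}$. Then
\[
A^{(\ell)}(z)=(N+1)_\ell(N+\tfrac12)_\ell \det\bigl(2\bm{I}_\ell(2x-1)+\bm{D}_\ell^{(N,0)}y+\bm{V}_\ell^{(0)}\bigr),
\]
with $x=(2g)^2$ and $y=\Delta^2$. The key observation is that $z$ enters only through the prefactor and through the diagonal matrix $\bm{D}_\ell^{(N,0)}$. The matrix $\bm{V}_\ell^{(0)}$ is independent of $N$, hence of $z$.

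First, the prefactor. As in the proof of Lemma \ref{lem:A01}, it equals $(\tfrac z2+\tfrac14-\tfrac\ell2)_\ell(\tfrac z2+\tfrac34-\tfrac\ell2)_\ell$. Its factors are $\tfrac z2+\tfrac14-\tfrac\ell2+k$ and $\tfrac z2+\tfrac34-\tfrac\ell2+k$ for $0\le k\le\ell-1$, which form the $2\ell$-point set $\{\tfrac12(z+m)\}$ with $m$ running over $-\ell+\tfrac12,-\ell+\tfrac32,\dots,\ell-\tfrac12$. This set of $m$ is symmetric under $m\mapsto -m$. Hence under $z\mapsto -z$ the prefactor is multiplied by $(-1)^{2\ell}=1$, so it is even in $z$.

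Second, the denominators of $\bm{D}$. These are $d_i^{(0)}=(\tfrac z2-\tfrac14-\tfrac\ell2+i)(\tfrac z2+\tfrac\ell2-i+\tfrac14)$. Under $z\mapsto -z$ the product becomes $(-\tfrac z2-\tfrac14-\tfrac\ell2+i)(-\tfrac z2+\tfrac\ell2-i+\tfrac14)$. This is exactly $d_{\ell-i+1}^{(0)}$, which one checks by expanding with $i\mapsto \ell+1-i$. So $z\mapsto -z$ reverses the order of the diagonal entries of $\bm{D}_\ell^{(N,0)}$.

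It then suffices to show that the reversal $i\mapsto \ell+1-i$ (conjugation by the antidiagonal permutation $J$) preserves $\det(2\bm{I}(2x-1)+\bm{V}_\ell^{(0)}+\text{diag})$. For a tridiagonal matrix, conjugation by $J$ reverses the diagonal and swaps the upper and lower diagonals with reversed order. The determinant depends only on the diagonal entries and the products of the off-diagonal pairs $u_il_i$. So we need two checks: $c_{\ell+1-i}^{(0)}=c_i^{(0)}$ for the diagonal of $\bm{V}^{(0)}$, and $u_{\ell-i}^{(0)}l_{\ell-i}^{(0)}=u_i^{(0)}l_i^{(0)}$.

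For the diagonal, the entry $\frac{1-4\ell^2}{2(\ell-2i-\frac12)(\ell-2i+\frac32)}$ transforms under $i\mapsto\ell+1-i$ into a denominator with factors $(-\ell+2i-\tfrac32)(-\ell+2i+\tfrac12)$, which equals the original denominator. For the off-diagonal products, $u_il_i$ has numerator $(2i)(2i+1)(2(\ell-i))(2(\ell-i)-1)$ and a denominator built from $\ell-2i\pm\tfrac12$ and $\ell-2i-\tfrac32$; under $i\mapsto\ell-i$ both are mapped to themselves up to sign, by direct substitution. These are the same kind of identities as those listed in Lemma \ref{lem:A01} for $\rho\leftrightarrow 1-\rho$, now with $\rho$ fixed. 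The main obstacle is only bookkeeping of these index shifts and signs, in particular confirming that the pairing is $i\leftrightarrow \ell-i$ for the off-diagonals and $i\leftrightarrow\ell+1-i$ for the diagonal. If a sign issue appears, we would instead use the $\rho=0$/$\rho=1$ identities of Lemma \ref{lem:A01} together with the symmetry $N\mapsto -N-\ell-\rho-\tfrac12$ applied to the whole matrix.

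Combining the two steps gives $A^{(\ell)}(-z)=A^{(\ell)}(z)$. Since $A^{(\ell)}$ is a polynomial in $z$ (Proposition \ref{prop:coefA}), it is a polynomial in $z^2$.
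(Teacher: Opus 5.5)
Your prefactor argument and your final step are fine, but Step 2 rests on two identities that are both false.

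First, $d_i^{(0)}(-z)\neq d_{\ell-i+1}^{(0)}(z)$ in general. Under $z\mapsto -z$ each of the two factors of $d_i^{(0)}$ just changes sign and they trade places, so in fact $d_i^{(0)}(-z)=d_i^{(0)}(z)=\tfrac14\bigl(z^2-(\ell-2i+\tfrac12)^2\bigr)$. For $\ell=2$ one has $d_1^{(0)}=\tfrac{z^2}{4}-\tfrac1{16}$ and $d_2^{(0)}=\tfrac{z^2}{4}-\tfrac9{16}$, which are different.

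Second, $\bm{V}_\ell^{(0)}$ is not invariant under $i\mapsto\ell+1-i$. That substitution sends the denominator factors $(\ell-2i-\tfrac12)(\ell-2i+\tfrac32)$ to $(-\ell+2i-\tfrac52)(-\ell+2i-\tfrac12)$, not to $(-\ell+2i-\tfrac32)(-\ell+2i+\tfrac12)$. Concretely, the diagonal of $\bm{V}_2^{(0)}$ is $(10,-6)$. For $\ell=3$ the off-diagonal products are $u_1^{(0)}l_1^{(0)}=-384$ and $u_2^{(0)}l_2^{(0)}=\tfrac{128}{9}$, so they do not match either.

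Reversing the diagonal of $\bm{D}_\ell^{(N,0)}$ genuinely changes the determinant. For $\ell=2$ it turns the term $+4\Delta^2$ of $A^{(2)}(z)$ into $-4\Delta^2$. So your reduction to reversal invariance reduces to a false statement. The closing hedge about the $\rho=0/\rho=1$ identities is not an argument.

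The repair is simpler than what you attempted, and it is exactly the paper's proof. Each diagonal entry $1/d_i^{(0)}=4/\bigl(z^2-(\ell-2i+\tfrac12)^2\bigr)$ is already even in $z$. The matrix $\bm{V}_\ell^{(0)}$ does not involve $N$ at all. Hence every entry of the matrix in Theorem~\ref{thm:div} depends on $z$ only through $z^2$, and no permutation is needed.

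Moreover, the prefactor factors as $(N+1)_\ell(N+\tfrac12)_\ell=\prod_{i=1}^{\ell}d_i^{(0)}$. This gives its evenness immediately; your symmetric-set argument for it is also correct. It also shows that multiplying row $i$ by $d_i^{(0)}$ clears all denominators, so $A^{(\ell)}(z)$ is a polynomial in $z^2$.
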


\begin{proof}
  By Lemma \ref{lem:A01}, it is enough to verify the result $\rho=0$. First, we replace
  $N = \tfrac{z}2 - \tfrac14 - \tfrac{\ell}2 $ in
  \begin{align}
    \label{eq:lemprod1}
    (N+i)(N+\ell-i+\tfrac12),
  \end{align}
  to obtain
  \[
    \frac{1}4 (z^2 - (\ell-2 i + \tfrac12)^2).
  \]
  Next, we notice that $N$ appears only in products of type \eqref{eq:lemprod1} inside the determinant of the
  expression of $A^{(\ell,0)}(N,x,y)$ in Theorem \ref{thm:div}. Similarly, by rearranging the factors of
  \[
    (N+1)_\ell (N+\rho+\tfrac12)_\ell,
  \]
  into products of type \eqref{eq:lemprod1} and replacing $N = \tfrac{z}2 - \tfrac14 - \tfrac{\ell}2 $ we see
  that $z$ appears only in even powers in the determinant expression of $A^{(\ell,0)}(N,x,y)$ and the result follows.
\end{proof}

The fact that the variable $z$ appears in even powers is relevant for the explicit relation between the
inter-constraint polynomial and the hidden symmetry operator in Section \ref{sec:symdeg}. 

\subsection{Excellent approximation by algebraic curves}
\label{sec:excapprox1}

While the natural setting for the inter-constraint polynomial is that of Juddian solutions, using the
variable $z$ we can connect these polynomials with general eigenvalues. For instance, we might ask if there is any relation between the curves described by the polynomial $A^{(\ell)}(z)$ and the spectrum of the 2pAQRM.

For instance, for $\Delta>0$ fixed, in the $(g,E)$-plane, the curves described by
\begin{align}
  \label{eq:pcurves}
  A^{(\ell)}(E)&=0 \qquad  \text{(normalized)} \\
  A^{(\ell)}(\tau E)&=0, \qquad  \text{(non-normalized)}  \nonumber 
\end{align}
give an approximation of the $\ell$ lower energy curves of each parity for the 2pAQRM, as shown in Figure \ref{fig:EA1}.
This is 2pAQRM version of the excellent approximation conjecture of the AQRM (see e.g. \cite{RBW2022}).

\begin{figure}[ht]
  \centering
  \subfloat[$\ell=1$]{
    \includegraphics[height=4cm]{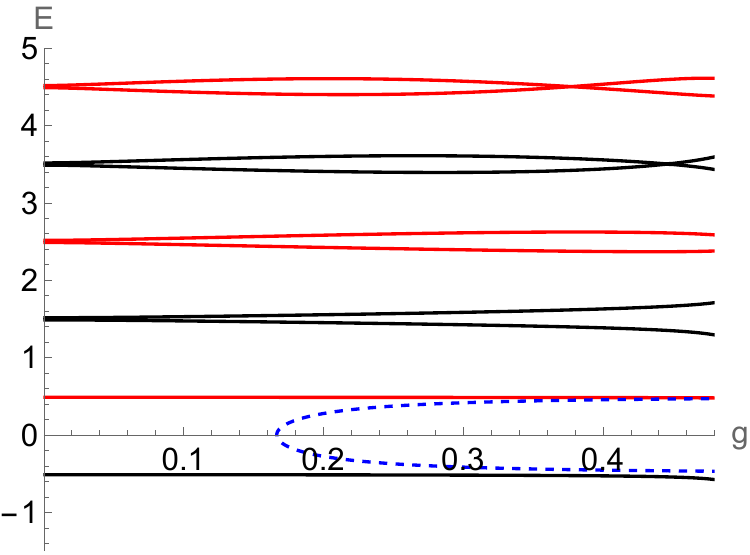}}
  ~ \qquad
  \subfloat[$\ell=2$]{
    \includegraphics[height=4cm]{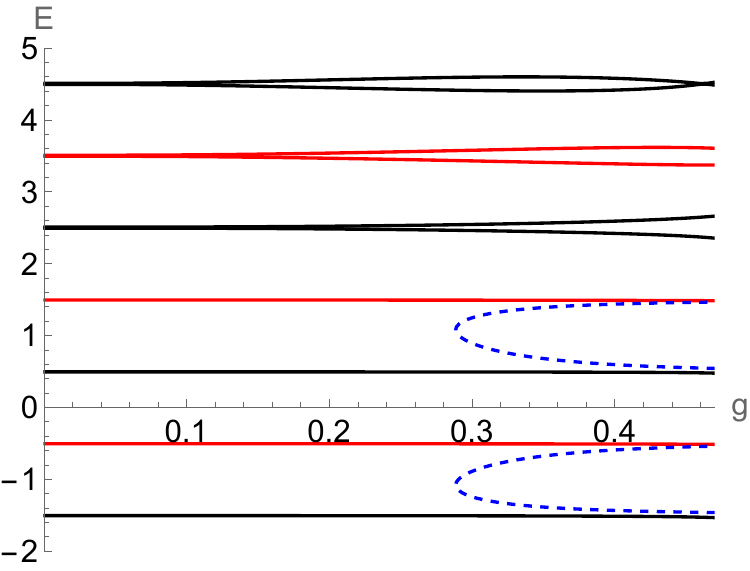}} \\
  \subfloat[$\ell=5$]{
    \includegraphics[height=4cm]{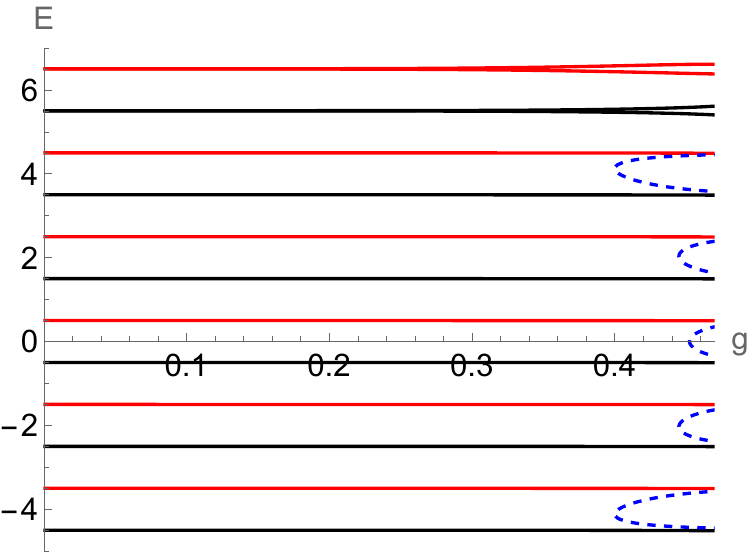}}  
  \caption{Normalized spectral curves  \eqref{eq:pcurves} (blue) for $\Delta=\frac16$ for $\ell \in \{1,2,5\}$.}.
  \label{fig:EA1}
\end{figure}

In our previous work \cite{RBW2022} we did not state explicitly a conjecture for this phenomenon for the AQRM. It is not easy to prove that the curves \eqref{eq:pcurves} do not intersect the spectral curves or to give a quantitative expression for the approximation. In fact, we can see that in the limit cases $\Delta \to 0$ and $g\to \infty$ (see  \cite{RBW2024} or \cite{HS2025}) these curves actually coincide, while for small $g$ there is no approximation at all. Moreover, the locus of the curves \eqref{eq:pcurves} may be empty for $\Delta$ large enough. The significance of the approximation appears to be related to another polynomial associated to the hidden symmetry, as we explain in Section \ref{sec:symdeg} below.

We also note that the curves \eqref{eq:pcurves} are symmetric with respect the $g$-axis due to fact that the polynomial $A^{(\ell)}(z)$ is a polynomial in $z^2$ (cf. Proposition \ref{prop:evenpow}). If the approximation holds, then it also implies that there is a non-trivial symmetry (with respect to $0$) for low energy curves, at least for $g$ large enough.

In this section we focus on the algebraic properties of the curves \eqref{eq:pcurves}. For instance, in the graphs in Figures \ref{fig:EA1}, it is clear that for large $g$, the polynomial $A^{(\ell)}(z)$ has $\ell$ real roots. The hyperellipitic curve appearing in this conjecture arises from the discussion of the hidden symmetry (see Section \ref{sec:symdeg}). In fact, a hyperelliptic curve appears naturally in that setting and we conjecture that it is equivalent to the one defined by \eqref{eq:Helliptic} (see Conjecture \ref{conj:symDeg}).

\begin{conject}
  For sufficiently large $g$, there are exactly $\ell$ real roots of $A^{(\ell)}(x;g)=0$. Particularly, the algebraic curve
  \begin{align}
    \label{eq:Helliptic}
    y^2= A^{(\ell)}(x;g)
  \end{align}
  defines a hyperelliptic curve. Moreover, there are exactly $[\ell/2]$-points, where the polynomial
  $A^{(\ell)}(x,g)=0$ has multiple real roots.
\end{conject}

As an illustration of the conjecture, let us consider the case of $\ell=3$. In this case, the equation
\begin{align}
  \label{eq:elliptic}
  y^2 = A^{(3)}(z),
\end{align}
describes an elliptic curve for generic values of parameters $g$ and $\Delta$. Since $A^{(3)}(z)$ is a polynomial in
$z^2$ let us take
\[
  x = \left(2g z\right)^2,
\]
so that \eqref{eq:elliptic} is an elliptic curve $\mathcal{E}$ in Weierstrass form with coefficients
\begin{align*}
  a_2 &= 3 \Delta ^2-35 g^2, \\
  a_4 &= 3 \Delta ^4+16 \Delta ^2+259 g^4-70 \Delta ^2 g^2, \\
        a_6 &= \Delta ^6+16 \Delta ^4+64 \Delta ^2-225 g^6+259 \Delta ^2 g^4-35 \Delta ^4 g^2-272 \Delta ^2 g^2.
\end{align*}
After the customary transformations (see e.g. \cite{K1992}),  \eqref{eq:elliptic} is reduced to
\[
  y^2 = x^3 - 27 c_4 x - 54 c_6,
\]
where 
\begin{align*}
  c_4 &= 256 \left(28 g^4-3 \Delta ^2\right) \\
  c_6 &= 2048 \left(160 g^6+9 \Delta ^2 \left(4 g^2-3\right)\right),
\end{align*}
with the discriminant
\[
  {\rm Disc}(\mathcal{E}) = 2^{16} \left(-4 \Delta ^6+2304 g^{12}+64 \Delta ^2 \left(5-23 g^2\right) g^6+\Delta ^4 \left(64 g^4+72 g^2-27\right)\right).
\]

For fixed $\Delta$, the discriminant equation ${\rm Disc}(\mathcal{E})=0$ is a degree $12$ equation with respect to $g$ (degree $6$ on $g^2$). For $\Delta = \frac16$, the discriminant has only $2$ real roots, the positive root being \( \bar{g} \approx 0.34578 \). In Figure \ref{fig:EA2} we show the approximation of elliptic curves by the inter-constraint polynomial (the line in blue
corresponds to $g=\bar{g}$) and the singular elliptic curve corresponding to this parameter.

\begin{figure}[ht]
  \centering
  \subfloat[$\ell=3$]{
    \includegraphics[height=4cm]{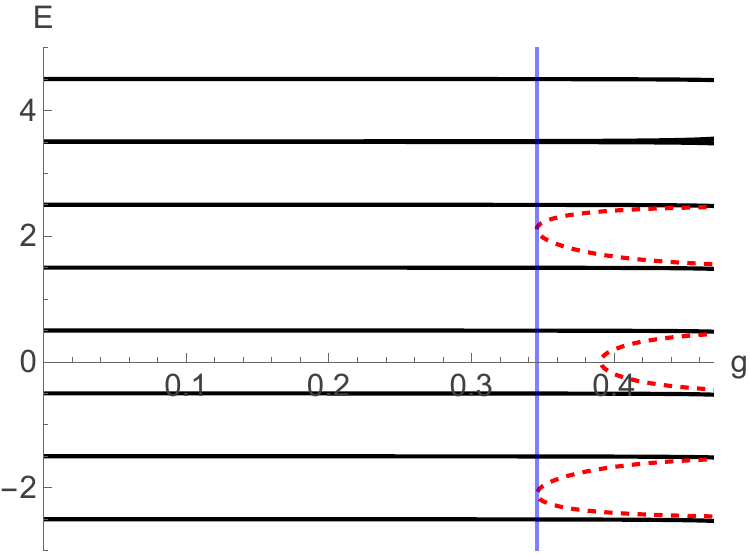}} \\
  \subfloat[$g<\bar{g}$ ]{
    \includegraphics[height=4cm]{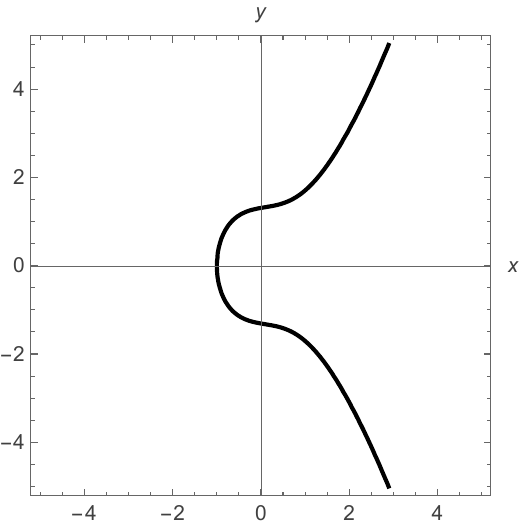}}
  ~ \quad
  \subfloat[$g=\bar{g}$ ]{
    \includegraphics[height=4cm]{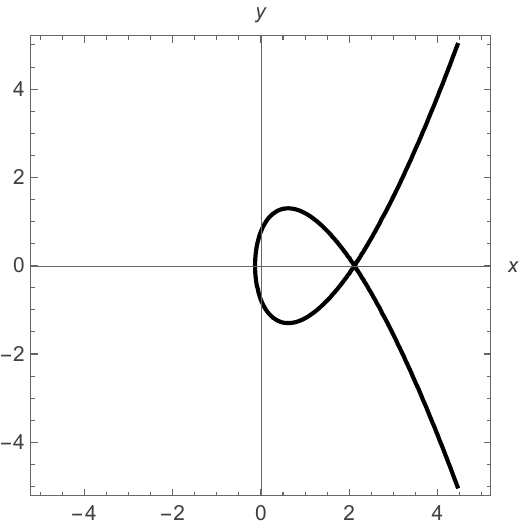}}
  ~ \quad
    \subfloat[$g>\bar{g}$ ]{
      \includegraphics[height=4cm]{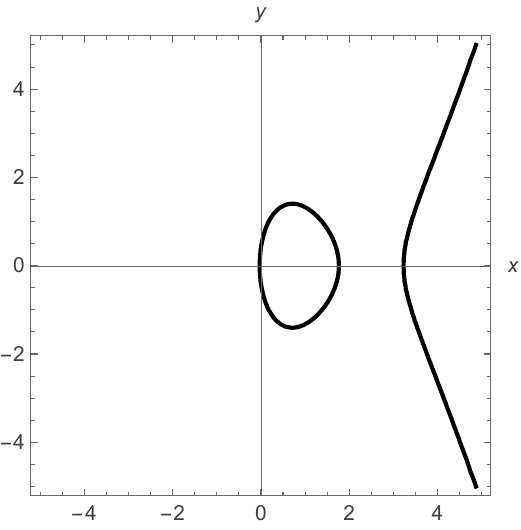}}
    \caption{Approximation of spectral curves by the curves \eqref{eq:pcurves}(red) with vertical tangent ($g=\bar{g}$) shown in blue. The corresponding elliptic curves \eqref{eq:elliptic} for different values of parameter $g$ ($g=0.1$, $g=\bar{g}$, $g=0.38$, respectively).}
    \label{fig:EA2}
\end{figure}

We note that for any real value of $g$ with the exception of $\bar{g}$, the equation \eqref{eq:elliptic} defines an
elliptic curve in the usual sense (see Figure \ref{fig:EA2}(b-d)). In particular, while there is another value $g>\bar{g}$  point where the polynomial $A^{(\ell)}(z)$ appears to have a double root (corresponding to another vertical tangent to the approximation curve in Figure \ref{fig:EA2}), it does not correspond to a singular elliptic curve. We might ask about the situation in general.

\begin{prob}
  For $\ell>0$, describe the structure of algebraic curve $y^2= A^{(\ell)}(x;\hat{g})$ where the vertical line $g=\hat{g}$ is tangent to the curve $A^{(\ell)}(x,g)=0$?
\end{prob}

While this problem is algebro-geometric in nature, it is also interesting to see if the points of tangency of the curves $A^{(\ell)}(x;g)=0$ with vertical lines, that is, the leftmost points of each of the connected components, have a physical meaning. For instance, we may ask their relation to parameter regime classifications that regions in the $(g,E)$-plane
for the AQRM, as in \cite{EVBSS2017} for the QRM, based on the study for the validity of perturbative criteria. Another important question is whether the behavior of the curve \eqref{eq:pcurves} at points $g$ outside of the interval $(0,1/2)$ can give information for the corresponding 2pAQRM, specifically the critical point $g_c=1/2$.

One of the advantages of considering the hyperelliptic curve \eqref{eq:Helliptic}, instead of the plane curves
\eqref{eq:pcurves} is that we might consider the arithmetical properties. For instance, for the elliptic curve case it is an open direction to study the associated $L$-functions and integral points, and their possible relation with the spectrum of the 2pAQRM and in particular with the special values of its spectral zeta function, i.e., Dirichlet series.  In fact, we recall that the spectral zeta function of the NCHO is known to
have a rich number theoretical structure \cite{KW2023} and we expect a similar structure also for the QRM and 2pQRM.

\begin{rem}
  The paper \cite{BNZ2026}, mentioned in the introduction,  is dedicated to obtaining exact solutions of the 2pAQRM by the Bethe ansatz approach. Concretely, the authors apply this technique to the meromorphic structure of the resulting fourth-order differential equation, arbitrary-state analytical solutions, particularly the Juddian solutions, of the eigenvalue problem of the 2pAQRM for integral value of certain parameters. This approach is
  different that the one we used in Section \ref{sec:deg} or the Heun picture of the 2pAQRM, but it is easy to see it coincides for Juddian solutions.

  In the setting of \cite{BNZ2026}, the exceptional Juddian points, giving degeneracy in the spectrum, are considered to be of great experimental importance and directly associated with enhanced sensitivity and nontrivial dynamical behavior of the system. The reason being that in the vicinity of the special parameter values giving Juddian solutions, eigenstates become highly structured, thus making observables extremely sensitive to small changes in coupling or bias.

  This explanation is consistent with the discussion of this section regarding the significance of Juddian solutions and their associated polynomials. For instance, the excellent approximation conjectures of Section \ref{sec:excapprox1} and the generalized adiabatic approximation discussed there.
\end{rem}

\section{Characterization of degeneracies for the 2pAQRM via monodromy}
\label{sec:degstruct}

In this section we use the theory developed in the previous section to characterize the spectral structure of the 2pAQRM. To accomplish this, using the monodromy data from the Heun picture we first show that the solutions whose existence is ensured by the divisibility relation of Theorem \ref{thm:div} are linearly independent and therefore constitute actual spectral degeneracies. We also use the monodromy data to describe the opposite-parity degeneracies, previously studied in physics \cite{D2009,MS2019,X2025,X2020}.

\subsection{Multiplicity of Juddian solutions}
\label{sec:linindep}

The simultaneous vanishing of the constraint polynomials does not immediately guarantee that the corresponding Juddian solutions are linearly independent since they are not comparable. Indeed, they correspond to squeezing transformations with different parameters, that is, one corresponds to the Hamiltonian $H_A$ and the other to $H_B$ (cf. Section \ref{sec:deg}).

For the AQRM, there are different ways to prove the linear independence of the degenerate eigenfunctions, including the comparison of the degree of associated irreducible representations \cite{W2017} or even direct comparison of the degree of quasi-exact eigenfunctions. For the 2pAQRM, using the Heun picture and its associated monodromy representation allows to obtain a description of both eigenfunctions at degenerate eigenvalues. Since the structure of singularities is the same, we refer the reader to \cite{RBW2023} for the details of the proofs.

Let us fix an exceptional eigenvalue $\lambda = \tau (2 N + \e + \rho + \frac12) $, then the Riemann scheme of the Heun picture is given by
\begin{align}
  \label{eq:JuddianRS}
  \begin{bmatrix}
    0 & 1 & t & \infty &; \omega \\
    0 & 0 & 0 & a-\frac{1}{2}&; \bar{q}_a \\
    N &  N + 1 + \e  & - N - \e + \frac32 - a & - N
  \end{bmatrix}.
\end{align}
where $a=1$ for the even case and $a=2$ for the odd case. Since one of exponents of the singularity at infinity is a
non-negative integer, a polynomial solution may appear. To describe the solutions in detail we use the monodromy
representation.

Note that the structure is the same as in the case of the aNCHO, so we can proceed in the same way (cf. Theorem 6.3 of \cite{RBW2023}, note the change of sign of $\eta$, resp. $\e$ due to the choice of Heun equation). In particular, for the odd case let us denote the monodromy matrices (i.e. associated with the Riemann scheme \eqref{eq:JuddianRS}) by $\mathbf{A}_i$ ($i=0,1,2,3$) corresponding to singularities at $\omega=0,1,t, \infty$ in that order. Then, we have
\[
  \mathbf{A}_0 \mathbf{A}_1 \mathbf{A}_2 \mathbf{A}_3 = \mathbf{I}_2 
\]
along with a complete analogous situation for the even case. The eigenvalues of the monodromy matrices are computed from the critical exponents of the ODE (i.e. from the Riemann scheme), and are given in Table \ref{tab:eigenMon2}.

\begin{table}[h]
  \centering
  \begin{tabular}{ r | c c c c }
    & $\mathbf{A}_0$ & $\mathbf{A}_1$ & $\mathbf{A}_2$ & $\mathbf{A}_3$  \\
    \hline
    $\epsilon_1$ & 1 & 1 & 1 & -1  \\
    $\epsilon_2$ & 1 & $e^{-2 \pi i \e }$ & $e^{-2 \pi i (\frac12 + \e)}$ & 1
  \end{tabular}
  \caption{Eigenvalues of monodromy matrices for the case of odd Juddian eigenvalues}.
  \label{tab:eigenMon2}
\end{table}

\begin{thm}
  \label{thm:mult2}
  Let $N \in \Z_{\geq 0}$, $\rho \in \{0,1\}$ and \( \lambda =  \tau (2 N + \rho + \frac12 + \e) \)
  with $\e = \ell \in  \Z$. If the parameters satisfy
  \[
    P_{N+\ell}^{(N+\ell,\rho,-\ell)}((2g)^2,\Delta^2) = 0,
  \]
  so that  $\lambda = \tau (2 N + \ell+ \rho + \frac12) $ is a Juddian solution, then 
  \[
    {\rm dim}_{\C}\{ f \in \mathcal{O}(\Omega) \, | \, \Lambda_{1+\rho} f = 0 \} = 2. \qed
  \]
\end{thm}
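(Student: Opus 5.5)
Two holomorphic solutions on $\Omega$ should come from the two Hamiltonians $H_A^{(\ell)}$ and $H_B^{(\ell)}$. I will show they are linearly independent by comparing their local behaviour at the singular points, using the monodromy data of Table \ref{tab:eigenMon2}. Concretely, with $\e=\ell$ and $\lambda=\tau(2N+\ell+\rho+\frac12)$, the eigenvalue is exceptional from two sides. It is $\tau(2N+\rho+\frac12+\ell)$ for the $\e$-recurrence \eqref{eq:coeffKjudd}. It is also $\tau(2(N+\ell)+\rho+\frac12-\ell)$ for the $(-\e)$-recurrence, i.e. the $H_B$ picture with index $N+\ell$. The hypothesis $P_{N+\ell}^{(N+\ell,\rho,-\ell)}((2g)^2,\Delta^2)=0$ gives one Juddian solution of degree $2(N+\ell)+\rho$. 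Theorem \ref{thm:div} together with the positivity of $A_N^{(\ell,\rho)}$ for $x,y>0$ forces $P_N^{(N,\rho,\ell)}((2g)^2,\Delta^2)=0$. Proposition \ref{prop:constraint} then gives a second Juddian solution of degree $2N+\rho$ in the other picture. By Proposition \ref{Prop:32}, each yields a solution $f_1,f_2\in\mathcal{O}(\Omega)$ of $\Lambda_{1+\rho}f=0$.

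To prove independence I will read off their exponents from the Riemann scheme \eqref{eq:JuddianRS}. At $\omega=1$ the exponents are $0$ and $N+1+\ell$, both nonnegative integers. At $\omega=0$ they are $0$ and $N$. At $\infty$ they are $a-\frac12$ and $-N$. Translating the two quasi-exact eigenfunctions through the intertwiner $\mathcal{L}_a$ and the substitution $\omega=z^2/2g$, I expect the following picture. One solution, coming from the $K^{(\rho,\ell)}$ side with $b_N=0$, is a polynomial in $\omega$ of degree $N$, so its exponent at $\infty$ is $-N$. The other, coming from the $H_B$ side, is expected to behave like $(\omega-1)^{N+1+\ell}\times(\text{polynomial})$ up to the gauge factor, or at least to have a different exponent at $1$ or $\infty$. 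If two solutions have different leading exponents at a single regular singular point, they are independent.

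Should this direct identification be awkward, I will argue with monodromy as in Theorem 6.3 of \cite{RBW2023}. Suppose the solution space on $\Omega$ were one-dimensional. Then the holomorphic solution at $0$ and at $1$ would be unique up to scaling, and the monodromy $\mathbf A_0$, $\mathbf A_1$ would have a common eigenvector. Since $\ell\in\Z$, Table \ref{tab:eigenMon2} shows that $\mathbf A_0$ and $\mathbf A_1$ have only the eigenvalue $1$, and $\mathbf A_3$ has eigenvalues $\{-1,1\}$. I will use the relation $\mathbf A_0\mathbf A_1\mathbf A_2\mathbf A_3=\mathbf I_2$ together with the existence of the polynomial solution, which is an eigenvector of $\mathbf A_3$ with eigenvalue $1$ and is fixed by every monodromy. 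The point is that $\mathbf A_0$ and $\mathbf A_1$ must be semisimple (hence $\mathbf I_2$) precisely when both Juddian constraints vanish. Triviality of $\mathbf A_0$ and $\mathbf A_1$ means every local solution at $0$ and at $1$ is holomorphic, so the space on $\Omega$ is two-dimensional.

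The hard step is this last one: showing that no logarithmic term occurs at $\omega=0$ and $\omega=1$. Equivalently, the second solution furnished by the $H_B$ picture must genuinely be a distinct element of $\mathcal{O}(\Omega)$ and not a multiple of the first. I would first try the degree argument: the two eigenfunctions have Fock-space degrees $2N+\rho$ and $2(N+\ell)+\rho$, which differ since $\ell\ge1$. I would then check that the correspondence of Proposition \ref{Prop:32}, applied to the unitarily equivalent $H_A$ and $H_B$, preserves this distinction, through the exponent at $\infty$.
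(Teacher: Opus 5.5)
\textbf{There is a genuine gap.} The content of the theorem is that no logarithm occurs at $\omega=0$ and $\omega=1$, i.e.\ $\mathbf{A}_0=\mathbf{A}_1=\mathbf{I}_2$. You correctly single this out as ``the hard step'', but none of your three routes establishes it.

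\begin{enumerate}
\item Your first route guesses the exponents of the $H_B$-eigenfunction in the $\Lambda_{1+\rho}$-picture rather than deriving them. Corollary~\ref{cor:gensolFin} indicates that the non-polynomial solution is a Heun polynomial carrying a half-integral power of $(\omega-t)$. It is distinguished by its exponent at $t$, not by behaving like $(\omega-1)^{N+1+\ell}$ times a polynomial.
\item The Fock degrees $2N+\rho$ and $2(N+\ell)+\rho$ belong to polynomial eigenfunctions of $H_A^{(\ell)}$ and $H_B^{(\ell)}$. These two Hamiltonians are related by the squeezing $I_{2\theta}$, which preserves neither degree nor polynomiality. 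This is exactly the caveat stated at the start of Section~\ref{sec:linindep}, so nothing follows from comparing degrees.
\item A vector fixed by every $\mathbf{A}_i$ only makes the $\mathbf{A}_i$ simultaneously upper triangular, with off-diagonal entries $b_i$. Combining the eigenvalues in Table~\ref{tab:eigenMon2} with $\mathbf{A}_0\mathbf{A}_1\mathbf{A}_2\mathbf{A}_3=\mathbf{I}_2$ then yields the single relation $b_0+b_1=b_2-b_3$, not $b_0=b_1=0$. Your assertion that $\mathbf{A}_0,\mathbf{A}_1$ are semisimple ``precisely when both constraints vanish'' is the theorem itself.
\end{enumerate}

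\textbf{The missing idea is to look at which local exponent the polynomial solution realizes.} You do not need the $H_B$-solution at all.

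\begin{itemize}
\item Work in the $\Lambda_{1+\rho}$-picture, built from $\mathcal{S}_A^{(\ell)}$ with Riemann scheme \eqref{eq:JuddianRS}. Theorem~\ref{thm:div} and positivity of $A_N^{(\ell,\rho)}$ give $P_N^{(N,\rho,\ell)}((2g)^2,\Delta^2)=0$. Hence there is a polynomial solution $f_1$ of degree $N$, coming from the first component $\phi_1$.
\item Necessarily $N\ge1$: for $N=0$ the hypothesis would read $A_0^{(\ell,\rho)}=0$, which positivity excludes.
\item $f_1(1)\neq0$, because the nonzero exponent at $1$ is $N+1+\ell>N=\deg f_1$.
\item $f_1(0)\ne0$. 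Since $\mathcal{L}_a$ intertwines the two actions of $H$, it sends $y^m$ to a nonzero multiple of $z^m$. So $f_1(0)$ is a nonzero multiple of the lowest coefficient $a_0$ of $\phi_1$. By \eqref{eq:relA1} at $n=0$, $a_0=-\Delta b_0/(2N\tau)$, and $b_0\neq0$ since otherwise $\phi\equiv0$; hence $a_0\neq0$.
\item So $f_1$ is a logarithm-free solution realizing the smaller exponent $0$ at both $0$ and $1$. The Frobenius solution attached to the larger integer exponent ($N$, resp.\ $N+1+\ell$) is always logarithm-free. Together with $f_1$ it spans the local solution space, so $\mathbf{A}_0=\mathbf{A}_1=\mathbf{I}_2$.
\item Since $\Omega$ is simply connected, every solution then lies in $\mathcal{O}(\Omega)$.
\end{itemize}

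Equivalently, by reduction of order, the second solution is $f_2=f_1\int W/f_1^2$ with $W=\omega^{N-1}(\omega-1)^{N+\ell}(\omega-t)^{-N-\ell-\rho-\frac12}$, and $f_2$ is holomorphic at $0$ and $1$. This is also where $\ell\in\Z_{\ge0}$ enters. For non-integral $\e$, $\mathbf{A}_1$ has the eigenvalue $e^{-2\pi i\e}\neq1$, and only one solution is holomorphic at $1$.
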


We note that directly from the monodromy data we also see that there cannot be degenerate exceptional eigenvalues for the case $\e \notin \Z$.
We also obtain the general form of the solutions, one of the solutions is a polynomial (quasi-exact) while the other is a Heun polynomial\footnote{In general, a Heun polynomial is a function of the form $\omega^{\sigma_1}(\omega-1)^{\sigma_2}(\omega-t)^{\sigma_3} p(\omega)$ where $\sigma_i$ are the critical exponent at the singularities and $p(\omega)$ is a polynomial \cite{R1995}.}.

\begin{cor}
  \label{cor:gensolFin}
  Let $\e = \ell \in \Z$, $N \in \Z_{\geq0}$ and $\rho \in \{0,1\}$. Suppose that
  \[
    P_{N+\ell}^{(N+\ell,\rho,-\ell)}((2g)^2,\Delta^2) = 0,
  \]
  so that $\lambda = \tau (2 N + \ell+ \rho + \frac12) $ is a Juddian solution of the 2pAQRM.

  Then, there exist Heun polynomials $H p_1(\omega)$ and $H p_{2}(\omega)$ spanning the space of solutions of $\Lambda_{1+\rho} f = 0$
  holomorphic at $\Omega$. We have
  \begin{enumerate}
  \item for the even case ($\rho=0$), there are polynomials $f_1(\omega)$ of degree $N + \ell$ and $g_2(\omega)$ of degree at
    most $N + 1$, unique up
    to constant multiples , such that $H p_1(\omega) = f_1(\omega)$ and
    \[
      H p_2(\omega) = \frac{g_2(\omega)\sqrt{\omega-t}}{(\omega- t)^{N+1}},
    \]
  \item for the odd case ($\rho=1$), there are polynomials $f_1(\omega)$ of degree $N+\ell$ and $g_2(\omega)$ of degree at most $N$,
    unique up to constant multiples , such that $H p_1(\omega) = f_1(\omega)$ and
    \[
      H p_2(\omega) = \frac{g_2(\omega)\sqrt{\omega-t}}{(\omega- t)^{N+2}}.
    \]
  \end{enumerate}

  Conversely, suppose there is a solution of  $\Lambda_{1+\rho} f = 0$ that is either
  \begin{itemize}
  \item a rational function in $\omega$ at the origin, or
  \item of the form of the form $(\omega-t)^{\frac12} q(\omega)$ at the origin for a rational function $q(\omega)$,
  \end{itemize}
  then, if
   \[
    P_{N+\ell}^{(N+\ell,\rho,-\ell)}((2g)^2,\Delta^2) = 0,
  \]
  we have
  \[
    \dim_\C \{ f \in \mathcal{O}(\Omega) | \Lambda_{1+\rho} f = 0 \}  = 2,
  \]
  and the solutions are of the form given above.
\end{cor}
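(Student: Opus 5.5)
We will read off the local structure of solutions from the Riemann scheme \eqref{eq:JuddianRS} together with the monodromy data of Table \ref{tab:eigenMon2}, following the argument of Theorem 6.3 in \cite{RBW2023} for the aNCHO. Throughout we fix the exceptional eigenvalue $\lambda=\tau(2N+\ell+\rho+\frac12)$ and set $a=1+\rho$.

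\emph{The polynomial solution.} The exponents at $\omega=\infty$ are $a-\frac12$ and $-(N+\ell)$. The second is the negative of a non-negative integer, so a polynomial solution of degree $N+\ell$ is possible. Substituting $f_1(\omega)=\sum_{k=0}^{N+\ell}c_k\omega^k$ into $\Lambda_a f=0$ produces a three-term recurrence for the $c_k$ with $c_{-1}=0$. The series terminates at degree $N+\ell$ exactly when a determinant in $\bar q_a$ vanishes.

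We identify this determinant with $K^{(\rho,-\ell)}_{N+\ell}$ up to a nonzero factor. The identification goes through the intertwiner $\mathcal{L}_a$ and the relation $\omega=z^2/2g$, which sends $z^{2n+\rho}$ to $\omega^{n}$ up to the factor $z^{\rho}$. By Proposition \ref{prop:constraint}, the vanishing of $P_{N+\ell}^{(N+\ell,\rho,-\ell)}((2g)^2,\Delta^2)$ then guarantees that $f_1$ exists and is unique up to scalars. This is the Juddian solution coming from the Hamiltonian $H_B$, i.e.\ the $-\ell$ picture.

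\emph{The second solution.} We use the local exponents at $\omega=t$. They are $0$ and $\frac32-a-N-\ell$, so the second one equals $\frac12-(N+\ell)$ in the even case and $-\frac12-(N+\ell)$ in the odd case. Both are half-integers. The two exponents therefore do not differ by an integer, no logarithm can occur, and near $t$ there is a basis $\{u_0,\ u_1=(\omega-t)^{\sigma}h(\omega)\}$ with $h$ holomorphic.

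Next we look at $\omega=1$, where the exponents are $0$ and $N+1-\ell$. These are integers, so the local monodromy $\mathbf{A}_1$ is either the identity or a nontrivial Jordan block. The same holds at $\omega=0$, with exponents $0$ and $N$. We then track the $\pm$-symmetry of the curve $y^2=\omega-t$ globally.

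The idea is that $\mathbf{A}_2$ has eigenvalues $1$ and $-1$, and the product relation $\mathbf{A}_0\mathbf{A}_1\mathbf{A}_2\mathbf{A}_3=\mathbf{I}_2$ must hold. The presence of the polynomial solution $f_1$ makes the representation reducible, with $f_1$ spanning an invariant line on which every $\mathbf{A}_i$ acts by $1$. The quotient character is then $(1,1,-1,-1)$ on the generators, which is consistent with the eigenvalues in Table \ref{tab:eigenMon2}.

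We aim to show that the monodromy is in fact diagonalizable, i.e.\ that the extension splits. The plan is to show that $\mathbf{A}_0$ and $\mathbf{A}_1$ are trivial, which reduces the question to $\mathbf{A}_2\mathbf{A}_3=\mathbf{I}$. Since $\mathbf{A}_2$ has order $2$, it is semisimple, and then the whole monodromy group is generated by a single diagonalizable involution. Consequently the $(-1)$-eigenvector $u$ of $\mathbf{A}_2$ satisfies two properties:
\begin{itemize}
\item $u/\sqrt{\omega-t}$ is single-valued and meromorphic with poles only at $t$ and $\infty$;
\item the order of $u/\sqrt{\omega-t}$ at $t$ is fixed by the exponent $\sigma$ at $t$, and at $\infty$ by the exponent $a-\frac12$.
\end{itemize}
It follows that $u/\sqrt{\omega-t}=g_2(\omega)/(\omega-t)^{N+\ell+a-1}$ with $g_2$ a polynomial. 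Comparing the growth at $\infty$ bounds $\deg g_2$, giving degree at most $N+1$ for $\rho=0$ and at most $N$ for $\rho=1$. (We would recheck the pole order in the statement, $N+1$ resp.\ $N+2$, against this exponent bookkeeping.)

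\emph{Main obstacle.} The key step is proving that $\mathbf{A}_0$ and $\mathbf{A}_1$ are trivial, i.e.\ that there are no logarithmic solutions at $0$ and $1$ despite the integer exponent differences. At $\omega=0$ this follows because the ODE comes from an entire Fock-space problem. The equivalence of Proposition \ref{Prop:32} implies the solution space on $\Omega\ni 0,1$ is spanned by holomorphic functions there, and it suffices to exhibit two independent solutions holomorphic at $0$ and $1$.

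For $\omega=1$, our first approach is to invoke the other Juddian solution. The divisibility of Theorem \ref{thm:div} gives $P^{(N,\rho,\ell)}_N=0$ as well. Through the $+\ell$ picture, i.e.\ $H_A$ together with the relation between $\mathcal{S}_A$ and $\mathcal{S}_B$, this yields a quasi-exact eigenfunction. Its image in the $\Lambda_a$ picture is $u$, a solution holomorphic on $\Omega$ that is independent of $f_1$ because of its $\sqrt{\omega-t}$ behaviour. This settles both the dimension-$2$ assertion of Theorem \ref{thm:mult2} and the triviality of $\mathbf{A}_0\mathbf{A}_1$.

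For the converse, we argue as follows. A solution that is rational, or $\sqrt{\omega-t}$ times a rational function, has monodromy $\pm1$ along every loop. Together with $f_1$ it therefore forces the same split monodromy, and hence both solutions lie in $\mathcal{O}(\Omega)$, giving dimension $2$.
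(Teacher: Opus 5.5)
\textbf{There is a genuine gap: the splitting step is not proved.}

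Your outline has the right shape. The polynomial $f_1$ makes the monodromy reducible with quotient character $(1,1,-1,-1)$; if the extension splits, the $(-1)$-eigenvector of $\mathbf{A}_2$ gives the $\sqrt{\omega-t}$ solution. But the splitting, which carries all the content, is never established.

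Take $u$ to be the $(-1)$-eigenvector of $\mathbf{A}_2$ and write $\mathbf{A}_i u=b_i f_1+\chi_i u$. Then $\mathbf{A}_0\mathbf{A}_1\mathbf{A}_2\mathbf{A}_3=\mathbf{I}_2$ imposes only one linear relation between $b_3$ and $b_0+b_1$. So the eigenvalue table alone cannot force $b_0=b_1=0$, and extra input is needed.

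Your reduction to exhibiting two independent solutions in $\mathcal{O}(\Omega)$ is correct, but the two arguments you give for it fail.
\begin{itemize}
\item At $\omega=0$, the appeal to the entire Fock-space problem is circular. Proposition \ref{Prop:32} identifies eigenfunctions with solutions in $\mathcal{O}(\Omega)$; it says nothing about a generic solution, and with an integer exponent difference a logarithm at $0$ is a priori possible.
\item At $\omega=1$ you import the second Juddian solution, with two problems.
\begin{itemize}
\item Divisibility alone does not give $P_N^{(N,\rho,\ell)}((2g)^2,\Delta^2)=0$. You also need $A_N^{(\ell,\rho)}((2g)^2,\Delta^2)\neq0$, which is the positivity result of Section \ref{sec:pos}.
\item The claim that its image in the $\Lambda_{1+\rho}$ picture is independent of $f_1$ ``because of its $\sqrt{\omega-t}$ behaviour'' is asserted, not derived. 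Since $f_1$ spans the $1$-eigenspace of $\mathbf{A}_2$, every solution not proportional to $f_1$ has nontrivial monodromy around $t$. So independence is equivalent to the transported solution not being single-valued around $t$. That depends on the explicit map between the $\e=\ell$ and $\e=-\ell$ pictures (the squeezing relating $H_A$ and $H_B$, followed by $\mathcal{L}_a$), which you never analyze.
\end{itemize}
\end{itemize}
This is exactly the point the paper flags as ``not comparable''. In the paper, linear independence of the two Juddian solutions is a \emph{consequence} of Theorem \ref{thm:mult2} via the multiplicity bound, not an input to it.

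If you keep your order of argument, close the gap on the Fock side. Suppose the two Juddian eigenvectors of $H_\ell$ were proportional. Then $\Phi_A=c\,I_{2\theta}\Phi_B$ with $\Phi_A,\Phi_B$ polynomial, where $\theta$ is the squeezing parameter defining $H_A$. Disentangling the squeeze operator shows that $I_{2\theta}$ sends a polynomial to a polynomial times $e^{-\frac{\tanh 2\theta}{2}z^2}$. For $g\neq0$ this is never a polynomial, a contradiction. So there are two independent eigenfunctions of the same parity, the dimension is $2$ by Proposition \ref{Prop:32}, and your monodromy reasoning then applies.

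\textbf{Your exponent bookkeeping is also wrong.} You take the exponents at $\infty$ from the $\e=-\ell$ scheme and those at $0$ and $t$ from the $\e=\ell$ scheme. The value $N+1-\ell$ at $1$ belongs to neither, and your exponents sum to $2-3\ell$, violating Fuchs' relation.

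The picture in which $f_1$ has degree $N+\ell$ is \eqref{eq:JuddianRS} with $N\to N+\ell$ and $\e\to-\ell$. Its exponents are $\{0,N+\ell\}$ at $0$, $\{0,N+1\}$ at $1$, $\{0,\frac32-a-N\}$ at $t$, and $\{a-\frac12,-(N+\ell)\}$ at $\infty$. With your value, $N+1-\ell<0$ when $\ell>N+1$, so the step ``trivial local monodromy implies holomorphic at $1$'' would even fail.

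With the correct exponents, $u=(\omega-t)^{\frac32-a-N}h(\omega)$ with $h$ a polynomial and $h(t)\neq0$. Matching $u\sim\omega^{\frac12-a}$ at $\infty$ forces $\deg h=N-1$. Setting $g_2=(\omega-t)h$ gives the paper's form, with denominator $(\omega-t)^{N+a}$ and $\deg g_2=N$ within the stated bounds. Your pole order $N+\ell+a-1$ would instead give $\deg g_2=N+\ell-1$, contradicting the degree bound you state.

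Finally, in the converse only the $\sqrt{\omega-t}\,q$ alternative forces splitting. A rational solution is single-valued, hence a multiple of $f_1$, and adds nothing.
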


Summarizing, since the multiplicity is bounded above by $2$ we automatically obtain the desired result.

\begin{thm}
  For $N \geq0 $ and $\ell \geq 1$, if the system parameters $g,\Delta$ satisfy
  \[
    P_{N+\ell}^{(N+\ell,\rho,-\ell)}((2g)^2,\Delta^2) = P_{N}^{(N,\rho,\ell)}((2g)^2,\Delta^2) = 0,
  \]
  then the corresponding Juddian solutions are linearly independent.
\end{thm}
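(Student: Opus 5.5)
The plan is to pass from the two Juddian solutions to the Heun picture of Proposition~\ref{Prop:32} and use Theorem~\ref{thm:mult2} together with Corollary~\ref{cor:gensolFin}.

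First, we check that both Juddian solutions sit at the same eigenvalue and in the same parity sector. The vanishing of $P_{N}^{(N,\rho,\ell)}((2g)^2,\Delta^2)$ gives, by Proposition~\ref{prop:constraint}, a Juddian eigenvalue $\lambda_1=\tau(2N+\rho+\tfrac12+\ell)$ coming from the realization \eqref{eq:Hamiltonian1}. The vanishing of $P_{N+\ell}^{(N+\ell,\rho,-\ell)}$ gives $\lambda_2=\tau(2(N+\ell)+\rho+\tfrac12-\ell)$ coming from \eqref{eq:Hamiltonian2}. These coincide, and both eigenfunctions have parity $\rho$.

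Next, we transport both eigenfunctions to the Heun picture. This is done by the unitary Bogoliubov transforms $I_{\pm\theta}$ relating $H_A^{(\e)}$, $H_B^{(\e)}$ and $H_\e$, followed by the equivalence of Proposition~\ref{Prop:32}. We obtain two solutions of $\Lambda_{1+\rho}f=0$ holomorphic on $\Omega$. Since all the maps involved are linear isomorphisms, linear independence of the Juddian eigenfunctions is equivalent to linear independence of their Heun images.

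We then identify each image with one of the two distinguished solutions of Corollary~\ref{cor:gensolFin}. One of them (the one of degree $2(N+\ell)+\rho$ in $z$, via $\omega=z^2/2g$) should correspond to the polynomial solution $Hp_1=f_1$ of degree $N+\ell$. The other, coming from the opposite squeezing, should be the Heun polynomial $Hp_2$, which carries the factor $\sqrt{\omega-t}\,(\omega-t)^{-(N+1)}$ (or $(\omega-t)^{-(N+2)}$ in the odd case). The two functions have different local behaviour at $\omega=t$: $f_1$ is holomorphic at $t$ with exponent $0$, while $Hp_2$ has a half-integer exponent there. Hence neither is a constant multiple of the other.

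Alternatively, one can argue directly from Theorem~\ref{thm:mult2}: the solution space is two-dimensional and is spanned by $Hp_1$ and $Hp_2$. The monodromy data of Table~\ref{tab:eigenMon2} show that the exponent at $t$ distinguishes the polynomial solution, which has local monodromy eigenvalue $1$, from the other, which has eigenvalue $-1$ because $\e\in\Z$. A single Juddian eigenfunction therefore cannot produce both.

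The main obstacle is the identification step: checking that the image of the $H_B$-Juddian solution is not the polynomial solution $f_1$. I would first compare the degrees ($N$ versus $N+\ell$ with $\ell\ge1$) after the change of variables, using the different squeezing parameters to show that the $H_B$ solution acquires the $\sqrt{\omega-t}$ factor. Failing that, I would fall back on the multiplicity statement of Theorem~\ref{thm:mult2} combined with the bound of $2$ on multiplicity, arguing that both Juddian eigenfunctions are nonzero and lie in distinct monodromy eigenspaces at $t$.
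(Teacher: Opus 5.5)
Your overall strategy is the paper's: pass to the Heun picture, then use Theorem~\ref{thm:mult2}, Corollary~\ref{cor:gensolFin} and the bound $2$ on multiplicity. However, the step you call the main obstacle is the entire content of the statement, and neither of your ways around it closes it.

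Theorem~\ref{thm:mult2} only says that the solution space of $\Lambda_{1+\rho}f=0$ is two-dimensional. Two given nonzero vectors in a two-dimensional space can still be proportional, so ``multiplicity $2$ and both eigenfunctions nonzero'' proves nothing. Likewise, ``they lie in distinct monodromy eigenspaces at $t$'' restates the identification rather than arguing for it.

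What you do observe correctly is this: since the exponents at $\omega=t$ are $0$ and a half-integer, the solutions that continue holomorphically to $t$ form the line $\C f_1$. You must therefore show that the two Juddian images do not both lie on that line.

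Your sketch is also inconsistent about which image should leave the line. You first send the solution of degree $2(N+\ell)+\rho$, which is the one from \eqref{eq:Hamiltonian2}, to $f_1$. You then propose to show that the $H_B$-solution acquires the factor $\sqrt{\omega-t}$. Since $f_1$ has degree $N+\ell$ (exponent $-(N+\ell)$ at $\infty$), the equation in Corollary~\ref{cor:gensolFin} is the one attached to realization (B). The claim to be proved therefore concerns the image of the (A)-solution. Comparing the degrees $N$ and $N+\ell$ cannot settle it: the (A)-solution is a polynomial in the Fock realization of $H_A^{(\ell)}$, not of $H_B^{(\ell)}$. What separates the two realizations is a squeezing, not a degree.

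Your remark about the different squeezing parameters points to the right mechanism; apply it before any Heun transform, in one common realization. With $\theta=\log\bigl(\frac{1-2g}{1+2g}\bigr)^{1/4}$ we have $H_A^{(\ell)}=I_\theta H_\ell I_\theta^{-1}$ and $H_B^{(\ell)}=I_{-\theta}H_\ell I_{-\theta}^{-1}$.

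In the Bargmann realization of $H_B^{(\ell)}$, the (B)-Juddian eigenfunction is a polynomial vector $\phi_B$. The (A)-Juddian eigenfunction is $I_{-\theta}I_\theta^{-1}\phi_A=I_{-2\theta}\phi_A$, with $\phi_A$ a polynomial vector.

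A squeezing $I_s$ carries a polynomial to $e^{-\frac12\tanh(s)z^2}$ times a polynomial of the same degree, and here $\tanh(2\theta)=-2g\neq0$. Hence $I_{-2\theta}\phi_A=e^{-gz^2}q(z)$ with $q\neq0$ a polynomial vector. Such a vector is never a constant multiple of a polynomial vector, since $e^{-gz^2}$ is not rational. So the two Juddian eigenfunctions are linearly independent. This is exactly the ``quasi-exact in one realization, not in the other'' dichotomy that the paper reads off from Corollary~\ref{cor:gensolFin}.

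With independence in hand, Theorem~\ref{thm:mult2} (or the bound $2$) shows the two eigenfunctions span the parity-$\rho$ eigenspace. By injectivity of the correspondence in Proposition~\ref{Prop:32}, the finer identification with $f_1$ and $Hp_2$ then becomes a consequence rather than something you must establish first.
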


Corollary \ref{cor:gensolFin} implies that for the shared-parity degenerate solutions one of the solutions is  quasi-exact while the other is not quasi exact (see e.g.  \cite{B2023} for the case $\e=0$). This situation may already by anticipated by the relation between the two models given in Theorem \ref{them:N2024}, since the transformation between eigenfunctions for the two models cannot change from a quasi-exact solution to a non quasi-exact solution or vice versa (see also Remark \ref{rem:juddianHeun}).

\subsection{Opposite-parity degeneracies}
\label{sec:diffparity}

For the 2pAQRM, in addition to the shared-parity degenerate solutions there are also opposite-parity degenerate solutions (cf. \cite{MS2019}), also shown in Figure \ref{fig:tpQRMeigencurves25} (see also Figure \ref{fig:tpQRMeigencurves}). In the figures, the blue lines correspond to the normalized baselines $E = N + \e + \frac12$ and in grey the baselines $E = N + \e + \frac12$ for positive integer $N$.

\begin{figure}[ht]
  \centering
  \subfloat[$\e=1/4, \Delta=1$]{
    \includegraphics[height=4.5cm]{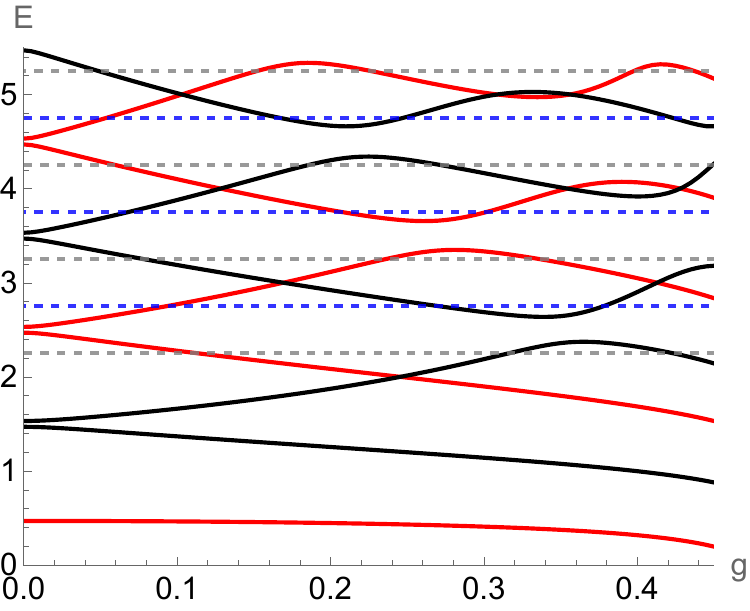}}
  ~ \qquad
  \subfloat[$\e=1/7, \Delta=0.7$]{
    \includegraphics[height=4.5cm]{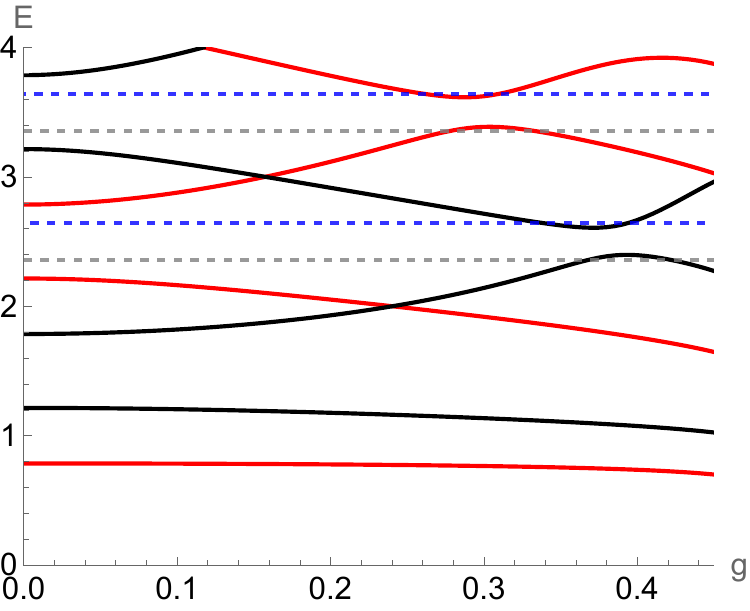}}
  \caption{Normalized spectral curves of the 2pAQRM for $\e \notin \Z$ with baselines.}
  \label{fig:tpQRMeigencurves25}
\end{figure}

For the case of 2pAQRM, the opposite-parity degenerate solutions have been considered by Q. T. Xie \cite{X2020} (see also  \cite{D2009,MS2019,X2025} for other approaches) who showed that these solutions must be of the form \( \lambda = \tau N \) for $N \in \Z_{\geq 0}$. In particular, these do not depend on the bias parameter $\e$. In addition, it is shown that these solutions can be written as finite sums of confluent hypergeometric functions.  The opposite-parity degenerate eigenvalues correspond to the parity symmetry and thus do not give rise to a new type of symmetry operator similar to the case of $\e \in \Z$ (cf. Section \ref{sec:hiddenS} below).

The case $\e = n + \tfrac12$ with $n \in \Z_{\geq 0}$ is of particular interest. It follows from the results in this paper that there cannot be shared-parity spectral degeneracies (necessarily of Juddian type). In Figure \ref{fig:tpQRMeigencurves3} we show the normalized spectral curves where we observe that the different spectral crossings lie on the normalized baselines
\[
  E = n 
\]
for $n \in \Z_{\geq 0}$. In particular, we note that for $\e \in \frac12 + \Z_{\geq0}$, any degeneracy is of exceptional type, similar to the case of AQRM. 

\begin{rem}
A similar situation was observed in \cite{NRBBW2024}, where the spectral spacing statistics of the AQRM resemble that of a single parity of the QRM for bias of the form $\e = \frac{n}2 + \tfrac14$ (corresponding to $\e = n + \tfrac12$ with in the two photon case). 
\end{rem}

\begin{figure}[ht]
  \centering
  \subfloat[$\e=0.5, \Delta=1$]{
    \includegraphics[height=4.5cm]{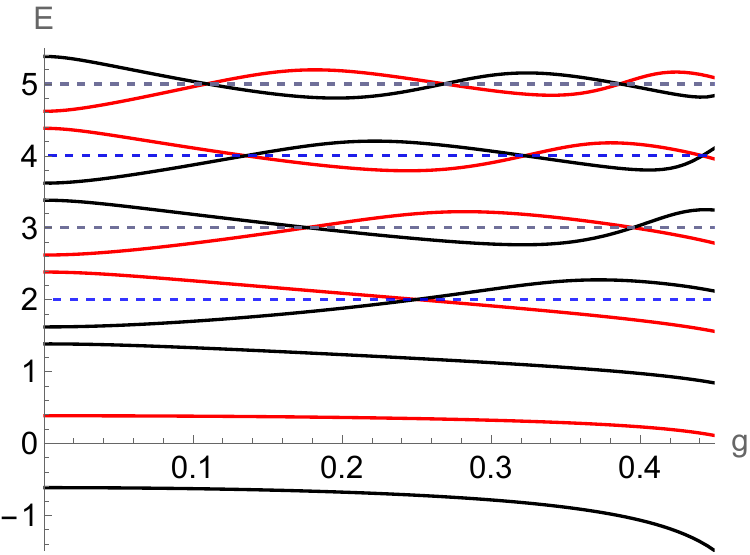}}
  ~ \qquad
  \subfloat[$\e=1.5, \Delta=1.2$]{
    \includegraphics[height=4.5cm]{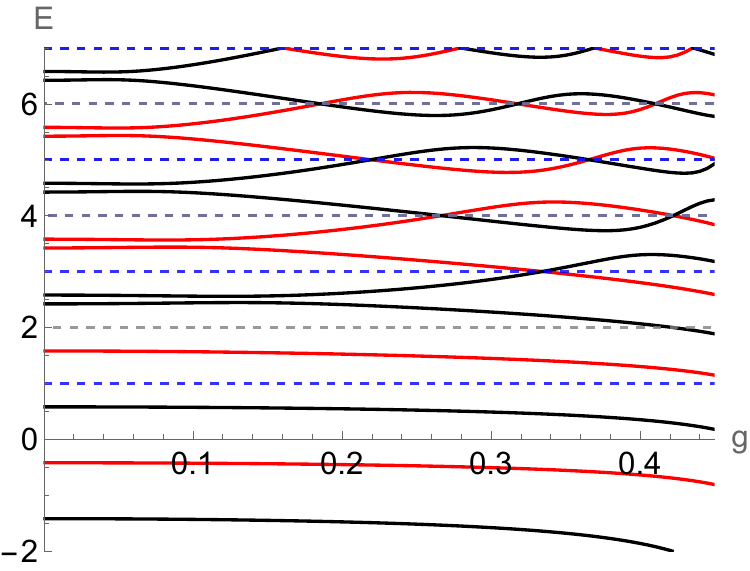}}
  \caption{Normalized spectral curves of the 2pAQRM. Black (resp. red) curves correspond to even (resp. odd) eigenvalue curves.}
  \label{fig:tpQRMeigencurves3}
\end{figure}

According to the results of \cite{X2020}, these solutions should consist of finite sums of confluent hypergeometric functions, that is, in general they are of non-Juddian type. Numerical verifications show that in the cases shown in Figure \ref{fig:tpQRMeigencurves3} one of the solutions appears to be of Juddian type.  Using the singularity data from the Heun ODE picture we can prove that this is the case in general.

\begin{prop}
  For $\e \in \frac12 + \Z_{\geq0}$ any degenerate eigenvalue of the 2pAQRM is exceptional of the form
  \[
    \lambda = \tau N,
  \]
  for $N \in \Z \geq 0$, and is of opposite-parity type. Moreover, one of the eigenfunctions is a Juddian solution while the
  other is non-Juddian.
\end{prop}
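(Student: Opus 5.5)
The plan is to use the Heun picture of Proposition~\ref{Prop:32} and the local monodromy data of $\Lambda_a$, $a=1,2$, in the same way as for Theorem~\ref{thm:mult2}. First, since the multiplicity is at most $2$, a degenerate eigenvalue $\lambda$ comes from either two independent solutions of one equation $\Lambda_a f=0$ (shared parity) or one solution of each of $\Lambda_1$ and $\Lambda_2$ (opposite parity).

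\textbf{Excluding shared parity.} By the Riemann scheme, the exponent differences at $\omega=1$ and $\omega=t$ are
\[
\tfrac12\bigl(\tfrac{\lambda}{\tau}+\tfrac52-a+\e\bigr) \quad\text{and}\quad \tfrac12\bigl(\tfrac52-a-\tfrac{\lambda}{\tau}-\e\bigr).
\]
Their sum is $\tfrac52-a$, a half-integer. Two independent solutions holomorphic on $\Omega\ni 0,1$ force the monodromies $\mathbf{A}_0$ and $\mathbf{A}_1$ to be trivial. The relation $\mathbf{A}_0\mathbf{A}_1\mathbf{A}_2\mathbf{A}_3=\mathbf{I}_2$ then gives $\mathbf{A}_2=\mathbf{A}_3^{-1}$, so the local exponent data at $t$ and at $\infty$ must match modulo $\Z$. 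I would turn this into the condition that $\tfrac{\lambda}{\tau}$ is exceptional, $\lambda=\tau(2N+\rho+\tfrac12\pm\e)$, together with $\e\in\Z$. This mirrors the remark after Theorem~\ref{thm:mult2} that degenerate exceptional eigenvalues need $\e\in\Z$. For a regular $\lambda$, the absence of two holomorphic solutions at $0$ and $1$ should be handled exactly as in \cite{W2015,RBW2023}. Since $\e\in\tfrac12+\Z_{\geq0}$ is not an integer, shared-parity degeneracy is impossible.

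\textbf{Opposite parity.} Next I would invoke the result of \cite{X2020} that opposite-parity degeneracies satisfy $\lambda=\tau N$. Since $\e\in\tfrac12+\Z$, we have $\tau N=\tau(2N'+\rho+\tfrac12\pm\e)$ for a suitable sign, $N'\geq0$ and $\rho\equiv N+\e+\tfrac12 \pmod 2$. So $\lambda$ is exceptional. Alternatively, I would rederive this from the monodromy: a holomorphic solution at $0$ and $1$ for both $a=1$ and $a=2$ at the same $\lambda$ forces integrality conditions on $\tfrac{\lambda}{\tau}$.

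\textbf{Juddian / non-Juddian.} At $\lambda=\tau N$ one of $a=1,2$ has the Riemann scheme \eqref{eq:JuddianRS} with a nonnegative integer exponent $-N'$ at $\infty$ (after the sign convention). The other equation, of opposite parity, has exponents at $\infty$ shifted by $\tfrac12$. That second equation cannot have a polynomial solution, so its eigenfunction is non-Juddian. For the first equation, I would use $\mathbf{A}_0=\mathbf{A}_1=$ scalar on the one-dimensional holomorphic subspace, together with $\mathbf{A}_0\mathbf{A}_1\mathbf{A}_2\mathbf{A}_3=\mathbf{I}_2$. The aim is to show the holomorphic solution is an eigenvector of $\mathbf{A}_3$ with eigenvalue $1$, hence has integer exponent at $\infty$ and is a polynomial, i.e.\ Juddian. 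Equivalently, I would check that the constraint $P_{N'}^{(N',\rho,\pm\e)}=0$ holds.

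\textbf{Main obstacle.} I expect the hardest step to be this last one: proving that the surviving solution of the integral-exponent equation really is polynomial rather than logarithmic or mixing at $\infty$. I would first try the argument of Corollary~\ref{cor:gensolFin}, namely that the solution space holomorphic on $\Omega$ is $\mathbf{A}_2\mathbf{A}_3$-invariant and that the eigenvalue of $\mathbf{A}_2$ is $-1$ when $\e\in\tfrac12+\Z$. This would force the eigenvalue $1$ branch at $\infty$.
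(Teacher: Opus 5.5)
Your exclusion of shared parity is correct, and it is more self-contained than the paper's appeal to Theorem~\ref{thm:judddeg}. Two solutions holomorphic on $\Omega$ force $\mathbf{A}_0=\mathbf{A}_1=\mathbf{I}_2$, hence integral nonzero exponents at $0$ and $1$, whose difference is $1+\e$; this works for every $\lambda$. Taking $\lambda=\tau N$ from \cite{X2020} is also what the paper does. Your alternative monodromy derivation of it would not work, though: one solution holomorphic at $0,1$ for each of $\Lambda_1,\Lambda_2$ imposes no local integrality condition.

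The gap is in the non-Juddian half of the last step. Half-integer exponents at $\infty$ only exclude a polynomial eigenfunction in the picture you are working in. Being Juddian refers to \emph{some} realization, and the $H_A$- and $H_B$-pictures ($\Lambda_a$ with $\e$ and with $-\e$) do not preserve polynomiality. Write $\e=M+\frac12$ and take $\lambda=\tau N$. Then the parity $\rho'\equiv N+M \pmod 2$ has, in the $-\e$ picture, exactly the scheme \eqref{eq:JuddianRS} with $(N,\e)$ replaced by $(N'',-\e)$, where $N''=\frac12(N+M-\rho')$. The parity $\rho\equiv N+M+1$ has it in the $+\e$ picture with $N'=\frac12(N-M-1-\rho)$, as soon as this is $\geq 0$. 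So both eigenfunctions carry Juddian-type local data, each in its own picture. A Juddian solution of the other picture shows up in yours as a Heun polynomial $(\omega-t)^{\theta_t}p(\omega)$, as in Corollary~\ref{cor:gensolFin}, and is invisible to your count at $\infty$. The paper closes this with a global input your proposal lacks: Juddian solutions of opposite parity are never degenerate (from \cite{XC2021}). Equivalently, $P_{N'}^{(N',\rho,\e)}$ and $P_{N''}^{(N'',\rho',-\e)}$ cannot vanish simultaneously at $((2g)^2,\Delta^2)$.

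The Juddian half also cannot come from monodromy as you propose. For $\e\in\frac12+\Z$ the exponent $-N'-\e+\frac32-a$ at $t$ is an integer, so $\mathbf{A}_2$ is unipotent. The eigenvalue $-1$ belongs to $\mathbf{A}_1$ (exponent $N'+1+\e$) and to $\mathbf{A}_3$, not to $\mathbf{A}_2$. More fundamentally, $\mathbf{A}_0f=\mathbf{A}_1f=f$ only gives $\mathbf{A}_2\mathbf{A}_3f=f$. The local data allow both $\mathbf{A}_3f=f$ and $\mathbf{A}_3f\neq f$: take $\mathbf{A}_0=\mathbf{I}_2$, $\mathbf{A}_1=\mathrm{diag}(1,-1)$, $f$ the first basis vector, and $\mathbf{A}_2$ upper versus lower unipotent. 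This is unavoidable, since an exceptional non-Juddian eigenfunction of that parity has the same Riemann scheme. Polynomiality is the global condition $P_{N'}^{(N',\rho,\e)}((2g)^2,\Delta^2)=0$, which local exponents do not detect. The paper does not derive it from monodromy either. It identifies the Juddian-type scheme for one parity, then uses the fact that opposite-parity Juddian solutions are never degenerate to conclude the other eigenfunction is non-Juddian.
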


\begin{proof}
  First, let us note that for $\e =  M + \frac12$ the exceptional eigenvalues correspond exactly to eigenvalues of
  the form $\lambda = \tau N$, moreover, since $\e \notin \Z$, there cannot be shared-parity degeneracies. Suppose $\lambda = \tau N$
  is a degenerate eigenvalue, then the Riemann scheme of the Heun ODE picture is given by
  \begin{align}
  \label{eq:JuddianRS2a}
  \begin{bmatrix}
    0 & 1 & t & \infty &; \omega \\
    0 & 0 & 0 & \frac{1}{2}&; \bar{q}_a \\
    \frac12(N-M)-\frac{a}2  &  \frac12(N+M) + \frac32-\frac{a}{2}  &  -\frac12(M+N) + 1 -\frac{a}2 &  -\frac12(N-M)+\frac{a}2
  \end{bmatrix}
  \end{align}
  with $a=1,2$ for the even or odd case.
  For $N-M \equiv 0 \pmod{2}$, we note that the Riemann scheme \eqref{eq:JuddianRS2a} with $a=1$ corresponds to the case
  of the Juddian solution
  \[
    \lambda = \tau(N  + \frac12 + \e),
  \]
  where $N= 2n + \rho$ and $\e = M + \frac12$, therefore it corresponds to a single Juddian solution of even or odd
  parity according to the parity of $N$. Similarly, For the case  $N-M \equiv 1 \pmod{2}$, the Juddian solution
  corresponds to the Riemann scheme \eqref{eq:JuddianRS2a} with $a=2$. In both cases, the remaining eigenfunction must
  be a non-Juddian solution, since there are no degeneration between Juddian solutions of the opposite-parity type.
\end{proof}

We note that a full study of the opposite-parity spectral degeneration requires a detailed study of the constraint conditions for the degenerate solutions described in \cite{X2020}, we leave its details for a future study.

\subsection{Degeneracy structure of the 2pAQRM}
\label{sec:degstruct2}

In this section, we summarize the results on spectral degeneracy of the 2pAQRM to characterize its degeneracy picture.
First, we begin by describing the degeneracy occurring in shared-parity eigenvalues.

\begin{thm}
  \label{thm:judddeg}
  For fixed parameters $g,\Delta>0$, $\rho \in \{0,1 \}$ and $\e = \ell \in \Z_{\geq0}$, any shared-parity degenerate eigenvalue of the 2pAQRM is an exceptional eigenvalue of the form
  \[
    \lambda = \tau(2 N + \rho +\tfrac12 + \ell),
  \]
  and consist of a pair of Juddian solutions (described in Section \ref{sec:juddian}) and the constraint condition is
  given by
  \begin{align}
    \label{eq:thmconst}
     P_{N}^{(N,\rho,\e)}((2g)^2,\Delta^2)=0,
  \end{align}
  with $\rho\in \{0,1\}$ according to the parity. In particular, there are no shared-parity degenerate solutions that are not Juddian.
\end{thm}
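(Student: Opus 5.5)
We argue in the Heun picture of Proposition~\ref{Prop:32}. A shared-parity degenerate eigenvalue $\lambda$ of parity $\rho$ is one for which the solution space $\{ f \in \mathcal{O}(\Omega) \,|\, \Lambda_{1+\rho} f = 0\}$ has dimension $2$. This means every local solution at $\omega=0$ and at $\omega=1$ extends holomorphically to a domain containing both points. The plan is first to show that such a $\lambda$ must be exceptional. Then we identify which exceptional value occurs, and finally we invoke the divisibility results of Section~\ref{sec:divpos} together with Theorem~\ref{thm:mult2}.

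\textbf{Step 1: exceptionality.} If both solutions are holomorphic at $\omega=0$ and at $\omega=1$, then the local monodromy matrices $\mathbf{A}_0$ and $\mathbf{A}_1$ are both trivial. In particular the exponent differences $\frac12(\frac{\lambda}{\tau}+\frac12-a-\e)$ at $0$ and $\frac12(\frac{\lambda}{\tau}+\frac52-a+\e)$ at $1$ must be integers. If the exponent difference at $0$ were not an integer, the non-zero-exponent solution would be non-holomorphic, contradicting dimension $2$. Writing $a=1+\rho$, integrality at $0$ gives $\frac{\lambda}{\tau}-\frac12-\rho-\e\in 2\Z$. Integrality at $1$ then adds the condition $2\e\in 2\Z$ modulo that, i.e.\ $\e\in\Z$. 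The condition at $0$ also excludes non-integral exponents unless the two solutions differ by logarithmic terms, which are excluded by triviality of monodromy.

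One must still rule out a negative integer value. For this, observe that with the exponent at $0$ negative, the holomorphic solution space at $0$ is one-dimensional, so the exponent must be a non-negative integer $N$. This gives $\lambda=\tau(2N+\rho+\frac12+\ell)$, i.e.\ $\lambda$ is exceptional. If $\ell<0$, the symmetric argument with $H_B$ (the sign change $\e\to-\e$) applies, but since $\e=\ell\ge 0$ we restrict to this case.

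\textbf{Step 2: both solutions.} With the Riemann scheme \eqref{eq:JuddianRS}, $\mathbf{A}_0=\mathbf{A}_1=\mathbf{I}_2$, so $\mathbf{A}_2\mathbf{A}_3=\mathbf{I}_2$. The eigenvalues in Table~\ref{tab:eigenMon2} then force $\mathbf{A}_2$ and $\mathbf{A}_3$ to be semisimple and mutually inverse. Diagonalizing, one solution is single-valued around $t$ and $\infty$ with exponent $-N$ or $0$ at infinity; it is therefore a polynomial (quasi-exact), i.e.\ a Juddian solution of $H_A$. The other solution carries the factor $(\omega-t)^{1/2}$ and corresponds, via the relation \eqref{eq:HaHbrel} ($\e\to-\e$), to a polynomial solution of $H_B$ at the exceptional value $\lambda=\tau(2(N+\ell)+\rho+\frac12-\ell)$. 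By Proposition~\ref{prop:constraint}, existence of the first solution is equivalent to $P_N^{(N,\rho,\ell)}((2g)^2,\Delta^2)=0$, and existence of the second to $P_{N+\ell}^{(N+\ell,\rho,-\ell)}((2g)^2,\Delta^2)=0$.

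By Theorem~\ref{thm:div} and positivity of $A_N^{(\ell,\rho)}$ on $x,y>0$, these two conditions are equivalent, so the constraint reduces to \eqref{eq:thmconst}. The converse, that the condition yields a genuine two-dimensional eigenspace, follows from Theorem~\ref{thm:mult2} and Corollary~\ref{cor:gensolFin}. Since both eigenfunctions are Juddian in one of the realizations, no non-Juddian shared-parity degeneracy exists.

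\textbf{Main obstacle.} The delicate point is Step~2, in particular showing that $\mathbf{A}_3$ cannot be a nontrivial Jordan block and that the $(\omega-t)^{1/2}$-type solution really corresponds to a quasi-exact solution of $H_B$. I would handle this by mirroring Theorem~6.3 of \cite{RBW2023} for the aNCHO, whose singularity structure is identical.
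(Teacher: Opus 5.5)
\textbf{Verdict: Step 1 is correct, but Step 2 has a genuine (easily fixable) gap.}

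Your Step 1 is sound, and it takes a different route from the paper. The paper excludes non-Juddian shared-parity degeneracies by citing spectral-determinant arguments for $\lambda\neq\tau N$ and Xie's result for $\lambda=\tau N$. You instead read off exceptionality, and $\e\in\Z$, from the triviality of $\mathbf{A}_0$ and $\mathbf{A}_1$ that Proposition~\ref{Prop:32} forces. This handles $\lambda=\tau N$ automatically, because for integer $\e$ the exponent at $0$ is then not an integer.

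The gap is in Step 2, exactly where you reach the constraint polynomials. There are two problems.

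\emph{Missing link for $f_1$.} Proposition~\ref{prop:constraint} is proved only for polynomial solutions of the Fock systems \eqref{eq:Hamiltonian1}/\eqref{eq:Hamiltonian2}, through $K_N^{(\rho,\e)}$. Nothing you cite shows that a polynomial solution $f_1$ of $\Lambda_{1+\rho}f=0$ comes from such a Fock solution. So the step ``$f_1$ polynomial $\Rightarrow P_N^{(N,\rho,\ell)}((2g)^2,\Delta^2)=0$'' is unsupported as written.

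\emph{Wrong mechanism for $f_2$.} You identify the $(\omega-t)^{1/2}$-solution with the $H_B$-Juddian solution ``via \eqref{eq:HaHbrel}'', but that relation cannot do this.
\begin{itemize}
\item \eqref{eq:HaHbrel} conjugates $H_B^{(-\e)}$ into $H_A^{(\e)}$. It flips the bias, so it relates $H_\ell$ to the different model $H_{-\ell}$. The two realizations of the same $H_\ell$ differ by a squeezing, not by $\mathcal{P}_x$.
\item $\mathcal{P}_x$ acts by $f(z)\mapsto f(iz)$ together with a swap of components. It therefore maps polynomials to polynomials and cannot turn a non-quasi-exact solution into a quasi-exact one.
\end{itemize}
Conversely, the obstacle you flag does not exist. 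Since $\mathbf{A}_3=\mathbf{A}_2^{-1}$ and $\mathbf{A}_2$ has the distinct eigenvalues $\pm1$, both matrices are diagonalizable.

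\emph{Repair.} The cleanest fix bypasses the Heun-to-Fock translation entirely. Take the exceptional value $\lambda=\tau(2N+\rho+\tfrac12+\ell)$ obtained in Step 1.
\begin{itemize}
\item For every parity-$\rho$ eigenfunction of \eqref{eq:Hamiltonian1}, \eqref{eq:relA1} forces $b_N=0$, and \eqref{eq:coeffKjudd} gives $b_n=b_0K_n^{(\rho,\ell)}$ for $n\le N$.
\item Suppose $K_N^{(\rho,\ell)}\neq0$. Then $b_0=0$, so $b_n=0$ for $n\le N$ and $a_n=0$ for $n<N$. The whole series is then determined by $a_N$ alone, so the parity-$\rho$ eigenspace is at most one-dimensional.
\item Hence degeneracy forces $K_N^{(\rho,\ell)}=0$, i.e.\ $P_N^{(N,\rho,\ell)}((2g)^2,\Delta^2)=0$.
\item Theorem~\ref{thm:div} (or the same argument in \eqref{eq:Hamiltonian2}) then gives $P_{N+\ell}^{(N+\ell,\rho,-\ell)}((2g)^2,\Delta^2)=0$.
\item Proposition~\ref{prop:constraint} supplies both Juddian solutions. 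The linear-independence theorem at the end of Section~\ref{sec:linindep} shows they span the eigenspace.
\end{itemize}
With this, no identification of $f_2$ is needed. The same recurrence applied to non-exceptional $\lambda$, where the parity-$\rho$ formal solution is unique up to scale, is essentially the spectral-determinant argument the paper invokes.
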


\begin{proof}
  The first part of the result was proved in Section \ref{sec:linindep}. To complete the result, it remains to show that non-Juddiann solutions  do not constitute shared-parity spectral degeneracies. In fact, this can be done for eigenvalues not of the form $\lambda = \tau N$ for an integer $N \geq 0$ using spectral determinant methods as in the case of the  2pQRM (e.g. \cite{B2023} and \cite{MS2019}).  On the other hand, when eigenvalues of the form $\lambda = \tau N$ are degenerate, they correspond to solutions of different
  parity as shown in \cite{X2020}.
\end{proof}

\begin{cor}
  For $\e \notin \Z_{\geq 0}$, there are no degeneracies on the spectrum of the 2pAQRM consisting
  of two Juddian solutions. Moreover, there are no degeneracies of any type between eigenvalues
  of the same parity. \qed
\end{cor}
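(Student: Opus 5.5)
The corollary has two parts, and I would prove them separately, deducing both from Theorem \ref{thm:judddeg} together with the monodromy data of Section \ref{sec:linindep}.

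\emph{Second claim (no same-parity degeneracies for $\e\notin\Z_{\geq0}$).} I would argue by contradiction. Suppose $\lambda$ is a degenerate eigenvalue with two linearly independent eigenfunctions of the same parity $\rho$. By Proposition \ref{Prop:32}, the corresponding Heun equation $\Lambda_{1+\rho}f=0$ then has a two-dimensional space of solutions holomorphic on $\Omega$, a domain containing $0$ and $1$. A two-dimensional holomorphic solution space at a regular singularity forces both local solutions to be holomorphic there. So both exponents at $\omega=0$ must be non-negative integers without logarithmic obstruction, and likewise at $\omega=1$. From the Riemann scheme, the exponents at $0$ and $1$ are $0,\ \frac12(\frac{\lambda}{\tau}+\frac12-a-\e)$ and $0,\ \frac12(\frac{\lambda}{\tau}+\frac52-a+\e)$. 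Integrality of both nonzero exponents gives two consequences: their difference $\e+1$ is an integer, so $\e\in\Z$; and $\lambda=\tau(2N+\rho+\frac12+\e)$ is exceptional. Since the paper's setting takes $\e\geq0$ (the case $\e<0$ reduces to it via $H_B$ and $\e\mapsto-\e$), this contradicts $\e\notin\Z_{\geq0}$. Equivalently, I would cite the remark after Theorem \ref{thm:mult2} that the monodromy data rules out degenerate exceptional eigenvalues for $\e\notin\Z$, and then the part of the proof of Theorem \ref{thm:judddeg} (spectral determinant for non-exceptional $\lambda$, and \cite{X2020} for $\lambda=\tau N$) showing that non-Juddian degeneracies are of opposite parity.

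\emph{First claim (no degeneracy made of two Juddian solutions).} Two Juddian solutions have polynomial eigenfunctions in one of the realizations (A) or (B). If they had opposite parity, each would be exceptional, with $\lambda=\tau(2N_1+\frac12\pm\e)=\tau(2N_2+1+\frac12\pm\e)$. With equal signs the two sides differ by an odd integer multiple of $\tau$, which is impossible. With opposite signs we get $2\e\in\Z$ odd, i.e.\ $\e\in\frac12+\Z$. This is exactly the case covered by the preceding Proposition, which shows that one of the two eigenfunctions is non-Juddian. If instead they had the same parity, the second claim applies. Hence no degeneracy can consist of two Juddian solutions.

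The main obstacle I expect is the step ``two-dimensional holomorphic space $\Rightarrow$ integral exponents at both $0$ and $1$''. Two things must be checked there. First, the relevant local solutions must not carry logarithms; this is ensured by the full rank of the solution space, as in \cite{RBW2023}. Second, the correspondence of Proposition \ref{Prop:32} must be used in the direction that sends a two-dimensional eigenspace to a two-dimensional solution space on $\Omega$. I would follow the aNCHO argument of \cite{RBW2023} for both points.
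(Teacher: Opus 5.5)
Your proof is correct for $\e\notin\Z$, and for the second claim it takes a more direct route than the paper. The paper gets ``no same-parity degeneracy'' by splitting cases:
for exceptional $\lambda$ it uses the monodromy remark after Theorem~\ref{thm:mult2} (the eigenvalue $e^{-2\pi i\e}$ of $\mathbf{A}_1$);
for non-exceptional $\lambda$ it appeals, in the proof of Theorem~\ref{thm:judddeg}, to external spectral-determinant arguments and to Xie's result for $\lambda=\tau N$.

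You apply the local-exponent mechanism to \emph{arbitrary} $\lambda$. A two-dimensional space of solutions holomorphic on $\Omega$ forces $\sigma_0=\frac12(\frac{\lambda}{\tau}-\frac12-\rho-\e)$ and $\sigma_1=\frac12(\frac{\lambda}{\tau}+\frac32-\rho+\e)$ to be positive integers. So $\lambda$ has both forms $\tau(2N+\rho+\frac12+\e)$ and $\tau(2N'+\rho+\frac12-\e)$, and $\e=\sigma_1-\sigma_0-1\in\Z$. This disposes of the non-exceptional case at once and makes the external citations unnecessary. The price is that everything rests on Proposition~\ref{Prop:32} being a genuine linear isomorphism of the fixed-parity eigenspace onto the holomorphic solutions on $\Omega$, which the paper states without proof; you rightly flag this.

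For the first claim, your case split is what the paper's one-line corollary implicitly uses. Note that the ``non-Juddian'' half of the half-integer Proposition in Section~\ref{sec:diffparity} is itself deduced from the fact, quoted at the start of Section~\ref{sec:juddian} from Xie--Chen, that Juddian degeneracies are never of opposite parity. So your opposite-parity case ultimately rests on that external input, not on the Heun picture.

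One sentence needs correcting: ``this contradicts $\e\notin\Z_{\geq0}$'' is false for negative integers, and your own parenthetical shows why.
\begin{itemize}
\item The substitution $\e\mapsto-\e$ is an exact, parity-preserving unitary equivalence, $\hat{\mathcal{P}}_x H_{\e}\hat{\mathcal{P}}_x^{-1}=H_{-\e}$. By \eqref{eq:HaHbrel} it also carries polynomial eigenfunctions to polynomial ones.
\item Hence for $\e=-\ell$ with $\ell\geq1$, the shared-parity degeneracies formed by two Juddian solutions at $\e=\ell$ (Theorem~\ref{thm:mult2}) reappear.
\item Your exponent argument agrees: it permits $\e=-\ell$ whenever $N\geq\ell$, since then $\sigma_1=N+1-\ell\geq1$.
\end{itemize}
So both assertions of the corollary fail for $\e\in\Z_{<0}$ as literally stated. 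What your argument proves, and what the statement must mean, is the corollary for $\e\notin\Z$ (equivalently $\e\notin\Z_{\geq0}$ under a standing assumption $\e\geq0$). State that restriction explicitly rather than claiming a contradiction.
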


To conclude this section, in Table \ref{tab:SS2pAQRM} we present a summary of the spectral structure of the 2pAQRM described by the results above. Note that in this table, we consider any solution given by Heun polynomials (including usual polynomials) as Juddian solutions.

\ctable[
caption = {Spectral structure of the 2pAQRM},
label = tab:SS2pAQRM,
pos     = ht,
width   = \hsize,
left
]{l>{\raggedright}X>{\raggedright}X>{\raggedright}X }{
  \tnote[a]{Here, we say ``non-exceptional'' instead of ``regular'' since regular spectrum usually is typically understood as simple. Due to the form of the eigenvalues, these may be considered a different type of exceptional solutions.}
}{
\NN
 & $\e \in  \Z$ & $\e \in \frac12 + \Z$ & $\e \notin  \frac12 \Z$  
\NN
\cmidrule(r){2-2}\cmidrule(rl){3-3}\cmidrule(l){4-4}
Type  & exceptional or
non-exceptional\tmark[a] & exceptional  & non-exceptional\tmark[a]
\ML
Form  & $\lambda = \tau(N + \frac12)$ or
        $\lambda = \tau  N $        &  $\lambda = \tau  N $ & $\lambda = \tau  N $
\ML
Eigenfunction & two Juddian or
 two non-exceptional & Juddian and non-Juddian & non-exceptional  \ML
 Parity & shared  or opposite & opposite  &  opposite
\ML
Symmetry & hidden or $\Z_2$ & $\Z_2$  & $\Z_2$
}

As it may be expected due to the relation between the two models described in Section \ref{sec:NCHOp}, the spectral structure for the solutions of the aNCHO (see \cite{RBW2023}) is essentially equivalent to that of the 2pAQRM, we leave the detailed discussion to \cite{RBW2027}.

\begin{rem}
  \label{rem:juddianHeun}
  From the discussion on this section and in light of the picture given in Table \ref{tab:SS2pAQRM}, it seems natural to consider as Juddian solutions not only the polynomial solutions of the corresponding Heun ODE, but also Heun polynomials in general.

  From this point of view the curious situation described in \cite{RBW2023}, where a non-finite solution of the aNCHO under the confluence process (covering of models) may result in a Juddian solution of the AQRM, could be
  clarified by describing the degenerate solutions in a similar way as Corollary \ref{cor:gensolFin} above (using
  the corresponding monodromy data of the confluent Heun ODE picture of the AQRM).
\end{rem}

\begin{rem}
  We can give a simpler proof of the second part of the Theorem \ref{thm:judddeg} above based on the relation between the 2pAQRM and the aNCHO of Theorem \ref{them:N2024}, valid for eigenvalues $\lambda$ such that $\lambda > \Delta$.

  Consider a shared-parity degenerate solution $\lambda$ of 2pAQRM for $\e \in \Z_{\geq0}$, then by the relation with
  the aNCHO given by Theorem \ref{them:N2024}, we see that the corresponding solution must be a shared-parity
  degenerate eigenvalue $\tilde{\lambda}$ of the aNCHO of the form
  \[
    \tilde{\lambda}  = \frac{2\sqrt{\alpha \beta (\alpha \beta - 1)}}{\alpha+\beta} \left( 2 L + \rho  + \tfrac12 + 2\eta \right),
  \]
  with $\eta \in \Z$ and $L \in \Z_{\geq0}$, which is easy to see that it must correspond to an
  exceptional eigenvalue $\lambda$ of the 2pAQRM.
  Moreover, since at least one of the eigenfunctions of $\tilde{\lambda}$ must be finite type, then $\lambda$ has a Juddian solution, that is, it satisfies the constraint equation \eqref{eq:thmconst} and therefore it must be a degenerate eigenvalue
  corresponding to two Juddian solutions by the divisibility relation of Theorem \ref{thm:div}, since the multiplicity is bounded above by two, the result follows.
\end{rem}

\section{Remarks on the hidden symmetry for the 2pAQRM}
\label{sec:hiddenS}

As mention in the introduction,  associated with the shared-parity spectral degeneracies of the 2pQRM there is a corresponding symmetry operator. The relation between the two is clearly described by the invariant decomposition of the Hilbert space induced by said operator. The situation for 2pAQRM with $\e \in \R$ is more complicated. As we shown in Section \ref{sec:degstruct}, for $\e \in \Z$, shared-parity degeneracies may exist, according to the roots of the constraint polynomials, but there is not an apparent symmetry in the Hamiltonian, and thus such symmetries came to be called hidden symmetries. 

The existence of the hidden symmetry operators was first systematically studied in \cite{XC2022}. In that paper, the authors show explicitly the operator for small values of $\e  \in \Z$ in \cite{XC2022} and give a heuristic method for the general case, along with certain properties observed from the numerical computations. 

The main result of \cite{XC2022} is the existence of the ``hidden symmetry operator'' $J_\ell$ satisfying
\[
  [J_\ell, H_{\ell}] = 0,
\]
with $J_\ell$ of the form
\[
  J_\ell = e^{\tfrac{\pi i a^\dag a}2} Q_\ell,
\]
where $Q_\ell \in {\rm Mat}_2[\C[a,a^\dag]]$. While the method outlined in \cite{XC2022} may be used to compute the components of the operator $Q_\ell$ for general $\ell$, the authors did not prove the existence of the operator in the general case.

Another relevant observation made in \cite{XC2022} is that the second power of $J_\ell$ can be partially written as polynomial in the Hamiltonian $H_{\ell}$. Concretely\footnote{We note that in \cite{XC2022}, in the expression for $J_1^2$ the factor $e^{\pi i a^\dag a}$ is missing}, the observed that for the cases they computed, the identity 
\[
  J_\ell^2 = \exp(\pi i a^\dag a) p_\ell(H_{\ell}),
\]
holds, where $p_\ell(x) = p_\ell(x;g,\Delta)$ is a polynomial of degree $2\ell$. Note that in particular, we $J_\ell^2$ commutes trivially with $H_{\ell}$. Since the polynomial $p_\ell(x)$ is not constant, there is no canonical decomposition of Hibert space induced by $J_\ell$ as in the symmetric case and thus the relation between the existence of degeneracies and hidden symmetries is not straightforward.

In this section, we prove the existence and  other basic properties of the symmetry operator $J_\ell$, including the existence of the polynomial $p_\ell(x)$. Currently, the only way to compute the polynomial $p_\ell(x)$ is to obtain it from the quadratic equation of $J_\ell$, however in Section \ref{sec:symdeg} we give a conjectural relation with the inter-constraint polynomial.

As we mentioned, the operator $J_\ell$ is not of finite (multiplicative) order and therefore it does not give a $\Z_4$-decomposition of the Hilbert space. An alternative way to study the $\Z_4$-symmetry of the Hamiltonian is to consider the commutant algebra, as it has been suggested in relation to the integrability. An obvious, but important point is that the operator $J_\ell$ is not an element of the matrix Weyl algebra ${\rm Mat}_2[\C[a,a^\dag]]$ and thus some technical care should be taken when considering the commutant algebra. We consider this point of view in Section \ref{sec:SymtpQRM}.

\subsection{Existence and uniqueness of the symmetry operator}
\label{sec:symmetryproof}

In this section we prove the existence of the hidden symmetry operator and its basic properties. The basic scheme of the proof that of the asymmetric QRM \cite{RBBW2021} and is based on a careful analysis of the recurrence relations for the operator $Q_\ell$ determined by the commutation condition and
the most delicate point is the existence of the operator.

First, let us consider  $Q\in  {\rm Mat}_2[\C[a,a^\dag]]$ given by
\begin{align}
  \label{eq:Qdef}
  Q =
  \begin{bmatrix}
    A & B \\
    C & D
  \end{bmatrix}
\end{align}
where $A,B,C,D \in \C[a,a^\dag]$.

For simplicity, we write the Hamiltonian of the 2pAQRM as $H^{\e}$, is given by
\[
  H^{\e} =
  \begin{bmatrix}
      a^\dag a + \tfrac12 + g ((a^\dag)^2+a^2) + \tau \e  & \Delta \\
    \Delta &  a^\dag a + \tfrac12 - g ((a^\dag)^2+a^2) - \tau \e
  \end{bmatrix},
\]
and assuming that $J = e^{\tfrac{\pi i a^\dag a}2} Q$ commutes with $H^{\e}$, we see that
\begin{align}
  \label{eq:symcomd}
  Q H^{e} = \widetilde{H^{\e}} Q,
\end{align}
with
\[
   \widetilde{H^{\e}} =   \begin{bmatrix}
    a^\dag a + \tfrac12 - g ((a^\dag)^2+a^2) + \tau \e & \Delta \\
    \Delta &  a^\dag a + \tfrac12 + g ((a^\dag)^2+a^2) - \tau \e
  \end{bmatrix}.
\]

The condition \eqref{eq:symcomd} is written as
\begin{align*}
  \begin{bmatrix}
    [A,a^\dag a] & [B,a^\dag a] \\
    [C, a^\dag a] & [D, a^\dag a]
  \end{bmatrix}
  + g
  \begin{bmatrix}
    A((a^\dag)^2+a^2) + ((a^\dag)^2+a^2)A & - [B,(a^\dag)^2+a^2] \\
    [C,(a^\dag)^2+a^2] & -D((a^\dag)^2+a^2) - ((a^\dag)^2+a^2)D
  \end{bmatrix} \\
  + 2 \tau 
  \begin{bmatrix}
    0 & -\e \\
    \e & 0
  \end{bmatrix}
  + \Delta
  \begin{bmatrix}
    B-C & A -D \\
    D-A & C-B
  \end{bmatrix}
  = 0
\end{align*}
and determines a system of recurrence relations for the coefficients of $Q_\e$.

Recall that any element $f \in \C[a,a^\dag]$ can be written in the standard form
\[
  f = \sum_{m,n \geq 0} c_{m,n} (a^\dag)^m a^n,
\]
with $c_{m,n} \in \C$ and with $c_{m,n}=0$ for $m+n$ large enough. The total degree of $f$ is the largest value of $m+n$ such that $c_{m,n}\neq 0$. Let us denote for $A,B,C,D$ standard form by the corresponding greek letter, for instance $A = \sum_{m,n} \alpha_{m,n} (a^\dag)^m a^n $.

From this, the recurrence relations, for $m,n\geq 0$ are given by
\begin{align}
  \alpha_{m,n}(n-m) &+ 2 g \alpha_{m,n-2} + 2 g \alpha_{m-2,n} + (n+2)(n+1) g \alpha_{m,n+2} + (m+2)(m+1) g \alpha_{m+2,n}  \nonumber \\
               &+ 2 (n+1) g \alpha_{m-1,n+1} + 2 (m+1) g \alpha_{m+1,n-1} + \Delta(\beta_{m,n} - \gamma_{m,n}) = 0,  \nonumber   \\
  \beta_{m,n}(n-m-2 \tau \e) &- (n+2)(n+1) g \beta_{m,n+2} + (m+2)(m+1) g \beta_{m+2,n} \nonumber \\
               &- 2 (n+1) g \beta_{m-1,n+1} + 2 (m+1) g \beta_{m+1,n-1} + \Delta(\alpha_{m,n} - \delta_{m,n}) = 0,  \nonumber   \\
  \gamma_{m,n}(n-m+ 2 \tau \e) &+ (n+2)(n+1) g \gamma_{m,n+2} - (m+2)(m+1) g \gamma_{m+2,n}  \nonumber \\
               &+ 2 (n+1) g \gamma_{m-1,n+1} - 2 (m+1) g \gamma_{m+1,n-1} + \Delta(\delta_{m,n} - \alpha_{m,n}) = 0,    \nonumber  \\
  \delta_{m,n}(n-m) & - 2 g \delta_{m,n-2} - 2 g \delta_{m-2,n} - (n+2)(n+1) g \delta_{m,n+2} - (m+2)(m+1) g \delta_{m+2,n} \nonumber \\
               &- 2 (n+1) g \delta_{m-1,n+1} - 2 (m+1) g \delta_{m+1,n-1} + \Delta(\gamma_{m,n} - \beta_{m,n}) = 0.   \label{eq:recQ} 
\end{align}

With these preparations, we present the main result of this section. 

\begin{prop}
  \label{prop:sym}
  For $\e= \ell$, there is an operator $J_\ell = e^{\frac{\pi i a^\dag a}2} Q_\ell$ with $Q_\ell \in {\rm Mat}_2(\C[a,a^\dag])$, such that
  \begin{align}
    \label{eq:comm}
    [H_{\ell},J_\ell]=0,
  \end{align}
  and with the properties
  \begin{enumerate}
  \item  the entries of $Q_\ell$ have total degree $2 \ell$ in $\C[a,a^\dag]$,
  \item  the operator $J_\ell$ is self-adjoint,
  \item  $J_\ell^2 = \exp(\pi i a^\dag a) p_\ell(H_{\ell})$ for a polynomial $p_\ell(x)$ of degree $2\ell$.
  \end{enumerate}
\end{prop}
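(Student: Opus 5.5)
The plan follows the scheme used for the hidden symmetry of the AQRM in \cite{RBBW2021}: reduce the commutation condition \eqref{eq:comm} to the intertwining relation \eqref{eq:symcomd}, $QH^{\ell}=\widetilde{H^{\ell}}Q$, and solve it degree by degree in the Weyl algebra using the recurrences \eqref{eq:recQ}.

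\textbf{Reduction.} The first step is a Bogoliubov change of variables, namely conjugation by $I_\theta$ as in Section~\ref{sec:solutions}. Since $\widetilde{H^{\ell}}=\hat{\mathcal P}_x^{-1}H^{\ell}\hat{\mathcal P}_x$ up to the sign of the bias, relation \eqref{eq:HaHbrel} shows that finding $Q$ amounts to finding an intertwiner
\[
\mathcal{Q}\,H_A^{(\ell)}=\sigma_x H_B^{(-\ell)}\sigma_x\,\mathcal{Q}
\]
in ${\rm Mat}_2[\C[b,b^\dag]]$. In these coordinates one diagonal entry is $\tau(b^\dag b+\tfrac12)\pm\tau\ell$. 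The intertwiner should therefore shift the number operator by exactly the amount $2\ell$ dictated by the bias. This explains both why only integer bias works and why the total degree is $2\ell$.

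\textbf{Existence and degree (property (1)).} Write $Q$ in the form \eqref{eq:Qdef} and expand the entries in standard form. I would graduate the recurrences \eqref{eq:recQ} by total degree $m+n$. The top-degree part of \eqref{eq:symcomd} involves only the terms multiplied by $g$. For $m+n=2\ell$ it is a linear system whose solution space I would compute explicitly: the top part should be $(a^\dag)^{2k}a^{2\ell-2k}$ combinations weighted by $\sh,\ch$-type coefficients. The system is then solved downward in degree. At each step the unknowns of degree $d$ are determined from those of degree $d+1,d+2$, provided a certain linear operator $L_d$ (built from $n-m\mp2\tau\ell$ and the $g$-terms) is invertible modulo its kernel.

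The main obstacle is the solvability condition at the bottom of the recursion. The cokernel of $L_d$ is nontrivial precisely in the degree where $n-m=\pm2\ell$ interacts with the $\tau\ell$ term. The right-hand side must be shown to lie in the image there. I would try to avoid a direct brute-force check. Instead I would exploit the $\mathfrak{sl}_2$ structure: $H^{\ell}$ lies in the image of $\mathcal U(\mathfrak{sl}_2)\otimes{\rm Mat}_2$, and one can produce $Q$ as an explicit element of that algebra. One candidate is a product of $\ell$ factors of Darboux (factorization) type, each shifting the bias by one unit, $H^{k}\mapsto H^{k-1}$ intertwiners, composed. Degree $2$ per factor would give total degree $2\ell$ automatically. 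If the factorization approach fails, the fallback is the counting argument: the number of free parameters exceeds the number of obstructions, which would be verified using the Heun picture (the degeneracies of Theorem~\ref{thm:judddeg} force a nonscalar commutant on each Juddian pair).

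\textbf{Self-adjointness and the quadratic relation (properties (2) and (3)).} Taking adjoints of \eqref{eq:comm} shows that $J_\ell^\dagger$ also commutes with $H_\ell$ and has the same form. Uniqueness of the solution up to adding polynomials in $H_\ell$ times parity, obtained from the same degree recursion, lets one replace $J_\ell$ by $\tfrac12(J_\ell+J_\ell^\dagger)$ and normalize the top coefficient so that it is self-adjoint. Next, $J_\ell^2=e^{\pi i a^\dag a}Q'$ with $Q'$ of degree $4\ell$ commuting with $H_\ell$ and with parity. I would show that the commutant of $H_\ell$ within the even part of ${\rm Mat}_2[\C[a,a^\dag]]$ consists of polynomials in $H_\ell$. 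The argument is a top-symbol one: the leading symbol must be a power of the symbol of $H_\ell$, so subtract and induct on degree. This gives $p_\ell$ of degree $4\ell/2=2\ell$.
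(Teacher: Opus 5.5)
Your degree-by-degree solution of \eqref{eq:recQ} is the same framework the paper uses. However, the step that decides existence in (1) is missing, and parts of the recursion are mis-described.

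\textbf{Top degree.} The top-degree $B$- and $C$-equations are not ``only the $g$-terms''. They read $\Omega_{2\ell}B_{2\ell}=0$ and $\widehat\Omega_{2\ell}C_{2\ell}=0$, with the diagonal terms $N-2i\mp 2\tau\e$ included. Since $\Omega_d$ has eigenvalues $\tau(d-2i-2\e)$, it is singular only for $d=2\e$, and there it has the one-dimensional kernel of Lemma~\ref{lem:SolOmega}. Your number-shift heuristic is consistent with this: in the variables $a_\pm$ it amounts to $B_{2\ell}\propto a_-^{2\ell}$ and $C_{2\ell}\propto(a_+^\dag)^{2\ell}$, cf.\ \eqref{eq:recBm}, \eqref{eq:recCp}. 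But the top part contains all monomials $(a^\dag)^ia^{2\ell-i}$, not only even ones (note the term $-4g\,a^\dag a$ in $Q_1$). More importantly, the heuristic only yields $2\e\in\Z$.

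\textbf{The decisive constraints.} These come from the diagonal entries, not from $B$ and $C$. In the $A$-equation the lower-degree unknown enters only through the anticommutator with $a^2+(a^\dag)^2$. On symbols this is multiplication by $x^2+y^2$, whose image has codimension two in each degree. In degree $2\ell$ this gives the two alternating-sum conditions on $\beta_{m,n}-\gamma_{m,n}$. The paper checks them for the normalization of Lemma~\ref{lem:SolOmega} by cancelling the $i$-th and $(2\ell-i)$-th terms in pairs. This is where integrality is decided: the kernels of Lemma~\ref{lem:SolOmega} also exist for half-integral $\e$, and it is exactly these conditions that fail there (Proposition~\ref{eq:SimOpHalf}). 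So the obstruction sits at the top of the recursion, not at its bottom, and your proposal never addresses it.

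\textbf{The substitutes do not close the gap.}
\begin{itemize}
\item \emph{Darboux factorization.} $\sigma_xQ$ must intertwine $H^{\ell}$ with $H^{-\ell}$, because $\widetilde{H^{\ell}}=\sigma_xH^{-\ell}\sigma_x$, and $H^{-\ell}$ is unitarily equivalent to $H^\ell$. Your $\ell$ unit shifts only reach bias $0$. The intermediate $H^{k}$ are not isospectral to $H^{\ell}$, so a nonzero polynomial intertwiner $H^{\ell}\to H^{\ell-1}$ would have to annihilate every eigenvector of $H^\ell$ whose eigenvalue is not shared. Nothing supports the existence of such factors.
\item \emph{Counting.} The count goes the wrong way. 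Since $\Omega_d,\widehat\Omega_d$ are invertible for $d<2\ell$, everything is forced once $B_{2\ell},C_{2\ell}$ are fixed up to scalars. That leaves one free parameter (the ratio of the two scalars) against codimension-two conditions from both the $A$- and $D$-equations in every degree. Existence therefore rests on specific identities, not on a dimension count.
\item \emph{Degeneracies.} Degeneracies at isolated parameter values cannot produce an element of ${\rm Mat}_2(\C[a,a^\dag])$ valid for all $g,\Delta$. Degeneracy does not even imply such a symmetry: half-integral bias has degeneracies but no $J$ (Proposition~\ref{eq:SimOpHalf}).
\end{itemize}

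\textbf{Property (2).} Your symmetrization fails. With $\mathcal P=e^{\pi i a^\dag a}$ one has $J_\ell^\dagger=Q_\ell^\dagger e^{-\pi i a^\dag a/2}=e^{\pi i a^\dag a/2}\,\mathcal P\,\bigl(e^{\pi i a^\dag a/2}Q_\ell^\dagger e^{-\pi i a^\dag a/2}\bigr)$. So $J_\ell^\dagger$ carries an extra parity factor, is not of the form $e^{\pi i a^\dag a/2}Q$, and $\tfrac12(J_\ell+J_\ell^\dagger)$ leaves the class. In fact, (2) as stated is incompatible with (3). By (3), $J_\ell^2=p_\ell(H_\ell)$ on $\mathcal H_{\rm even}$ but $J_\ell^2=-p_\ell(H_\ell)$ on $\mathcal H_{\rm odd}$, with $p_\ell$ of even degree, so $J_\ell^2$ cannot be positive. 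Already $J_0=\hat{\mathcal P}_x$ is unitary with $J_0^2=\mathcal P$. What one can prove, from uniqueness up to a scalar and as checked directly on $Q_1,Q_2$ of Example~\ref{ex:HSO}, is $J_\ell^\dagger=\mathcal P J_\ell$, i.e.\ $e^{\pi i a^\dag a/2}Q_\ell^\dagger e^{-\pi i a^\dag a/2}=Q_\ell$.

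\textbf{Property (3).} Your top-symbol argument is sound; the two diagonal leading coefficients are matched using $\Delta\neq0$. It is a self-contained alternative to the paper's appeal to \cite{RBBW2021}.
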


To prove the existence of the hidden symmetry operator it is necessary to show that the system of simultaneous linear recurrences given in \eqref{eq:recQ} has non-zero solutions. This reduces to solving certain systems of linear recurrences, depending on the system parameters, and checking certain compatibility conditions.

Concretely, we require to consider solutions of the homogeneous system of defined by the
$(N+1)$-dimensional matrices
\[
  \Omega_N = \Tridiag{(N-2i-2\tau \e)}{2g(i+1)}{-2g(N-i+1)}{0\le i\le N},
\]
and
\[
  \widehat{\Omega}_N = \Tridiag{(N-2i+2\tau \e)}{-2g(i+1)}{2g(N-i+1)}{0\le i\le N}.
\]

\begin{lem}
  \label{lem:SolOmega}
  Let $\e=\ell \in\Z_{\geq0}$, or $2 \e=\ell \in\Z_{\geq0}$ with $\ell \equiv 1 \pmod{2}$, and set $N = 2 \ell$, then eigenfunctions of the zero eigenvalues for $\Omega_N$ and
  $\widehat{\Omega}_N$ are given, respectively, by $B_N = \{\beta_0,\beta_1,\cdots,\beta_N \}$ and $C_N = \{\gamma_0,\gamma_1,\cdots,\gamma_N \}$
  with
  \[
    \beta_i = (-1)^i \binom{N}{i} (1+\tau)^{\ell-\frac{i}{2}}(1-\tau)^{\frac{i}2}, \qquad \gamma_i = (-1)^\ell \binom{N}{i} (1-\tau)^{\ell-\frac{i}{2}}(1+\tau)^{\frac{i}2},
  \]
  for $ 0 \leq i \leq N$. Moreover, the dimension of the zero eigenvalue for both matrices is $1$.
\end{lem}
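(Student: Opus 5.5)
The idea is to turn the kernel equation $\Omega_N \bm{v}=0$ into a first-order linear ODE for a generating polynomial. Then the explicit $\beta_i$ come from a closed-form solution, and the dimension count comes from the tridiagonal structure.

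\emph{Step 1: the generating-function reduction.} For $\bm{v}=(v_0,\dots,v_N)$ put $p(s)=\sum_{i=0}^N v_i s^i$. The $i$-th row of $\Omega_N\bm{v}=0$ reads
\[
-2g(N-i+1)v_{i-1}+(N-2i-2\tau\e)v_i+2g(i+1)v_{i+1}=0 ,
\]
with $v_{-1}=v_{N+1}=0$. The factor $(N-i+1)$ makes the lower diagonal vanish automatically beyond degree $N$. Multiplying by $s^i$ and summing over $i$ shows that the system is equivalent to
\[
2g(1-s^2)\,p'(s)-2s\,p'(s)\;\text{-type terms}\;\Longrightarrow\;
(2g s^2-2s+2g)\,p'(s)=\bigl(2gNs-N+2\tau\e\bigr)\,p(s),
\]
to be checked carefully, signs included, on the individual coefficients. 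It must hold for a polynomial $p$ of degree at most $N$. The same computation for $\widehat{\Omega}_N$ replaces $(g,\e)$ by $(-g,-\e)$.

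\emph{Step 2: solving the ODE.} The quadratic $g s^2-s+g$ has roots $s_\pm=(1\pm\tau)/(2g)$, using $4g^2=1-\tau^2$. Partial fractions give $p=c\,(s-s_+)^{a}(s-s_-)^{b}$ with $a+b=N$. The individual exponents are affine in $\tau\e/\tau=\e$, since $s_+-s_-=\tau/g$; this is exactly where the normalization $\tau\e$ of the bias pays off. For $N=2\ell$ and $\e=\ell$, one exponent is $0$ and the other is $2\ell$. Hence $p=c(s-s_\pm)^{2\ell}$, and expanding binomially gives the stated $\beta_i$ after clearing constants, using $s_\pm$ and $(2g)^{i}=(1-\tau)^{i/2}(1+\tau)^{i/2}$. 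I will not trust my bookkeeping of which root survives until the expansion is matched against the claimed closed form. The $\gamma_i$ follow in the same way from $\widehat{\Omega}_N$. In the half-integer case the exponents are nonnegative integers, still summing to $N$, and the same expansion is carried out.

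\emph{Step 3: the dimension count.} The superdiagonal entries $2g(i+1)$ are nonzero for $g>0$. Therefore $v_0$ determines $v_1$ via row $0$, and then recursively every $v_{i+1}$ via row $i$. So the kernel has dimension at most $1$, and by Step 2 it has dimension exactly $1$.

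\emph{Main obstacle.} The expected difficulty is the sign and index bookkeeping in Step 1: the boundary rows, and whether the constant term really is $-N+2\tau\e$. A slip there would change which exponents are integral. I would first verify the ODE on the $N=2$, $\ell=1$ case explicitly before doing the general partial-fraction computation.
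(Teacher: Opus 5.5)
Your proposal is correct, but it reaches the kernel vector by a different route than the paper. The paper substitutes the claimed $\beta_i$ into the $i$-th row and simplifies with $4g^2=1-\tau^2$, so that the row vanishes exactly because $2\varepsilon=N$. Its dimension statement is the same remark as your Step 3: the nonzero off-diagonal entries make $v\mapsto v_0$ injective on the kernel.

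Your generating function instead \emph{derives} the vector. The three sums are $2gs^2p'-2gNsp$, $(N-2\tau\varepsilon)p-2sp'$ and $2gp'$, so your ODE $(2gs^2-2s+2g)p'=(2gNs-N+2\tau\varepsilon)p$ is right. The $s^{N+1}$ coefficients cancel, so the ODE is equivalent to rows $0,\dots,N$. The residues are $N/2+\varepsilon$ at $s_+$ and $N/2-\varepsilon$ at $s_-$. For $2\varepsilon=N$ the solution is $c(s-s_+)^N$, whose coefficients are the $\beta_i$ up to the factor $(1-\tau)^{N/2}$. For $\widehat{\Omega}_N$ you get $c(s+s_-)^N$, which gives the $\gamma_i$.

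The paper's check is shorter once the answer is known. Your route explains where the formula comes from and why the bias is normalized as $\tau\varepsilon$: the residues then do not depend on $\tau$. It also yields the converse, which the paper only asserts after Proposition~\ref{prop:sym}: $\ker\Omega_N\neq 0$ if and only if $N/2\pm\varepsilon\in\Z_{\ge 0}$, that is, $\varepsilon=(N-2i)/2$ with $0\le i\le N$.

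State one point explicitly. Your claim that in the half-integer case ``the exponents are nonnegative integers'' holds only if $N=2\varepsilon$, i.e.\ $N=\ell$ odd. That is how the lemma is used in the proof of Proposition~\ref{eq:SimOpHalf}. With the statement's literal $N=2\ell$ and $2\varepsilon=\ell$, your own formula gives exponents $3\ell/2$ and $\ell/2$, so the kernel would be trivial. So say that you read the half-integer case as $N=2\varepsilon$. Under that reading your output $(1+\tau)^{(N-i)/2}(1-\tau)^{i/2}$ differs from the displayed $\beta_i$ only by an $i$-independent factor, and similarly for the $\gamma_i$, so the lemma holds in that form.
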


\begin{proof}
  The result can be verified directly. For instance, for the case $\e=\ell \in\Z_{\geq0}$ and $\Omega_N$ we verify compute
  \begin{align*}
    & - 2 g (N-i+1) \beta_{i-1} + (N-2 i - 2 \tau \e)\beta_i + 2 g (i+1) \beta_{i+1} \\
    \qquad &= (-1)^i (1-\tau)^{\ell-\frac{i+1}{2}} (1-\tau)^{\frac{i-1}{2}} \binom{N}{i}\Bigg( - 2 g i(1+\tau)  \\
    &\qquad \qquad \qquad \qquad \qquad \qquad \qquad - (N-2i - 2\e \tau)(1-\tau^2) + 2g (N-i)(1-\tau)  \Bigg),
  \end{align*}
  and this expression vanishes since $\tau^2 = 1-4g^2$ and $2\e = N$. The dimension of the eigenspaces follows since the
  off-diagonal entries of the tridiagonal matrices $\Omega_N$ and $\widehat{\Omega}_N$ are nonzero (see e.g. \cite{HR2013}).
\end{proof}

Let us also consider the expansion of the entries of the operator $Q$ with respect to Bogoliubov transformed operators
$a_\pm, a_\pm^{\dag}$ given by
\[
  a_{\pm} = u a \pm v a^{\dag}, \qquad a_{\pm}^\dag = u a^{\dag} \pm v a,
\]
satisfying $[a_{\pm},a_{\pm}^\dag]=1$. Here, $u$ and $v$ are defined by \( u = \sqrt{\frac{1+\tau}\tau } \) and  \(v = \sqrt{\frac{1-\tau}\tau }\). Here, the expansions of entries are of the form
\[
  A = \sum_{m,n} \alpha_{m,n}^{(\pm)} (a_{\pm}^\dag)^m a_{\pm}^n,
\]
and similar for other entries. We remark that $\alpha_{m,n}^{(+)}$ with $m+n$ is written in terms of $\alpha_{m',n'}$ with $m'+n' \leq m+n$.
We note that this is equivalent to the approach considered in \cite{XC2022}, however we note that the expansion and therefore the recurrence relations that appear in the aforementioned paper are different than the ones presented here. We also note that the transforms above are equivalent to the Bogoliubov transforms used in Section \ref{sec:deg}.

For our purposes, it is enough to consider the recurrence relation for the components $C$ and $B$ of
$Q$, which is given by
\begin{align}
  \label{eq:recCp}
   \gamma^{(+)}_{m,n} & (n-m+ 2 \e)  + \frac{\Delta}{\tau} (\delta^{(+)}_{m,n} - \alpha^{(+)}_{m,n}) = 0,
\end{align}
and
\begin{align}
  \label{eq:recBm}
  \beta^{(-)}_{m,n}  (n-m - 2 \e)  + \frac{\Delta}{\tau} (\alpha^{(-)}_{m,n} - \delta^{(-)}_{m,n}) = 0.
\end{align}

With these preparations, we give the proof of Proposition \ref{prop:sym}.

\begin{proof}[Proof of Proposition \ref{prop:sym}]

  Let $Q \in {\rm Mat}_2(\C[a,a^\dag])$ have entries of total degree at most $2 \ell$ satisfying
  \eqref{eq:symcomd} with entries $A,B,C,D$ as in the foregoing discussion.

  To simplify the notation, for $n\geq 0$ let us denote by $A_n = \{ \alpha_{0,n}, \alpha_{1,n-1}, \cdots, \alpha_{n,0}\}$ and
  we write $A_n \equiv 0$ when all the coefficients are equal to zero, we follow the same notation for
  $B,C$ and $D$.

  First, by considering \eqref{eq:recQ} with $n+m=2\ell+2$ (resp. $n+m=2\ell+1$) we see that
  \[
     \alpha_{m,n-2} +  \alpha_{m-2,n} = 0.
  \]
  By taking $m=0, n=2 \ell+2$ and $m=1, n=2\ell+1$ (resp. $m=0, n=2 \ell+1$ and $m=1, n=2\ell$)
  we see that $A_{2 \ell} \equiv 0$ and $A_{2\ell -1}\equiv 0$. The same argument shows that $D_{2 \ell} \equiv 0$
  and $D_{2\ell -1}\equiv 0$.
  
  Next, we consider the case of $B$ of \eqref{eq:recQ} for $n+m= 2\ell$ (resp. $n+m=2\ell -1$) and note that
  the existence of the coefficients is equivalence to the equation
  \[
    \Omega_{2\ell} B_{2\ell} = 0,
  \]
  and such a vector exists by virtue of Lemma \ref{lem:SolOmega} and the situation for $C_N$ is
  identical (with $\Omega_{2\ell} $ replaced with $\widehat{\Omega}_{2\ell}$ ).

  From \eqref{eq:recCp} and \eqref{eq:recCp}, in the transformed variables $a_{\pm},a_{\pm}^\dag$ we see that
  there are no solutions of degree $2\ell-1$ for $B$ and $C$. Due to this and the relation $[a,a^\dag]=1$ (any
  transformation of expressions involving $a$ and $a^\dag$ preserves the parity of the degrees), we see that
  $B_{2\ell-1} \equiv 0$ and $C_{2\ell-1}\equiv 0$.
  
  Since the two solutions $B_{2\ell}$ and $C_{2\ell}$ were obtained independently, we need to fix the constant for the solutions, in order to do this consider the recurrence relation \eqref{eq:recQ}
  for $n+m=2\ell$ giving the equations
  \begin{align}
    \label{eq:AN2}
    2 g \alpha_{m,n-2} + 2 g \alpha_{m-2,n} + \Delta(\beta_{m,n} - \gamma_{m,n}) = 0,
  \end{align}
  which imposes the conditions
  \begin{align*}
    \sum_{i=0}^{\ell}(-1)^i (\beta_{2i,2\ell-2i} - \gamma_{2i,2\ell-2i}) &=0, \\
    \sum_{i=0}^{\ell-1}(-1)^i (\beta_{2i+1,2\ell-2i-1} - \gamma_{2i+1,2\ell-2i-1}) &=0.
  \end{align*}
  We directly verify that with the normalization given in Lemma \ref{lem:SolOmega}, this condition
  is satisfied directly. Indeed, we have
  \begin{align*}
    (-1)^i (\beta_{i,2\ell-i} - \gamma_{i,2\ell-i}) &= (-1)^{\ell+1+i} \left[(-1)^{2\ell-i} (\beta_{2\ell -i,i} - \gamma_{2\ell-i,i})\right] 
  \end{align*}
  and the condition is satisfied, since the summands cancel in pairs. Once the condition is verified,
  the recurrence relation system \eqref{eq:recQ} is completely determined and the coefficients of $A_{2\ell-2}$ are
  computed using \eqref{eq:AN2} and similarly for $D_{2\ell-2}$ and the remaining coefficients, finishing the
  proof of (1).
  The proof of (2) and (3) follow exactly as in \cite{RBBW2021}, so we omit them. 
\end{proof}


Note that the proof of Proposition \ref{prop:sym} also shows that for $\e=\ell$, the minimal solution of the form \eqref{eq:symcomd} must be of degree $2 \ell$. Regarding the uniqueness of the operator, it is straightforward to follow the proof of \cite{RBBW2021} to see that if $\exp(\frac{\pi i a^\dag a}2) Q$ with $Q \in {\rm Mat}_2(\C[a,a^\dag])$ and entries of total degree $2 \ell$ satisfies
\[
  [H_{\ell},\exp(\frac{\pi i a^\dag a}2) Q]
\]
then $\e \in \Z_{\geq 0}$ with $\e \leq \ell$. Furthermore, if $\e=\ell$ and $Q$ is a constant multiple of $Q_\ell$, or $\e = \ell' < \ell$ and then $Q = Q_{\ell'} p (H_{\ell'})$ for a polynomial $p$ of degree $\ell-\ell'$.  Moreover, if $\e \notin \tfrac12 \Z$, then we can show that there is no $Q \in {\rm Mat}_2(\C[a,a^\dag])$ satisfying \eqref{eq:symcomd}.

From Proposition \ref{prop:sym}, we see that solutions $B_{N}$ and $C_{N}$ of the recurrence relation exist only for $\e = \frac{N-2i}{2}$ for $i=0,N$. This shows that for $2\e \notin \Z $ there are no solutions $Q$ of \eqref{eq:recQ} with polynomial entries. Moreover, it also follows that if $\e = \pm N/2$, then $\det(\Omega_N)=0$ (resp. $\det(\widehat{\Omega}_N)=0$ ) and $\det(\Omega_{N-1}) \neq 0$ (resp. $\det(\widehat{\Omega}_N) \neq 0$).

\begin{ex}
  \label{ex:HSO}
  We give the explicit expression of the hidden symmetry operator for small values of $\ell$.
  Concretely, for $\ell=1$, we have
  \[
    Q_1 =
    \begin{bmatrix}
     -\frac{\Delta}{g} & \left(1+\tau \right)a^2 - 4 g a^\dag a  + \left(1-\tau\right) (a^{\dag})^2 - 2 g  \\
      - \left(1-\tau\right)a^2 - 4 g a^\dag a - \left(1+\tau\right)(a^{\dag})^2 -2 g & -\frac{\Delta}{g}.
    \end{bmatrix}
  \]
  For $\ell=2$, we have
  \(
    Q_2 =
    \begin{bmatrix}
      A_2 & B_2 \\
      C_2 & D_2
    \end{bmatrix},
    \)
    with
    \begin{align*}
      A_2 &= - \frac{2\tau  \Delta }{g} a^2 + 8 \Delta a^\dag a +  \frac{2\tau  \Delta }{g} (a^\dag)^2
            +    \left(4+\frac{2 \tau}{g^2}\right)\Delta \\
      B_2 &= (1+\tau)^2   a^4  - 8 g (1+\tau) a^\dag a^3 + 24 g^2(a^\dag)^2 a^2
            - 8 g (1-\tau) (a^\dag)^3 a + (1 -\tau)^2 (a^\dag)^5  \\
          &\quad  - 12 g(1+ \tau) a^2 +  48 g^2 a^\dag a - 12 g ( 1-\tau) (a^\dag)^2
            + \frac{\Delta ^2}{g^2}+ 12 g^2 , \\
      C_2 &=  (1-\tau)^2  a^4 + 8 g(1-\tau) a^\dag a^3 + 24 g^2(a^\dag)^2 a^2
            + 8 g (1+\tau) (a^\dag)^3 a + (1 + \tau)^2 (a^\dag)^5  \\
          &\quad 12 g(1-\tau) a^2 +  48 g^2 a^\dag a + 12 g (1+\tau) (a^\dag)^2 + \frac{\Delta ^2}{g^2}+12 g^2, \\
      D_2 &=  -\frac{ 2\tau  \Delta }{g} a^2 + 8 \Delta a^\dag a + \frac{ 2 \tau  \Delta }{ g} (a^\dag)^2
            +  \left(4-\frac{2 \tau}{g^2}\right)\Delta.
    \end{align*}
    We give the example of $\ell=3$ in Appendix \ref{sec:compl3}.
\end{ex}

We conclude this section by considering the case of $\e= n+ \frac{1}{2}$ with $n \in \Z_{\geq 0}$. As we showed in Section \ref{sec:diffparity}, the degenerate eigenvalues in this case are of opposite-parity type and therefore are associated with the $\Z_2$-symmetry. On the other hand, existence of solutions of $Q \in {\rm Mat}_2(\C[a,a^\dag])$ cannot be rule out on principle with the theory developed in this section. 

The following lemma is obtained from the proof of Proposition \ref{prop:sym}.

\begin{cor}
  \label{cor:necCond}
  A necessary condition for the existence of solutions of $Q \in {\rm Mat}_2(\C[a,a^\dag])$ with entries
  of total degree at most $N$ are
  \begin{align}
    \label{eq:compCond}
    \sum_{i=0}^{\lfloor N/2\rfloor}(-1)^i (\beta_{2i,N-2i} - \gamma_{2i,N-2i}) &=0, \\
    \sum_{0=1}^{\lfloor N/2\rfloor-1}(-1)^i (\beta_{2i+1,N-2i-1} - \gamma_{2i+1,N-2i-1}) &=0.
  \end{align}
\end{cor}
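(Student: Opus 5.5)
The plan is to read off the statement from the top-degree analysis in the proof of Proposition \ref{prop:sym}, without assuming $\e=\ell$. Suppose $Q\in {\rm Mat}_2(\C[a,a^\dag])$ has entries of total degree at most $N$ and satisfies \eqref{eq:symcomd}; write its entries in standard form with coefficients $\alpha_{m,n},\beta_{m,n},\gamma_{m,n},\delta_{m,n}$. We will extract the two identities \eqref{eq:compCond} from the equations \eqref{eq:recQ} in the degrees $m+n=N+2$ and $m+n=N$ only.

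\emph{Step 1: the diagonal entries lose their top degrees.} Take the first equation of \eqref{eq:recQ} with $m+n=N+2$ and with $m+n=N+1$. Every coefficient whose total degree exceeds $N$ vanishes, so only the terms $2g\alpha_{m,n-2}+2g\alpha_{m-2,n}$ survive, giving $\alpha_{m,n-2}+\alpha_{m-2,n}=0$. Starting from $m=0$ (where $\alpha_{m-2,n}$ is absent) and from $m=1$, these relations propagate along the two parity classes of the anti-diagonal and force $A_N\equiv0$ and $A_{N-1}\equiv 0$. The same computation with the fourth equation gives $D_N\equiv D_{N-1}\equiv0$. This is exactly the opening step of the proof of Proposition \ref{prop:sym}; it uses only the degree bound and never the value of $\e$.

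\emph{Step 2: the first equation at degree $N$.} Now take the first equation of \eqref{eq:recQ} with $m+n=N$. The terms $\alpha_{m,n}(n-m)$ vanish by Step 1, and so do the terms $\alpha_{m\pm1,n\mp1}$, since they also have degree $N$. The terms with $\alpha_{m,n+2},\alpha_{m+2,n}$ vanish because their degree is $N+2$. What remains is precisely \eqref{eq:AN2} with $2\ell$ replaced by $N$:
\[
2g\alpha_{m,n-2}+2g\alpha_{m-2,n}+\Delta(\beta_{m,n}-\gamma_{m,n})=0 .
\]

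\emph{Step 3: eliminate the unknowns.} Restrict to even $m=2i$, take the alternating sum with weights $(-1)^i$ for $0\le i\le\lfloor N/2\rfloor$, and divide by $\Delta\neq0$. The $\alpha$-terms telescope: $\alpha_{2i,N-2i-2}$ cancels against $\alpha_{2(i+1)-2,N-2(i+1)}$, and the boundary terms with a negative index are zero. This yields the first identity of \eqref{eq:compCond}. Doing the same over odd $m=2i+1$ gives the second identity; the range of the sum there is the natural one, $0\le i\le \lfloor (N-1)/2\rfloor$.

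The main obstacle is the index bookkeeping in Steps 1 and 3. One has to check that the boundary terms really do drop out, and that no contribution from $\beta$ or $\gamma$ of degree $N+1$ or $N+2$ enters the first equation (it does not, since the $\Delta$-terms pair $\alpha$ with $\beta,\gamma$ of the same degree). The case of odd $N$ needs a separate, careful count of the endpoints. No information about $\e$ or $g$ is needed, which is why the conditions are only necessary. Their force, as in Lemma \ref{lem:SolOmega}, comes from combining them with the kernel vectors $B_N,C_N$ of $\Omega_N,\widehat\Omega_N$ obtained from the $\beta,\gamma$ equations at degree $N$.
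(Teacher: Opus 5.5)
Your proposal is correct and follows the paper's own route: the paper obtains this corollary from the proof of Proposition \ref{prop:sym}, namely the vanishing of $A_N, A_{N-1}, D_N, D_{N-1}$ from the degree $N+2$ and $N+1$ equations, followed by \eqref{eq:AN2} at degree $N$ and an alternating sum; you additionally make explicit the telescoping of the $\alpha$-terms, which the paper leaves implicit. Your reading of the second sum as running over $0\le i\le\lfloor (N-1)/2\rfloor$ is the right one: it agrees with the stated limit $\lfloor N/2\rfloor-1$ for even $N$ (the case $N=2\ell$ used in Proposition \ref{prop:sym}), whereas for odd $N$ the stated limit omits the term $\beta_{N,0}-\gamma_{N,0}$, which the argument genuinely needs.
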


The next proposition confirms that the opposite-parity degeneracies, discussed in Section \ref{sec:degstruct}, do not correspond to a hidden symmetry of the type discussed in this section.

\begin{prop}
  \label{eq:SimOpHalf}
  For $\e= n+ \frac{1}{2}$ with $n \in \Z_{\geq 0}$, there are no operators $J = e^{\tfrac{\pi i a^\dag a}2} Q$, with entries $Q$ of finite degree, such that $[H_{\frac{\ell}{2}},J]= 0$.
\end{prop}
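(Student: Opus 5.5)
The plan is to run the top-degree analysis from the proof of Proposition \ref{prop:sym} for $\e = n+\tfrac12$, with $N$ the maximal total degree of the entries of $Q$, and show that it forces $Q=0$. We assume a nonzero $Q$ with entries of total degree at most $N$ satisfying \eqref{eq:symcomd} and take $N$ minimal.

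The first step is the vanishing of the top layers of $A$ and $D$. Looking at \eqref{eq:recQ} in degrees $m+n=N+2$ and $N+1$, as in the proof of Proposition \ref{prop:sym}, we get relations $\alpha_{m,n-2}+\alpha_{m-2,n}=0$ for all $m,n$. Choosing the boundary indices gives $A_N\equiv 0$ and $A_{N-1}\equiv 0$, and the same argument gives $D_N\equiv D_{N-1}\equiv 0$. So the top layers of $Q$ come from the off-diagonal entries only.

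The second step determines those off-diagonal top layers. At degree $N$, the equations for $B_N$ and $C_N$ become $\Omega_N B_N=0$ and $\widehat{\Omega}_N C_N=0$.
\begin{itemize}
\item By the remark after Proposition \ref{prop:sym} (tridiagonal matrices with nonzero off-diagonals), $\Omega_N$ and $\widehat\Omega_N$ are singular only when $2\e = N-2i$ with $i\in\{0,N\}$, that is, $N=2\e=2n+1$.
\item For every other $N$ we get $B_N\equiv C_N\equiv 0$, so the whole degree-$N$ layer vanishes. This contradicts minimality (or, iterating, gives $Q=0$).
\item It therefore suffices to treat $N=2n+1$, where Lemma \ref{lem:SolOmega} (its odd case $2\e=\ell$) gives one-dimensional kernels spanned by the explicit vectors $\beta_i$, $\gamma_i$. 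We write $B_N = s\,(\beta_i)$ and $C_N = t\,(\gamma_i)$.
\end{itemize}

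The third step, which is the crux, imposes the compatibility conditions of Corollary \ref{cor:necCond}. These come from the $\alpha$-equation in \eqref{eq:AN2} at degree $N$. We get two linear equations in $(s,t)$:
\[
\sum_i(-1)^i(s\beta_{2i,N-2i}-t\gamma_{2i,N-2i})=0,\qquad \sum_i(-1)^i(s\beta_{2i+1,N-2i-1}-t\gamma_{2i+1,N-2i-1})=0 .
\]
In the integer case the summands cancel in pairs because of the sign $(-1)^{\ell+1+i}$ under $i\mapsto 2\ell-i$. For $N$ odd, the reflection $i\mapsto N-i$ exchanges even and odd indices, so it maps the first sum onto the second instead of cancelling within one sum.

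I would evaluate the sums explicitly using $\beta_i=(-1)^i\binom{N}{i}(1+\tau)^{(N-i)/2}(1-\tau)^{i/2}$ and the analogous formula for $\gamma_i$. Each sum is then the real or imaginary part of a binomial expansion, such as $\bigl(\sqrt{1+\tau}\pm i\sqrt{1-\tau}\bigr)^N$. The goal is to show that the resulting $2\times2$ determinant in $(s,t)$ is nonzero for $0<\tau<1$, which forces $s=t=0$. This computation is where I expect the main obstacle. If the determinant happens to vanish identically, the fallback is to use the parity-preservation argument from the proof of Proposition \ref{prop:sym} together with \eqref{eq:recCp}, \eqref{eq:recBm} in the $a_\pm$ variables. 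In those variables the top-degree coefficient satisfies $(n-m\pm 2\e)\gamma^{(+)}=\dots$, and $n-m$ has the parity of $N$, which is odd; the aim would be to show this is incompatible with the $\Delta$-coupling to $\alpha,\delta$, which vanish at that degree.

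The last step is the conclusion. With $B_N\equiv C_N\equiv 0$, the entire degree-$N$ layer of $Q$ is zero, contradicting minimality. Hence $Q=0$, and no nonzero $J$ of finite degree commutes with $H_{n+1/2}$.
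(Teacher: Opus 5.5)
There is a genuine gap in your second step. The remark after Proposition \ref{prop:sym} has to be read with $i=0,1,\dots,N$, not $i\in\{0,N\}$. The $g$-dependent part of $\Omega_N$ is the top-degree action of $X\mapsto[X,\,a^\dagger a-g(a^2+(a^\dagger)^2)]$, which has eigenvalues $\pm\tau$ on linear forms. Hence $\Omega_N$ has eigenvalues $\tau(N-2i-2\e)$ and $\widehat\Omega_N$ has eigenvalues $\tau(N-2i+2\e)$, for $0\le i\le N$.

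For $\e=n+\frac12$ both matrices are therefore singular for \emph{every} odd $N\ge 2n+1$. For example, for $\e=\frac12$ the spectrum of $\Omega_3$ is $\{2\tau,0,-2\tau,-4\tau\}$. So ``for every other $N$ we get $B_N\equiv C_N\equiv 0$'' is false, and your argument never excludes a $Q$ of top degree $2n+3,2n+5,\dots$. This case is not vacuous: for integer $\e$, these higher kernels are exactly where the solutions $Q_{\ell'}p(H_{\ell'})$ mentioned after Proposition \ref{prop:sym} live. The paper's own proof also examines only $N=2\e$, so you cannot borrow this case from it.

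Two further problems remain. The decisive $2\times 2$ determinant is left uncomputed. Your fallback would not help either: at top degree, \eqref{eq:recCp} only says that $C_N$ is supported on $n-m=-2\e$, which is perfectly compatible with odd $N$.

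Both holes close with one computation that is uniform in $N$. Read $z=a$, $\bar z=a^\dagger$ on top symbols. The linear forms $u_\pm=2gz-(1\mp\tau)\bar z$ and $v_\pm=2gz+(1\mp\tau)\bar z$ satisfy
$[u_\pm,\,a^\dagger a-g(a^2+(a^\dagger)^2)]=\pm\tau u_\pm$ and $[v_\pm,\,a^\dagger a+g(a^2+(a^\dagger)^2)]=\pm\tau v_\pm$.
So for odd $N\ge 2\e$, with $p=\frac{N+2\e}{2}$ and $q=\frac{N-2\e}{2}$, the kernels of $\Omega_N$ and $\widehat\Omega_N$ are spanned by $u_+^pu_-^q$ and $v_+^qv_-^p$. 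For $N=2\e$ these are, up to normalization, the vectors of Lemma \ref{lem:SolOmega}; this is where your $(\sqrt{1+\tau}\pm i\sqrt{1-\tau})^N$ comes from.

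The two conditions of Corollary \ref{cor:necCond} say exactly that $s\,u_+^pu_-^q-t\,v_+^qv_-^p$ is divisible by $z^2+\bar z^2$, i.e.\ that it vanishes at $(z,\bar z)=(1,\pm i)$. Use $v_\pm(1,w)=u_\pm(1,-w)$ and $u_+(1,w)u_-(1,w)=4g^2(1+w^2)-4gw$. Then, up to sign, the determinant of this system is
\[
\bigl(u_+(1,i)u_-(1,i)\bigr)^N-\bigl(u_+(1,-i)u_-(1,-i)\bigr)^N=(-4gi)^N-(4gi)^N .
\]
This equals $-2(4gi)^N\neq 0$ for odd $N$. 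It vanishes for even $N$, consistent with the existence of $J_\ell$ for integer $\e$.

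Hence $s=t=0$ at every odd top degree $N\ge 2\e$. Even top degrees, and odd $N<2\e$, are excluded because $\Omega_N$ and $\widehat\Omega_N$ are invertible there. With this, your minimality argument goes through.

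Keep both conditions. Either one alone is a single homogeneous equation in $(s,t)$ and always has a nonzero solution for fixed $g$. So the joint $2\times2$ formulation you chose is the right one; a one-condition argument comparing coefficients in $\tau$, as in the paper, only excludes a $\tau$-independent ratio.
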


\begin{proof}
  Let  $\e = \frac{\ell}2$ with $\ell \equiv 1 \pmod{2}$ and $N=\ell$. First, we set the solutions of degree
  $N$ for $B$ and $C$ according to Lemma \ref{lem:SolOmega}. We show that there is no constant $c \in \C$ such that the equation
  \begin{align}
    \label{eq:sumCond1}
    \sum_{i=0}^{\frac{\ell-1}2}(-1)^i (\beta_{2i,\ell-2i} - c \gamma_{2i,\ell-2i}) =0
  \end{align}
  is satisfied.

  Note that we have
  \begin{align*}
    \beta_{2i,\ell-2i} &= (-1)^i \binom{\ell}{i} \left( 1 + (-1)^{i} \tau^N + q_{2i}(\tau)\right) \\
    \gamma_{2i,\ell-2i} &= (-1)^N \binom{\ell}{i} \left( 1 + (-1)^{\ell-i} \tau^N + p_{2i}(\tau)\right) 
  \end{align*}
  for  $q_{2i},p_{2i} \in \C[\tau]$ of degree $\ell-1$ with no constant term.

  Then, as a polynomial in $\tau$, the coefficient of degree $\ell$ on the left-hand side of \eqref{eq:sumCond1}
  is given by
  \begin{align*}
    &\sum_{i=0}^{\frac{\ell-1}2} (-1)^i \left[(-1)^i \binom{\ell}{2i} (1 - c)\right] = (1 - c)  \sum_{i=0}^{\frac{\ell-1}2}\binom{\ell}{2i}
  \end{align*}
  and  the constant term of the left-hand side of \eqref{eq:sumCond1} is
  \begin{align*}
    &\sum_{i=0}^{\frac{\ell-1}2} (-1)^i \binom{\ell}{2i} (1 - (-1)^{\ell} c) = (1 + c) \sum_{i=0}^{\frac{\ell-1}2} (-1)^i \binom{\ell}{2i}.
  \end{align*}
  
  Since
  \[
    \sum_{i=0}^{\frac{\ell-1}2}\binom{\ell}{2i} = 2^{\ell-1}, \qquad \sum_{i=0}^{\frac{\ell-1}2} (-1)^i \binom{\ell}{2i} = 2^{\ell/2} \cos\left(\frac{\ell \pi}{4}\right),
  \]
  (see, e.g. \cite{PBM1998} 4.2.1 (8)-(9)) are both nonzero and $\beta\neq0$, there is no constant $c$ that satisfies the condition \eqref{eq:sumCond1}.
\end{proof}

\subsection{Commutant algebra of the Hamiltonian}
\label{sec:SymtpQRM}

Once the existence of the hidden symmetry $J_\ell$ is settled, the natural question is how is this related
to the $\Z_4$-symmetry of the usual 2pQRM. Since 
\begin{align}
  \label{eq:J4}
  J_\ell^4 = p_\ell(H_{\ell};g,\Delta)^2,
\end{align}
except for the case $\ell=0$, the hidden symmetry operator is not an operator of order $4$, that is, it does not
determine a $\Z_4$ symmetry. Equivalently, there is no canonical way to define invariant subspaces of the Hilbert
space using the operator $J_\ell$ to define (generalized) ``parities'' as in the 2pQRM or other similar models. 

In \cite{XC2022} it has been suggested to consider the sign (in a general sense, that is, including roots of unity) of the eigenvalue $\mu$ of $J_\ell$
in the equation
\[
  p_\ell(\lambda;g,\Delta)^2 = \mu^4
\]
for an eigenvalue $\lambda$ of $H_{\ell}$. This approach may be considered as a type of normalization  of the operator $J_\ell$. In practice, it is rather complicated to verify the sign of $\mu$ and more importantly, it has not been proved that $p_\ell(\lambda;g,\Delta) \neq 0$ for all eigenvalues $\lambda$ of $H_{\ell}$, or equivalently, that $J_\ell$ has a trivial kernel. This can easily be reduced to the verification a finite number of eigenvalues, however it a proof in general seems to be complicated and we the issue open. The equivalent question for the AQRM is still not settled \cite{RBW2022}.

The structure of the commutant algebra has been suggested as an alternative approach to the analysis of the symmetry  for quantum interaction models \cite{RBBW2021}, specifically in relation to quantum integrability.

Denote by $\mathcal{C}_a(H_\ell)$ the commutant algebra of $H_\ell$ in $\rm{Mat_2(\overline{\C[a,a^\dag])}}$.
We need to consider a restriction of this commutant algebra since in $\rm{Mat_2(\overline{\C[a,a^\dag])}}$ it is possible to define ``trivial'' elements that commute with the Hamiltonian (see the discussion in \cite{RBBW2021}). On the  other hand, restricting to matrices with entries in $\C[a,a^\dag]$ does 
not even allow the presence of the $\Z_2$-symmetry operator in the commutant algebra.
Let $\mathcal{A}$ be the matrix algebra generated by self-adjoint elements of $\rm{Mat_2(\overline{\C[a,a^\dag])}}$ and $\exp({\tfrac{\pi i a^\dag a}2})$. Then, the restricted commutant algebra $\mathcal{C}_a'(H_\ell)$ is defined by $\mathcal{C}_a'(H_\ell):=\mathcal{C}_a(H_\ell) \cap \mathcal{A}$. 

The following result summarizes the discussion of this section.

\begin{prop}
  For $\ell \in \Z$, the commutant algebra $\mathcal{C}_a'(H_\ell)$ of $H_{\ell}$ is
  generated by the Hamiltonian $H_{\ell}$, the parity operator $\mathcal{P}$ and the hidden symmetry operator $J_{\ell}$. In particular, we have 
  \[
      \mathcal{C}_a'(H_\ell) \simeq \R[x,y,z]/\langle x^2-1,y^2-x p_\ell(z;g,\Delta)\rangle,
  \]
  where the isomorphism is given by
  \begin{align*}
      x &\mapsto \mathcal{P}_x^2 = \exp(\pi i a^\dag a), \qquad  y \mapsto J_\ell, \qquad  z \mapsto H_\ell.
  \end{align*}
\end{prop}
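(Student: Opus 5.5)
We must prove that $\mathcal{C}_a'(H_\ell)$ is generated by $H_\ell$, $\mathcal{P}=\exp(\pi i a^\dag a)$ and $J_\ell$, with the stated presentation. The plan has two parts: first build the surjection from $\R[x,y,z]$ and identify its kernel, then prove that nothing else commutes.

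For the first part, the three generators commute pairwise and lie in $\mathcal{A}$. Indeed, $[H_\ell,\mathcal{P}]=0$ is the $\Z_2$-symmetry, and $[H_\ell,J_\ell]=0$ is Proposition \ref{prop:sym}. For $[\mathcal{P},J_\ell]=0$, note that $J_\ell=e^{\pi i a^\dag a/2}Q_\ell$ and that the entries of $Q_\ell$ have only even total degree, by the parity argument in that proof ($A_{2\ell-1}\equiv0$, etc.). Hence $Q_\ell$ commutes with $\mathcal{P}$. The relations $x^2=1$ and $y^2=x\,p_\ell(z)$ hold by $\mathcal{P}^2=1$ and Proposition \ref{prop:sym}(3), so we get a well-defined homomorphism $\Phi:\R[x,y,z]/\langle x^2-1,y^2-xp_\ell(z)\rangle\to\mathcal{C}_a'(H_\ell)$.

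Injectivity of $\Phi$. Every class has a unique normal form $f_0(z)+f_1(z)x+f_2(z)y+f_3(z)xy$. If its image vanishes, split by parity, since $\mathcal{P}$ has eigenvalues $\pm1$. This gives $f_0(H)\pm f_1(H)$ and $(f_2(H)\pm f_3(H))J_\ell$ acting trivially on each parity subspace. The eigenvalues of $H_\ell$ in each parity subspace are infinitely many, so $f_0\pm f_1=0$. Next, $J_\ell$ maps eigenspaces of $H_\ell$ to themselves. Since $p_\ell$ has only finitely many roots, $J_\ell$ is invertible on all but finitely many eigenspaces, and this forces $f_2\pm f_3=0$ as well.

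Surjectivity. Take $X\in\mathcal{C}_a'(H_\ell)$ and write $X=X_0+e^{\pi i a^\dag a/2}X_1$, with $X_0,X_1$ having polynomial entries (using $e^{\pi i a^\dag a}=\mathcal{P}$). Split each piece further into its $\mathcal{P}$-even and $\mathcal{P}$-odd parts; both commute with $H_\ell$ because $\mathcal{P}$ does.

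\begin{itemize}
\item \emph{Diagonal part $X_0$.} The commutation $[X_0,H_\ell]=0$ with $X_0\in{\rm Mat}_2(\C[a,a^\dag])$ is solved by a degree descent analogous to \eqref{eq:recQ} but without the twist. The top-degree component must commute with $a^\dag a$ and with $g(a^2+(a^\dag)^2)\sigma_z$, which forces it to be a scalar multiple of the top term of $H_\ell^k$. Subtracting that multiple and inducting on degree gives $X_0\in\R[H_\ell]+\R[H_\ell]\mathcal{P}$, with the $\mathcal{P}$-part handled the same way.
\item \emph{Twisted part $X_1$.} It satisfies \eqref{eq:symcomd}. By the uniqueness statement after Proposition \ref{prop:sym}, any such solution of degree $2k$ is $Q_\ell\,p(H_\ell)$ if $k\ge\ell$, and is zero below degree $2\ell$. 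Hence the twisted part is $J_\ell\,p(H_\ell)$, up to adding $J_\ell\mathcal{P}p'(H_\ell)$.
\end{itemize}

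Together these show surjectivity.

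The main obstacle is the uniqueness and descent step. It asserts that solutions of \eqref{eq:symcomd} of any degree reduce to $Q_\ell\,p(H_\ell)$, and likewise for the untwisted commutant. This requires the one-dimensionality of $\ker\Omega_N$ and $\ker\widehat\Omega_N$ from Lemma \ref{lem:SolOmega}, applied for each $N$ separately. It also requires checking that the compatibility conditions (Corollary \ref{cor:necCond}) fail unless $N\ge2\ell$ and $N$ is even. I would first establish this by induction on the degree, mirroring \cite{RBBW2021}: subtract the appropriate multiple of $Q_\ell H_\ell^k$ to lower the top degree by $2$ at each step.
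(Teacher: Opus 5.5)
The paper gives no argument for this proposition beyond calling it a summary of the section, i.e. Proposition \ref{prop:sym} plus the uniqueness remark after it. Your plan (homomorphism, then injectivity, then surjectivity by degree descent) is the natural way to turn that into a proof. However, two steps have real gaps as written.

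\emph{Injectivity.} On a parity subspace you only obtain $F_\pm(H_\ell)+G_\pm(H_\ell)J_\ell=0$, with $F_\pm=f_0\pm f_1$ and $G_\pm=f_2\pm f_3$. Nothing in the parity splitting separates these two terms. The spectrum cannot separate them either when $\pm p_\ell$ is a perfect square: for $\ell=1$, $\Delta=g$ one has $p_1(x)=4x^2$, and $J_1^2=\mathcal{P}p_1(H_1)$ is perfectly consistent with $J_1=2H_1$ on the even subspace.

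The missing ingredient is the $\Z_4$-grading. Put $U=e^{\pi i a^\dag a/2}$. The sum $\sum_{k=0}^{3}U^k\,{\rm Mat}_2(\C[a,a^\dag])$ is direct: for polynomial $Y$ and fixed $j$, the Fock matrix element $\langle n+j|Y|n\rangle$ is a fixed nonvanishing factor times a polynomial in $n$, and $U^k$ multiplies it by $i^{k(n+j)}$, so the four pieces separate along residues of $n$ mod $4$. Now $\Phi(f)=f_0(H_\ell)+U^2f_1(H_\ell)+Uf_2(\widetilde H)Q_\ell+U^3f_3(\widetilde H)Q_\ell$ with $\widetilde H=U^{-1}H_\ell U$. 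Hence each graded piece vanishes separately, and your spectral argument (or multiplying by $J_\ell$ and using $J_\ell^2=\mathcal{P}p_\ell(H_\ell)$) then gives $f_i=0$.

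This same directness is what lets you conclude, in the surjectivity step, that $X_0$ and $UX_1$ commute with $H_\ell$ separately; you use it there without stating it. Note also that the descent constants are complex, and $iI\in\mathcal{A}$ (already $[a+a^\dag,i(a-a^\dag)]=-2i$). So $iH_\ell$ lies in $\mathcal{C}_a'(H_\ell)$ but not in the image of $\R[x,y,z]$, and the presentation must be taken over $\C$.

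\emph{Descent.} In the twisted part you locate the difficulty in checking that the compatibility conditions ``fail unless $N\ge2\ell$ and $N$ is even''. That is not where it lies. For $\e=\ell$ the eigenvalues of $\Omega_N$ and $\widehat\Omega_N$ are $\tau(N-2i\mp2\ell)$, $0\le i\le N$. So for $N$ odd or $N<2\ell$ both kernels are already zero, and no compatibility check enters.

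The real issue is at $N=2k\ge2\ell$. There both kernels are one-dimensional, so the top symbol carries two free constants $(c_B,c_C)$. Subtracting a multiple of $Q_\ell H_\ell^{k-\ell}$ lowers the degree only if \eqref{eq:compCond} imposes a nontrivial linear relation between them. It does, but this must be shown. Condition \eqref{eq:compCond} says that $\sum_m(\beta_{m,N-m}-\gamma_{m,N-m})x^m$ vanishes at $x=\pm i$. The generating polynomials of the kernel vectors are products of powers of the real linear factors of $x\mp g(1+x^2)$, for $\beta$ resp. $\gamma$; for instance, for $N=2\ell$, $\sum_i\beta_ix^i=(1+\tau)^\ell(1-rx)^{2\ell}$ with $r=\sqrt{(1-\tau)/(1+\tau)}$. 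Hence they do not vanish at $\pm i$.

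Similarly, in the untwisted descent the top symbol does not commute with $a^\dag a$; that of $H_\ell^k$ does not. The correct mechanism, with $h_\pm$ the symbols of $a^\dag a\pm g(a^2+(a^\dag)^2)$ and $s$ that of $a^2+(a^\dag)^2$, runs as follows:
\begin{enumerate}
\item the degree-$(d+2)$ part of $[X,H_\ell]$ forces the top symbol to be diagonal;
\item the degree-$d$ diagonal part forces it to be ${\rm diag}(c\,h_+^{d/2},c'\,h_-^{d/2})$, so $d$ is even;
\item the degree-$d$ off-diagonal part, $\Delta(A^{(d)}-D^{(d)})=2g\,s\,B^{(d-2)}$, forces $c=c'$, because $h_+\equiv h_-$ modulo $s$ and $h_\pm$ is coprime to $s$.
\end{enumerate}
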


This setting provides another geometric perspective by considering the commutant $\mathcal{C}_a'(H_\ell)$ as a ring of functions of a certain variety. For instance, for the symmetric case we have $p_0(z;g,\Delta)=1$, and it is easy to verify that a primary decomposition (see, e.g., \cite{AM1969,CLS2015}) of the ideal $\langle x^2-1,y^2-x p_0(z;g,\Delta)\rangle$ is given by
\[
    \langle x^2-1,y^2-x)\rangle = \langle x+1, y-i\rangle \cap \langle x+1,y+i\rangle \cap \langle x-1,y-1\rangle \cap \langle x-1,y+ 1\rangle,
\]  
reflecting the $\Z_4$-symmetry of the 2pQRM Hamiltonian. On the other hand, for $\ell>0$, a primary decomposition is given by
\[
   \langle x^2-1,y^2-x p_\ell(z;g,\Delta)\rangle =  \langle x+1,y^2+ p_\ell(z;g,\Delta)\rangle \cap \langle x-1,y^2- p_\ell(z;g,\Delta)\rangle,
\]
thus we may interpret the polynomial $p_\ell(z;g,\Delta)$ as an obstruction of the $\Z_4$-symmetry.

A study of the properties of the commutant algebra $\mathcal{C}_a'(H_\ell)$, as well as the applications to quantum integrability, is outside the scope of this paper, but we note that it further shows the importance of the algebro-geometric picture of the 2pQRM.

\section{Conjectural explicit relation between hidden symmetry and degeneracy}
\label{sec:symdeg}

The quartic relation \eqref{eq:J4} between the Hamiltonian of the 2pAQRM and its hidden symmetry operator $J_{\ell}$ gives a fascinating but reasonable description of a geometric picture for the spectrum of operators. Concretely, for fixed $\Delta>0$ the joint eigenvalues $(\lambda,\mu)$ of $H_{\ell}$ and $J_\ell$ lie on the surface described by \eqref{eq:J4}. It is clear that the said tuple must satisfy either of the equations
\begin{align*}
  \mu^2 = p_\ell(\lambda;g,\Delta), \qquad \mu^2 = - p_\ell(\lambda;g,\Delta),
\end{align*}
and thus the geometric picture is determined by hypergeometric equations. These hyperellipic curves also appear in the decomposition of the primary decomposition of the defining ideal of the of the commutant  algebra $\mathcal{C}_a'(H_\ell)$.

Currently, there is no method to determine the polynomial $p_\ell(x;g,\Delta)$ outside of the explicit computation for small $\ell$. Thus, in practice, it is difficult to explore this geometric picture, the 2pAQRM. In this section, we discuss an approach to this problem that also gives a relation with the other geometric picture given in Section \ref{sec:qtpol} based on the excellent approximation using the inter-constraint polynomial.

First, let us give the explicit forms of the polynomial $p_\ell(x;g,\Delta)$ via direct computation for small
values of $\ell$.

\begin{ex}
  Using the expressions given in Example \ref{ex:HSO}, we obtain by direct computation
  \begin{align*}
    J_1^2 &=  \left(4 (H_{1})^2 +  \frac{\tau^2 (\Delta^2 - g^2)}{g^2} \right)e^{\pi i a^\dag a}
  \end{align*}
  and
  \begin{align*}
    J_2^2 &= \Bigg( 16 (H_{2})^4 + \frac{\tau^2 (8 \Delta ^2 g^2-40 g^4)}{g^4} (H_{2})^2
    + \frac{\tau^4 \left(\Delta ^4+4 \Delta ^2+9 g^4-10 \Delta ^2 g^2\right)}{g^4} \Bigg) e^{\pi i a^\dag a}. 
  \end{align*}
  In particular, we have
  \begin{align*}
    p_1(x;g,\Delta) &= 4 x^2 + \frac{\tau^2}{g^2}(\Delta^2 - g^2), \\
    p_2(x;g,\Delta) &= 16 x^4 + \frac{\tau^2}{g^4} (8 \Delta ^2 g^2-40 g^4) x^2  + \frac{\tau^2}{g^2}(\Delta^2 - g^2).
  \end{align*}  
\end{ex}

We remark here that the case $\ell=1$ appears in \cite{XC2021} using a different normalization for $J_\ell$. 
Since the Hamiltonian for the 2pAQRM used in \cite{XC2021} differs from \eqref{eq:tpHamilt} by $\tfrac{1}{2} \bf{I_2}$ their version of the polynomial $p_1(x;g,\Delta)$ does not contain only even powers of the Hamiltonian.

One of the most intriguing open problems of the AQRM is the fact that the corresponding polynomial is essentially equal to inter-constraint polynomial. Moreover, since the variable $x$ appears square in the polynomial $p_\ell(x;g,\Delta)$ in the above examples and, in light of Proposition \ref{prop:evenpow}, it seems natural to compare with the inter-constraint
polynomial $A^{(3)}(z)$. We verify that
\begin{align*}
  p_1(x;g,\Delta) &=  \frac{\tau^2}{g^2} A^{(1)}(\tfrac{1}{\tau} x), \qquad
  p_2(x;g,\Delta) =  \frac{\tau^4}{g^4} A^{(2)}(\tfrac{1}{\tau} x)
\end{align*}
and similarly, for the case $\ell=3$ described in Appendix \ref{sec:compl3}, we also
verify the following. 
\[
  p_3(x;g,\Delta) =  \frac{\tau^6}{g^6} A^{(3)}(\tfrac{1}{\tau}x),
\]
and similarly for the cases $\ell \leq 6$. Based on the numerical evidence and the precedent for the AQRM, we formulate the general case as a conjecture.

\begin{conject}
  \label{conj:symDeg}
  For $\ell\geq 0$, the polynomials $p_\ell(x;g,\Delta)$ and $A^{(\ell)}(\tfrac{x}{\tau})$ are equal up to a constant, that is,
  \begin{align}
    \label{eq:conject2eq}
    p_\ell(x;g,\Delta) = c A^{(\ell)}\left(\frac{x}{\tau}\right).
  \end{align}
  for $c = c(g,\Delta)$ depending only on the system parameter and determined by the normalization chosen for $J_\ell.$
\end{conject}

Using a normalization such that the coefficient of $a^{2\ell}$ in $B$ (cf. \eqref{eq:Qdef}) is given by $(1+\tau)^2$
(used e.g. in Appendix \ref{sec:compl3} and Example \ref{ex:HSO}), we have
\[
  p_\ell(x;g,\Delta) = \left(\frac{\tau}{g}\right)^{2\ell} A^{(\ell)}\left(\frac{x}{\tau}\right).
\]
We also note that it is possible to normalize $J_\ell$ such that $c=1$ but in general it seems to lead to a more complicated expression for $J_\ell$ than the one given in Appendix \ref{sec:compl3}. 

Conjecture \ref{conj:symDeg} may be interpreted as an explicit relation between hidden symmetry and spectral degeneracy through the  polynomials arising in each of the settings, in the absence of a canonical decomposition of the Hilbert space for general $\ell>0$ (see the discussion in Section \ref{sec:SymtpQRM}). Moreover, it also provides a connection between the spectral picture determined by the operator valued relation \eqref{eq:J4} and the one determined by the excellent approximation of energy curves by the inter-constraint polynomial (cf. Section \ref{sec:excapprox1}).

In practical terms, since the inter-constraint polynomial can be computed from the determinant expression in Theorem \ref{thm:div}, the conjecture above provides an efficient way to compute the polynomial $p_\ell(x;g,\Delta)$. The equivalence of the two conjectures is easy to verify.

\begin{conject}
  \label{conj:symDeg2}
  For $\ell\geq 0$, the polynomial $p_\ell(x;g,\Delta)$ is, up to a factor depending only on the system parameters, given
  by the determinant expression
  \[
    (\tfrac{x}{2\tau} + \tfrac{1}4 - \tfrac{\ell}{2})_\ell (\tfrac{x}{2\tau} + \tfrac{3}4 - \tfrac{\ell}{2})_\ell
    \det \left(2 \bm{I}_\ell (8g^2-1) + \bm{D}_\ell \Delta^2 + \bm{V}_\ell \right),
  \]
where $\bm{D}_\ell  = \Diag\left\{ \frac{4 \tau ^2}{x^2- \tau ^2 (\ell+\frac{1}{2} -2 i)^2} \right\}_{1\leq i \leq \ell}$ and
\[
  \bm{V}_\ell =  \Tridiag{\frac{1-4 \ell^2}{2 \left(-2 i+\ell -\frac{1}{2}\right)\left(-2 i+\ell +\frac{3}{2}\right)}}{\frac{(-2 i )(-2 i-1)}{\left(\ell-2 i -\frac{1}{2}\right)\left(\ell-2 i +\frac{1}{2}\right)  }}{\frac{ (2 (\ell-i)) (2 (\ell-i)-1 )}{\left(\ell -2 i-\frac{1}{2}\right) \left(\ell -2 i -\frac{3}{2}\right) }}{1\le i\le \ell}.
\]
\end{conject}

At present, there is no reasonably considered approach to the proof of these conjectures. Although a direct attempt based on the recurrence relations \eqref{eq:recQ} of $J_\ell$ is possible in principle, in practice it seems excessively complicated due to the fact that it concerns a system of simultaneous recurrence relations. We expect that exploring the geometric pictures described in this paper (and \cite{RBW2022} for the AQRM) may lead to an elegant proof of the conjectures, and conversely, the proof and statement of the conjectures should give new insights on the geometric picture for these models. Moreover, it would be interesting to study if a similar structure appears in other quantum systems, e.g., the XXZ model \cite{BA2001, FM2001} or the Kondo effect \cite{K2012}.

Moreover, we also expect that due to the relations between models based on covering and non-unitary equivalences
described in Section \ref{sec:relationsModels}, the proof of the conjectures for certain models may imply the truth of the conjecture for other models. Concretely, we expect that the conjectures for the 2pAQRM imply those of the AQRM due to the covering relation. These relations will be explored in a sequel of the present paper \cite{RBW2027}.

\begin{rem}
  The variable $z = \frac{x}{\tau}$ appearing in the conjectural relation \eqref{eq:conject2eq} suggests that a more natural choice for the Hamiltonian of the 2pAQRM may be 
  \[
    \hat{H}_{\e} = \frac1{\tau} \left( (a^\dag a + \frac12) + g ( a^2 + (a^\dagger)^2 ) \sigma_z + \Delta \sigma_x + \tau \e \sigma_z \right),
  \]
  that is, the one used for the calculation of the normalized  spectral curves in this paper.

  Moreover, the non-standard choices made in the definition of the Hamiltonian of the 2pAQRM in Section
  \ref{sec:tpQRM} seem justified as they lead to simpler expressions for the polynomials $p_\ell(x;g,\Delta)$ and $A^{(\ell)}(z)$.
\end{rem}

\appendix

\section{Hidden symmetry operator for $\ell=3$}
\label{sec:compl3}

We use the notation of Example \ref{ex:HSO} in Section \ref{sec:hiddenS} for the symmetry operator. We write the coefficients of a given total degree in vector form, for instance, $A_n = \{ \alpha_{0,n}, \alpha_{1,n-1}, \cdots, \alpha_{n,0}\}$, and write $A_n \equiv 0$ to indicate that all the coefficients are zero, and similarly for the other components of $Q_3$.

We recall that in order to compute the coefficients, we start by setting $A_{N} \equiv D_{N}\equiv 0$ for $N \geq 2\ell-1$. Then, the values of $B_{2\ell}$ and $C_{2\ell-1}$ are given by Lemma \ref{lem:SolOmega} and by the proof of Proposition \ref{prop:sym} we have $B_{2\ell} \equiv 0$ and $C_{2\ell-1} \equiv 0$. With this choice of solutions, there is no need to normalize either of the solutions by a constant, since they satisfy the conditions \eqref{eq:compCond}. Then, the computation of the remaining coefficients is done using the recurrence relations in Section \ref{sec:symmetryproof}. At the end of the computation we further normalize to give the expression in the form described in Conjecture \ref{conj:symDeg}.

\begin{ex}
  Let us give the coefficients of the matrix $Q_3$ in vector form. We note that for $X \in \{A,B,C,D\}$ we have
  $X_5 \equiv X_3 \equiv A_1 \equiv 0$. For the diagonal components, we have $A_6\equiv 0$,
\begin{align*}
  A_4 &= \frac{4\Delta}{g} \left\{3 g^2- 1  ,6 g \tau ,-18 g^2 + 1,-6 g \tau, 3  g^2-1  \right\}, \\
  A_2 &= \frac{4\Delta}{g^2}\left\{3 g^2(3 \tau -2)+2,-2 g \left(3 \tau +18 g^2-1\right),-3 g^2 (3 \tau -2)-2  \right)\\
  A_0 &= \frac{\Delta}{g^3}\left\{-36 g^4 + (36-12 \tau)g^2 -\Delta ^2-8\right\}.
\end{align*} 
Similarly, we have $D_6 \equiv 0$,  and
\begin{align*}
  D_4 &= \frac{4\Delta}{g} \left\{3 g^2- 1  ,6 g \tau ,-18 g^2 + 1,-6 g \tau, 3  g^2-1  \right\}, \\
  D_2 &= \frac{4\Delta}{g^2}\left\{3 g^2(3 \tau +2)-2,2 g \left(3 \tau -18 g^2+1\right),-3 g^2 (3 \tau +2)+2  \right\}\\
  D_0 &= \frac{\Delta}{g^3}\left\{-36 g^4 + (36+12 \tau)g^2 -\Delta ^2-8\right\}.
\end{align*}
For the off diagonal components, we have
\begin{align*}
  B_6 &= \bigg\{ (1+\tau)^3, -12g (1+\tau)^2 ,15 (1-\tau) (1+ \tau)^2,-160 g^3,15 (1-\tau)^2 (1+\tau), \\
        &\qquad -12g (1-\tau)^2, (1-\tau)^3 \bigg\} \\
  B_4 &= 60 g \left\{-\tau +2 g^2-1, 4 (1+\tau) g, -12 g^2,4 (1-\tau) g, \tau +2 g^2-1  \right\}, \\
  B_2 &= \frac{1}{g^2} \left\{\Delta ^2 (3 \tau +1)+180 g^4 (\tau +1), -12 g \left(\Delta ^2+60 g^4\right), \Delta ^2 (1-3 \tau )-180 g^4 (\tau -1) \right\}  \\
  B_0 &= \frac{1}{g} \left\{ -6 \left(\Delta ^2+20 g^4\right)\right\},
\end{align*}
and
\begin{align*}
  C_6 &= \bigg\{ -(1-\tau)^3, -12g (1-\tau)^2 ,- 15 (1+\tau) (1- \tau)^2,-160 g^3,-15 (1-\tau) (1+\tau)^2, \\
        &\qquad -12g (1+\tau)^2, -(1+\tau)^3 \bigg\} \\
  C_4 &= 60 g \left\{\tau +2 g^2-1, -4 (1-\tau) g, -12 g^2,-4 (1+\tau) g, -\tau +2 g^2-1  \right\}, \\
  C_2 &= \frac{1}{g^2} \left\{\Delta ^2 (3 \tau -1)-180 g^4 (1-\tau), -12 g \left(\Delta ^2+60 g^4\right),
       - \Delta ^2 (1+3 \tau )-180 g^4 (1+\tau) \right\}  \\
  C_0 &= \frac{1}{g} \left\{ -6 \left(\Delta ^2+20 g^4\right) \right\}.
\end{align*}
\end{ex}

\section*{Acknowledgments}

This work was partially supported by JSPS Grant-in-Aid for Scientific Research (C) No.20K03560, JSPS Grant-in-Aid for Early-Career Scientists No. 24K16941, and CREST JPMJCR2113, Japan.

\begin{flushleft}
  
\bigskip

  Cid Reyes-Bustos \\
  NTT Institute for Fundamental Mathematics,\\
  NTT Communication Science Laboratories, \\
  NTT, Inc. \\
  3-9-11, Midori-cho Musashino-shi, Tokyo, 180-8585, Japan \\
  email: \texttt{cid.reyes@ntt.com, math@cidrb.me}
  \phantom{a}\\\phantom{a}\\

  Masato Wakayama \\
  NTT Institute for Fundamental Mathematics,\\
  NTT Communication Science Laboratories, \\
  NTT, Inc. \\
  3-9-11, Midori-cho Musashino-shi, Tokyo, 180-8585, Japan \\
  and\\
  Faculty of Social Informatics, ZEN University, \\
  4-12-15 Ginza, Chuo-ku, Tokyo 104-0061, Japan\\
  email: \texttt{masato.wakayama@ntt.com, masato\_wakayama@zen.ac.jp}
\end{flushleft}

\end{document}